\documentclass[a4paper,12pt]{article}
\usepackage[bbgreekl]{mathbbol}
\usepackage[small]{titlesec}
\usepackage[utf8]{inputenc}
\usepackage[T1]{fontenc} 
\usepackage[]{mdframed}
\usepackage{lineno}
\usepackage{pgfplots}
\pgfplotsset{compat=1.18}
\usepackage[backend=biber,style=alphabetic,natbib=false,giveninits=true,doi=true,isbn=false,url=false,date=year,maxbibnames=99,sorting=nyt,defernumbers=true]{biblatex}
\DeclareNameAlias{default}{family-given}

\DeclareFieldFormat[article]{title}{#1}
\renewbibmacro{in:}{}
\DeclareFieldFormat[article]{pages}{#1}
\DeclareFieldFormat{journal}{#1}
\renewcommand\bf\bfseries

\renewbibmacro*{volume+number+eid}{%
	\printfield{volume}%
	\setunit*{\addnbspace}
	\printfield{number}%
	\setunit{\addcomma\space}%
	\printfield{eid}}
\DeclareFieldFormat[article]{number}{\mkbibparens{#1}}
\DeclareFieldFormat[article]{volume}{\textbf{#1}}
\DeclareFieldFormat{year}{\mkbibparens{#1}}
\DeclareBibliographyDriver{article}{%
	\printnames{author}:%
	\newunit\newblock
	\newunit\newblock
    \printfield[titlecase]{title}
	\printfield{journaltitle}
	\newunit
	\iffieldundef{number}{\printfield{volume}}{\printfield{volume}\addspace\printfield{number}}
	\addcomma\addspace\printfield{pages}\addspace
	\printfield{year}
}
\addbibresource{fsw-mdw-lower-bounds-on-averages.bib}
\AtEveryBibitem{\clearfield{month}}
\AtEveryBibitem{\clearfield{day}}
\usepackage[english]{babel}
\usepackage[T1]{fontenc}
\usepackage{csquotes}
\usepackage{bbm}
\usepackage[leqno]{amsmath}
\usepackage{esint}
\usepackage{amsfonts,amsthm,amsbsy,amssymb,dsfont,stmaryrd}
\usepackage{aliascnt}
\usepackage[dvipsnames]{xcolor}
\usepackage{braket}

\makeatletter
\newcommand{\leqnomode}{\tagsleft@true\let\veqno\@@leqno}
\newcommand{\reqnomode}{\tagsleft@false\let\veqno\@@eqno}
\makeatother

\usepackage{mathtools}
\usepackage[makeroom]{cancel}
\usepackage[normalem]{ulem}

\numberwithin{equation}{section}

\newcommand\myshade{85}
\colorlet{mylinkcolor}{violet}
\colorlet{mycitecolor}{YellowOrange}
\colorlet{myurlcolor}{Aquamarine}

\usepackage[left=2.5cm,right=2.5cm,top=2.5cm,bottom=2.5cm]{geometry}
\usepackage[unicode=true,pdfusetitle,
bookmarks=true,bookmarksnumbered=false,bookmarksopen=false,
breaklinks=false,pdfborder={0 0 1},backref=false,
linkcolor = mylinkcolor!\myshade!black,
citecolor = mycitecolor!\myshade!black,
urlcolor  = myurlcolor!\myshade!black,
colorlinks = true,
]
{hyperref}
\usepackage[nameinlink]{cleveref}



\usepackage{tikz}
\usetikzlibrary{decorations.pathreplacing}
\usetikzlibrary{arrows}
\usetikzlibrary{arrows.meta}
\usetikzlibrary{intersections}
\usetikzlibrary{calc}
\usepackage{graphicx}
\usepackage{caption}
\usepackage{slashed}
\definecolor{ct_black}{HTML}{000000}
\definecolor{ct_orange}{HTML}{ED872D}
\definecolor{ct_purple}{HTML}{7A68A6}
\definecolor{ct_blue}{HTML}{348ABD}
\definecolor{ct_turquoise}{HTML}{188487}
\definecolor{ct_red}{HTML}{E32636}
\definecolor{ct_pink}{HTML}{CF4457}
\definecolor{ct_green}{HTML}{467821}

\definecolor{ct2_green}{HTML}{9FF781}
\definecolor{ct2_green_dark}{HTML}{088A08}
\newif\ifshowchanges
\showchangesfalse
\DeclareRobustCommand{\added}[1]{\ifshowchanges{\textcolor{ct_green}{#1}}\else#1\fi}
\DeclareRobustCommand{\deleted}[1]{\ifshowchanges{\textcolor{red}{\sout{#1}}}\else\fi}
\newcommand{\addedm}[1]{\ifshowchanges{\color{ct_green}#1}\else#1\fi}
\newcommand{\deletedm}[1]{\ifshowchanges{\color{red}\cancel{#1}}\else\fi}
\DeclareRobustCommand{\replaced}[2]{\deleted{#1}\added{#2}}
\newcommand{\replacedm}[2]{\deletedm{#1}\addedm{#2}}

\theoremstyle{plain}
\newtheorem{thm}{\protect\theoremname}[section]

\newaliascnt{lem}{thm}
\newtheorem{lem}[lem]{\protect\lemmaname}
\aliascntresetthe{lem}

\newaliascnt{cor}{thm}
\newtheorem{cor}[cor]{\protect\corollaryname}
\aliascntresetthe{cor}

\newaliascnt{prop}{thm}
\newtheorem{prop}[prop]{\protect\propositionname}
\aliascntresetthe{prop}

\newaliascnt{claim}{thm}

\aliascntresetthe{claim}

\newaliascnt{assumption}{thm}

\aliascntresetthe{assumption}

\newaliascnt{convention}{thm}

\aliascntresetthe{convention}

\theoremstyle{remark}
\newaliascnt{rem}{thm}
\newtheorem{rem}[rem]{\protect\remarkname}
\aliascntresetthe{rem}

\theoremstyle{definition}
\newaliascnt{defn}{thm}
\newtheorem{defn}[defn]{\protect\definitionname}
\aliascntresetthe{defn}

\theoremstyle{plain}
\newaliascnt{example}{thm}

\aliascntresetthe{example}

\providecommand{\assumptionname}{Assumption}
\providecommand{\conventionname}{Convention}
\providecommand{\claimname}{Claim}
\providecommand{\corollaryname}{Corollary}
\providecommand{\definitionname}{Definition}
\providecommand{\lemmaname}{Lemma}
\providecommand{\propositionname}{Proposition}
\providecommand{\remarkname}{Remark}
\providecommand{\theoremname}{Theorem}
\providecommand{\examplename}{Example}

\crefname{section}{Section}{Sections}
\crefname{appendix}{Appendix}{Appendices}
\crefname{figure}{Figure}{Figures}
\crefname{assumption}{Assumption}{Assumptions}
\crefname{thm}{Theorem}{Theorems}
\crefname{lem}{Lemma}{Lemmas}
\crefname{rem}{Remark}{Remarks}
\crefformat{equation}{(#2#1#3)}
\crefname{table}{Table}{Tables}
\crefname{prop}{Proposition}{Propositions}
\Crefname{prop}{Proposition}{Propositions}

\crefrangelabelformat{equation}{(#3#1#4--#5#2#6)}

\crefmultiformat{equation}{(#2#1#3}{, #2#1#3)}{#2#1#3}{#2#1#3}
\Crefmultiformat{equation}{(#2#1#3}{, #2#1#3)}{#2#1#3}{#2#1#3}

\newtheorem*{lem*}{\protect\lemmaname}


\newcommand{\ee}{\operatorname{e}}
\newcommand{\ii}{\operatorname{i}}

\newcommand{\ZZ}{\mathbb{Z}}

\newcommand{\NN}{\mathbb{N}}
\newcommand{\RR}{\mathbb{R}}
\newcommand{\CC}{\mathbb{C}}

\newcommand{\calB}{\mathcal{B}}

\newcommand{\calE}{\mathcal{E}}

\newcommand{\calO}{\mathcal{O}}

\newcommand{\calV}{\mathcal{V}}

\newcommand{\calL}{\mathcal{L}}


\newcommand{\Ord}[1]{\calO\left(#1\right)}

\newcommand{\Op}[1]{\operatorname{Op}\br{#1}}
\newcommand\norm[1]{\left\lVert#1\right\rVert}

\newcommand\abs[1]{\left|#1\right|}
\newcommand\mLog[1]{-\log\left(\abs{#1}\right)}

\newcommand{\ip}[2]{\langle #1, #2 \rangle}

\newcommand{\dif}{\operatorname{d}}
\newcommand{\tr}{\operatorname{tr}}
\newcommand{\Hessian}[1]{\operatorname{Hessian}#1}
\usepackage{graphicx}

\newcommand{\CharFun}{%
  \mathchoice
    {\raisebox{0.18ex}{\scalebox{1.15}{$\displaystyle\chi$}}}%
    {\raisebox{0.16ex}{\scalebox{1.15}{$\textstyle\chi$}}}%
    {\raisebox{0.10ex}{\scalebox{1.05}{$\scriptstyle\chi$}}}%
    {\raisebox{0.06ex}{\scalebox{1.00}{$\scriptscriptstyle\chi$}}}%
}

\usepackage{bm}
\renewcommand{\Im}[1]{\operatorname{\mathbb{I}\mathbbm{m}}\left\{#1\right\}}
\renewcommand{\Re}[1]{\operatorname{\mathbb{R}\mathbbm{e}}\left\{#1\right\}}
\newcommand{\ve}{\varepsilon}
\newcommand{\vf}{\varphi}
\newcommand{\Id}{\mathds{1}}
\newcommand{\HH}{\mathbb{H}}

\newcommand{\Open}[1]{\mathrm{Open}(#1)}

\newcommand{\dist}{\mathrm{dist}}



\newcommand{\sgn}{\operatorname{sgn}}

\newcommand{\supp}{\operatorname{supp}}
\renewcommand{\vec}[1]{\mathbf{#1}}


\newcommand{\nc}{\normalcolor}

\usepackage{environ}

\NewEnviron{malign}{%
	\begin{align}\begin{split}
			\BODY
	\end{split}\end{align}
}

\newcommand{\eq}[1]{\begin{align*}#1\end{align*}}
\newcommand{\eql}[1]{\begin{align}#1\end{align}}

\newcommand{\DD}{\mathbb{D}}

\newcommand{\br}[1]{\left(#1\right)}


\setcounter{tocdepth}{2} 



\usepackage{lmodern} 
\usepackage{titling}
\title{Magnetic Double-Wells: \\Lower Bounds on Tunneling\\$\,$\\ {\fontsize{14.4pt}{16pt}\selectfont with an Appendix by Tal Shpigel}}
\author{\href{mailto:cf@math.princeton.edu}{Charles L. Fefferman}, \href{mailto:jacobshapiro@princeton.edu}{Jacob Shapiro}\\
	{\footnotesize Department of Mathematics, Princeton University}\\
	 \href{mailto:miw2103@columbia.edu}{Michael I. Weinstein}\\
		\footnotesize{Department of Applied Physics and Applied Mathematics,}\\\footnotesize{and Department of Mathematics, Columbia University}
}


\begin{document}

	\reqnomode
	
	\maketitle
	\begin{abstract}
		
        We study double-well systems with strong magnetic fields and deep potential wells. We present lower bounds on tunneling rates for generic values of the coupling constant. This result was recently announced in \cite{FSW24} and complements our recent counter-example construction \cite{fefferman2025magneticdoublewellsabsencetunneling} which exhibits vanishing tunneling for specially-constructed double-well potentials.\\ {\ }\\
	\end{abstract}
	\tableofcontents

\section{Introduction}

Quantum tunneling in double-well systems is a cornerstone phenomenon in mathematical physics.
In the absence of an external magnetic field, a particle which is initially  localized in one of two sufficiently deep or well-separated potential wells has a nonzero probability of tunneling to the other well. The tunneling rate is proportional to the difference between the lowest two eigenvalues of the double-well Hamiltonian, a quantity
which is (typically) exponentially small if  the well-depth or well-separation \replaced{are}{is} large. The tunneling time is its reciprocal.
This paradigm, developed both heuristically and rigorously, goes back to the classical literature and underlies a great variety of effects in physics and chemistry; see, for example, \cite{Landau_Lifshitz_vol_3} in physics or \cite{Simon_1984_10.2307/2007072,Helffer_Sjostrand_1984} in mathematics. The latter two works derive a lower bound on the eigenvalue splitting (or tunneling rate) assuming the single-well potential has a single non-degenerate minimum. In \cite{FLW17_doi:10.1002/cpa.21735} lower bounds are derived assuming rather that the single-well is compactly supported. 

In contrast, adding a \emph{constant magnetic field} fundamentally alters the picture.
 Early works on the magnetic case considered \deleted{the case of }weak magnetic fields, and confirmed a small correction to the non-magnetic case \cite{Helffer_Sjostrand_1987_magnetic_ASNSP_1987_4_14_4_625_0}. For strong magnetic fields, magnetic translations \cite{Zak_1964_PhysRev.134.A1602}  endow localized states with nontrivial complex phases, yielding interference among tunneling paths. For example in \cite{FSW_22_doi:10.1137/21M1429412} we showed that for double-well magnetic systems whose single-well is radial, the magnetic field effect leads to an exponentially smaller tunneling amplitude. However, such systems with radial single-well potential were shown to still possess a strictly positive lower bound on the tunneling rate \cite{FSW_22_doi:10.1137/21M1429412,HelfferKachmar2024,Morin2024}. 
 The tunneling rate lower bound in this case was used in a derivation of tight binding models in the regime of strong binding and strong magnetic fields in models of  (translation invariant or disordered) crystals \cite{ShapWein22}. 

Recently, we announced in \cite{FSW24} and then provided a complete proof in  \cite{fefferman2025magneticdoublewellsabsencetunneling},  that the magnetic setting admits a phenomenon that has no analogue in the non-magnetic case: \emph{exact vanishing of tunneling} in a double-well system built from a suitable non-radial single-well potential.  
We constructed a family of (non-radial) single-well potentials such that, in a strong constant magnetic field and for large well depth, the associated symmetric double-well has zero eigenvalue splitting between its two lowest energy levels, so that tunneling is entirely eliminated. Moreover, by varying parameters within this family one can flip the parity of the ground state (from even to odd), 
via a coalescence and re-emergence of two distinct eigenvalues. A discussion of implications for potential applications to quantum materials is also discussed in \cite{FSW24}.

In this paper, we prove a second result announced in \cite{FSW24}:\\
{\it The vanishing of the tunneling is a non-generic phenomenon;  for couplings, $\lambda>0$, outside of a set of density zero, 
there is a lower bound on the eigenvalue splitting (upper bound on the tunneling time)}.

\subsection{Framework and statement of the main result}
Let us begin  by introducing a precise framework. We work with the Hilbert space $L^2(\RR^2)$.  On $L^2(\RR^2)$, we consider magnetic Schr\"odinger operators of the form
\eql{\label{eq:general form of magnetic schroedinger operators we consider}
  \br{P-\frac12b\lambda X^\perp}^2+\lambda^2 V(X)\ ,
} 
a perturbation by an electric potential, $V(X)$, of the 
 Landau Hamiltonian
 \eql{\label{eq:HLandau}
 H^{\rm Landau}_\lambda := \br{P-\frac12b\lambda X^\perp}^2.
 }
 Here, $P\equiv-\ii\nabla$ is the momentum operator, $X$ is the position operator, with \eq{X^\perp\equiv(-X_2,X_1)\, .} 
 The parameter $\lambda>0$ is a sufficiently large coupling constant which controls simultaneously the scaling of the magnetic field strength (which grows as $\lambda$), and the depth of the potential $\lambda^2V:\RR^2\to(-\infty,0]$.
   The parameter $b>0$ controls the relative strength of the magnetic field $b\lambda$ to the well-depth $\lambda^2\min V$.

  Such Hamiltonians describe the dynamics of an electron bound to a two-dimensional plane and subject to a constant perpendicular magnetic field.
 The choice of $\lambda$-dependence of the electric and magnetic potentials in \cref{eq:general form of magnetic schroedinger operators we consider} ensures a balance
of electric and magnetic effects for one-well potentials with a single nondegenerate minimum. 
Specifically,  for large $\lambda$, the Landau energy-level spacing, associated with $(P-\frac12b\lambda X^\perp)^2$,  is $O(\lambda)$
 and the harmonic oscillator energy-level spacing, associated with the harmonic approximation  of $v(X)$, $P^2 + \frac12\lambda^2 \left\langle X,\Hessian{v}(0) X\right\rangle$,  is  also $O(\lambda)$.
Further, the confinement within the wells of eigenstates in our double-well system is governed by a magnetic oscillator with comparable magnetic and harmonic contributions; see \cite{Nakamura_doi:10.1080/03605309608821214} for the harmonic approximation in this scaling. In fact, a scaling where the magnetic field tends to infinity is necessary to obtain topologically nontrivial tight-binding models \cite{Nakamura1990LowEnergyBands}.


 We consider simultaneously two main operators of this type: the \emph{single-well} Hamiltonian $h_\lambda$ and \emph{double-well} Hamiltonian $H_\lambda$. They are defined as 
\eql{\label{eq:single-well Hamiltonian}
  h_{\lambda,b} := \br{P-\frac12b\lambda X^\perp}^2+\lambda^2 v(X)
} and 
\eql{\label{eq:double-well Hamiltonian}
  H_{\lambda,b} := \br{P-\frac12b\lambda X^\perp}^2+\lambda^2 v(X+d)+ \lambda^2 v(X-d)
} Here $v:\RR^2\to[-1,0]$ is a smooth compactly supported (say within $B_a(0)$ for some $a>0$) single-well potential and $d\equiv(d_1,0)\in\RR^2\setminus\Set{0}$ is the displacement of each well taken, without loss of generality, to lie along the $1$-axis.

We assume that $v$ is chosen so that when $\lambda$ is sufficiently large, $h_{\lambda,b}$ has ground state energy $e_{0,\lambda}=-\lambda^2+\mathrm{o}(\lambda^2)$ which has distance to the remainder of the spectrum with at least order of magnitude $c_{\mathrm{gap}}$, a constant which is independent of $\lambda$. It follows that $E_{0,\lambda}\leq E_{1,\lambda}$, the two lowest eigenvalues of $H_\lambda$, are also at a distance of at least order $c_{\rm gap}$ from the remainder of the spectrum of $H_\lambda$. We are chiefly concerned with the \emph{double-well eigenvalue splitting} defined by \eql{\label{eq:splitting}
  \Delta_0(\lambda,b) \equiv \Delta_0(\lambda) := E_{1,\lambda} - E_{0,\lambda}\,.
} 

An emergent quantity related to $\Delta_0(\lambda)$ is the \emph{hopping coefficient}, $\rho_0(\lambda,b)=\rho_0(\lambda)$, which  plays a role in the present analysis. It is given by
\eql{\label{eq:hopping coefficient}
   \rho_0(\lambda) &:= \ip{\widehat{R}^{-d}\vf_{0,\lambda}}{\br{H_\lambda-e_{0,\lambda}\Id}\widehat{R}^{d}\vf_{0,\lambda}}\\
  &= \lambda^2\int_{x\in\RR^2} \overline{\vf_{0,\lambda}(x+d)}v(x+d) \exp\br{\ii b\lambda d_1 x_2}\vf_{0,\lambda}(x-d)\dif{x}\nonumber
} Here $\vf_{0,\lambda}$ is the $L^2$-normalized ground state of $h_\lambda$, and  $\widehat{R}^{ d }$ denotes Zak's magnetic translation operator \cite{Zak_1964_PhysRev.134.A1602}:
 \eql{\label{eq:magnetic translations}\widehat{R}^z := \exp\br{-\ii z\cdot \br{P+\frac12b\lambda X^\perp}}\qquad(z\in\RR^2)\,.} Since $z\cdot x^\perp = -z^\perp\cdot x$,   \eq{
 (\widehat{R}^{ z }f)(x) \equiv \exp\br{\ii\frac{b\lambda}{2} x \cdot  z^\perp} f(x-z)\qquad(x,z\in\RR^2\,,f:\RR^2\to\CC)\,.
}
For a discussion of how the hopping coefficient arises  in magnetic 
double-well and crystalline systems, see, for example, \cite{FSW_22_doi:10.1137/21M1429412} and \cite{ShapWein22}.

Below we use the notion of \emph{a set of density zero}; we say that $S\subseteq(0,\infty)$ has \emph{zero density} iff \eq{
\lim_{M\to\infty}\frac{\mu\br{S\cap[0,M]}}{M} = 0
} where $\mu$ is the Lebesgue measure.

Our main theorem is

\begin{thm}\label{thm:main theorem}
   Assume that the real-valued potential $v$ satisfies the following conditions:
   (a) $v\in C^3(\mathbb R^2\to[-1,0])$.\\
   (b) $\supp(v)\subset B_a(0)$ for some $a>0$. \\ 
   (c) $v(0)=-1$, the origin is the unique global minimum of $v$, and $\Hessian{v}(0)>0$.
   
 \noindent   Let $\ve>0$. Assume that $2d_1>Ca$ for some $C$ sufficiently large but independent of $\lambda$, where $d=(d_1,0)\in\RR^2$ is the displacement parameter of the double-well; see \cref{eq:double-well Hamiltonian}.
    
    Then, there is a constant $b_\ve>0$ such that for all $0<b<b_\ve$, the following holds:
    \begin{enumerate}\item 
    There exists a set of density zero $Z(\ve,v)\subseteq(0,\infty)$ such that 
\eql{\label{eq:pointwise lower bound on rho}
        \abs{\rho_0(\lambda)} \geq \exp\br{-\lambda^{1+\ve}}\quad\textrm{for all}\quad \lambda\in(0,\infty)\setminus Z(\ve,v)\ ,
    } and 
    \eql{\label{eq:pointwise lower bound on Delta}
        \Delta_0(\lambda) \geq \exp\br{-\lambda^{1+\ve}}\quad\textrm{for all}\quad \lambda\in(0,\infty)\setminus Z(\ve,v)\ .
    }
    \item {\it Sparsity:} There exists an $\varepsilon$-dependent $\Lambda_\varepsilon<\infty$ and a discrete set $\Set{\lambda_k}_{k\ge1}\subset Z(\ve,v)$  such that
    \eql{\label{eq:density of zeros}
        \sum_{k=1}^\infty \lambda_k^{-\br{1+\ve}}<\infty\qquad(\ve>0)\,.
    }
     and such that $\rho_0(\lambda)$ and $\Delta_0(\lambda)$ are non-zero for  $\lambda\in \left[\Lambda_\varepsilon,\infty\right)\setminus\Set{\lambda_k}_{k\ge1}$.
   \item \emph{Averaged counterpart to item 1}: \\ For any $\ve>0$ there exist $\Lambda_\varepsilon$, $b_\ve>0$ and $C(\ve,d_1)<\infty$ such that for all $b\in(0,b_\ve)$  and for all $\lambda\geq\Lambda_\varepsilon$, 
    \eql{\label{eq:avg lower bound on rho}
        \frac{1}{\lambda}\int_{\lambda}^{2\lambda} -\log\br{\abs{\rho_0(\widetilde{\lambda})}}\dif{\widetilde{\lambda}} \leq C(\ve,d_1)\  \lambda^{1+\ve}
    } and 
    \eql{\label{eq:avg lower bound on Delta}
        \frac{1}{\lambda}\int_{\lambda}^{2\lambda}-\log\br{\abs{\Delta_0(\widetilde{\lambda})}}\dif{\widetilde{\lambda}} \leq C(\ve,d_1)\  \lambda^{1+\ve}\,.
    }
    \deleted{CF: check constant independence.}
    \end{enumerate}
\end{thm}

\subsection{Rough sketch of the proof}\label{sec:sketch}
\begin{itemize}
\item[(1)] The key is to prove that, up to a $\lambda$-dependent normalization factor which can be controlled for large real $\lambda$, the functions $\rho_0(\lambda,b)$ and  $\Delta_0(\lambda,b)$ are analytic in a region \eq{\Omega_{\Lambda,\varepsilon}:=\Set{z\in\mathbb{C}|\Lambda<\left|z\right|\quad{\rm and}\quad\left|\arg\left(z\right)\right|<\frac{\pi}{2}-\varepsilon}\,,}  which is contained within the open right half plane. Here $\ve>0$ is a fixed arbitrarily small parameter and $\Lambda>0$ is chosen sufficiently large (depending on $\varepsilon$), subject to a finite number of conditions to be specified below. Further, we show that the above functions are bounded from above
 on  $\Omega_{\Lambda,\ve}$ by $\lambda\mapsto\abs{\lambda}^{k}\exp(C b \abs{\lambda})$ for some universal $k,C$.
%
\item[(2)]  Fix some $\lambda_\star\in \Omega_{\Lambda,\ve}\cap\RR$. Then, by lower bounds of hopping coefficients and the splitting in {\it non-magnetic systems}, \cite{Simon_1984_10.2307/2007072,Helffer_Sjostrand_1984,FLW17_doi:10.1002/cpa.21735}, we have strictly positive  lower 
bounds on $|\rho_0(\lambda_\star,b=0)|$ and 
 $|\Delta_0(\lambda_\star,b=0)|$.  Therefore, by continuity, we have strictly positive lower bounds on  
  $|\rho_0(\lambda_\star,b)|$ and 
 $|\Delta_0(\lambda_\star,b)|$ for all $0<b<b_\varepsilon$ with $b_\varepsilon$ sufficiently small (dependent on the choice of $\lambda_\star$). 
\item[(3)]  Given (1) and (2) we apply the complex analytic \cref{lem:complex-analysis} (below), together with bounds on the normalization factor for real $\lambda$, to establish that for fixed $b\in(0,b_\varepsilon)$, the hopping $|\rho_0(\lambda,b)|$ and  splitting $|\Delta_0(\lambda,b)|$  satisfy the desired lower bounds for all real $\lambda\in\Omega_{\Lambda,\ve}\cap\RR$ except possibly for $\lambda$ in an exceptional set of density zero. 
 \item[(4)] Hence the proof of \cref{thm:main theorem} boils down to establishing analytic continuation properties of  $\rho_0(\lambda)$ and  $\Delta_0(\lambda)$, as well as their upper bounds on $\Omega_{\Lambda,\ve}$.
\item[(5)] {\it An obstruction to analyticity?} We first observe that for real $\lambda$ large, the ground state of $h_\lambda$, $\varphi_\lambda$, is well-approximated by the normalized ground state, $\varphi_\lambda^{\rm MHO}$, of the magnetic harmonic oscillator Hamiltonian. Therefore,  
 for all $\lambda$ real and sufficiently large, $\varphi_\lambda$, the ground state of $h_\lambda$  is given by a Riesz projection: a contour integral of  $\zeta\mapsto \big(h_\lambda-\zeta\Id\big)^{-1} \varphi_\lambda^{\rm MHO}$ (up to normalization). In the contour integral,  $\zeta$ varies over a circle centered at $e_\lambda$ of radius, say, half the distance to the second eigenvalue of $h_\lambda$. 
 
 Now, while $\lambda \mapsto \varphi_\lambda^{\rm MHO}$ has an analytic continuation for $\lambda\in\Omega_{\Lambda,\ve}$, the mapping 
  $\lambda\mapsto \big(h_\lambda-\zeta\Id\big)^{-1} \in\calB(L^2(\RR^2)) $,  does \emph{not}. Indeed, $h_\lambda$ is a relatively compact perturbation of $H^{\rm Landau}_\lambda$, and the resolvent of 
    the latter does not have an analytic continuation off the real axis; see \cite{krejcirik2024abruptchangesspectralaplacian}  and the discussion in \cref{prop:Landau resolvent does NOT continue}.
 \item[(6)] {\it The remedy:} We note that the above Riesz projection representation of $\varphi_\lambda$, up to normalization, works just as well if we replace $\varphi^{\rm MHO}_\lambda$ by $ \CharFun_B(X)\ \varphi^{\rm MHO}_\lambda$, where $\CharFun_B(X)$ denotes the projection to functions supported on a compact $B\subseteq\RR^2$, just as long as the resulting vector is not perpendicular to $\vf_\lambda$. Due to the explicit Gaussian form of $\vf_\lambda^{\rm MHO}$, non-orthogonality is guaranteed as long as $B$ contains any small disc centered at the origin. In this representation, the resolvent of $h_\lambda$ acts on a function with compact support. Analyticity of $\varphi_\lambda$ (up to normalization) then follows from a lemma in which we prove that the mapping \eq{
 \lambda\in \Omega_{\Lambda,\ve}\cap\RR\ \longmapsto\  \big(h_\lambda-\zeta\Id\big)^{-1}\CharFun_B(X)\in\calB\big(\CharFun_B(X) L^2(\RR^2)\to L^2(\RR^2)\big)
 } continues to an analytic function on $\Omega_{\Lambda,\ve}$ with exponential upper bounds.
\end{itemize}

We conclude this subsection with our key complex analysis lemma, which we use to produce lower bounds on $\rho_0(\lambda,b)$ and $\Delta_0(\lambda,b)$. Its statement as \cref{cor:main complex analytic lower bound} and proof, based on the Blaschke factorization of analytic functions,   are given in \cref{sec:bounds on analytic functions}. 
\begin{lem}[Complex analytic lemma]\label{lem:complex-analysis} Let $\alpha\in(0,1]$ and \eq{
        \Gamma_\alpha := \Set{z\in\CC:-\alpha\pi/2<\operatorname{Arg}(z)<\alpha\pi/2}\,.
    } Here $\operatorname{Arg}(z)$ refers to the principal value, taking values $-\pi <\operatorname{Arg}(z)\le \pi$. Let $F:\Gamma_\alpha\to\CC$ be analytic such that for some given $\beta\in(0,\infty)$ we have \eq{\abs{F(1)}\geq\ee^{-\beta}} and such that for some analytic \emph{nowhere-zero} function $U:\Gamma_\alpha\to\CC$ we have \eq{
            \abs{F(z)} \leq \abs{U(z)}\qquad(z\in\Gamma_\alpha)\,.
        }

        Then there is constant $C<\infty$ (independent of $\alpha,F$ and $\beta$) such that for all $R>2^\alpha$,
        \eq{
            \frac{1}{R}\int_{t=R}^{2 R}\mLog{F(t)}\dif{t}&\leq C\alpha 2^{1/\alpha}\left(\beta+\log\left(\abs{U(1)}\right)\right) R^{1/\alpha}+\\
            &\qquad+\frac{1}{R}\int_{t=R}^{2R}\mLog{U(t)}\dif{t}\,.
        } as well as
        \eq{
            \frac{1}{R}\int_{t=R}^{2R}\abs{\mLog{F(t)}+\log\left(\abs{U(t)}\right)-t^{1/\alpha}\mu_0}\dif{t}\  {=}\ o\left(R^{1/\alpha}\right)\quad \textrm{as $R\to\infty$, }
        } for some $\mu_0\geq0$.
        \end{lem}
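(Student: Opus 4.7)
The plan is to transplant the problem to the right half-plane $\Pi:=\Set{w\in\CC:\operatorname{Re}(w)>0}$ via the conformal map $z\mapsto z^{1/\alpha}$, and then apply the canonical factorization of $H^\infty$ functions together with the Herglotz representation of non-negative harmonic functions. First, since $U$ is analytic and nowhere zero on $\Gamma_\alpha$, the quotient $G:=F/U$ is analytic with $|G(z)|\leq 1$ throughout $\Gamma_\alpha$ and $|G(1)|\geq \ee^{-\gamma}$ where $\gamma:=\beta+\log(|U(1)|)$. Setting $H(w):=G(w^\alpha)$ yields an analytic function $H:\Pi\to\overline{\DD}$ with $|H(1)|\geq \ee^{-\gamma}$, and the change of variables $t=s^\alpha$ identifies the interval $t\in[R,2R]$ on the positive real axis of $\Gamma_\alpha$ with $s\in[A,2^{1/\alpha}A]$ on the positive real axis of $\Pi$, where $A:=R^{1/\alpha}$. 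The identity $-\log|F|=-\log|G|-\log|U|$ then reduces both claimed inequalities to estimates on $-\log|H|$ along the segment $[A,2^{1/\alpha}A]$.

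Next I would invoke the canonical factorization $H=B\cdot H_0$, with $B$ the Blaschke product over the zeros $\Set{w_k}$ of $H$ in $\Pi$ (normalized so $|B|\equiv 1$ on $\ii\RR$) and $H_0=H/B$ zero-free with $|H_0|\leq 1$. The function $-\log|H_0|$ is then a non-negative harmonic function on $\Pi$ admitting the Herglotz representation
\eq{
-\log|H_0(w)|=\mu_0\operatorname{Re}(w)+\frac{1}{\pi}\int_{\RR}\frac{\operatorname{Re}(w)}{(\operatorname{Re} w)^2+(\operatorname{Im} w-t)^2}\,\dif\mu(t)\,,
}
for some $\mu_0\geq 0$ and positive Borel measure $\mu$ on $\RR$ with $\int\dif\mu(t)/(1+t^2)<\infty$. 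Setting $w=1$ yields the constraint $\mu_0+\tfrac{1}{\pi}\!\int\!\dif\mu(t)/(1+t^2)\leq\gamma$, and the elementary inequality $s^2+t^2\geq s(1+t^2)$ (valid for $s\geq 1$) gives the pointwise bound $-\log|H_0(s)|\leq s\gamma$ for all real $s\geq 1$. The $\mu_0$ arising here is precisely the $\mu_0$ appearing in part (ii) of the lemma.

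For the Blaschke part, $-\log|B(s)|=\sum_k G_\Pi(s,w_k)$ where $G_\Pi(w,w')=\log(|w+\overline{w'}|/|w-w'|)$ is the Green's function of $\Pi$, subject to the constraint $\sum_k G_\Pi(1,w_k)=-\log|B(1)|\leq\gamma$. A direct computation using the antiderivative of $\log((s-a)^2+b^2)$ produces a uniform bound of the form $\int_A^{2^{1/\alpha}A}G_\Pi(s,w_k)\,\dif s\leq C_\alpha A^2\,G_\Pi(1,w_k)$ with $C_\alpha$ depending only on $\alpha$; summing over $k$ yields $\int_A^{2^{1/\alpha}A}(-\log|B(s)|)\,\dif s\leq C_\alpha\gamma A^2$. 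Combining with the $H_0$ bound and reversing the change of variables ($\dif t=\alpha s^{\alpha-1}\dif s$, using $s^{\alpha-1}\leq A^{\alpha-1}$ for $\alpha\leq 1$) produces the first claimed inequality with constant of order $\alpha 2^{1/\alpha}$. For the asymptotic statement, both summands of $-\log|H(s)|-\mu_0 s=-\log|B(s)|+\tfrac{1}{\pi}\!\int\! s\,\dif\mu(t)/(s^2+t^2)$ are non-negative; dividing by $A^2$ and integrating over $s\in[A,2^{1/\alpha}A]$, the Herglotz term vanishes in the limit by dominated convergence (integrand $\leq 1/(1+t^2)$ and $\to 0$ pointwise), and similarly the Blaschke contribution $\int G_\Pi(s,w_k)\dif s/A^2$ tends to $0$ pointwise in each $w_k$ with majorant $C_\alpha G_\Pi(1,w_k)$ summable by the Blaschke condition, so dominated convergence in $k$ gives the $o(R^{1/\alpha})$ bound after inversion.

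The main obstacle is the uniform Green's-function estimate $\int_A^{2^{1/\alpha}A}G_\Pi(s,w_k)\,\dif s\leq C_\alpha A^2\,G_\Pi(1,w_k)$, which requires a case analysis according to the position of $w_k$ relative to $[A,2^{1/\alpha}A]$. The regimes $|w_k|\ll A$ and $|w_k|\gg A$ both admit a rough pointwise bound $G_\Pi(s,w_k)\lesssim \operatorname{Re}(w_k)/(s^2+|w_k|^2)\cdot s$; the delicate case is $|w_k|\sim A$ with $w_k$ close to the real interval, where $G_\Pi(1,w_k)\sim\operatorname{Re}(w_k)/|w_k|^2\sim 1/A$ (if $\operatorname{Re}(w_k)\sim 1$) while $G_\Pi(s,w_k)$ acquires an integrable logarithmic singularity along the interval. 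In this regime one must evaluate the integral explicitly via the antiderivative $\int\log((s-a)^2+b^2)\dif s=(s-a)\log((s-a)^2+b^2)-2s+2b\arctan((s-a)/b)$ and verify that the singular but integrable contribution remains controlled by the required $A^2\,G_\Pi(1,w_k)\sim A$.
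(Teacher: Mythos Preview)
Your approach is essentially the same as the paper's: reduce to $F/U:\Gamma_\alpha\to\overline{\DD}$, transplant to the right half-plane, apply the Blaschke factorization together with the Herglotz representation of the non-negative harmonic function $-\log|H_0|$, and control the Blaschke part by summing single-factor estimates under the constraint $-\log|B(1)|\le\gamma$. The paper chooses the conformal map $z\mapsto z^{-\alpha}$ (so large $R$ in the wedge becomes small $\delta$ in $\HH$) rather than your $w\mapsto w^{\alpha}$, which makes the Poisson-kernel bound immediate; by contrast your stated inequality $s^2+t^2\ge s(1+t^2)$ for $s\ge 1$ is false (take $s=t=2$), though the valid replacement $\tfrac{s}{s^2+t^2}\le\tfrac{2s}{1+t^2}$ for $s\ge 1$ salvages your pointwise bound on $-\log|H_0(s)|$ up to a harmless factor. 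The uniform Green's-function estimate you correctly flag as the main obstacle is precisely what the paper carries out in detail, via elementary bounds on a single Blaschke factor $-\log|B_a(t)|$ in terms of $\min(\Re a/t,\,t/\Re a)$ and an explicit treatment of the near-diagonal case $\Re a\sim t$.
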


\subsection{Open questions}
We mention some natural questions to consider: 
\begin{enumerate}
    \item  Throughout our analysis, $\ve>0$ is a fixed and arbitrarily small independent parameter. Can one take $\ve\to0^+$ as $\lambda\to\infty$ with the goal of  obtaining the improved lower bound
    \eq{
    \Delta_0(\lambda) \gtrsim \exp\br{-c \lambda}.\qquad \textrm{for $\lambda\gg1$}\,\ ?
    } 
    \item Were we able to take $\varepsilon=0$ in the lower bounds for $\rho_0$, could one then extend the work of \cite{ShapWein22} on the tight binding reduction in strongly bound and strongly magnetic systems to generic double wells?
    
    \item Can we take the magnetic parameter, $b$,  to be large in \cref{thm:main theorem}? 
    Instead of assuming that $\abs{\arg(\lambda)}<\frac\pi2-\ve$ and $b\in(0,b_\ve)$, with $b_\ve$ small, let us suppose that $b\in(0,b_{\rm max})$ and $\abs{\arg(\lambda)} < \ve(b_{\rm max})$, with $b_{\rm max}$ large and $\ve(b_{\rm max})\ll 1$. 
 Then, we believe that our analytic continuation arguments would go through with minor changes. In place of upper bounds of the form $\exp\br{\abs{\lambda}^{1+\ve}}$ on the relevant analytic function, we would find instead upper bounds $\exp\br{\abs{\lambda}^{K(b_{\rm max})}}$ for a constant $K(b_{\rm max})\gg 1$. 
  Unless $\rho_0(\lambda)$ and $\Delta_0(\lambda)$ vanish identically for the given $b$, that would imply lower bounds of the form \eq{
        \abs{\rho_0(\lambda)},\abs{\Delta_0(\lambda)} > \exp\br{-\lambda^{K(b_{\rm max})}}
    } for $\lambda$ outside a set of density zero.  
    Joint analyticity in $(b,\lambda)$ for $\abs{\arg(\lambda)},\abs{\arg(b)}<\ve(b_{\rm max})$ would imply that $\rho_0(\lambda),\Delta_0(\lambda)$ vanish identically as a function of $\lambda$ at most for finitely many $b\in(0,b_{\rm max})$
    We haven't carried this out, and we do not know how to rule out the existence of $b$ for which, say, $\rho_0(\lambda)=0$ for all large $\lambda$.
    \item What is the precise dependence of the lower bounds of $\Delta_0$ and $\rho_0$ on the well-separation parameter, $d_1$, and what is the  leading term in their asymptotics in the generic case?
    \item Our results concern the splitting of
 the lowest two eigenvalues of the magnetic double well Hamiltonian. These eigenvalues perturb from and lie below the first Landau level.
 Do analogous results hold for higher Landau levels? 
 \item Can our results be extended to non-compactly supported single wells and systems with weak disorder,  and three dimensions? Can the regularity assumptions on $v$ be relaxed?

\end{enumerate}


\subsection{Organization of this paper.}
\cref{sec:the_proof} develops the analytic setup for the hopping coefficient $\rho_0$ and the splitting $\Delta_0$, proves analyticity on wedge-type regions in $\mathbb C$, and reduces the double–well problem to a question concerning a $2\times 2$ matrix 
which depends analytically on the parameters $\lambda$ and $b$.   \cref{sec:bounds on analytic functions} 
develops a general complex-analytic result, in  particular the proof of  \cref{lem:complex-analysis}, our tool for proving the generic lower bounds of \cref{thm:main theorem}.  \cref{sec:mesoscopic annuli} deals with the ``mesoscopic annuli'' construction used for the parametrix presented in \cref{sec:the_proof}. Background on pseudodifferential operators, used in \cref{sec:the_proof},  is presented in \cref{sec:pseudo diff operators}, and \deleted{and }\added{the relevant partition of unity is given in }\cref{lem:partition of unity}. The Landau resolvent for complex magnetic fields and the obstruction to a global analytic continuation of the resolvent as an operator on $L^2(\RR^2)$, are treated in \cref{sec:The Landau Resolvent at complex magnetic fields}. Finally an independent appendix,  \cref{sec:MHO}, contributed by Tal Shpigel, establishes resolvent analyticity and off–diagonal decay for the anisotropic magnetic harmonic oscillator needed throughout.

\subsection{Notation and conventions.}
\begin{enumerate}
    \item $\CharFun_B$ denotes the indicator function of the set $B\subseteq\RR^2$. $\CharFun_B(X)$ is the projection operator on $L^2(\RR^2)$ onto the subspace of functions supported within $B$.
    \item ${\cal B}(X)$ is the space of bounded linear operators $X\to X$ for a Banach space $X$, and $\calB(X\to Y)$ denotes the space of bounded linear operators $X\to Y$.
    \item $h_{\lambda,b}=h_\lambda$, the single-well magnetic Hamiltonian; see \cref{eq:single-well Hamiltonian}.
    \item $H_{\lambda,b}=H_\lambda$, the double-well magnetic Hamiltonian; see \cref{eq:double-well Hamiltonian}.
    \item $\rho_0(\lambda,b) \equiv \rho_0(\lambda) \equiv  \rho(\lambda)$, the magnetic hopping coefficient; see \cref{eq:hopping coefficient}.
    \item  $\Delta_0(\lambda,b) \equiv \Delta_0(\lambda) \equiv \Delta(\lambda) \equiv E_1(\lambda)-E_0(\lambda)$, the magnetic double-well eigenvalue splitting; see \cref{eq:splitting}. 
    \item If $(x,p)\mapsto a(x,p)$ is a smooth symbol, we write $\Op{a}$ for the corresponding pseudo-differential
     operator; see \cref{eq:action of pseudo-diff op}.
     \item $\mathbb{S}^1\equiv\Set{z\in\CC|\abs{z}=1}$.
     \item \added{Inner products are conjugate-linear in the first argument and linear in the second.}
     \item Throughout, $\Lambda>0$ is a minimal sufficiently large threshold ensuring that for any $\abs{\lambda}\geq\Lambda$ our standing hypotheses hold. The constant $\Lambda=\Lambda_\ve$ is chosen subject to finitely many constraints. It is thus a finite constant dependent on $v,\ve$ and the other constants in our assumptions. The parameter  $b$ is an \emph{independent} parameter that encodes the relative strength of the magnetic field compared to the potential well depth. We assume that $b>0$ and is less than a small enough $\ve$ dependent  constant, but is independent of $\Lambda_\ve$ (and of $\lambda$ of course). In particular, we assume no lower bound on $b>0$. 
     
    It follows that an expression like $\exp\br{C b \abs{\lambda}}$ may well be of order $1$ and so may not dominate polynomial factors in $\abs{\lambda}$. Hence in principle one would have to keep track of all polynomial powers of $\abs{\lambda}$ in addition to exponential expressions, if they include in the exponent a constant proportional to $b$ (most of our upper bounds do). 

     Instead of keeping explicit track of \deleted{explicit }polynomial powers of $\abs{\lambda}$ throughout, we write 
     \eq{
        A \lesssim \abs{\lambda}^{\sharp}\exp\br{C b \abs{\lambda}}
     } to indicate
     \eq{
        A \leq \abs{\lambda}^k \exp\br{C b \abs{\lambda}} 
     } where $k$ is a universal constant and $C$ depends only on the potential $v$. In different occurrences, $k$ and $C$ may denote different constants. For instance,
     \eq{
        A \lesssim \abs{\lambda}^{\sharp}\exp\br{C b \abs{\lambda}} \ \textrm{and}\  B  \lesssim \abs{\lambda}^{\sharp}\exp\br{C b \abs{\lambda}} \Longrightarrow AB \lesssim \abs{\lambda}^{\sharp}\exp\br{C b \abs{\lambda}}\,.
     }\added{ }We follow the same convention when estimates involve $\Lambda$ instead of $\abs{\lambda}$.
\end{enumerate}
\subsection*{Acknowledgements}
 CF was supported in part by NSF grant DMS-1700180. JS was supported in part by NSF grant DMS-2510207. MIW was supported in part by NSF grants DMS-1908657, DMS-1937254, DMS-2510769  and Simons Foundation Math + X Investigator Award \# 376319 (MIW). Part of this research was carried out during the 2023-24 academic year, when MIW was a Visiting Member in the School of Mathematics - Institute of Advanced Study, Princeton, supported by the Charles Simonyi Endowment, and a Visiting Fellow in the Department of Mathematics at Princeton University.
The authors wish to thank Antonio Cordoba and David Huse for stimulating discussions. We thank P.A. Deift and J. Lu for their careful reading of the \cite{FSW24} manuscript, and insightful comments and questions.

\section{Proof of the main theorem, \cref{thm:main theorem}}\label{sec:the_proof}

In the introduction, we used the zero subscript to indicate an object's association with  the \emph{ground state} of $h_{\lambda,b}$ or of $H_{\lambda,b}$.
To lighten up the notation, when there is no ambiguity we shall typically drop the zero subscript. Thus, we often replace $\rho_0(\lambda)$ by $\rho(\lambda)$ and  $\vf_{0,\lambda}$ by $\vf_\lambda$ etc. Similarly, when convenient we suppress the dependence on the magnetic parameter, $b$, as it is not an asymptotic parameter.

\subsection{Setup and strategy for analytic continuation argument and a lower bound for the hopping coefficient $\lambda\mapsto\rho_0(\lambda)$}\label{sec:setup-strategy}

To prove the main theorem on lower bounds for $\rho(\lambda)$ and $\Delta(\lambda)$, we shall apply the complex analytic \cref{lem:complex-analysis}, whose inputs are: an analytic function on a wedge which (a) satisfies a lower bound at a \emph{single point} in this wedge, and (b) an exponential upper bound throughout the wedge. 

We let $\ve\in(0,\frac\pi2)$ (so $\pi-2\ve$ is the aperture of the wedge), and choose $\Lambda$ to exceed some positive constant which depends on $\varepsilon$. We then consider the  unbounded and truncated wedge-shaped  regions: 
\begin{subequations}
\label{eq:Omega-annulus} 
\begin{align}
\Omega_{\Lambda,\varepsilon}&:=\Set{z\in\mathbb{C}|\Lambda<\left|z\right|\quad{\rm and}\quad\left|\arg\left(z\right)\right|<\frac{\pi}{2}-\varepsilon},\quad {\rm and}\\
\Omega_{\Lambda,2\Lambda,\varepsilon} &:=\Set{z\in\mathbb{C}|\Lambda<\left|z\right|<2\Lambda\quad{\rm and}\quad\left|\arg\left(z\right)\right|<\frac{\pi}{2}-\varepsilon}. 
\end{align}
\end{subequations} The latter set is where intermediate estimates will be derived, and is introduced for technical convenience.
We introduce the following representation of $\rho(\lambda)$: 
\eql{\label{eq:rMHO-ext}
\rho(\lambda) &:= \ip{\widehat{R}^{-d}\vf_{_{\overline{\lambda}\nc}}}{\br{H_\lambda-e_{\lambda}\Id}\widehat{R}^{d}\vf_{\lambda}}\notag\\
&= \lambda^2\int_{x\in\RR^2} \overline{\vf_{_{\overline{\lambda}\nc}}(x+d)}v(x+d)
   \exp\br{\ii b\lambda d_1 x_2}\vf_{\lambda}(x-d)\dif{x}\, ,
}
which  agrees with \cref{eq:hopping coefficient} for $\lambda\in\RR$, and which is more amenable to analytic extension (note the $\overline{\lambda}$ in the subscript). 
We have nothing to say about whether $\vf_{\lambda}$,  $\rho(\lambda)$ and $\Delta(\lambda)$ admit analytic continuations to $\Omega_{\Lambda,\varepsilon}$\color{black}. However, when these quantities are multiplied by appropriate $\lambda$-dependent factors, they do indeed have analytic continuations, and we can control those factors for large real values of $\lambda$. We now explain.

Assume $\lambda\in\RR$. Our study of the analyticity properties of $\vf_\lambda$ can be transferred to the  analyticity properties of the resolvent of $h_\lambda$ by introducing the Riesz projection:
 \eql{\label{eq:Riesz projection}
\widetilde\vf_{\lambda} 
&= \frac{\ii}{2\pi}\oint_{\Gamma_\lambda} \br{h_\lambda-z\Id}^{-1} \dif{z}\ \Phi_\lambda=
\frac{\ii}{2\pi}\oint_{|\xi|=1}\br{h_\lambda-z_\lambda(\xi)\Id}^{-1} z'_\lambda(\xi) \dif{\xi}\  \Phi_\lambda\,,
}
where $\Phi_\lambda$ is an analytic function of $\lambda\in\Omega_{\Lambda,\varepsilon}$, chosen below.
Choosing the contour $\Gamma_\lambda$ as a function $\mathbb S^1\ni \xi\mapsto z_\lambda(\xi)\in\CC$ which sweeps out a circle within which 
$e_\lambda$ is the only eigenvalue of $h_\lambda$ (see \cref{eq:spectral parameter for Riesz} below), we have 
\eql{
\widetilde\vf_\lambda =\left\langle\vf_\lambda,\Phi_\lambda\right\rangle\vf_\lambda\quad {\rm for}\quad \textrm{$\lambda\in\RR$ and sufficiently large}  .
\label{eq:Phi-project}
}
%
%
It follows that for $\lambda\in\RR$ the  representation \cref{eq:Riesz projection} yields 
a non-zero multiple of the ground state $\vf_\lambda$ if  $\Phi_\lambda$ is chosen so that it is not orthogonal to $\vf_\lambda$.
To ensure that $\left\langle\vf_\lambda,\Phi_\lambda\right\rangle\ne0$,  we actually choose $\Phi_\lambda$ so that $\|\vf_\lambda- \Phi_\lambda\|\to0$ as $\lambda\to\infty$; see \cref{eq:Phi-def}, \cref{eq:Phi_lambda-to-vf_lambda} below. 
\eq{\textrm{Note: $\widetilde\vf_\lambda$ is a non-normalized ground state of $h_\lambda$ with eigenvalue $e_\lambda$.}}

\subsubsection{The function $\mathbb{S}^1\ni\xi \mapsto z_\lambda(\xi)\in\mathbb C$ and the approximate ground state $\Phi_\lambda$.}

To proceed, we use the assumption that $v$ has a unique non-degenerate minimum to invoke the results of \cite{Matsumoto_1994}.
In particular, we assume \eql{\label{eq:v-Taylor} v\left(x\right) = -1+\frac{1}{2}\left\langle x,\Hessian{v}(0)x\right\rangle +\mathcal{O}\left(\norm{x}^{3}\right)\,\quad  \textrm{as}\quad x\to0.}\\
 Here, we have assumed $\Hessian{v}(0)>0$. As a result, the following magnetic harmonic oscillator Hamiltonian plays a central role for large $\lambda$:
\eql{\label{eq:h-MHO} h_{\lambda}^{\text{MHO}}:=\left(P-\frac{1}{2}b\lambda X^{\perp}\right)^{2}+\frac{1}{2}\lambda^{2}\left\langle X,\Hessian{v}(0)X\right\rangle\,.} 
Introduce the dilation $\br{\mathfrak{U}_{\alpha}f}(x)\equiv\alpha f(\alpha x)$, which is unitary in $L^2(\mathbb R^2)$ for $\alpha\in\mathbb R$ (see also \cref{eq:dilation operators}) and note that 
\eql{\mathfrak{U}_{\sqrt{\lambda}}^{\ast}h_{\lambda}^{\text{MHO}}\mathfrak{U}_{\sqrt{\lambda}}=\lambda h_1^{\text{MHO}} =: \lambda h^{\text{MHO}}.}
Denote the eigenvalues of $h^{\text{MHO}}$, in ascending order, by $e_j^{\text{MHO}}$, $j=0,1,2,\dots$. A consequence of \cite{Matsumoto_1994} is that for $\lambda$ \emph{real} and \deleted{and }$\lambda\ge\Lambda>0$ sufficiently large, we have
\eql{\label{eq:one-well eigenvalues}e_{\lambda,j}=-\lambda^{2}+e_{j}^{\text{MHO}}\lambda+\mathcal{O}\left(\lambda^{\frac{1}{2}}\right)\, ,\quad  j=0,1.} Below, we  make finitely many further assumptions on the size of $|\lambda|$ implying that $\Lambda=\Lambda_\varepsilon$ is a sufficiently large $\varepsilon$-dependent constant.

\added{Fix a constant $c_{\mathrm{ctr}}>0$, chosen sufficiently small in terms of $\Hessian v(0)$, so that the MHO resolvent bounds in \cref{prop:MHO analyticity and bounds} apply along the contour below.}

Let $\lambda\ge\Lambda_\varepsilon$. In \cref{eq:Riesz projection} we choose the contour $\xi\mapsto z_\lambda(\xi)$ as:
 \eql{\label{eq:spectral parameter for Riesz}
z_\lambda(\xi) := -\lambda^{2}+e_{0}^{\text{MHO}}\lambda+\replacedm{\frac{1}{2}}{c_{\mathrm{ctr}}}\left(e_{1}^{\text{MHO}}-e_{0}^{\text{MHO}}\right)\xi\lambda
\qquad(\xi\in\mathbb{S}^1), } 
and note that $\lambda\mapsto z_\lambda$ is an entire analytic function. 
This contour encircles exactly one simple eigenvalue of $h_\lambda$, the ground state $e_\lambda$ and is at a distance of order $\lambda$ from the spectrum of $h_\lambda$.  Further, by the approximation of $\varphi_\lambda$ by $\varphi_{\lambda}^{\rm MHO}=\varphi_{0,\lambda}^{\rm MHO}$, the normalized \deleted{the }ground state of the quantum magnetic harmonic oscillator Hamiltonian $h_\lambda^{\rm MHO}$
(see \cref{lem:MHO ground state approximates one-well ground state}),  we 
 have that $\|\varphi_\lambda-\varphi_\lambda^{\rm MHO}\|\lesssim \lambda^{-1/2}$ for $\lambda\in\RR$ and large.  
 Finally, we choose $\Phi_\lambda$ to be a cutoff version of $\varphi_\lambda^{\rm MHO}$:
 \eql{
      \label{eq:Phi-def}
      \Phi_\lambda := \CharFun_{B_a(0)}(X)\varphi_\lambda^{\rm MHO} 
  }
  where
 \eql{
  \label{eq:exact ground state of MHO1}
\vf_\lambda^{\rm MHO}(x) := \varphi_{0,\lambda,b}^{\rm MHO}(x) = C_b \sqrt{\lambda}\exp\br{-c\lambda \deletedm{\sqrt{1+b^2/2}}\ip{x}{ S_b x}}\ .
 }
 Here, $S_b$ is some symmetric positive definite $2\times2$ matrix and the constant $C_b$ is a positive normalization constant depending on $\Hessian{v}(0)$ and $b$, varying smoothly in the latter, and is uniformly bounded away from zero for $b\ge0$. In \cref{eq:exact ground state of MHO1}, $\lambda\mapsto\sqrt\lambda$ is defined on the cut plane $\CC\setminus(-\infty,0]$. Finally, the corresponding ground state eigenvalue is:\
 \eq{
  e_{0,\lambda,b}^{\rm MHO} = \lambda\sqrt{b^2+\frac12\br{\sqrt{\mu_1}+\sqrt{\mu_2}}^2}\,,
 }
 where $0<\mu_1\le\mu_2$ are the eigenvalues of $\Hessian{v}(0)$.

We make the following observations:
\begin{enumerate}
    \item By direct inspection of \cref{eq:exact ground state of MHO1}, the mapping $\lambda\mapsto \varphi_\lambda^{\rm MHO}$ analytically extends from $\RR_+$ to $\Omega_{\Lambda,\varepsilon}$.
    
    \item Due to the Gaussian tail of $\varphi_\lambda^{\rm MHO}$, we have the approximation:
    \eql{\label{eq:Phi_lambda-to-vf_lambda}
    \|\varphi_\lambda-\Phi_\lambda \|\lesssim \lambda^{-1/2},\textrm{$\lambda\in\RR$\quad  and sufficiently large.}
    }
    
    \item Due to the spatial projection factor $\CharFun_{B_a(0)}$ in $\Phi_\lambda$, the resolvent in $\br{h_\lambda-z_\lambda(\xi)\Id}^{-1}$ in \cref{eq:Riesz projection} now acts on the range of $\CharFun_{B_a(0)}(X)$; the significance is discussed earlier in items 5 and 6 of \cref{sec:sketch} and below in \cref{rem:analytic_ext}. Introduce the spatial projection operators
    \eq{
    Qf := \CharFun_{B_a(0)}(X)f\quad {\rm and}\quad 
    Q^\perp := \Id -Q.
    }
    Below, we'll construct an analytic map
    \eq{
    \lambda \mapsto A_\lambda,\quad  \Omega_{\Lambda,\ve}\longrightarrow\calB(L^2(\RR^2)),
    }
    which extends the resolvent acting on functions supported in $B_a(0)$, i.e.,
    \eq{
    A_\lambda= \br{h_\lambda-z_\lambda(\xi)\Id}^{-1}Q\quad \textrm{for}\quad \lambda\in \Omega_{\Lambda,\varepsilon}\cap \mathbb R,
    }
    where $z_\lambda(\xi)$ is given by \cref{eq:spectral parameter for Riesz}. {\it Note, since $z_\lambda$ depends on $\xi$, the operator $A_\lambda$ also depends on $\xi$. We typically suppress this $\xi$-dependence, since all statements below hold uniformly for $\xi\in\mathbb{S}^1$.}
    \footnote{In place of $Q$, one may use $\CharFun_B(X)$ for \emph{any} compact $B\subseteq\RR^2$, although our particular proof below uses the fact $\supp(v)\subseteq B_a(0)$.}
\end{enumerate}

Our strategy going forward  is to:
\begin{enumerate}
    \item Use the representation \cref{eq:Riesz projection} and the analyticity of $\lambda\mapsto A_\lambda =\br{h_\lambda-z_\lambda\Id}^{-1}Q$ to conclude that the non-normalized ground state of $h_\lambda$,
\eq{\textrm{ $\widetilde\vf_\lambda$ has an analytic extension to $\Omega_{\Lambda,\varepsilon}$.}} 
\item Conclude that  $\lambda\mapsto\widetilde{\rho}(\lambda)$, the  {\it non-normalized hopping coefficient}, obtained from $\rho(\lambda)$ by replacing $\vf_\lambda$ with $\widetilde\vf_\lambda$:
\eql{\label{eq:non-norm-rMHO-ext} 
\widetilde\rho(\lambda,b) :=
\lambda^2\int_{x\in\RR^2} \overline{\widetilde\vf_{\overline{\lambda},b\nc}(x+d)}
        v(x+d) \exp\br{\ii b\lambda d_1 x_2}\widetilde\vf_{\lambda,b}(x-d)\dif{x}\,
}  is analytic on $\Omega_{\Lambda,\varepsilon}$. We then,  apply complex analytic \cref{lem:complex-analysis} to first prove, for all $0<b<b_\ve$ sufficiently small and all $\lambda>\Lambda_\ve$, lower bounds  for $|\widetilde\rho(\lambda,b)|$ of the type that are asserted in \cref{thm:main theorem} of $|\rho(\lambda,b)|$.
\item To transfer these lower bounds on $|\widetilde\rho(\lambda,b)|$ to the asserted lower bounds on 
 $|\rho(\lambda,b)|$, take $\lambda$ in \cref{eq:Riesz projection} to be real and apply \cref{eq:Phi-project} to obtain 
that 
\eql{\label{eq:non-norm-rMHO-ext-real} 
 \widetilde{\rho}(\lambda,b) =
|\left\langle\vf_\lambda,\Phi_\lambda\right\rangle|^2\rho(\lambda,b)\,\quad \lambda\in\RR.}
\item %
Finally,  using that   
\eql{\label{eq:C_lambda_to_1}
\abs{\left\langle\vf_\lambda,\Phi_\lambda\right\rangle|^2-1} \lesssim \lambda^{-1}\qquad \textrm{for $\lambda\in\RR$ and $\lambda>\Lambda_\ve$,}}
 we obtain the desired lower bounds for $|\rho(\lambda,b)|$ for all $\lambda>\Lambda_\ve$ and $0<b<b_\ve$.
\end{enumerate}
All details required to complete this argument are presented in the subsections that follow, along with the 
 parallel arguments to establish lower bounds for $\Delta(\lambda)$ in \cref{subsec:lower bound on Delta}.

But first we briefly sketch the proof of \cref{eq:C_lambda_to_1}. 
 Clearly, 
 \eq{ \replacedm{\norm{\Phi_\lambda}^2-1}{1-\norm{\Phi_\lambda}^2} = \norm{Q^\perp \vf_\lambda^{\rm MHO}}^2\leq \exp\br{-c \abs{\lambda} a^2},\quad \lambda>\Lambda.}
 Introduce the ground state projection and its orthogonal complement:  
 \eq{
 P_\lambda := \vf_\lambda\otimes\vf_\lambda^\ast\quad {\rm and}\quad P_\lambda^\perp\equiv\Id-P_\lambda,\quad \lambda\in\RR.
 } 
By \cref{lem:MHO ground state approximates one-well ground state} below, 
\eq{
\norm{P_\lambda^\perp \vf_\lambda^{\rm MHO}}_{L^2} \lesssim \lambda^{-1/2}, \quad \lambda>\deletedm{\RR}\addedm{\Lambda}\,.}
Hence,
\eql{\label{eq:Clambda-near1}
1-\abs{\ip{\vf_\lambda}{\Phi}}^2  = 1-\norm{\Phi}^2+\norm{P_\lambda^\perp \Phi_\lambda}^2 \leq \exp\br{-c \lambda a^2} + \norm{P_\lambda^\perp \Phi_\lambda}^2
}
And finally, 
$\norm{P_\lambda^\perp \Phi_\lambda} \leq \norm{P_\lambda^\perp \vf_\lambda^{\rm MHO}} + \norm{P_\lambda^\perp Q^\perp \vf_\lambda^{\rm MHO}}\lesssim \lambda^{-1/2}$, implying 

\eql{
\label{eq:bounds on the constant C_lambda}
\left| 1 - \left|\left\langle\vf_\lambda,\Phi_\lambda\right\rangle\right|^2\right|\lesssim \lambda^{-1},\qquad \lambda>\Lambda.
}

\begin{rem}[Analytic extension of $(h_\lambda-e_\lambda\Id)^{-1}\CharFun_{B_a(0)}$, not $(h_\lambda-e_\lambda\Id)^{-1}$] \label{rem:analytic_ext}
As mentioned in \cref{sec:sketch}, a major hurdle arises from 
$h_\lambda$ (whose potential has compact support) being a relatively compact perturbation of $H^{\mathrm{Landau}}_\lambda$, the Landau Hamiltonian \cref{eq:HLandau}, and not of $h_{\lambda}^{\text{MHO}}$. 
 $H^{\mathrm{Landau}}_\lambda$  is ill-behaved for non-real values of $\lambda$. Indeed, for the Landau gauge, it was shown in \cite{krejcirik2024abruptchangesspectralaplacian} that as soon as $\lambda$ gains a nonzero imaginary part, the spectrum of $H^{\mathrm{Landau}}_\lambda$ becomes the whole complex plane. While \cite{krejcirik2024abruptchangesspectralaplacian} does not cover our present case of the symmetric gauge (for complex $\lambda$ the two gauges are \emph{not} related by a unitary transformation and hence may not be isospectral) we show below in \cref{prop:Landau resolvent does NOT continue} that also in the symmetric gauge, the resolvents of $H^{\mathrm{Landau}}_\lambda$ and of $h_\lambda$ do \emph{not} extend to the complex plane. {\it This is in sharp contrast to the resolvent of $h_{\lambda}^{\text{MHO}}$ which does indeed analytically extend with good bounds; we study this below in \cref{prop:MHO analyticity and bounds}.}
 \end{rem}
 
 %


\subsection{The cut-off Hamiltonian, $h_\lambda^\theta$, and the analytic extension of its resolvent }\label{subsubsec:extending h theta}

As an intermediate step, we consider the Landau Hamiltonian \emph{with a cut-off vector potential}. Let $\theta:\RR^2\to[0,1]$ be a smooth cut-off function with compact support such that $\theta=1$ on a generous neighborhood (to be specified  below) of $B_a(0)$, which contains the support of $v$; $\supp(v)\subset B_a(0)\Subset  \supp{(\theta)} $. Introduce the cut-off Landau Hamiltonian:  \eql{\label{eq:cut-off Landau def}
H^{\mathrm{Landau},\theta}_\lambda := \left(P-\frac{1}{2}b\lambda \theta(X)X^{\perp}\right)^{2}\,.
} Since $\theta$ has compact support, $H^{\mathrm{Landau},\theta}_\lambda$ is a relatively compact perturbation of the free Laplacian $P^2$, and so we expect much better behavior from its resolvent as $\lambda$ varies off the real axis into the complex plane. Further, we introduce  $h_\lambda^\theta$, the one-well Hamiltonian with cut-off vector potential:
\eql{
    h_\lambda^\theta := H^{\mathrm{Landau},\theta}_\lambda + \lambda^2v(X)\,.
}
We begin by extending the resolvent of $h_\lambda^\theta$, initially defined for $\lambda\in\RR$, to a well-defined operator, as summarized in \cref{lem:continuation of cut off one well resolvent}, for $\lambda $ varying in a complex region  $\Omega_{\Lambda,2\Lambda,\varepsilon}$; see \cref{eq:Omega-annulus}.

\begin{lem}[Analyticity and bounds for a cutoff resolvent]\label{lem:continuation of cut off one well resolvent}
   Fix $\ve\in(0,\pi/2)$ and $\Lambda$ sufficiently large (depending on $\varepsilon$).   
   The map \eq{
\Omega_{\Lambda,2\Lambda,\varepsilon}\cap\RR\ni\lambda\mapsto\br{h_\lambda^\theta-z_\lambda\Id}^{-1}\in\calB(L^2(\RR^2)) }
 extends to a map \eq{A^\theta_\lambda:\Omega_{\Lambda,2\Lambda,\varepsilon}\longrightarrow\calB(L^2(\RR^2))} which satisfies the following properties: 
\begin{enumerate}
    \item $\br{h_\lambda^\theta-z_\lambda\Id}A^\theta_\lambda = \Id$ for all $\lambda\in\Omega_{\Lambda,2\Lambda,\varepsilon}$.
    \item $\lambda\mapsto A_\lambda^\theta$ is analytic on $\Omega_{\Lambda,2\Lambda,\varepsilon}$.
    \item For some $\eta>0$, sufficiently small and independent of $\Lambda$, we have the bound \eql{\label{eq:estimate on analytic continuation of resolvent of cut off one well Hamiltonian}\norm{A_\lambda^\theta}\lesssim \Lambda^{-\eta}\,,} for all $\lambda\in\Omega_{\Lambda,2\Lambda,\varepsilon}$.
\end{enumerate}
\end{lem}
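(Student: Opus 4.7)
The plan is to construct an analytic parametrix $B_\lambda$ for $h_\lambda^\theta-z_\lambda\Id$ on $\Omega_{\Lambda,2\Lambda,\varepsilon}$ whose error $E_\lambda:=(h_\lambda^\theta-z_\lambda\Id)B_\lambda-\Id$ has operator norm at most $\tfrac12$, and then to invert by Neumann series, setting $A_\lambda^\theta:=B_\lambda(\Id+E_\lambda)^{-1}$. The parametrix glues two reference resolvents that are already analytic on the wedge: near the origin, the shifted magnetic harmonic oscillator resolvent $R_\lambda^{\mathrm{MHO}}:=(h_\lambda^{\mathrm{MHO}}-\lambda^2\Id-z_\lambda)^{-1}$ (the shift compensates $v(0)=-1$ so that the relevant MHO eigenvalue matches $z_\lambda$ per \cref{eq:one-well eigenvalues}), whose analyticity on the wedge, norm bound $\lesssim|\lambda|^{-1}$, and Agmon-type off-diagonal decay on the oscillator scale $|\lambda|^{-1/2}$ are supplied by the appendix \cref{prop:MHO analyticity and bounds}; far from the origin, the free resolvent $(P^2-z_\lambda)^{-1}$, for which the wedge condition $|\arg\lambda|<\frac\pi2-\varepsilon$ places $z_\lambda\approx-\lambda^2$ in a complex sector of opening $\sim\varepsilon$ away from $\sigma(P^2)=[0,\infty)$, yielding both the norm bound $\lesssim(\varepsilon|\lambda|^2)^{-1}$ and Combes--Thomas off-diagonal decay on scale $|\lambda|^{-1}$.

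Using \cref{sec:partition of unity}, introduce smooth cutoffs $\chi_{\mathrm{in}}+\chi_{\mathrm{out}}=\Id$ with $\supp\chi_{\mathrm{in}}\subseteq\{|x|\le 2r_\Lambda\}$ and $\supp\chi_{\mathrm{out}}\subseteq\{|x|\ge r_\Lambda\}$, together with fatter companions $\tilde\chi_\bullet\equiv 1$ on $\supp\chi_\bullet$. The mesoscopic radius $r_\Lambda$, to be chosen below, is arranged so that $\supp(v)\cup\supp(\theta)\subseteq\{|x|<r_\Lambda/2\}$, making $h_\lambda^\theta=P^2$ on $\supp\tilde\chi_{\mathrm{out}}$. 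Set
\eq{B_\lambda:=\chi_{\mathrm{in}}R_\lambda^{\mathrm{MHO}}\tilde\chi_{\mathrm{in}}+\chi_{\mathrm{out}}(P^2-z_\lambda)^{-1}\tilde\chi_{\mathrm{out}}\,.}
Applying $h_\lambda^\theta-z_\lambda\Id$ on the left and using the commutator identity $(h_\lambda^\theta-z_\lambda\Id)(\chi f)=\chi(h_\lambda^\theta-z_\lambda\Id)f+[h_\lambda^\theta,\chi]f$ yields $(h_\lambda^\theta-z_\lambda\Id)B_\lambda=\Id+E_\lambda^{\mathrm{Tay}}+E_\lambda^{\mathrm{comm},\mathrm{in}}+E_\lambda^{\mathrm{comm},\mathrm{out}}$, where $E_\lambda^{\mathrm{Tay}}=\chi_{\mathrm{in}}\bigl[\lambda^2 v(X)-\lambda^2\bigl(-1+\tfrac12\ip{X}{\Hessian{v}(0)X}\bigr)\bigr]R_\lambda^{\mathrm{MHO}}\tilde\chi_{\mathrm{in}}$ captures the $O(\lambda^2|x|^3)$ Taylor remainder of $v$ and has operator norm $\lesssim|\lambda|\, r_\Lambda^3$, while $E_\lambda^{\mathrm{comm},\bullet}=[h_\lambda^\theta,\chi_\bullet](\cdot)\tilde\chi_\bullet$ are commutator errors supported in the transition annulus $r_\Lambda\le|x|\le 2r_\Lambda$.

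The commutator terms are the main technical point: the first-order magnetic piece of $[h_\lambda^\theta,\chi_\bullet]$ carries a multiplication coefficient of size $\sim b|\lambda|$ on $\supp\nabla\chi_\bullet$, which cannot be beaten by operator-norm estimates alone. Following the mesoscopic annuli construction of \cref{sec:mesoscopic annuli}, we split $\tilde\chi_\bullet=\tilde\chi_\bullet^{\mathrm{far}}+\tilde\chi_\bullet^{\mathrm{near}}$: the piece $\tilde\chi_\bullet^{\mathrm{far}}$ is supported at distance $\gtrsim r_\Lambda$ from $\supp\nabla\chi_\bullet$ and benefits from the exponential off-diagonal decay of the reference resolvents (contributing $\lesssim\exp(-c|\lambda|^{1/2}r_\Lambda)$ on the MHO side and $\lesssim\exp(-c|\lambda|r_\Lambda)$ on the free side), while $\tilde\chi_\bullet^{\mathrm{near}}$ lives inside the transition annulus and is controlled by moving one derivative from the commutator onto $\tilde\chi_\bullet^{\mathrm{near}}$ together with the pseudodifferential ellipticity bounds of \cref{sec:pseudo diff operators}. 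Taking $r_\Lambda=|\lambda|^{-1/3-\delta}$ for a sufficiently small $\delta>0$ renders the Taylor remainder $O(|\lambda|^{-3\delta})$ and the commutator contributions super-polynomially small, giving $\|E_\lambda\|\le\tfrac12$ on $\Omega_{\Lambda,2\Lambda,\varepsilon}$ once $\Lambda$ is large.

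With $\|E_\lambda\|\le\tfrac12$, the Neumann series produces $A_\lambda^\theta:=B_\lambda(\Id+E_\lambda)^{-1}$, a composition of analytic families on $\Omega_{\Lambda,2\Lambda,\varepsilon}$, so item (2) holds; $(h_\lambda^\theta-z_\lambda\Id)A_\lambda^\theta=\Id$ is immediate from the parametrix identity (item (1)); and the norm bound $\|A_\lambda^\theta\|\le 2\|B_\lambda\|\lesssim|\lambda|^{-1}\le\Lambda^{-\eta}$ (for any $\eta\le 1$) yields item (3). On $\Omega_{\Lambda,2\Lambda,\varepsilon}\cap\RR$ the operator $h_\lambda^\theta-z_\lambda\Id$ is invertible (its spectrum is close to the shifted Landau-plus-well pattern of \cref{eq:one-well eigenvalues}, and $z_\lambda$ is bounded away by $\sim|\lambda|$), and uniqueness of the right inverse forces $A_\lambda^\theta=(h_\lambda^\theta-z_\lambda\Id)^{-1}$ there. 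The principal obstacle throughout is the commutator estimate: the $b|\lambda|$ magnetic piece of $[h_\lambda^\theta,\chi_\bullet]$ defeats crude operator-norm bounds and forces the combined use of mesoscopic cutoffs, the Agmon-type decay from \cref{prop:MHO analyticity and bounds}, and pseudodifferential ellipticity from \cref{sec:pseudo diff operators}.
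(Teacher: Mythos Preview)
Your two--zone parametrix has an internal contradiction that is fatal. You require the transition radius $r_\Lambda$ to satisfy $\supp(v)\cup\supp(\theta)\subseteq\{|x|<r_\Lambda/2\}$, which forces $r_\Lambda$ to be at least of order $1$ (since $\supp(v)\subseteq B_a(0)$ and $\theta\equiv1$ on a generous neighborhood of $B_a(0)$, with $a>0$ fixed). But you also set $r_\Lambda=|\lambda|^{-1/3-\delta}\to0$ in order to make the Taylor error $|\lambda|\,r_\Lambda^3$ small. These two constraints are incompatible. Concretely, in the region $|\lambda|^{-1/3}\lesssim|x|\lesssim 1$ neither reference resolvent is a good parametrix: the MHO approximation fails because the cubic Taylor remainder $\lambda^2 v_{\mathrm{err}}(x)$ is not small there, and the free resolvent fails because $h_\lambda^\theta-P^2$ contains the potential term $\lambda^2 v(X)$ and magnetic terms of size $b\lambda|x|$, producing an error of order $1$ (not $o(1)$) when multiplied by $\|(P^2-z_\lambda)^{-1}\|\sim|\lambda|^{-2}$ and $\|P(P^2-z_\lambda)^{-1}\|\sim|\lambda|^{-1}$.

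This intermediate region is precisely the reason the paper introduces $N_\Lambda\sim\log\Lambda$ dyadic annuli between the inner scale $\delta=\Lambda^{-1/2+\eta}$ and the outer scale $R\sim1$, and builds a local parametrix $T_{\lambda,\nu}$ in each one via the pseudodifferential local elliptic lemma (\cref{lem:local elliptic lemma}). The point is that on the $\nu$th annulus the rescaled symbol is elliptic of order $2$ with ellipticity parameter $M_\nu=2^{2\nu}\delta^2\Lambda\gg1$, because $\lambda^2(v(x)+1)\sim\lambda^2|x|^2$ supplies a large real part that (for $b$ small) dominates the magnetic and spectral terms. Your invocation of \cref{sec:mesoscopic annuli} as a device to control commutator tails misreads its role: those annuli are not an afterthought for the commutators but the main body of the parametrix, bridging the gap your two-zone scheme leaves open.
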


\begin{proof}[Proof of \cref{lem:continuation of cut off one well resolvent}]
Let $\varepsilon\in(0,\pi/2)$ be fixed throughout the proof.  Further, let $\lambda\in\Omega_{\Lambda,2\Lambda,\varepsilon}$, where  $\Lambda\geq\Lambda_\ve$: $\Lambda$ is arbitrarily large and $\Lambda_\ve$ is a sufficiently large $\ve$-dependent constant.
Next introduce a partition of unity $\Set{\chi_\nu}_{\nu=0}^{N_\Lambda}$, 
where $N_\Lambda<\infty$ is specified below.  For any $\nu\in\Set{0,\cdots,N_\Lambda}$, 
\eq{
\chi_\nu \in C^\infty(\RR^2\to[0,1])\quad {\rm and}\quad \sum_{\nu=0}^{N_\Lambda}\chi_\nu =1.
}
 We divide the members of the partition into three categories:
\begin{enumerate}
    \item $\nu=0$: $\chi_0$ is centered at the origin and has support of diameter of order $\delta\to0$ as $\Lambda\to\infty$, for some $\delta\equiv\delta(\Lambda)$ to be chosen below in \cref{eq:the choice of delta}. To be concrete, we choose $\chi_0$ such that $\supp\chi_0\subset B_{2C\delta}(0)$ and   $\chi_0\equiv1$ on $B_{C \delta}(0)$, where $C$ is a sufficiently large order $1$ constant.
     
    \item $\nu=N_\Lambda$: $\chi_{N_\Lambda}$ is supported in the complement of a sufficiently large disc of radius $R$ of order $1$ (in $\Lambda$) chosen so that the disc entirely covers $\supp(\theta)$, i.e. \eq{ \chi_{N_\Lambda}=0\quad {\rm on}\quad  B_R(0)\supset \supp \theta.}
    \item $\nu=1,\cdots,N_\Lambda-1$: The support of $\chi_\nu$ satisfies:
    \eq{\supp\chi_\nu \subset \Set{ 2^{\nu-1}\delta<\norm{x}< 2^{\nu+1}\delta\,
}\,.}
\noindent Hence we choose 
\eql{\label{eq:N_lambda} N_\Lambda := \lceil\log_2(R/\delta)\rceil.} 
\end{enumerate}

The construction of this partition of unity is given by
    \begin{lem}\label{lem:partition of unity}
        Fix $R>0$ such that ${\rm supp}(\theta)\subset B_R(0)$, and set $N_\Lambda := \lceil\log_2(R/\delta)\rceil$ with $\delta := \Lambda^{-1/2+\eta}$ for fixed $\eta>0$ sufficiently small. Also, define $\delta_\nu := 2^{\nu}\delta$.
        
        There exists a partition of unity, i.e., a sequence $\Set{\chi_\nu}_{\nu=0}^{N_\Lambda}$ where each $\chi_\nu:\RR^2\to[0,1]$ is smooth and such that \eq{
            1 = \sum_{\nu=0}^{N_\Lambda} \chi_\nu\,,
        } with the following properties:
        \begin{enumerate}
               \item For $\nu=0$, $\chi_0$ is equal to $1$ on $B_{C \delta}(0)$ for some sufficiently large order $1$ constant $C$ and is supported on $B_{2C\delta}(0)$.
    \item $\nu=N_\Lambda$: $\chi_{N_\Lambda}$ is supported in the complement \added{of} a disc of radius $R$.
    \item $\nu=1,\cdots,N_\Lambda-1$: $\chi_\nu$ is supported in an annulus with radii between $2^{\nu-1} \delta$ to $2^{\nu+1} \delta$.
            \item For each $\nu$, there exists some $C_\alpha<\infty$ such that \eq{\norm{\partial^\alpha\chi_\nu}_\infty\leq C_\alpha\delta_\nu^{-|\alpha|}} where $\alpha\in\NN^2_{\geq0}$ is any multi-index.
            \item Further, there is a second sequence of smooth functions, $\Set{\psi_\nu}_\nu$,  such that $\psi_\nu=1$ on $\supp(\chi_\nu)$, satisfying \eq{
                \norm{\partial^\alpha \psi_\nu}_\infty \leq C_\alpha \delta_\nu^{-|\alpha|}\ .
            } Moreover,  for any $x\in\RR^2$, no more than, say, \emph{four} values of $\nu$ have $x\in\supp(\psi_\nu)$. Below, we shall introduce  open sets $U_\nu\subset \RR^2$,  such that:
\eql{\label{eq:U_nu-def}
\addedm{\supp\psi_\nu\subset }U_\nu \deletedm{\subseteq}\addedm{:=}\Set{ 2^{\nu-3}\delta<\norm{x}< 2^{\nu+3}\delta};\quad\nu=1,\cdots,N_{\Lambda}-1\,.
} and $U_0 := B_{2^3\addedm{C_0}\delta}(0)$, with \added{$C_0$ a sufficiently large order-one constant,} and $U_{\deletedm{N}\addedm{N_\Lambda}}:=\RR^2\setminus\overline{B_{2^{N_\Lambda-3}\delta}(0)}$. \added{Moreover, we choose $\psi_{N_\Lambda}$ so that $\supp\psi_{N_\Lambda}$ is contained in the set where $v$ and $\theta$ are both zero.}
        \end{enumerate}
    \end{lem}
We omit the proof.

\ifshowchanges
\noindent\deleted{In addition to }$\deletedm{\Set{\chi_\nu}_\nu}$\deleted{, we introduce a second collection of cut-off functions, }$\deletedm{\Set{\psi_\nu}_\nu}$\deleted{, such that}
\[
\deletedm{\psi_\nu=1\ \textrm{ on }\ \supp(\chi_\nu),\quad \nu=0,1,\dots,N_\Lambda}
\]
\deleted{and such that for any }$\deletedm{x\in\RR^2}$\deleted{, no more than, say, four values of }$\deletedm{\nu}$\deleted{ have }$\deletedm{x\in\supp(\psi_\nu)}$\deleted{. We shall also make use of open subsets }$\deletedm{U_\nu}$\deleted{ of }$\deletedm{\RR^2}$\deleted{ below, each defined so it contains the support of }$\deletedm{\psi_\nu}$\deleted{. We shall take}
\[
\deletedm{\supp\psi_\nu\subset U_\nu :=\Set{ 2^{\nu-3}\delta<\norm{x}< 2^{\nu+3}\delta}\qquad(\nu=1,\cdots,N_\Lambda-1)\,.}
\]
\deleted{Together with the endpoint definitions }$\deletedm{U_0:=B_{2^3C_0\delta}(0)}$\deleted{ and }$\deletedm{U_{N_\Lambda}:=\overline{B_{2^{N_\Lambda-3}\delta}(0)}^c}$\deleted{, these sets contain the supports of the corresponding }$\deletedm{\psi_\nu}$\deleted{. Moreover, }$\deletedm{\supp\psi_{N_\Lambda}}$\deleted{ is contained in the set where }$\deletedm{v,\theta}$\deleted{ are both zero.}
\fi
As a proposed approximate inverse to  $h_\lambda^\theta-z_\lambda\Id$ we define the operator:
\eql{
\label{eq:putative expression for the inverse of h_lambdatheta}
\widetilde{A}^\theta_\lambda := \widetilde{A}^\theta_{\lambda,I} + \widetilde{A}^\theta_{\lambda,II} + \widetilde{A}^\theta_{\lambda,III}\,,
}  where
\begin{subequations}
\label{eq:tildeA}
\eq{
\widetilde{A}^\theta_{\lambda,I} &:= \psi_0 \br{h_{\lambda}^{\text{MHO}}-\br{z_\lambda+\lambda^2}\Id}^{-1}\chi_0\\
\widetilde{A}^\theta_{\lambda,II} &:=
    \sum_{\nu=1}^{N_\Lambda-1}T_{\lambda,\nu}\\
\widetilde{A}^\theta_{\lambda,III} &:=   
  \psi_{N_\Lambda}\br{P^2-z_\lambda\Id}^{-1}\chi_{N_\Lambda}, 
}
\end{subequations}
and $z_\lambda=z_\lambda(\xi)$ is given \added{by} \cref{eq:spectral parameter for Riesz}. The tildes here indicate that this will turn out to be an \emph{approximate} inverse which will then have to be corrected to get the actual inverse, $A_\lambda^\theta$.

Each term in \cref{eq:putative expression for the inverse of h_lambdatheta} can be constructed to be an approximate inverse 
of $h_\lambda^\theta-z_\lambda\Id$ acting on functions supported, respectively, on (I) the inner region (support of $\chi_0)$, (II) the annular regions (the supports of $\chi_\nu$, $\nu=1,\dots,N_\Lambda-1$, and (III) the outermost region (support of $\chi_{N_\Lambda}$).


The operators $\widetilde{A}^\theta_{\lambda,I}$ and  $\widetilde{A}^\theta_{\lambda,III}$  reflect that, on the support of $\chi_0$,  we have
$h^\theta_\lambda-z_\lambda\Id\approx (h_\lambda^{\rm MHO}-\lambda^2\Id) -z_\lambda\Id$ (see \eqref{eq:v-Taylor} and \eqref{eq:h-MHO}) and that on the support of $\chi_{N_\Lambda}$, we have $h^\theta_\lambda-z_\lambda\Id= P^2-z_\lambda\Id$ (since $v$ and $\theta$ vanish there).  The operators $T_{\lambda,\nu}$ will be introduced below and are used to construct an approximate inverse to $h^\theta_\lambda-z_\lambda\Id$ on functions supported in  the remaining annular region; their construction is done via the formalism of pseudo-differential operators whose symbols are elliptic in their respective annuli.

We first show that $\Omega_{\Lambda,2\Lambda,\varepsilon}\ni\lambda \mapsto \widetilde{A}^\theta_\lambda\in\mathcal{B}(L^2(\RR^2))$ is well-defined and analytic, satisfies the bound \cref{eq:estimate on analytic continuation of resolvent of cut off one well Hamiltonian} and is an approximate inverse to $h_\lambda^\theta-z_\lambda\Id$. Finally, we'll construct an exact inverse  $A^\theta_\lambda$ with these analyticity properties.

\subsubsection{Inner region; definition, analyticity and operator bound for   $\widetilde{A}^\theta_{\lambda,I}$  } 
The choice $\widetilde{A}^\theta_{\lambda,I}$ reflects that on the support of $\chi_0$  we have
$h^\theta_\lambda-z_\lambda\Id\approx (h_\lambda^{\rm MHO}-\lambda^2) -z_\lambda\Id$; see \cref{eq:v-Taylor,eq:h-MHO}. Here, $h_\lambda^{\rm MHO}$ denotes a magnetic harmonic oscillator Hamiltonian, whose quadratic potential is not required to be isotropic. In \cref{sec:MHO} below we construct its resolvent
\eq{ r_\lambda^{\rm MHO}(\lambda^2+z_\lambda) = \br{h_\lambda^{\rm MHO} -\br{\lambda^2+z_\lambda}\Id}^{-1},} along the contour $\Gamma_\lambda$ in \cref{eq:Riesz projection}, 
prove the required analyticity on $\lambda\in\Omega_{\Lambda,\varepsilon}$, and the bound
\eq{ \left\| \widetilde{A}^\theta_{\lambda,I}\right\| \lesssim \left\|    r_\lambda^{\rm MHO}(\lambda^2+z_\lambda)\right\| \lesssim |\lambda|^{-1}\,.}

\subsubsection{Outer region: definition, analyticity and operator bound for $\widetilde{A}^\theta_{\lambda,III}$}
The choice $\widetilde{A}^\theta_{\lambda,III}$ reflects that on the support of $\chi_{N_\Lambda}$, where $v$ and $\theta$ vanish, we have
\eq{
	h^\theta_\lambda-z_\lambda\Id= P^2-z_\lambda\Id\ .
}
 Thus the outer-region parametrix is constructed using the free resolvent
$\br{P^2-z_\lambda\Id}^{-1}$. It remains to prove that this free resolvent is analytic and satisfies the required bound for $\lambda\in\Omega_{\Lambda,\ve}$.

Since $z_\lambda=-\lambda^2+\Ord{\lambda}$, writing $\lambda=r\ee^{\ii\vartheta}\in\Omega_{\Lambda,\ve}$ gives
\eq{
	z_\lambda=-r^2\ee^{2\ii\vartheta}+\delta_\lambda,\qquad
	\abs{\delta_\lambda}\leq Cr,\qquad
	\abs{\vartheta}<\frac{\pi}{2}-\ve.
}
Moreover,
\eq{
	{\rm dist}\br{-r^2\ee^{2\ii\vartheta},[0,\infty)}
	=
	r^2\inf_{E\geq0}\abs{E+\ee^{2\ii\vartheta}}
	\geq r^2\sin\br{2\ve},
}
where the final inequality follows by minimizing
\eq{
	\abs{E+\ee^{2\ii\vartheta}}^2
	=
	E^2+2E\cos\br{2\vartheta}+1,\qquad E\geq0.
}
Therefore, if $r=\abs{\lambda}>\Lambda_\ve$ sufficiently large, then
\eq{
{\rm dist}\br{z_\lambda,\sigma(P^2)} = {\rm dist}\br{z_\lambda,[0,\infty)}
	&\geq
	{\rm dist}\br{-\lambda^2,[0,\infty)}-\abs{z_\lambda+\lambda^2}
	\\
	&\geq
	r^2\sin\br{2\ve}-Cr
	\geq
	\frac{1}{2}\sin\br{2\ve}\abs{\lambda}^2.
}
In particular, $z_\lambda\notin\sigma(P^2)$ for $\lambda\in\Omega_{\Lambda,\ve}$. Hence the analyticity of
\eq{
	\Omega_{\Lambda,\ve}\ni\lambda\mapsto \br{P^2-z_\lambda\Id}^{-1}\in\calB\br{L^2(\RR^2)}
}
follows from the analyticity of the free resolvent on $\CC\setminus[0,\infty)$ and that of $\lambda\mapsto z_\lambda$. Finally, by self-adjointness of $P^2$,
\eql{\label{eq:upper bound on free resolvent}
	\norm{\br{P^2-z_\lambda\Id}^{-1}}_{\calB\br{L^2(\RR^2)}}
	=
	\frac{1}{{\rm dist}\br{z_\lambda,[0,\infty)}}
	\leq
	\frac{2}{\sin\br{2\ve}\abs{\lambda}^2}
	\leq
	\frac{2}{\sin\br{2\ve}\Lambda^2},
	\qquad \lambda\in\Omega_{\Lambda,\ve}.
}
Consequently, $\Omega_{\Lambda,\ve}\ni\lambda\mapsto\widetilde{A}^\theta_{\lambda,III}\in\calB\br{L^2(\RR^2)}$ is analytic, and
\eq{
	\norm{\widetilde{A}^\theta_{\lambda,III}}_{\calB\br{L^2(\RR^2)}}
	\lesssim
	\frac{1}{\sin\br{2\ve}\Lambda^2},
	\qquad \lambda\in\Omega_{\Lambda,\ve}.
}

\subsubsection{Intermediate region; definition, analyticity and operator bound for  $\widetilde{A}^\theta_{\lambda,II}$}

In the intermediate annular region, the operator  $h^\theta_\lambda-z_\lambda\Id$, and hence its resolvent, is not well-represented throughout by a model operator with an explicit resolvent kernel representation. Here is where the dyadic decomposition into annuli enters. The insight is that provided $v$ is smooth it is possible to decompose the big annulus into sub-annuli $U_\nu$ on which $h^\theta_\lambda-z_\lambda\Id$ behaves like $-\Delta+M_\nu^2\Id$ after a rescaling (with $M_\nu^2\sim 2^{2\nu}\Lambda^{4\eta}$ and $\eta>0$  small),  which enables us to make sense of the operators: 
\eq{
T_{\lambda,\nu}\quad  ``="\quad  \psi_\nu \br{h^\theta_\lambda-z_\lambda\Id}^{-1}\chi_\nu\, ,
} 
as bounded operators on  $L^2(\RR^2)$.
We  implement this  via the theory of pseudo-differential operators. We now summarize the result, and defer the detailed proof to \cref{sec:mesoscopic annuli}.
 
 We work with the symbol $U_\nu\times\RR^2\ni(x,p)\mapsto s_\lambda^\theta(x,p)\in\CC$ associated with $h_\lambda^\theta-z_\lambda\Id$.
 Ellipticity of $s_\lambda^\theta$ and arguments based on \cref{lem:local elliptic lemma} below (applied to a rescaling of $s_\lambda^\theta$), imply the existence of a bounded operator $T_{\lambda,\nu}$ on $L^2(\RR^2)$, which is analytic for $\lambda$ varying in the truncated wedge $\Omega_{\Lambda,2\Lambda,\varepsilon}$ and satisfies:
\eql{\label{eq:estimate on approximate resolvent in mesoscopic annuli}
    \norm{T_{\lambda,\nu}}_{\calB(L^2(U_\nu))} \lesssim \frac{1}{\Lambda^2 2^{2\nu} \delta^2.}
} Further, $T_{\lambda,\nu}$ is an approximate inverse in the $\nu$-th annuli, in the sense that:
\eql{\label{eq:estimate on error term in approximate resolvent in mesoscopic annuli}
    \norm{\br{h_\lambda^\theta-z_\lambda\Id}T_{\lambda,\nu}-\chi_\nu}_{\calB(L^2(U_\nu))} \lesssim \frac{1}{\Lambda 2^{2\nu} \delta^2}\,.
} This requires us to  choose $\delta=\delta(\Lambda)$ so that
$\lim_{\Lambda\to\infty}\Lambda\delta^2 = \infty$. The latter is satisfied  by taking the radius of the inner disc to be:
\eql{\label{eq:the choice of delta}\delta(\Lambda) = \Lambda^{\eta-1/2},\quad \textrm{for some small $\eta>0$.}
}  Hence, $N_\Lambda\to\infty$, the number of annular subdivisions of the set $\delta(\Lambda)\le \|x\|\le R$  given by \cref{eq:N_lambda}, tends to infinity as $\Lambda\to\infty$.


%
\subsubsection{Proof that $\widetilde{A}^\theta_\lambda: L^2(\RR^2)\to H^2(\RR^2)$ is an approximate inverse to $h_\lambda^\theta-z_\lambda\Id$}
We rewrite  $\widetilde{A}^\theta_\lambda$, given by \cref{eq:tildeA} as 
\eq{
\widetilde{A}^\theta_\lambda = \psi_0 r_{\lambda,0}(z_\lambda)\chi_0 + \sum_{\nu=1}^{N_\Lambda-1}T_{\lambda,\nu}+\psi_{N_\Lambda}r_{\lambda,N_\Lambda}(z_\lambda)\chi_{N_\Lambda},
}
where 
 \eql{ r_{\lambda,0}(z_\lambda):=r_\lambda^{\rm MHO}(\lambda^2+z_\lambda),\qquad r_{\lambda,N_\Lambda}(z_\lambda):=\br{P^2-z_\lambda\Id}^{-1}.
\label{eq:r_lamnu-def}}
Here, 
$r^{\rm MHO}_\lambda(\lambda^2+z_\lambda)$, denotes the resolvent of $h_{\lambda}^{\text{MHO}}-\lambda^2\Id$ at spectral parameter $z=z_\lambda$:
\eql{ r_\lambda^{\rm MHO}(\lambda^2+z_\lambda) := \br{h_{\lambda}^{\text{MHO}}-\br{\lambda^2+z_\lambda}\Id}^{-1} . \label{eq:rMHO}}
Let us calculate $\br{h_\lambda^\theta-z_\lambda\Id}\widetilde{A^\theta_\lambda}$: 
\eq{
    \br{h_\lambda^\theta-z_\lambda\Id}\widetilde{A^\theta_\lambda} &= \sum_{\nu=0,N_\Lambda} \br{h_\lambda^\theta-z_\lambda\Id} \psi_\nu r_{\lambda,\nu}(z_\lambda)  \chi_\nu+\sum_{\nu=1}^{N_\Lambda-1} \br{h_\lambda^\theta-z_\lambda\Id} T_{\lambda,\nu}\\
    &=: \Id+K_\lambda^\theta,
}
where $K_\lambda^\theta$ is given explicitly by
\begin{align}\nonumber
K_\lambda^\theta &:= \left[h_\lambda^\theta ,\psi_0\right] r_{\lambda}^{\mathrm{MHO}}(\lambda^2+z_\lambda)  \chi_0 + \psi_0  \br{h_\lambda^\theta+\lambda^2\Id-h_\lambda^{\mathrm{MHO}}}r_{\lambda}^{\mathrm{MHO}}(\lambda^2+z_\lambda)  \chi_0\\
&+\sum_{\nu=1}^{N_\Lambda-1} \br{\br{h_\lambda^\theta-z_\lambda\Id} T_{\lambda,\nu}-\chi_\nu}+\left[h_\lambda^\theta ,\psi_{N_\Lambda}\right] \br{P^2-z_\lambda\Id}^{-1}  \chi_{N_\Lambda} .\nonumber\\
&{\ }\label{eq:error term for theta-resolvent}
\end{align}


In writing this expression, we have used the properties 
\begin{enumerate}
    \item $\psi_\nu\chi_\nu=\chi_\nu$, 
    \item $\psi_{N_\Lambda}\br{h^\theta_\lambda-P^2}=0$, and 
    \item $\sum_\nu\chi_\nu=\Id$.
\end{enumerate}

We claim that $\Omega_{\Lambda,2\Lambda,\ve}\ni\lambda\mapsto K_\lambda^\theta\in\mathcal{B}(L^2(\RR^2))$  is analytic and moreover, with the choice \cref{eq:the choice of delta}, satisfies the bound:
\eql{\label{eq:K_lam_theta-bound}
\norm{K_\lambda^\theta}_{\calB(L^2(\RR^2))} \lesssim  
\Lambda^{-\eta}.}
To prove the bound \cref{eq:K_lam_theta-bound},  we'll estimate the norm of each term in \cref{eq:error term for theta-resolvent} in $\mathcal{B}(L^2(\RR^2))$ and sum over $0\le\nu\le N_\Lambda$.
Provided \cref{eq:K_lam_theta-bound} holds, for $\Lambda$ sufficiently large, we can use a Neumann series to construct $(\Id+K_\lambda^\theta)^{-1}$, which is  also analytic as $\lambda$ ranges in $\Omega_{\Lambda,2\Lambda,\ve}$. The desired operator in the statement of \cref{lem:continuation of cut off one well resolvent} is thus given by: 
\eql{\label{eq:A_lambdatheta}
\boxed{A_\lambda^\theta := \widetilde{A^\theta_\lambda} \br{\Id+K_\lambda^\theta}^{-1}}\,.
}

The proof of  \cref{lem:continuation of cut off one well resolvent} is completed  by establishing  the assertions concerning $K_\lambda^\theta$, in particular the bound \eqref{eq:K_lam_theta-bound}, which depends on 
the existence, analyticity and bounds for the operators $T_{\lambda,\nu}$; see  \cref{eq:estimate on approximate resolvent in mesoscopic annuli} and \cref{eq:estimate on error term in approximate resolvent in mesoscopic annuli}. 
In the remainder of this subsection we prove the bound \eqref{eq:K_lam_theta-bound} on $K_\lambda^\theta$ using \cref{eq:estimate on approximate resolvent in mesoscopic annuli} and \cref{eq:estimate on error term in approximate resolvent in mesoscopic annuli}.
 We then complete the proof of  \cref{lem:continuation of cut off one well resolvent} in \cref{sec:mesoscopic annuli} by constructing the operators $T_{\lambda,\nu}$ and \replaced{derive}{deriving} their bounds.

\subsubsection{Bounding $K_\lambda^\theta$; \eqref{eq:K_lam_theta-bound} }

We proceed by bounding each term in the sum \cref{eq:error term for theta-resolvent}.

\paragraph{The MHO ($\nu=0$) contributions to \cref{eq:error term for theta-resolvent}.} We deal with the first two terms in \cref{eq:error term for theta-resolvent} separately using \cref{prop:MHO analyticity and bounds}.  

\noindent{\it Bound on the first term in \cref{eq:error term for theta-resolvent}:}
 Consider the commutator term:
\eq{
\left[h_\lambda^\theta ,\psi_0\right] r_{\lambda}^{\mathrm{MHO}}(\lambda^2+z_\lambda)  \chi_0 = \left[H_\lambda^{\mathrm{Landau},\theta} ,\psi_0\right] r_{\lambda}^{\mathrm{MHO}}(\lambda^2+z_\lambda)  \chi_0\,.
} 
where  $H_\lambda^{\mathrm{Landau},\theta}$, given in \cref{eq:cut-off Landau def}, can be expanded as:
\eq{
   H_\lambda^{\mathrm{Landau},\theta} = P^2-\frac{b\lambda}{2}\left( \theta X^\perp\cdot P + P\cdot\theta X^\perp\right) + \frac{b^2\lambda^2}{4}\br{X_1^2+X_2^2}\theta^2\ .
}
Therefore, 
\eql{
    \left[H_\lambda^{\mathrm{Landau},\theta} ,\psi_0\right] &=
     [P^2,\psi_0] - \frac{b\lambda}{2} [\theta X^\perp\cdot P,\psi_0] - \frac{b\lambda}{2}[ P\cdot\theta X^\perp,\psi_0]\notag \\
     &= -2i\nabla\psi_0\cdot P -\Delta\psi_0  - b\lambda\theta X^\perp\cdot (P\psi_0).\label{eq:HLandau-commutator}
}
Hence, 
\eql{
 \left[h_\lambda^\theta ,\psi_0\right]
 r_{\lambda}^{\mathrm{MHO}}(\lambda^2+z_\lambda)\chi_0
 &= \left[(\Delta\psi_0)(X) +  (P\psi_0)(X)\cdot P\right.\notag\\
 &\qquad\left. + b\lambda X^\perp\cdot(P\psi_0)(X)\right]
 r_{\lambda}^{\mathrm{MHO}}(\lambda^2+z_\lambda) \chi_0(X).
\label{eq:model-r_MHO1}}
Since $\psi_0$ is equal to one on a somewhat enlarged neighborhood of $\supp(\chi_0)$, we have
\begin{equation}\label{eq:sep-support}
\dist(\supp(\chi_0),\supp(\partial_i\psi_0)) \gtrsim c\delta = c \Lambda^{-1/2+\eta},
\end{equation}
where the latter equality holds by the choice in \cref{eq:the choice of delta}.
Therefore  
the spatial localization on the left in \cref{eq:model-r_MHO1} is on a compact subset of $\RR^2$ which is bounded away from the support of $ \chi_0$, which localizes on the right side of \cref{eq:model-r_MHO1}.
This is the setting of \cref{prop:MHO analyticity and bounds}.

By part 1 of \cref{prop:MHO analyticity and bounds}, $\lambda\mapsto\left[h_\lambda^\theta ,\psi_0\right] r_{\lambda}^{\mathrm{MHO}}(\lambda^2+z_\lambda)  \chi_0$ is  analytic as a $\mathcal{B}(L^2(\RR^2))$-valued  function on $\Omega_{\Lambda,\ve}$.
 We next apply the estimate in part 2  of \cref{prop:MHO analyticity and bounds}
for $\lambda\in\Omega_{\Lambda,\ve}$,  $S,T\subseteq\RR^2$ $j=1,2,\alpha=0,1$:
\eql{\label{eq:local-decay}
\norm{\chi_S(X) P_j^\alpha r_{\lambda}^{\mathrm{MHO}}(\lambda^2+z_\lambda)\chi_T(X)}_{\calB(L^2(\RR^2))} \lesssim \exp\br{-c \Lambda\dist(S,T)^2}
} \\
which holds provided
%
%
$ \dist(S,T)>\Lambda^{-1/2}$.  
We conclude from \cref{eq:local-decay} and \cref{eq:sep-support} that \eq{ \norm{\left[h_\lambda^\theta ,\psi_0\right] r_{\lambda}^{\mathrm{MHO}}(\lambda^2+z_\lambda)  \chi_0}\lesssim\exp\br{-c \Lambda^{\replacedm{\eta}{2\eta}}}=o(1)\quad \textrm{as $\Lambda\to\infty$}.}

\noindent{\it Bound on the second  term in \cref{eq:error term for theta-resolvent}:}
\eq{
\psi_0  \br{h_\lambda^\theta+\lambda^2\Id-h_\lambda^{\mathrm{MHO}}}r_{\lambda}^{\mathrm{MHO}}(\lambda^2+z_\lambda)  \chi_0
} we have, using the fact that $\theta=1$ on $\supp(\psi_0)$,
\eq{
\psi_0\br{h_\lambda^\theta+\lambda^2\Id-h_\lambda^{\mathrm{MHO}}} &= 
\lambda^2\psi_0(X)\left[ v(X) - \left(-\Id+\frac12\left\langle X,\Hessian{v}(0) X\right\rangle\right) \right]\\
&= \lambda^2 \psi_0(X) q_3(X),
}
where $|q_3(x)|\lesssim \|x\|^3$, and the implicit constant depends on $\max_{1\le i,j,k\le2}|\partial_i\partial_j\partial_k v\|_\infty$; this is the origin of the regularity assumption on $v$ in \cref{thm:main theorem}. Further,  in \cref{prop:MHO analyticity and bounds} below, we prove the bound:
\eql{\label{eq:rMHO-bound}
    \norm{r_{\lambda}^{\mathrm{MHO}}(\lambda^2+z_\lambda) }_{\calB(L^2(\RR^2))} \lesssim \Lambda^{-1}.
} Since the support of $\psi_0$ has radius $C'\delta$ for some constant $C'$, we have for $\lambda\in \Omega_{\Lambda,2\Lambda,\varepsilon}$
\begin{subequations}
\label{eq:harmonic_oscillator_approx_estimate}
\eql{
&\norm{\psi_0  \br{h_\lambda^\theta+\lambda^2\Id-h_\lambda^{\mathrm{MHO}}}r_{\lambda}^{\mathrm{MHO}}(\lambda^2+z_\lambda)  \chi_0}_{\calB(L^2(\RR^2))}\\
&\qquad \lesssim \left(\max_{\lambda\in \Omega_{\Lambda,2\Lambda,\varepsilon}}|\lambda|^2\right)  \Lambda^{-1}\times  (C'\delta)^3\\
&\lesssim (2\Lambda)^2\times \Lambda^{-1}\times  (C'\delta)^3\lesssim \Lambda \delta^3.  \label{eq:determine_delta} 
}
\end{subequations}
We see that if $\delta$ is chosen to be $o(\Lambda^{-1/3})$ we can make this term arbitrarily small by taking $\Lambda$ large; this is compatible with the choice $\delta(\Lambda)=\Lambda^{-1/2+\eta}$ in \cref{eq:the choice of delta}.

\begin{rem}[On the harmonic oscillator approximation]
For  $\lambda\in\Omega_{\Lambda,2\Lambda,\ve}$, the error bound \cref{eq:harmonic_oscillator_approx_estimate}, and the choice of $\delta(\Lambda)$,  specify the size of the neighborhood of  $x=0$ for which a ``good  local resolvent approximation'' of $h_\lambda-z_\lambda\Id$ is given by  
the magnetic harmonic oscillator resolvent $r_{\rm MHO}(z_\lambda)$. In \cref{eq:determine_delta}, the factor $\Lambda^{-1}$ coming from \cref{eq:rMHO-bound}
is set by the lower bound $|\lambda|\ge\Lambda $, while the factor $(2\Lambda)^2$ arises from the upper bound, $|\lambda|\le2\Lambda$. So clearly an upper bound on $|\lambda|$ is needed to approximate  the resolvent of 
$h_\lambda-z_\lambda\Id$ in any fixed open neighborhood of $x=0$. To access all $\lambda\ge \Lambda$ we shall, in 
\cref{sec:annulus-to-all-Omega-Lambda}, apply the construction of this section to a countable family of overlapping regions which cover all of $\Omega_{\Lambda,\varepsilon}$, and then extend  from $\Omega_{\Lambda,2\Lambda,\varepsilon}$ to all $\Omega_{\Lambda,\varepsilon}$  by an analytic continuation argument.
\end{rem}

\paragraph{Contributions from $T_{\lambda,\nu}$ terms.} Assuming the bound \cref{eq:estimate on error term in approximate resolvent in mesoscopic annuli}, which we prove  in \cref{sec:mesoscopic annuli}, we have that the sum of terms, corresponding  to $1\le\nu\le N_\Lambda-1$, is  bounded, using that $\replacedm{\delta^2\Lambda}{(\Lambda\delta^2)^{-1}}=\Lambda^{-2\eta}$, as follows
\eq{
\left\|\sum_{\nu=1}^{N_\Lambda-1} \br{\br{h_\lambda^\theta-z_\lambda\Id} T_{\lambda,\nu}-\chi_\nu}\right\| \le 
\Lambda^{-2\eta} \sum_{\nu=1}^{N_\Lambda-1}  2^{-2\nu}\lesssim \Lambda^{-2\eta}\,.
}

\paragraph{The free Laplacian contributions ($\nu=N_\Lambda$) to \cref{eq:error term for theta-resolvent}.}  The support of $\chi_{N_\Lambda}$ is disjoint from $\supp(\theta)$. Since $\supp(\theta)\supseteq\supp(v)$, we have $h_\lambda^\theta =P^2$ on the support of $\chi_{N_\Lambda}$ and so
\eql{
 \left[h_\lambda^\theta ,\psi_{N_\Lambda}\right] \br{P^2-z_\lambda\Id}^{-1}  \chi_{N_\Lambda}=\left[(P^2\psi_{N_\Lambda})(X) +  (P\psi_{N_\Lambda})(X)\cdot P \right]
\br{P^2-z_\lambda\Id}^{-1} \chi_{N_\Lambda}(X).
\label{eq:model-r_MHO}}
Note that  $S:=\supp(P \psi_{N_\Lambda})$ and 
  $T:=\supp(\chi_{N_\Lambda})$ satisfy  ${\rm dist}(S,T)\ge c>0$, since $\psi_{N_\Lambda}\equiv1$ on an order one neighborhood of $\supp(\chi_{N_\Lambda})$.

 For $\lambda\in\Omega_{\Lambda,2\Lambda,\varepsilon}$, with $\Lambda$ large,  we have  $z_\lambda= -\lambda^2+\calO(\lambda) = -|\lambda|^2 \exp(2\ii \arg(\lambda))+\calO(\lambda)$. 
Therefore,  $|E-z_\lambda|=|E+|\lambda|^2 \exp(2\ii \arg(\lambda))|+\calO(\lambda)\ge \Lambda^2|\sin(2\varepsilon)|$, since $|\lambda|>\Lambda$ and $\arg(\lambda)< \pi/2-\varepsilon$.\\
Therefore, 
\eql{\label{eq:upper bound on free resolvent repeated}
\norm{\br{P^2-z_\lambda\Id}^{-1}}_{\mathcal{B}(L^2(\RR^2))} &= \frac{1}{\inf_{E\in[0,\infty)}|E-z_\lambda|}\lesssim \frac{1}{\sin(2\ve)\Lambda^2}\,.
} 
From \cref{eq:upper bound on free resolvent} a Combes-Thomas type bound for $P^2=-\Delta$,   $j=1,2$ and $\alpha=0,1$ \cite[Eq.~(7.42) and the discussion after Eqs.~(7.44)–(7.45)]{Teschl2014},
\cite[\href{https://dlmf.nist.gov/10.29.E4}{(10.29.4)},
      \href{https://dlmf.nist.gov/10.40.E2}{(10.40.2)},
      \href{https://dlmf.nist.gov/10.40.E8}{(10.40.8)}]{DLMF}:
\eql{\label{eq:diagonal-decay}
\norm{\chi_S P_j^\alpha \br{P^2-z_\lambda\Id}^{-1} \chi_T}_{\mathcal{B}(L^2(\RR^2))}  \lesssim \exp\br{-c\sqrt{\sin(\ve)\Lambda^2}\dist(S,T)}\,.
}
Therefore, the $\nu=N_\Lambda$-contribution to \cref{eq:error term for theta-resolvent},
$\left[h_\lambda^\theta ,\psi_{N_\Lambda}\right] \br{P^2-z_\lambda\Id}^{-1}  \chi_{N_\Lambda}$, satisfies the same   exponentially  decaying bound, \cref{eq:diagonal-decay}, as  $\Lambda\to\infty$. 

The proof of \cref{lem:continuation of cut off one well resolvent} will be completed in \cref{sec:mesoscopic annuli} below, where we construct the operators $T_{\lambda,\nu}$, $\nu=1,\cdots,N_\Lambda-1$, and prove that they \replaced{satisfies}{satisfy} the bounds  \cref{eq:estimate on approximate resolvent in mesoscopic annuli,eq:estimate on error term in approximate resolvent in mesoscopic annuli}.
 \end{proof}

\subsection{Removing the cut-off to transfer results  from $h_\lambda^\theta$ to $h_\lambda$}\label{subsubsec:removing the cut off}

As explained in \cref{sec:sketch}, in the sketch of the proof, we cannot hope to analytically continue $\br{h_\lambda-z_\lambda\Id}^{-1}$ as an operator on $L^2(\RR^2)$. Fortunately, our strategy only requires an analytic extension of the operator $\br{h_\lambda-z_\lambda\Id}^{-1}Q$, where $Q=\CharFun_{B_a(0)}(X)$ is the projection onto functions supported in $B_a(0)\supseteq\supp(v)$. 
\begin{lem}[Analyticity and bounds for the resolvent, with cut-off $\theta$ removed]\label{lem:continuation of one-well resolvent}
    The one-well resolvent, acting on functions supported in $B_a(0)$,
    \eq{
    \RR_{\geq \Lambda}\ni\lambda  \mapsto \br{h_\lambda-z_\lambda\Id}^{-1}Q \in \calB\br{QL^2(\RR^2)\to L^2(\RR^2)}
    } extends to an operator 
    \eq{
      \Omega_{\Lambda,\ve} \ni\lambda \mapsto A_\lambda \in \calB\br{QL^2(\RR^2)\to L^2(\RR^2)}}
 with the following properties:
    \begin{enumerate}
    \item $\br{h_\lambda-z_\lambda\Id}A_\lambda = Q$ for all $\lambda\in\Omega_{\Lambda,\varepsilon}$.
    \item $\lambda\in \Omega_{\Lambda,\varepsilon} \mapsto A_\lambda\in\calB(QL^2(\RR^2)\to L^2(\RR^2))$ is analytic. \\
    \item For all $\lambda\in\Omega_{\Lambda,\varepsilon}$,  \eql{\label{eq:bound on analytic continuation of resolvent of single well Hamiltonian}\norm{A_\lambda}_{\calB(QL^2(\RR^2)\to L^2(\RR^2))}\lesssim \replacedm{\abs{\lambda}^{-\eta}}{\abs{\lambda}^{\sharp}}\,.} 
    \item $A_\lambda$ and its derivatives \replaced{satisfies}{satisfy} an off-diagonal exponential decay bound, in the sense that, for all $j=1,2$ and $\alpha=0,1$, \eql{\label{eq:off-diagonal exp decay of resolvent}
        \norm{\chi_S P_j^\alpha A_\lambda }_{\calB(QL^2(\RR^2)\to L^2(\RR^2))} \lesssim \exp\br{-c \abs{\lambda} \dist(S,B_{2a}(0))}\nonumber\\\qquad(S\subseteq\RR^2)\,.
    }
\end{enumerate}
\end{lem}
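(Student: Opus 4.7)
The plan is to transfer the analytic continuation $A_\lambda^\theta$ from \cref{lem:continuation of cut off one well resolvent} over to $h_\lambda$, exploiting the fact that inputs from $QL^2(\RR^2)$ are supported strictly inside $\{\theta=1\}$, where $h_\lambda$ and $h_\lambda^\theta$ coincide. Since \cref{lem:continuation of cut off one well resolvent} delivers $A_\lambda^\theta$ only on the dyadic annulus $\Omega_{\Lambda,2\Lambda,\varepsilon}$, I would first apply it with $\Lambda$ replaced by $2^k\Lambda$ for $k=0,1,2,\ldots$ (enlarging each annulus slightly so consecutive ones overlap on open subsets) and patch the local extensions by uniqueness of analytic continuation into a single analytic map $A_\lambda^\theta:\Omega_{\Lambda,\varepsilon}\to\calB(L^2(\RR^2))$ with $\norm{A_\lambda^\theta}\lesssim\abs{\lambda}^{-\eta}$.

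Next, fix a finite nested family of smooth compactly supported cutoffs $\tilde\theta_0\prec\tilde\theta_1\prec\cdots\prec\tilde\theta_M$, each supported strictly inside the interior of $\{\theta=1\}$, with $\tilde\theta_0=1$ on $B_{2a}(0)$ and $\tilde\theta_{k+1}=1$ on $\supp(\tilde\theta_k)$. The key cancellation is that $\tilde\theta_k(h_\lambda-h_\lambda^\theta)=0$ as an operator identity, because the coefficients of $h_\lambda-h_\lambda^\theta$ involve $(1-\theta)$ and $\nabla\theta$, both of which vanish on $\supp(\tilde\theta_k)$. Combined with $(h_\lambda^\theta-z_\lambda\Id)A_\lambda^\theta=\Id$, this yields, for any $g\in L^2(\RR^2)$,
\begin{equation*}
(h_\lambda-z_\lambda\Id)\,\tilde\theta_k A_\lambda^\theta g \;=\; \tilde\theta_k\, g \;+\; [h_\lambda,\tilde\theta_k]\,A_\lambda^\theta g\,.
\end{equation*}
Setting $E_{-1}:=Q$ and iteratively $E_k:=[h_\lambda,\tilde\theta_k]A_\lambda^\theta E_{k-1}$ (so that $\supp(E_k)\subseteq\supp(\nabla\tilde\theta_k)\subseteq\{\tilde\theta_{k+1}=1\}$, making $\tilde\theta_{k+1}E_k=E_k$), the identity telescopes to
\begin{equation*}
(h_\lambda-z_\lambda\Id)\sum_{k=0}^{M}(-1)^k\tilde\theta_k A_\lambda^\theta E_{k-1} \;=\; Q \;+\; (-1)^M E_M\,,
\end{equation*}
and I would define $A_\lambda$ by this finite sum, corrected by a Neumann series in $E_M$ once $\norm{E_M}<1$.

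Property (1) is then automatic by construction, and property (2) (analyticity in $\lambda$) is inherited termwise from the analyticity of $A_\lambda^\theta$ composed with polynomial-in-$\lambda$ multiplication and first-order differential operators. Properties (3) and (4) both reduce to chaining the bounds $\norm{[h_\lambda,\tilde\theta_k]}\lesssim 1+b\abs{\lambda}$ with a Combes--Thomas-type off-diagonal estimate
\begin{equation*}
    \norm{\chi_S A_\lambda^\theta \chi_T}\lesssim\exp\br{-c\abs{\lambda}\,\dist(S,T)}\qquad(\lambda\in\Omega_{\Lambda,\varepsilon})
\end{equation*}
for $A_\lambda^\theta$, which I would derive by conjugating $h_\lambda^\theta-z_\lambda\Id$ with a weight $e^{c\phi}$ of small Lipschitz gradient (in the spirit of the free-Laplacian Combes--Thomas bound used already inside the proof of \cref{lem:continuation of cut off one well resolvent}) and then analytically continuing in $\lambda$. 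The same decay forces geometric decrease $\norm{E_k}\ll\norm{E_{k-1}}$, ensuring convergence of the Neumann correction, and transfers to off-diagonal decay of $A_\lambda$ away from $B_{2a}(0)$ through the bounded supports of the $\tilde\theta_k$.

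The main obstacle is honestly establishing this off-diagonal decay of $A_\lambda^\theta$ uniformly on the complex wedge $\Omega_{\Lambda,\varepsilon}$: the $e^{c\phi}$ conjugation creates perturbation terms of size $\abs{\lambda}\abs{\nabla\phi}$ that must remain controllable through the entire partition-of-unity parametrix construction of \cref{lem:continuation of cut off one well resolvent}. Handling this requires redoing that construction in the conjugated variables and checking that all of the ingredient estimates — in particular the Gaussian off-diagonal decay of $r_\lambda^{\mathrm{MHO}}$ from \cref{prop:MHO analyticity and bounds} and the mesoscopic pseudo-differential inverses $T_{\lambda,\nu}$ — survive both the conjugation and the analytic continuation off the real axis.
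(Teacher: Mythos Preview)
Your telescoping iteration produces
\[
(h_\lambda - z_\lambda\Id)\sum_{k=0}^{M}(-1)^k\tilde\theta_k A_\lambda^\theta E_{k-1} \;=\; Q \;+\; (-1)^M E_M
\]
on $QL^2$, but $E_M$ has range in $\supp(\nabla\tilde\theta_M)$, which is disjoint from $B_a(0)$. Hence $E_M:QL^2\to (QL^2)^\perp$, and $Q+(-1)^ME_M$ is not an endomorphism of $QL^2$; no Neumann series on $QL^2$ can correct it. Iterating further just pushes the error outward, and since all cutoffs must remain inside $\{\theta=1\}$, you run out of room in finitely many steps without ever landing back in $QL^2$. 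This is exactly the ``leaking outside of $Q$'' obstruction the paper isolates. The paper's fix is different in kind: it applies the analytically continued Landau resolvent $R^{\rm Landau}_\lambda(z_\lambda)$ (well-defined on compactly supported inputs by \cref{prop:analytic extension of Landau resolvent on compact sets}) to the commutator term and uses the identity
\[
(h_\lambda-z_\lambda\Id)R^{\rm Landau}_\lambda(z_\lambda)[h_\lambda,\chi]A_\lambda^\theta Q
= [h_\lambda,\chi]A_\lambda^\theta Q + \lambda^2 v(X)R^{\rm Landau}_\lambda(z_\lambda)[h_\lambda,\chi]A_\lambda^\theta Q.
\]
The point is that $v(X)$ forces the remaining error $K_\lambda=-\lambda^2 Qv(X)R^{\rm Landau}_\lambda(z_\lambda)[h_\lambda,\chi]A_\lambda^\theta Q$ back into $QL^2$, so the Neumann series genuinely closes there. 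The off-diagonal decay of $A_\lambda$ then comes for free from the explicit off-diagonal decay of the Landau resolvent, \cref{eq:Landau off diagonal decay}, without needing any Combes--Thomas estimate for $A_\lambda^\theta$.

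Your ``main obstacle'' --- establishing a Combes--Thomas bound for $A_\lambda^\theta$ uniformly on $\Omega_{\Lambda,\varepsilon}$ --- is indeed serious, and you would still face it even after repairing the closure issue. Conjugating the entire parametrix of \cref{lem:continuation of cut off one well resolvent} by $e^{c\phi}$ means redoing the MHO resolvent bounds of \cref{prop:MHO analyticity and bounds} and the mesoscopic ellipticity estimates under a weight whose gradient is order $|\lambda|$, precisely the scale at which those estimates are already tight. The paper avoids this entirely: only $\|A_\lambda^\theta\|$ and $\|P_jA_\lambda^\theta\|$ enter, and all spatial decay is supplied by the Landau kernel.
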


\begin{proof}[Proof of \cref{lem:continuation of one-well resolvent}]
    
Recall that $Q$ denotes the projection $Q\equiv \CharFun_{B_a(0)}(X)$ and $Q^\perp \equiv\CharFun_{B_a(0)^c}(X)$, with $Q+Q^\perp=\Id$.
Choose a \emph{new} smooth cut-off function $\chi\in C^\infty(\RR^2\to[0,1])$ (distinct from the $\chi_\nu$'s of the previous section) such that:
\eq{
\supp(\chi)\subset B_{2a}(0)\quad \textrm{and}\quad  \chi\equiv 1
\quad \textrm{ on\quad   $B_a(0)$}, 
}
and hence  $\chi Q=Q$.
Further, introduce a new cutoff, $\theta$, such that 
\eq{ \theta\equiv 1 \quad \textrm{on}\quad B_{2a}(0),\quad \supp(\theta)\subset B_R(0),}
 where $R>2a$.

We seek an operator $A_\lambda$ satisfying $(h_\lambda-z_\lambda)A_\lambda Q=Q$ and $QA_\lambda Q\in {\mathcal B}(L^2(\RR^2))$. Note that $\chi(h_\lambda-h_\lambda^\theta)=0$, since $h_\lambda-h_\lambda^\theta=0$ on $B_{2a}(0)$. We obtain
\begin{align}
    \br{h_\lambda-z_\lambda\Id}\chi A_\lambda^\theta Q  &= \left[\br{h_\lambda-z_\lambda\Id},\chi\right]A_\lambda^\theta Q+\chi \br{h_\lambda-h_\lambda^\theta}A_\lambda^\theta Q+Q\nonumber \\
    &= 
\left[h_\lambda,\chi\right]A_\lambda^\theta Q+Q \nonumber\\
&= \left(\left[h_\lambda,\chi\right]A_\lambda^\theta Q + \Id\right)Q.\label{eq:failed-right-inverse-0}
\end{align}
 
 Using $Q+Q^\perp=\Id$, we have 
\begin{equation}
\br{h_\lambda-z_\lambda\Id}\chi A_\lambda^\theta Q = Q\left( [h_\lambda,\chi]A_\lambda^\theta +\Id\right) + Q^\perp\left[h_\lambda,\chi\right]A_\lambda^\theta Q, 
\label{eq:failed-right-inverse}\end{equation}
where the latter term in \eqref{eq:failed-right-inverse}, $Q^\perp\left[h_\lambda,\chi\right]A_\lambda^\theta Q$, is nonzero; 
there is ``leakage of $\left[h_\lambda,\chi\right]$  outside of $Q$''.
Were this latter term absent, smallness of $\left[h_\lambda,\chi\right]A_\lambda^\theta$ (using \cref{eq:estimate on analytic continuation of resolvent of cut off one well Hamiltonian}) would then imply a right inverse for $h_\lambda-z_\lambda\Id$.


To deal with this issue, note that
\eq{ h_\lambda- z_\lambda\Id = \br{P-\frac{\lambda b}{2}X^\perp}^2 - z_\lambda\Id +\lambda^2 v(X) = H_\lambda^{\rm Landau} - z_\lambda\Id +\lambda^2 v(X)
}
and hence, after applying $R^{\rm Landau}_\lambda(z_\lambda)$ on the right and rearranging terms we obtain:
\eq{ (h_\lambda- z_\lambda\Id)R^{\rm Landau}_\lambda(z_\lambda)-
\lambda^2 v(X)R^{\rm Landau}_\lambda(z_\lambda)= \Id  }

Applying this identity to \eqref{eq:failed-right-inverse-0}, we obtain
\eq{
    \left[h_\lambda,\chi\right]A_\lambda^\theta Q &=  \br{h_\lambda-z_\lambda\Id}R^{\mathrm{Landau}}_\lambda(z_\lambda)  \left[h_\lambda,\chi\right]A_\lambda^\theta Q-\lambda^2v(X)R^{\mathrm{Landau}}_\lambda(z_\lambda) \left[h_\lambda,\chi\right]A_\lambda^\theta Q \,.
}
The previous equation is valid for all $\lambda\in\Omega_{\Lambda,2\Lambda,\varepsilon}$, thanks to \cref{prop:analytic extension of Landau resolvent on compact sets} \replaced{down}{below}, which provides an extension of $R^{\mathrm{Landau}}_\lambda(z_\lambda)$, when acting on functions with (fixed) compact support. Indeed,   $\left[h_\lambda,\chi\right]$ is an operator of the form $f(X)+g(X)P$ where both $f,g$ have compact support within $\supp(\partial_j \chi)$.

Re-arranging and using $Q v(X) = v(X)$ and $Q^2=Q$, we find
\eq{
\br{h_\lambda-z_\lambda\Id}  A^\flat_\lambda &= Q + QK^\flat_\lambda = 
 Q(\Id + K^\flat_\lambda)
} where $A^\flat_\lambda: QL^2(\RR^2)\to L^2(\RR^2)$ is given by:
\eq{
A^\flat_\lambda := \br{\chi-R^{\mathrm{Landau}}_\lambda(z_\lambda)  \left[h_\lambda,\chi\right]} A_\lambda^\theta Q
} and $K^\flat_\lambda:QL^2\to QL^2$ is given by
\eq{
K^\flat_\lambda := -\lambda^2 Q v(X)R^{\mathrm{Landau}}_\lambda(z_\lambda)\left[h_\lambda,\chi\right]A_\lambda^\theta Q\,.
} 
If $\norm{K^\flat_\lambda}<1$, then $\left(\Id+K_\lambda\right)^{-1}$ exists and therefore
\eq{
    A_\lambda := A^\flat_\lambda
\br{\Id+K^\flat_\lambda}^{-1}:QL^2\to L^2}
satisfies $(h_\lambda-z_\lambda)A_\lambda=Q$, i.e. $A_\lambda$ behaves as an inverse of $h_\lambda-z_\lambda\Id$ when acting on functions supported in $B_a(0)$. 
 
We now estimate $\norm{K^\flat_\lambda}$. To do so, we factorize
\eql{\label{eq:Kflat-factored}
K^\flat_\lambda = \br{-\lambda^2 Q v(X)R^{\mathrm{Landau}}_\lambda(z_\lambda) \chi_T  } \times \br{\left[h_\lambda,\chi\right]A_\lambda^\theta Q}
} with $T=\supp(\partial\chi)$. 

By \cref{prop:analytic extension of Landau resolvent on compact sets}, the first factor in \cref{eq:Kflat-factored} satisfies the bound 
\eq{
\norm{\lambda^2 Q v(X)R^{\mathrm{Landau}}_\lambda(z_\lambda) \chi_T }_{\mathcal{B}(L^2(\RR^2))} \leq C\abs{\lambda}^\sharp\exp\br{\replacedm{-C}{C} b \abs{\lambda}}\,.
}

By  the off-diagonal decay used in the proof of \cref{lem:continuation of cut off one well resolvent}, in particular, \cref{eq:local-decay,eq:diagonal-decay},
the second factor in \eqref{eq:Kflat-factored} satisfies
\eq{
\norm{\left[h_\lambda,\chi\right]A_\lambda^\theta Q }_{\mathcal{B}(L^2(\RR^2))} \leq C\exp\br{-C  \abs{\lambda}}\,.
} Since $K^\flat_\lambda$ maps $QL^2$ to itself, we conclude, by taking $b$ sufficiently small, that
\eq{
\norm{K^\flat_\lambda}_{\mathcal{B}(Q L^2)} \leq C' \exp\br{-c \abs{\lambda}}\,.
}


To complete the proof of \cref{lem:continuation of one-well resolvent}, we  derive the bound \cref{eq:off-diagonal exp decay of resolvent}. Let $S\subseteq\RR^2$ be such that $\dist(S,B_{2a}(0))>0$. Then we have using $\chi_S P_j^\alpha \chi A_\lambda^\theta Q=0$,
\eql{\label{eq:estimate on continuation of resolvent}
    \norm{\chi_S P_j^\alpha A_\lambda } &= \norm{\chi_S P_j^\alpha \br{\chi-R^{\mathrm{Landau}}_\lambda(z_\lambda)  \left[h_\lambda,\chi\right]} A_\lambda^\theta Q \br{\Id+K^\flat_\lambda}^{-1} } \notag \\
    &= \norm{\chi_S P_j^\alpha R^{\mathrm{Landau}}_\lambda(z_\lambda)  \left[h_\lambda,\chi\right] A_\lambda^\theta Q \br{\Id+K^\flat_\lambda}^{-1} } \notag 
}
Now, $[h_\lambda,\chi]=\sum_{\abs{\beta}=0,1}a_\beta(X)P^\beta$ with $a_\beta$ supported in an annulus $T$ that avoids both $S$ and $B_a(0)$. Hence the above product operator is a sum of terms of the form
\eq{
\chi_S P_j^\alpha R^{\mathrm{Landau}}_\lambda(z_\lambda)  \chi_T\times \chi_T a_\beta(X) P^\beta A_\lambda^\theta Q \br{\Id+K^\flat_\lambda}^{-1}\,.
} The first factor is a bounded $L^2\to L^2$ operator with norm bounded by some order $1$ constant. The second factor is a bounded $QL^2\to L^2$ operator with norm bounded by \eq{
C \exp\br{-c\abs{\lambda}\dist(T,B_a(0))}\,.
}

Summarizing the above, we have the following 

\paragraph{Intermediate conclusion.}  $A_\lambda\in\calB(QL^2(\RR^2))$ is an operator with the following properties  
\begin{enumerate}
    \item $\br{h_\lambda-z_\lambda\Id}A_\lambda = Q$ for all $\lambda\in\Omega_{\Lambda,2\Lambda,\varepsilon}$.
    \item The mapping $\Omega_{\Lambda,2\Lambda,\varepsilon}\ni\lambda\mapsto A_\lambda\in\calB(QL^2(\RR^2))$ is analytic.
    \item We have the bound $\norm{A_\lambda}_{\calB(QL^2(\RR^2))}\lesssim \replacedm{\Lambda^{-\eta}}{\Lambda^\sharp}$\deleted{ for some $C\in\RR$}, for all $\lambda\in\Omega_{\Lambda,2\Lambda,\varepsilon}$.
    \item We have off-diagonal exponential decay in the sense that \eq{
        \norm{\chi_S A_\lambda }_{\calB(QL^2(\RR^2)\to L^2(\RR^2))} \lesssim \exp\br{-c \abs{\lambda} \dist(S,B_{2a}(0))}
    } for some $c>0$, for all $\lambda\in\Omega_{\Lambda,2\Lambda,\varepsilon}$ and $S\subseteq\RR^2$ such that $\dist(S,B_{2a}(0))>0$.
\end{enumerate}

\subsubsection{From $A_\lambda$ for $\lambda\in \Omega_{\Lambda, 2\Lambda,\ve}$ to $A_\lambda$ for $\lambda\in \Omega_{\Lambda,\ve}$.}\label{sec:annulus-to-all-Omega-Lambda}

Fix a number \(r\in(1,2)\) and set
\eq{
\Lambda_k:=r^k\Lambda,\qquad
\mathcal{D}_k:=\Omega_{\Lambda_k,2\Lambda_k,\varepsilon}
=\left\{z\in\CC:\Lambda_k<|z|<2\Lambda_k,|\arg z|<\tfrac{\pi}{2}-\varepsilon\right\}.
}
We have  \(\bigcup_{k\ge0}\mathcal{D}_k=\Omega_{\Lambda,\varepsilon}\) and, since \(r<2\), consecutive domains have a nontrivial overlap:
\eq{
\mathcal{D}_k\cap\mathcal{D}_{k+1}
=\left\{z:\Lambda_{k+1}<|z|<2\Lambda_k,|\arg z|<\tfrac{\pi}{2}-\varepsilon\right\}\neq\varnothing\ .
}

For each \(k\) the construction above produces an analytic map
\eq{
\lambda\mapsto A^{(k)}_\lambda,\quad 
 \mathcal{D}_k\longrightarrow \calB(QL^2(\RR^2)\to L^2(\RR^2))
}
such that 
\eq{
\left(h_\lambda-z_\lambda\Id\right)A^{(k)}_\lambda=Q
\qquad(\lambda\in\mathcal{D}_k),
}
and satisfies the bounds  \eq{ \|A^{(k)}_\lambda\|\lesssim \Lambda_k^{\sharp}\quad {\rm and}\quad   \norm{\chi_S A_\lambda }_{\calB(QL^2(\RR^2)\to L^2(\RR^2))} \lesssim \exp\br{-c \Lambda_k \dist(S,B_{2a}(0))}.} 
Recall that $z_\lambda=z_\lambda(\xi)$ and that these bounds hold uniformly  in $\xi\in\partial B_1(0)$.

Fix \(k\) and consider \(\lambda\in\mathcal{D}_k\cap\mathcal{D}_{k+1}\) with \(\lambda>0\) real. By construction $z_\lambda(\xi)$ is in the resolvent set for all \(\xi\in\partial B_1(0)\); hence \(\br{h_\lambda-z_\lambda\Id}^{-1}\) exists as a bounded operator on \(L^2(\RR^2)\). Since both \(A^{(k)}_\lambda\) and \(A^{(k+1)}_\lambda\) are \emph{right} inverses on \(QL^2\),
\eq{
\left(h_\lambda-z_\lambda\Id\right)A^{(k)}_\lambda
=\left(h_\lambda-z_\lambda\Id\right)A^{(k+1)}_\lambda
=Q,
}
we must have
\eq{
A^{(k)}_\lambda=\br{h_\lambda-z_\lambda\Id}^{-1}Q
=A^{(k+1)}_\lambda
\qquad(\lambda\in(\Lambda_{k+1},2\Lambda_k)\subset\RR).
}
Thus the two analytic \(\calB(QL^2)\)-valued maps \(A^{(k)}\) and \(A^{(k+1)}\) agree on the real interval \((\Lambda_{k+1},2\Lambda_k)\). Since the overlaps \(\mathcal{D}_k\cap\mathcal{D}_{k+1}\)   are sets with accumulation points, the identity theorem for Banach space –valued holomorphic functions implies that they coincide on all of \(\mathcal{D}_k\cap\mathcal{D}_{k+1}\).

Define \(A_\lambda\) on \(\Omega_{\Lambda,\varepsilon}\) by choosing any \(k\) with \(\lambda\in\mathcal{D}_k\) and setting \(A_\lambda:=A^{(k)}_\lambda\).
The overlap consistency implies that $A_\lambda$ is well defined on all $\Omega_{\Lambda,\varepsilon}$ and 
\eq{
\lambda\mapsto A_\lambda,\quad \Omega_{\Lambda,\varepsilon}\longrightarrow\calB(QL^2(\RR^2))
}
is analytic. Furthermore for \(\lambda\in\mathcal{D}_k\) we have the  bounds
\begin{subequations}
\label{eq:A_lambda_bounds}
\eql{
\|A_\lambda\|=\|A^{(k)}_\lambda\|&\lesssim\Lambda_k^\sharp
\le\abs{\lambda}^\sharp\\
\norm{\chi_S A_\lambda } = \norm{\chi_S A^k_\lambda }&\lesssim \exp\br{-c \Lambda_k \dist(S,B_{2a}(0))}\\
 &\leq \exp\br{-\tilde{c} \abs{\lambda} \dist(S,B_{2a}(0))}.
}
\end{subequations}


The proof of \cref{lem:continuation of one-well resolvent} is now complete, up to the technical points which we've left to later sections.
\end{proof}


\subsection{From the analyticity of $\lambda\mapsto \widetilde\vf_{\lambda,b}$ to a lower bound on $\rho(\lambda,b)$ for $\lambda\ge\Lambda$}

The analyticity of $\lambda\in\Omega_{\Lambda,\varepsilon}\mapsto A_\lambda \in \mathcal{B}(L^2(\RR^2))$ implies, via the strategy sketch in \cref{sec:setup-strategy}, that $\lambda\mapsto \widetilde\vf_\lambda$, the non-normalized ground state, is analytic on $\Omega_{\Lambda,\varepsilon}$. It follows that the non-dimensional hopping coefficient \cref{eq:non-norm-rMHO-ext}
 \eq{
\Omega_{\Lambda,\ve}\ni\lambda\mapsto \widetilde{\rho}(\lambda)\in \CC\quad \textrm{is analytic.}
} 
Further, the strategy sketch in \cref{sec:setup-strategy} reduces the proof of the lower bound asserted in \cref{thm:main theorem} for $\rho(\lambda)$ to the corresponding lower bound for $\widetilde\rho(\lambda)$. For this we now 
apply Complex Analysis \cref{cor:main complex analytic lower bound} to $\widetilde\rho(\lambda)$.  That is,  we must prove (a) an exponential upper bound for   $|\widetilde\rho(\lambda)|$ on $\Omega_{\Lambda,\ve}$ and  (b) 
  a lower bound for   $|\widetilde\rho(\lambda)|$ at a single point in $\Omega_{\Lambda,\ve}$.
  
\paragraph{Exponential upper bound on $\tilde{\rho}_\lambda$ for  $\lambda\in\Omega_{\Lambda,\ve}$.} For the upper bound, we use the expression for $\widetilde\vf_\lambda$ in \cref{eq:Riesz projection}, the bounds on $\|A_\lambda\|$ (see \cref{eq:A_lambda_bounds}) and on $\|\Phi_\lambda\|=\|Q\vf_\lambda^{\rm MHO}\|$. 

For each $\xi\in\mathbb S^1$, let $A_{\lambda,\xi}$ denote the analytic continuation of
\eq{
\br{h_\lambda-z_\lambda(\xi)\Id}^{-1}Q
} constructed above, but now with the $\xi$ parameter explicit. We then define
\eq{
\Pi_\lambda:=\frac{\ii}{2\pi}\oint_{|\xi|=1}A_{\lambda,\xi}z_\lambda'(\xi)\dif{\xi},
\qquad
\widetilde\vf_\lambda:=\Pi_\lambda\Phi_\lambda.
}
Thus $\widetilde\vf_\lambda$ is the non-normalized ground state obtained from the Riesz projection. Since $\lambda\mapsto A_{\lambda,\xi}$ is analytic, uniformly in $\xi\in\mathbb S^1$, it follows that
\eq{
\Omega_{\Lambda,\varepsilon}\ni\lambda\mapsto \widetilde\vf_\lambda\in L^2(\RR^2)
}
is analytic. Consequently,
\eql{\label{eq:non-norm-rho-revised-2}
\widetilde\rho(\lambda,b):=
\lambda^2\int_{x\in\RR^2}\overline{\widetilde\vf_{\overline\lambda,b\nc}(x+d)}
\,v(x+d)\exp\br{\ii b\lambda d_1x_2}\widetilde\vf_{\lambda,b}(x-d)\dif{x}
}
is analytic on $\Omega_{\Lambda,\varepsilon}$ and we have the bound
\eq{
\abs{\widetilde\rho(\lambda)}
&=
\Bigg|\frac{\lambda^2}{4\pi^2}
\oint_{|\zeta|=1}\oint_{|\xi|=1}
\overline{z'_{\overline\lambda}(\zeta)}\,z'_\lambda(\xi)\\
&\qquad\times
\int_{x\in\RR^2}
\overline{\br{A_{\overline\lambda,\zeta,b\nc}\Phi_{\overline\lambda,b\nc}}(x+d)}
\,v(x+d)\exp\br{\ii b\lambda d_1x_2}\,
\br{A_{\lambda,\xi}\Phi_\lambda}(x-d)\dif{x}\dif{\xi}\dif{\bar\zeta}\Bigg|
\\
&\lesssim
\abs{\lambda}^4
\int_{x\in B_a(0)}
\exp\big(b\abs{\Im\lambda}d_1\abs{x_2}\big)
\norm{A_\lambda}^2\norm{\Phi_\lambda}^2\dif{x}\\
&\lesssim
\abs{\lambda}^\sharp \exp\br{Cb\abs{\lambda}}\,.
} \added{The factor $|\lambda|^4$ comes from the prefactor $\lambda^2$ and the two contour derivatives $z'_\lambda(\xi)=O(\lambda)$.} Indeed, since the integral in \cref{eq:non-norm-rMHO-ext} is over $x\in B_a(0)$, we have $|x_2|\le a$ and therefore
\eq{
\exp\br{b|\Im\lambda|d_1|x_2|}\le \exp\br{ab d_1|\lambda|},
}
while the domain of integration has finite measure independent of $\lambda$. Moreover, by \cref{eq:A_lambda_bounds} we have $\norm{A_\lambda}\lesssim |\lambda|^{\sharp}$, and $\norm{\Phi_\lambda}=\norm{\CharFun_{B_a(0)}\vf_\lambda^{\rm MHO}}$ is uniformly bounded on $\Omega_{\Lambda,\ve}$ since $\vf_\lambda^{\rm MHO}$ is a Gaussian and $\Re\lambda\ge |\lambda|\sin\ve>0$ there; after enlarging $C$ and $\sharp$, we get the estimate.

\paragraph{The lower bound at a point.}

In this section we shall take $b$ sufficiently small, and hence make explicit the dependence of $\rho$ on both $\lambda$ and  $b$; we write  
$\rho_0(\lambda,b)$.  
 
Set $b=0$,  and let $\lambda_\star\geq2\Lambda$. Previous work (see for example \cite[Section 15.3]{FLW17_doi:10.1002/cpa.21735}) implies
\eql{
\abs{\rho(\lambda_\star,b=0)} \geq C \lambda_\star^2 \exp\br{-2 c d_1 \lambda_\star}
} for some constants $C$ and $c$. Therefore, for  $b_\star>0$ sufficiently small, 
\eq{
\abs{\rho(\lambda_\star,b)}\geq \frac{1}{2}\abs{\rho(\lambda_\star,b=0)}\quad \textrm{ for all $b\in(0,b_\star)$ and $\lambda\ge 2\Lambda$.}
}
Therefore, by \cref{eq:non-norm-rMHO-ext-real} and \cref{eq:C_lambda_to_1}, we have the lower bound:
\eq{
	|\widetilde{\rho}(\lambda_*,b)| = |\left\langle\vf_{\lambda_*,b},\Phi_{\lambda_*,b}\right\rangle|^2 \rho(\lambda_*,b)\gtrsim \frac{1}{2}\br{1-\frac{C'}{\lambda_\star}} C\lambda_\star^2\exp\br{-2 c d_1\lambda_\star}\,
}
for any fixed $b\in(0,b_*)$ and $\lambda_*\in \Omega_{\Lambda,\varepsilon}$.

\paragraph{Applying Complex Analysis \cref{lem:complex-analysis}.}

To put our problem in the framework of \cref{cor:main complex analytic lower bound} we  first map the wedge $\Omega_{\Lambda,\varepsilon}$ to one with its vertex at the origin using the change of variables:
\eq{ \lambda := \br{\lambda_\star-\Lambda} z +\Lambda.}
Here, $\lambda=\Lambda$ is mapped to $z=0$ and $\lambda=\lambda_*$ is mapped to $z=1$. 
We have that
\eq{ \lambda\in \Omega_{\Lambda,\varepsilon}\Longleftarrow z\in \Gamma_\alpha := \Set{\zeta\in\CC:-\alpha\pi/2<\operatorname{Arg}(\zeta)<\alpha\pi/2},\textrm{with $\alpha := 1-\frac{2\ve}{\pi}$.}}  
The function $F(z) := (\widetilde{\rho}\circ\lambda)(z)$ is analytic on $\Gamma_\alpha$. Our lower bound on $|\widetilde{\rho}(\lambda,b)$ at $\lambda=\lambda_*$ takes the form:
\eq{
|F(1)|=\left.\left|(\widetilde{\rho}\circ\lambda)(z)\right|\right|_{z=1} \geq  \exp\br{\log\br{\frac{1}{2}\br{1-\frac{1}{\lambda_\star}}C \lambda_\star^2}-2c d_1\lambda_\star} =: \ee^{-\beta}, 
} where 
\eq{
\beta := 2c d_1\lambda_\star - \log\br{\frac{1}{2}\br{1-\frac{1}{\lambda_\star}}C \lambda_\star^2}\,.
}
Our upper bound for all $\lambda\in\Omega_{\Lambda,\varepsilon}$ becomes  
\eq{
|F(z)| = \left|(\widetilde{\rho}\circ\lambda)(z)\right| 
\leq \abs{U(z)},
}
\eq{ \textrm{where}\quad U(z) :=(\br{\lambda_\star-\Lambda} z+\Lambda)^\sharp\exp\br{C b\br{ (\lambda_\star-\Lambda)z+\Lambda}}.}

\cref{lem:complex-analysis} implies  
\eq{
\frac{1}{R}\int_{R}^{2R}-\log\br{ \abs{\widetilde{\rho}((\lambda_\star-\Lambda)\zeta+\Lambda)} }\dif{\zeta} &\leq C\alpha 2^{1/\alpha}\left(\beta+\log\left(\abs{U(1)}\right)\right) R^{1/\alpha}+\frac{3}{2}c b R\, ,
}
which, after the change of variables $\zeta=(\lambda-\Lambda)/(\lambda_*-\Lambda)$, yields  the required lower bounds  on averages  of  $\widetilde{\rho}_\lambda$.

And, as discussed earlier, together with \cref{eq:non-norm-rMHO-ext-real} and \cref{eq:C_lambda_to_1}, this implies part 3 of \cref{thm:main theorem}, concerning averages of $\rho(\lambda,b)$ for $\lambda>\Lambda$ and $0<b<b_\varepsilon$. 

\deleted{For the first part, we make the }

\begin{rem}[\cref{eq:avg lower bound on rho} implies \cref{eq:density of zeros}]
A standard argument shows how an averaged lower bound yields a pointwise lower bound outside a set of density zero.

Indeed, assume that for some function $F$,
\eq{
    \frac{1}{R}\int_R^{2R}\mLog{F(t)}\dif{t}\leq R^{1+\ve}
    \qquad(R>1)\,.
}
Set
\eq{
    R_k:=2^k\qquad(k\geq0)\,,
}
and define
\eq{
    E_k:=\Set{t\in[R_k,R_{k+1}]\,:\,\mLog{F(t)}>t^{1+2\ve}}\,,
}
as well as
\eq{
    E_k^+:=\Set{t\in[R_k,R_{k+1}]\,:\,\mLog{F(t)}>R_k^{1+2\ve}}\,.
}
Finally, let
\eq{
    E:=[0,1]\cup\bigcup_{k\geq0}E_k\,.
}

Then for all $t\in[0,\infty)\setminus E$ we have
\eq{
    \mLog{F(t)}\leq t^{1+2\ve}\,,
}
that is,
\eq{
    \abs{F(t)}\geq \exp\br{-t^{1+2\ve}}\,.
}

Since $E_k\subseteq E_k^+$, we have
\eq{
    \mu(E_k)
    \leq \mu(E_k^+)
    \leq \frac{1}{R_k^{1+2\ve}}\int_{R_k}^{2R_k}\mLog{F(t)}\dif{t}
    \leq R_k^{1-\ve}\,.
}
Hence
\eq{
    \mu\br{E\cap[0,R_N]}
    \leq \mu([0,1])+\sum_{0\leq k\leq N}\mu(E_k)
    \leq 1+\sum_{0\leq k\leq N}R_k^{1-\ve}
    \leq C\,R_N^{1-\ve}\,.
}
In particular,
\eq{
    \mu\br{E\cap[0,2^N]}\leq C(2^N)^{1-\ve}\,.
}

Now let $R$ be large, and choose $N$ such that $2^N\leq R\leq 2^{N+1}$. Then
\eq{
    \frac{\mu\br{E\cap[0,R]}}{R}
    \leq C'R^{-\ve}\,.
}
Therefore $E$ has density zero.
\end{rem}

Finally, we note that the statement on sparsity in \cref{thm:main theorem} follows
 from the bounds on sums of negative powers of the zeros \cref{eq:sketch-zero-sums} appearing in the Blaschke factorization used in the proof of the complex analytic \cref{lem:complex-analysis}.

\subsection{Lower bound on $\Delta(\lambda,b)$ of \cref{thm:main theorem}; the proof} \label{subsec:lower bound on Delta}

First, a preliminary remark. A strategy one can attempt to obtain lower bounds on $\Delta(\lambda)$ from lower bounds on $\rho(\lambda,b)$ would be to attempt to establish a relation like $\Delta(\lambda,b)/ 2\rho(\lambda,b)\to1$ as $\lambda\to\infty$; see, e.g. \cite{fefferman2025lowerboundsquantumtunneling}).  However, we have merely 
 $\rho(\lambda)\gtrsim\exp(-c\lambda^{1+\varepsilon})$ 
  which is small compared to error terms  $\lesssim \exp(-c\lambda)$. We therefore work directly with $\Delta(\lambda,b)$, and pursue a strategy analogous to that in our lower bound for  $\rho(\lambda,b)$.

We sketch the following outline, with certain technical details left to the end of the section.

\begin{enumerate}
    \item Let $H_\lambda=\br{P-\frac12b\lambda X^\perp}^2+\lambda^2 v(X+d)+\lambda^2 v(X-d)$ (see \cref{eq:double-well Hamiltonian}) denote the magnetic double-well Hamiltonian. Recall that the spectral parameter $z_\lambda=z_\lambda(\xi)$ is given by \cref{eq:spectral parameter for Riesz} and that 
$z_\lambda(\xi)$ is in the resolvent set of $H_\lambda$ for $\lambda>\Lambda$ and  all $|\xi|=1$. 

For the same reasons explained above, in the context of the single well 
    magnetic Hamiltonian, the  double-well resolvent 
    \eql{\label{eq:resolvent of double well Hamiltonian}\lambda\mapsto R_\lambda(z_\lambda)\equiv \br{H_\lambda-\replacedm{z}{z_\lambda}\Id}^{-1}
    } 
    does not have an analytic extension to  $\lambda\in\Omega_{\Lambda,\ve}$ (see \cref{rem:analytic_ext}) and we proceed with the analogous remedy: analytic continuation of $R_\lambda(z_\lambda)$ acting on functions supported, now, in $B_a(-d)\cup B_a(d) $.  
    Introduce the spatial projections  
    \eq{ Q_{\pm d}:=\CharFun_{B_{a}(\pm d)}(X)\quad {\rm  and}\quad  Q_{-d,d}:=Q_{-d}+Q_d.}
     We shall analytically continue the  operator-valued function \eq{
    \Omega_{\Lambda,\varepsilon}\ni \lambda\mapsto 
    R_\lambda(z_\lambda)Q_{-d,d}\in \mathcal{B}(L^2(\RR^2)).
    }
   We cannot directly use the proof of analytic continuation of $\br{h_\lambda-z_\lambda(\xi)\Id}^{-1}Q$, but just below  in \cref{sec:construct-resolvent-Q_pmd} we present a perturbative argument to pass from the analytic extension of $\lambda\mapsto A_\lambda = \br{h_\lambda-z_\lambda\Id}^{-1} Q$ to such an extension for $\lambda\mapsto R_\lambda(z_\lambda)Q_{-d,d}$.
    
    \item For real $\lambda$, let 
    $E_0(\lambda)\leq E_1(\lambda)$ denote the two lowest eigenvalues of $H_\lambda$ and let $\calV_\lambda\subseteq L^2(\RR^2)$ denote  the corresponding two-dimensional low-energy eigenspace of $H_\lambda$; as shown in \cite{fefferman2025magneticdoublewellsabsencetunneling}, in general these eigenvalues may turn out to be degenerate. The results of Matsumoto \cite{Matsumoto_1994} imply that for $\lambda>\Lambda$ large, $E_{0,\lambda}$ and $ E_{1,\lambda}$ are located in a neighborhood of the ground state eigenvalue of $h_\lambda$,  
    $e_{\lambda,0}=-\lambda^{2}+e_{0}^{\text{MHO}}\lambda+\mathcal{O}(\lambda^{+1/2})$; in particular they are within the circle: $\xi\mapsto z_\lambda(\xi)$, where $z_\lambda(\xi)$ is defined by \cref{eq:spectral parameter for Riesz}, and therefore
   \eq{
    \Pi_\lambda =\frac{\ii}{2\pi}\oint_{\xi\in\partial B_1(0)}R_\lambda(z_\lambda) z_\lambda'(\xi)\dif{\xi},\quad \lambda>\Lambda,    
    } is the spectral projection onto $\calV_\lambda$. Hence,
    \eq{ \sigma\left(\Pi_\lambda H_\lambda \Pi_\lambda\right)\setminus\Set{0} = \{E_0(\lambda), E_1(\lambda)\} . }
We shall construct a basis for $\calV_\lambda$ and a $2\times 2$ matrix representation of $\Pi_\lambda H_\lambda \Pi_\lambda:\calV_\lambda\to\calV_\lambda$.

 \item {\it Basis for $\calV_\lambda$}: It is natural to construct \replaced{at}{a} basis for $\calV_\lambda$ using a strategy, analogous to that in our construction and analytic continuation of the ground 
 state, $\vf_\lambda$, of the one-well magnetic Hamiltonian, $h_\lambda$. This would suggest applying the projection $\Pi_\lambda$ to the vectors $Q_{\pm d} \widehat{R}^{\pm d}\vf_\lambda^{\rm MHO}$, where $\vf_\lambda^{\rm MHO}$ denotes the magnetic harmonic oscillator ground state and $\widehat{R}^d$ denotes magnetic translation by $d$. Since, it turns out, that we will need to bound derivatives of our analytically continued basis vectors, we modify this scheme by inserting  a smooth cutoff function.

 Introduce the vectors 
\eql{\label{eq:Phi-pmd}
    \Phi_{\pm d} := Q_{\pm d}\widehat{R}^{\pm d}\chi(X)\vf_\lambda^{\rm MHO},
} where $\chi\in C^\infty(\RR^2\to[0,1])$ is a cutoff function equal to $1$ on $B_{a/2 }(0)$ and zero outside $B_{3a/4 }(0)$;
so that the projection onto $Q_{\pm d}$ is actually redundant. We use $\chi$ so that $\Phi_{\pm d}$ is smooth with compact support.  

The projections of $\Phi_{\pm d}$ by $\Pi_\lambda$:
\eql{\label{eq:psi-pmd} \psi_{\pm d} := \Pi_\lambda Q_{-d,d} \Phi_{\pm d}} 
are analytic in $\Omega_{\Lambda,\varepsilon}$, and 
form a (non-orthonormal) basis for $\calV_\lambda$:
\eql{\calV_\lambda =\operatorname{span}\Set{\psi_{\lambda,-d},\psi_{\lambda,d}}.
}
 As in our analysis of the magnetic single well Hamiltonian, $h_\lambda$, we work with non-normalized states; their normalizations are not necessarily analytic.

    \item {\it Matrix representation for   $\Pi_\lambda H_\lambda \Pi_\lambda$ and a relation for $\Delta(\lambda)$}: Introduce the  $2\times 2$ matrices \eql{\label{eq:Gramian and two by two matrix}
    M_\lambda := \left(\ip{\psi_{\alpha,\overline{\lambda}\color{black}}}{ H_\lambda \psi_{\beta,\lambda} }\right)_{\alpha,\beta=\pm d} \qquad 
    G_\lambda := \left(\ip{\psi_{\alpha,\overline{\lambda}\color{black}}}{ \psi_{\beta,\lambda} }\right)_{\alpha,\beta=\pm d}\,.
    }  
    The previous considerations imply that the mappings $\lambda\mapsto M_\lambda$ and 
    $\lambda\mapsto G_\lambda$ are analytic on $\Omega_{\Lambda,\varepsilon}$.

    For $\lambda\in\RR$, $G_\lambda$ is the Gramian matrix of inner products and  $\Pi_\lambda H_\lambda \Pi_\lambda $ has $2\times2$ matrix representation $G_\lambda^{-1} M_\lambda$. Therefore the two eigenvalues of 
    $\Pi_\lambda H_\lambda \Pi_\lambda$, $E_0(\lambda)$ and $E_1(\lambda)$, are the  eigenvalues of the $2\times2$ matrix $G^{-1}_\lambda M_\lambda$. Solving for the 
     two roots of $\det(G_\lambda^{-1} M_\lambda-E\Id)=0$, taking their difference and squaring yields the relation for $\Delta(\lambda)^2\equiv (E_1(\lambda)-E_0(\lambda))^2$:
    \eql{\label{eq:tilde-Delta-squared}
    \det(G_\lambda)^2 \Delta(\lambda)^2 = \tr\br{\operatorname{adj}(G_\lambda) M_\lambda}^2 - 4 \det(G_\lambda) \det(M_\lambda).
    } Here we have used that $G^{-1} = {\rm adj}(G_\lambda)/\det(G_\lambda)$, and 
     \eql{\label{eq:tilde-Delta-squared repeated}
    \operatorname{adj}(G) := 
    \begin{bmatrix}
        g_{22} &  - g_{12} \\
        -g_{21} & g_{11}
    \end{bmatrix}.  }
    It follows that 
    \eql{
    \RR\cap\Set{\lambda>\Lambda}\ni\lambda \mapsto \widetilde\Delta(\lambda)^2:=\det(G_\lambda)^2 \Delta(\lambda)^2\in   \RR
    } continues to an analytic function  $\widetilde\Delta(\lambda)^2: \Omega_{\Lambda,\ve}\longrightarrow \CC$.
    \item {\it Obtaining a lower bound on $\Delta(\lambda,b)$:} Above, in order to bound $|\rho(\lambda,b)|$ from below, for $\lambda>\Lambda$ and $0<b<b_\varepsilon$, the key first step was to 
    prove a lower bound for the analytic function $\widetilde{\rho}(\lambda,b)$, for $\lambda\in\Omega_{\Lambda,\varepsilon}$ and $0<b<b_\varepsilon$,  using the complex analytic \cref{lem:complex-analysis}. We now apply a similar strategy to $\Delta(\lambda,b)^2$ for $\lambda>\Lambda$ and $0<b<b_\varepsilon$; we first (a) bound $\widetilde\Delta(\lambda,b)^2$ away from zero at a single point of $\Omega_{\Lambda,\varepsilon}$ and then (b) obtain an exponential upper bound for 
    $\widetilde\Delta(\lambda)^2$ throughout  $\Omega_{\Lambda,\varepsilon}$. 
    
    The lower bound for $\widetilde\Delta(\lambda_*,b)^2$, for some point $\lambda_*>\Lambda$ and all $0<b<b_\varepsilon$ sufficiently small follows from the same argument (a perturbative argument about the case $b=0$) which proved this property  for  $\widetilde{\rho}$. Specifically, in the non-magnetic case  we have $\widetilde\Delta(\lambda_*,b=0)^2>0$, and hence for  
   $0<b<b_\varepsilon$ sufficiently small and $\lambda_\star$ sufficiently large, $\widetilde\Delta(\lambda_*,b)^2$, given by \cref{eq:tilde-Delta-squared},  satisfies 
   a slightly weaker, but strictly positive, lower bound. 

   So to apply the complex analytic \cref{lem:complex-analysis}, it remains to provide an upper bound on $\widetilde\Delta(\lambda,b)^2$ on $\Omega_{\Lambda,2\Lambda,\varepsilon}$ for all $0<b<b_\varepsilon$. The expression for $\abs{\widetilde\Delta(\lambda,b)^2}$ is a polynomial in the matrix elements of $G_\lambda$ and $M_\lambda$, which can be bounded using bounds  on $\psi_{\pm d,\lambda}$ and their derivatives up to order two; see \cref{lem:extend-resolvent_Hlambda} below.

Finally, to pass from a lower bound on $\widetilde\Delta(\lambda,b)^2=\det(G_\lambda)^2 \Delta(\lambda,b)^2$ to one on $\Delta(\lambda,b)^2$ (for $\lambda>\Lambda$ and $0<b<b_\varepsilon$)  we use that
for real $\lambda>\Lambda$,  we have $\det G_\lambda\approx 1$; see \cref{lem:Gramian} below.
\end{enumerate}

This completes the outline of our lower bound results for $\Delta(\lambda,b)$; the remaining technicalities are treated below.

\subsubsection{Construction of $R_\lambda(z)Q_{-d,d}$}\label{sec:construct-resolvent-Q_pmd}
In our argument above for analytic continuation of  the single-well resolvent, $\br{h_\lambda-z_\lambda\Id}^{-1}Q$, the spatial projection $Q=\CharFun_{B_a(0)}$ is localized around $x=0$, where  the minimum of $v$ is  attained and  where $v$ is locally quadratic. To analytically continue the resolvent of the double-well Hamiltonian, $H_\lambda$, we adapt our argument to allow for a potential with two separated non-degenerate minima. 

\begin{lem}\label{lem:extend-resolvent_Hlambda}
Recall $R_\lambda(z_\lambda)=(H_\lambda-z_\lambda\Id)^{-1}$, the resolvent at spectral parameter $z=z_\lambda$ of the double-well Hamiltonian. The mapping  $\lambda\mapsto R_\lambda(z_\lambda)Q_{-d,d}$, extends from $\lambda>\Lambda$ to an operator-valued analytic function $A_{\lambda,\xi}^{-d,d}$ ($\xi\in \mathbb{S}^1$): \eq{\Omega_{\Lambda,\varepsilon}\ni\lambda \mapsto A_\lambda^{-d,d} \in \calB(Q_{-d,d}L^2(\RR^2)\to L^2(\RR^2)) 
 } with the bound
    \eql{\label{eq:estimates on the two-well resolvent}
    \norm{A_\lambda^{-d,d}}_{\calB(Q_{-d,d}L^2(\RR^2))} \lesssim \abs{\lambda}^{\sharp} \exp\br{C  b \abs{\lambda}} \,.
      } 
\end{lem}



\begin{proof}
We shall use the magnetic translation operators given in \cref{eq:magnetic translations}, here factorized as
\eql{\widehat{R}^\xi =\exp\br{-\ii \xi\cdot P}\exp\br{-\ii \frac12b\lambda  \xi \cdot X^\perp},\qquad \xi\in\RR^2\,;} 
note $[\xi\cdot P,\xi\cdot X^\perp]=0$. These operators are unitary if $\lambda\in\RR$ and otherwise generally not \deleted{even }bounded; in factorized form the two exponentials commute. $\widehat{R}^z$ commutes with the magnetic kinetic energy $H^{\rm Landau}_{b,\lambda}$ and satisfies \eql{\label{eq:magnetic translations shift potential}
\widehat{R}^{\pm d} v(X) \widehat{R}^{\mp d} = v(X\mp d),
} and so we have the relation
\eql{\label{eq:magnetic translations shift Hamiltonian}
\widehat{R}^{\pm d} \br{H^{\rm Landau}_{b,\lambda} + \lambda^2v(X)} \widehat{R}^{\mp d} = H^{\rm Landau}_{b,\lambda} + \lambda^2v(X\mp d)\,.
}

Consequently, the analytic extension $\lambda\mapsto A_\lambda$  of $\lambda\mapsto\br{h_\lambda-z_\lambda\Id}^{-1}Q$ constructed above provides a right inverse of the translates of the one-well Hamiltonians as well:
\eq{
    \br{H^{\rm Landau}_{b,\lambda} + \lambda^2v(X\mp d)-z_\lambda\Id} A_\lambda^{\pm d} = Q_{\pm d} \, ,
} 
where
\eq{A_\lambda^{\pm d} := \widehat{R}^{\pm d} A_\lambda \widehat{R}^{\mp d} .}
Clearly, 
\eq{
 \Omega_{\Lambda,\ve} \ni\lambda\mapsto A_\lambda^{\pm d} \in \calB(Q_{\pm d}L^2(\RR^2)\to L^2(\RR^2))\quad 
} 
is also analytic. However,  if $\lambda$ is not real, we have a deteriorated bound since magnetic translations are not unitary. First note that in the magnetic translations, for complex $\lambda$, $\exp\br{-\ii \xi\cdot P}$ which is independent of $\lambda$, is still unitary and commutes with the other factor, so it is irrelevant in bounding the operator norm. We thus only keep the factor $\exp\br{-\ii\frac12b\lambda \xi \cdot X^\perp}$, where $\xi=\pm d$:
\eql{
\norm{A_\lambda^{\pm d}} &= \norm{\exp\br{\mp\ii \frac12b\lambda  d\cdot X^\perp}A_\lambda \exp\br{\pm\ii \frac12b\lambda  d\cdot X^\perp}}_{\calB(Q_{\pm d}L^2(\RR^2)\to L^2(\RR^2))} \nonumber \\
&\leq \norm{\exp\br{\mp\ii \frac12b\lambda  d\cdot X^\perp} \br{\chi_{B_{3a}(0)}(X)+\chi_{B_{3a}(0)}(X)^\perp} A_\lambda } \exp\br{\frac{b \abs{\Im{\lambda}}}{2} \norm{d}a}  \nonumber \\
&\leq \exp\br{3\frac{b \abs{\Im{\lambda}}}{2} \norm{d}a}\norm{ A_\lambda } \exp\br{\frac{b \abs{\Im{\lambda}}}{2} \norm{d}a} + \nonumber \\
&+ \norm{\exp\br{\mp\ii \frac12b\lambda  d\cdot X^\perp} \chi_{B_{3a}(0)}(X)^\perp A_\lambda } \exp\br{\frac{b \abs{\Im{\lambda}}}{2} \norm{d}a}\,.\label{eq:Alam_pm_d-split}
} 

Using  \cref{eq:bound on analytic continuation of resolvent of single well Hamiltonian}, the first term in \cref{eq:Alam_pm_d-split} is seen to have the upper bound   \[\lesssim |\lambda|^\sharp\exp\bigg(\ C b \|d\| |\Im{\lambda}|\ \bigg).\]
To deal with the second term in \cref{eq:Alam_pm_d-split}, write \eq{B_{3a}(0)^c = \bigsqcup_{n=0}^\infty \Set{x\in\RR^2 | 3a + n \leq \norm{x} < 3a + n + 1}}  and apply  the off-diagonal bound \cref{eq:off-diagonal exp decay of resolvent} to obtain, by again using the assumption that $b$ is sufficiently small that:
\begin{align*}
&\norm{\exp\br{\mp\ii \frac12b\lambda  d\cdot X^\perp}
\chi_{B_{3a}(0)}(X)^\perp A_\lambda }\\
&\lesssim \sum_{n=0}^\infty \norm{\exp\br{\mp\ii \frac12b\lambda  d\cdot X^\perp} \Id_{3a+n\le\|x\|\le 3a+n+1}(X) A_\lambda\ \Id_{\|x\|\le 2a}(X)}\\
&\lesssim \sum_{n=0}^\infty 
e^{\frac12 b |\Im{\lambda}| \|d\|(3a+n+1)}\ 
\norm{\Id_{3a+n\le\|x\|\le 3a+n+1}(X) A_\lambda\ \Id_{\|x\|\le 2a}(X)}\\
&\lesssim \sum_{n=0}^\infty 
\exp\big(\frac12 b |\Im{\lambda}| \|d\|(3a+n+1)\big)\ 
 \exp\big( -c|\lambda|(3a+n+1 - 2a)\big)\\
 &\lesssim \exp(-c|\lambda| a)
\end{align*}
Combining the bounds on the two terms in \cref{eq:Alam_pm_d-split} yields

\eq{
\norm{A_\lambda^{\pm d}} \lesssim \abs{\lambda}^{\sharp } \exp\br{C b \abs{\lambda}}.
} 

We now use $A_\lambda^{\pm d}$, to construct an approximate $R_\lambda(z_\lambda)Q_{-d,d}$. Let  
\eq{
\widetilde{A_{\lambda}^{-d,d}} :=  A_\lambda^{d} Q_d +  A_\lambda^{-d} Q_{-d}\,.
}

Then 

\eq{
\br{H_\lambda -z_\lambda\Id}\widetilde{A_{\lambda}^{-d,d}} &= Q_{-d,d}+\lambda^2 v(X+d) A_\lambda^{d} Q_d + \lambda^2 v(X-d) A_\lambda^{-d} Q_{-d}\,\\
&= Q_{-d,d}\br{\Id + \lambda^2 v(X+d) A_\lambda^{d} Q_d + \lambda^2 v(X-d) A_\lambda^{-d} Q_{-d}}\,.
} 
 For large $|\lambda|$, the correction 
\eq{\lambda^2 v(X+d) A_\lambda^{d} Q_d + \lambda^2 v(X-d) A_\lambda^{-d} Q_{-d}} has  small operator norm thanks to \cref{eq:off-diagonal exp decay of resolvent} of \cref{lem:continuation of one-well resolvent}, since  $v(X\pm d)$ is supported within $Q_{\mp d}$. The desired right inverse

\eq{
A_\lambda^{-d,d} := \widetilde{A_{\lambda}^{-d,d}}\br{\Id + \lambda^2 v(X+d) A_\lambda^{d} Q_d + \lambda^2 v(X-d) A_\lambda^{-d} Q_{-d}}^{-1}
}  exists and is analytic on $\Omega_{\Lambda,\varepsilon}$. The proof of \cref{lem:extend-resolvent_Hlambda} is now complete.\end{proof}


\subsubsection{Estimates on the Gramian matrix, $G_\lambda$}

\begin{lem}\label{lem:Gramian} For $\lambda\in\Omega_{\Lambda,\ve}\cap\RR$,  the set  $\{\psi_{\lambda, -d},\psi_{\lambda, d}\}$ is linearly independent, nearly orthonormal:
    \eql{
        \abs{\ip{\psi_{\lambda,\alpha}}{\psi_{\lambda,\beta}}-\delta_{\alpha,\beta}}\lesssim \lambda^{-1/2},\qquad \alpha,\beta=\pm d\, .
    }
Therefore,  the Gramian matrix $G_\lambda$ satisfies the bound: $\norm{G_\lambda-\Id}\lesssim \lambda^{-1/2}$.
\end{lem}
\begin{proof} Recall that for $\lambda\in\RR$, $\vf_\lambda^{\rm MHO}$ is the normalized MHO ground state. From \cref{eq:Phi-pmd} and \cref{eq:psi-pmd} we have that
 for $\alpha, \beta =\pm d$, 
 \eq{
  \ip{\psi_{\alpha}}{\psi_{\beta}} &=
   \ip{\Pi_\lambda Q_\alpha\widehat{R}^\alpha\chi\vf_\lambda^{\rm MHO} }
   {\Pi_\lambda Q_\beta\widehat{R}^\beta\chi \vf_\lambda^{\rm MHO}}\\
   &= \ip{Q_\alpha\widehat{R}^\alpha\chi\vf_\lambda^{\rm MHO} }
   {Q_\beta\widehat{R}^\beta\chi \vf_\lambda^{\rm MHO}} -
    \ip{\Pi_\lambda^\perp Q_\alpha\widehat{R}^\alpha\chi\vf_\lambda^{\rm MHO} }
   {Q_\beta\widehat{R}^\beta\chi \vf_\lambda^{\rm MHO}}\\
   &\quad -  \ip{   Q_\alpha\widehat{R}^\alpha\chi\vf_\lambda^{\rm MHO} }
   {\Pi_\lambda^\perp Q_\beta\widehat{R}^\beta\chi \vf_\lambda^{\rm MHO}}
    + \ip{\Pi_\lambda^\perp  Q_\alpha\widehat{R}^\alpha\chi\vf_\lambda^{\rm MHO} }
   {\Pi_\lambda^\perp Q_\beta\widehat{R}^\beta\chi \vf_\lambda^{\rm MHO}}
}
By \cref{lem:translates of MHO ground state approximates double-well ground state} below we see that for real $\lambda$, $\norm{\Pi_\lambda^\perp \widehat{R}^{\pm d}\vf_\lambda^{\rm MHO}} \lesssim \lambda^{-1/2}$. Using this, we obtain for $\alpha=\beta$
%
    \eq{
   \abs{ 1-\ip{\psi_\alpha}{\psi_\alpha}} =  \abs{1-\norm{\psi_{\alpha}}^2} \lesssim \lambda^{-1/2},\qquad  \alpha\in\{-d,+d\},
    }
and 
    \eq{
    \abs{\ip{\psi_{-d}}{\psi_{d}}} 
    &\lesssim \lambda^{-1/2}\,.
    }

    To see this, note that the main term
    \[
    \ip{Q_{-d}\widehat{R}^{-d}\chi\vf_\lambda^{\rm MHO}}
       {Q_d\widehat{R}^{d}\chi\vf_\lambda^{\rm MHO}}
    \]
    is zero because $Q_{-d}$ and $Q_d$ localize to disjoint wells. Hence every term in the expansion of $\ip{\psi_{-d}}{\psi_d}$ contains at least one factor involving $\Pi_\lambda^\perp$, and these factors are $\mathcal O(\lambda^{-1/2})$ in norm by \cref{lem:translates of MHO ground state approximates double-well ground state}. The estimate then follows immediately from Cauchy--Schwarz.
    
\end{proof}

\subsubsection{Bounds on matrix elements of $G_\lambda$ and $M_\lambda$}

As remarked above in Step 5 of the proof strategy, to obtain bounds on $\widetilde\Delta(\lambda,b)$, given by \cref{eq:tilde-Delta-squared}, it suffices to bound the matrix entries of $G_\lambda$ and $M_\lambda$. These are controlled by:
\eq{ \Omega_{\Lambda,\varepsilon}\ni\lambda \mapsto  \psi_{\alpha,\lambda}\qquad(\alpha=\pm d)}
and 
\eq{ \Omega_{\Lambda,\varepsilon}\ni\lambda \mapsto H_\lambda \psi_{\alpha,\lambda}\qquad(\alpha=\pm d)}
which are analytic with appropriate upper bounds, as we explain next. 

Recall that $\psi_\alpha = \Pi_\lambda Q_\alpha \widehat{R}^\alpha \chi  \vf_\lambda^{\rm MHO}$, $\alpha=\pm d$, with $\chi$ smooth and supported in $B_{3a/4}(0)$;  see \cref{eq:Phi-pmd} and \cref{eq:psi-pmd}.  Hence, 
\eq{
\norm{\psi_\alpha} \leq \norm{\Pi_\lambda Q_\alpha} \norm{\widehat{R}^\alpha \chi \vf_{\lambda}^{\rm MHO}}
} 
The appearance of the magnetic translation, $\widehat{R}^d$  leads, for  complex $\lambda$, to upper bounds which are exponentially growing   in $\lambda$. Indeed,  \emph{both} factors satisfy a bound $\lesssim |\lambda|^\sharp\exp\br{C b \abs{\lambda}}$. 


We next turn to bounds on $H_\lambda \psi_{\alpha,\lambda}$ which appear in the matrix elements of $M$. Recall we have, by definition of $A_{\lambda,\xi}^{-d,d}$, 
\eql{\label{eq:identity for H A}
H_\lambda A_{\lambda,\xi}^{-d,d} Q_{-d,d} = Q_{-d,d} + z_\lambda(\xi) A_{\lambda,\xi}^{-d,d} Q_{-d,d}\,.
} Here we have made explicit the $\xi\in\mathbb{S}^1$-dependence (through $z_\lambda(\xi)$) of $A_{\lambda,\xi}^{-d,d}$. We insert this identity into
\eq{
H_\lambda \psi_d = \frac{\ii}{2\pi}\oint_{\xi\in\mathbb{S}^1}H_\lambda A_{\lambda,\xi}^{-d,d}\Phi_d z_\lambda'(\xi)\dif{\xi}
} to obtain 
\eq{
H_\lambda \psi_d = \frac{\ii}{2\pi}\oint_{\xi\in\mathbb{S}^1}\br{\Phi_d + z_\lambda(\xi) A_{\lambda,\xi}^{-d,d} \Phi_d} z_\lambda'(\xi)\dif{\xi}\,. 
} The estimate \cref{eq:estimates on the two-well resolvent} yields now the desired bound
\eq{
\norm{H_\lambda \psi_d}_{L^2} \leq C \abs{\lambda}^{\sharp} \exp\br{C b \abs{\lambda}}\,.
}

The proof of the assertion in part 3 of \cref{thm:main theorem}, concerning averaged lower bounds on $\Delta(\lambda,b)$ for $\lambda>\Lambda$ and $0<b<b_\varepsilon$, is now complete. 
\section{Lower bounds on analytic functions}\label{sec:bounds on analytic functions}

Define 
\eq{
    \HH := \Set{z\in\CC:\Re{z}>0}
    \qquad\textrm{and}\qquad
    \DD := \Set{z\in\CC:\abs{z}<1}\,.
}

In this section, we prove the following result.

\begin{thm}\label{thm:sketch-half-plane-bound}
    Let $F:\HH\to\DD$ be analytic and suppose
    \eq{
        \abs{F(1)}\geq \exp\big(-\beta\big)\,.
    }
    Then, for $0<\delta<1/8$ we have
    \eql{\label{eq:sketch-main-bound}
        \frac{1}{\delta}\int_{t=\delta}^{2\delta}\mLog{F(t)}\dif{t}
        \leq
        C\frac{\beta}{\delta}
    }
    for an absolute constant $C$.
\end{thm}

\begin{proof}
    We write $c$, $C$, $C'$, etc. to denote absolute constants.
    These symbols may denote different constants in different occurrences.

    We use two textbook results from complex variables, namely
    the Blaschke factorization \cite[Theorem 15.21]{rudin1987real} or \cite[Section G.3]{Koosis_1988} and the Poisson integral \cite[Chapter 3 on the Herglotz representation]{SteinShakarchi-ComplexAnalysis-2003}.

    For $a=\alpha+\ii\beta\in\HH$ and $z\in\HH$, we define the Blaschke factor
    \eq{
        B_a(z)
        :=
        \frac{z-\alpha-\ii\beta}{z+\alpha-\ii\beta}\,\ee^{\ii\theta}\,,
    }
    with $\theta\in\RR$ picked so that $B_a(1)>0$.
 Then, any non-trivial analytic function $F:\HH\to\DD$ may be factorized as
    
    \eql{\label{eq:sketch-factorization}
        F(z)
        =
        \prod_{\nu} B_{a_\nu}(z)\cdot \exp\left(-G(z)\right)\,,
    }
    where $\Set{a_\nu}$ are the zeros of $F$ (multiplicities counted), and
    $G:\HH\to\HH$.

    Moreover, the non-negative harmonic function $u\equiv \Re{G}$ may be
    expressed as a Poisson integral\footnote{
        The term $At$ arises here when one carries over the result in \cite[Chapter 3 on the Herglotz representation]{SteinShakarchi-ComplexAnalysis-2003}
        from the unit disc to the half plane by a linear fractional
        transformation $\phi:\HH\to\DD$, in case the relevant measure on the
        unit circle has an atom at $\phi(\infty)$.
    }
    \eql{\label{eq:sketch-poisson}
        u(t+\ii s)
        =
        At+\frac{1}{\pi}\int_{-\infty}^{\infty}
        \frac{t}{t^2+(s-y)^2}\dif{\mu(y)}\,,
    }
    where $A\geq0$ is a constant and $\mu\geq0$ is a measure, with
    \eq{
        \int_{-\infty}^{\infty}\frac{\dif{\mu(y)}}{1+y^2}<\infty\,.
    }

    For a function $f(t)$, we write
    \eq{
        \mathrm{Av}_{[\delta,2\delta]}f
        :=
        \frac{1}{\delta}\int_{\delta}^{2\delta}f(t)\dif{t}\,.
    }

    We use the following elementary inequalities.
    Their proofs are left to the reader:
    \eql{\label{eq:sketch-small-a}
        \mLog{B_a(1)} > c\,\Re{a}
        \qquad\textrm{if }\abs{a}\leq1\,,
    }
    and
    \eql{\label{eq:sketch-large-a}
        \mLog{B_a(1)} \geq c\,\Re{\frac{1}{a}}
        \qquad\textrm{if }\abs{a}\geq1\,.
    }
    Also,
    \eql{\label{eq:sketch-average-Ba}
        \mathrm{Av}_{[\delta,2\delta]}\Bigl(\mLog{B_a(t)}\Bigr)
        \leq
        C\min\Set{\frac{\Re{a}}{\delta},\,\delta\,\Re{\frac{1}{a}}}\,.
    }
    for $a\in\HH$, $\delta>0$, and
    \eql{\label{eq:sketch-poisson-kernel}
        \frac{t}{t^2+y^2}
        \leq
        \frac{C}{t}\cdot\frac{1}{1+y^2}
        \qquad\textrm{for }0<t\leq\frac12,\ y\in\RR\,.
    }

    From \cref{eq:sketch-factorization,eq:sketch-poisson} we obtain
    \eql{\label{eq:sketch-decomposition}
        \mLog{F(t)}
        =
        \sum_{\nu}\mLog{B_{a_\nu}(t)}
        +
        At
        +
        \frac{1}{\pi}\int_{-\infty}^{\infty}\frac{t}{t^2+y^2}\dif{\mu(y)}\,.
    }

    In particular,
    \eql{\label{eq:sketch-at-one}
        \beta
        \geq
        \mLog{F(1)}
        =
        \sum_{\nu}\mLog{B_{a_\nu}(1)}
        +
        A
        +
        \frac{1}{\pi}\int_{-\infty}^{\infty}\frac{\dif{\mu(y)}}{1+y^2}\,.
    }

    All terms on the right hand side of \cref{eq:sketch-at-one} are non-negative, so
    \eql{\label{eq:sketch-sum-at-one}
        \sum_{\nu}\mLog{B_{a_\nu}(1)} \leq \beta\,,
    }
    \eql{\label{eq:sketch-A-bound}
        A\leq\beta\,,
    }
    and
    \eql{\label{eq:sketch-mu-bound}
        \frac{1}{\pi}\int_{-\infty}^{\infty}\frac{\dif{\mu(y)}}{1+y^2}\leq\beta\,.
    }

    Let
    \eq{
        S:=\Set{\nu:\abs{a_\nu}\leq1}\,,
        \qquad
        L:=\Set{\nu:\abs{a_\nu}>1}\,.
    }

    From \cref{eq:sketch-small-a}, \cref{eq:sketch-large-a}, and \cref{eq:sketch-sum-at-one}, we have
    \eql{\label{eq:sketch-zero-sums}
        \sum_{\nu\in S}\Re{a_\nu}\leq C\beta
        \qquad\textrm{and}\qquad
        \sum_{\nu\in L}\Re{\frac{1}{a_\nu}}\leq C\beta\,.
    }

    Now let $\delta<1/8$ be given.

    Then \cref{eq:sketch-average-Ba,eq:sketch-zero-sums} yield
    \eq{
        \sum_{\nu\in S}\mathrm{Av}_{[\delta,2\delta]}\Bigl(\mLog{B_{a_\nu}(t)}\Bigr)
        \leq
        \frac{C}{\delta}\sum_{\nu\in S}\Re{a_\nu}
        \leq
        \frac{C'}{\delta}\beta
    }
    and
    \eq{
        \sum_{\nu\in L}\mathrm{Av}_{[\delta,2\delta]}\Bigl(\mLog{B_{a_\nu}(t)}\Bigr)
        \leq
        C\delta\sum_{\nu\in L}\Re{\frac{1}{a_\nu}}
        \leq
        C'\delta\beta\,,
    }
    hence
    \eql{\label{eq:sketch-blaschke-average}
        \sum_{\nu}\mathrm{Av}_{[\delta,2\delta]}\Bigl(\mLog{B_{a_\nu}(t)}\Bigr)
        \leq
        C''\frac{\beta}{\delta}\,.
    }

    From \cref{eq:sketch-poisson-kernel,eq:sketch-mu-bound} we have
    \eq{
        \frac{1}{\pi}\int_{-\infty}^{\infty}\frac{t}{t^2+y^2}\dif{\mu(y)}
        \leq
        \frac{C}{t}\int_{-\infty}^{\infty}\frac{\dif{\mu(y)}}{1+y^2}
        \leq
        \frac{C'\beta}{t}\,,
        \qquad\textrm{for }0<t\leq\frac12\,,
    }
    hence
    \eql{\label{eq:sketch-poisson-average}
        \mathrm{Av}_{[\delta,2\delta]}\left(
            \frac{1}{\pi}\int_{-\infty}^{\infty}\frac{t}{t^2+y^2}\dif{\mu(y)}
        \right)
        \leq
        C''\frac{\beta}{\delta}\,.
    }

    Finally, by \cref{eq:sketch-A-bound},
    \eql{\label{eq:sketch-linear-average}
        \mathrm{Av}_{[\delta,2\delta]}(At)
        \leq
        CA\delta
        \leq
        C'\beta\delta\,.
    }

    Putting \cref{eq:sketch-blaschke-average,eq:sketch-poisson-average,eq:sketch-linear-average}
    into \cref{eq:sketch-decomposition}, we learn that
    \eq{
        \mathrm{Av}_{[\delta,2\delta]}\Bigl(\mLog{F(t)}\Bigr)
        \leq
        C\frac{\beta}{\delta}
        \qquad\textrm{for }\delta<1/8\,,
    }
    completing the proof of \cref{eq:sketch-main-bound}.
\end{proof}

\begin{rem}
    By working a little harder we can show also that
    \eq{
        \frac{1}{\delta}\int_{t=\delta}^{2\delta}
        \abs{\mLog{F(t)}-\frac{\gamma}{t}}\dif{t}
        =
        o\left(\frac{1}{\delta}\right)
        \qquad\textrm{as }\delta\to0^+\,,
    }
    with
    \eq{
        \gamma=\frac{\mu(\Set{0})}{\pi}\,.
    }
    We omit the proof, since we won't use the result here.
\end{rem}

We will apply \cref{thm:sketch-half-plane-bound} to the function
\eq{
    \frac{F\bigl(z^{\pi/(2\alpha)}\bigr)}{U(z)}\,,
}
where $F:\Gamma_\alpha\to\HH$ is analytic,
\eq{
    \Gamma_\alpha:=\Set{z\in\CC:\abs{\operatorname{arg}(z)}<\alpha}
    \qquad
    \left(\alpha<\frac{\pi}{2}\right),
}
and $U(z)$ is analytic and everywhere nonzero on $\Gamma_\alpha$.

Then, we have

\begin{cor}\label{cor:main complex analytic lower bound}
    Suppose $F:\Gamma_\alpha\to\CC$ is analytic, and
    \eq{
        \abs{F(z)}\leq \abs{U(z)}
        \qquad\textrm{on }\Gamma_\alpha\,.
    }
    If
    \eq{
        \abs{F(1)}>\exp\big(-\beta\big) \abs{U(1)}\,,
    }
    then
    \eq{
        \frac{1}{R}\int_{t=R}^{2R}\mLog{F(t)}\dif{t}
        \leq
        C\beta R^{\pi/(2\alpha)}
        +
        \frac{1}{R}\int_{t=R}^{2R}\mLog{U(t)}\dif{t}
    }
    for $R>8$.
\end{cor}
\section{Mesoscopic annuli; construction of $T_{\lambda,\nu}$ and completion of the proof of \cref{lem:continuation of cut off one well resolvent}}
\label{sec:mesoscopic annuli}

We refer to the setup introduced in  \cref{subsubsec:extending h theta}.  Our goal here is to construct the operators $T_{\lambda,\nu}$ with the properties detailed in \cref{sec:the_proof}, establish bounds on their norms as well as on how well they approximate inverses of $h_\lambda^\theta-z_\lambda\Id$. 
We use standard techniques from the theory of pseudo-differential operators; see \cref{sec:pseudo diff operators}.

For each $\nu=0,\cdots,\lceil\log_2(R/\delta)\rceil$, we seek an approximate inverse of $h_\lambda^\theta-z_\lambda\Id$ on functions supported in the annulus:
\eql{\label{eq:Unu-def}
U_\nu:=\Set{x\in\RR^2 | 2^{\nu-3} \delta < \norm{x}< 2^{\nu+3}\delta};
} 
see \cref{eq:U_nu-def}.
Here, $R$ is a parameter 
of order one, chosen so that $B_R(0)$ contains the support of the cutoff $\theta$; see \cref{subsubsec:extending h theta}. Recall that these open sets are chosen so that
\eq{
\supp(\psi_\nu)\subset \Set{x | 2^{\nu-2}\delta \leq \norm{x}\leq 2^{\nu+2}\delta}\subset U_\nu.} 

Restricted to the annulus $U_\nu$, the  symbol $s_\lambda^\theta:U_\nu\times\RR^2\to \CC$  of $h_\lambda^\theta-z_\lambda\Id$ is given by:
\eql{\label{eq:symbol in mesoscopic annulus}
(x,p)&\mapsto s_\lambda^\theta(x,p) \notag\\
 &:= \br{p-\frac12 b \lambda \theta(x)  x^\perp}^2 +\frac\ii2b \lambda\br{\nabla \theta}( x)\cdot x^\perp+ \lambda^2 \br{v(x)+1}-\mu\lambda \notag \\
} where we recall that $z_\lambda=z_\lambda(\xi)=-\lambda^2 + \mu(\xi)\lambda$ with \added{$\mu(\xi):=e_0^{\rm MHO}+c_{\mathrm{ctr}}(e_1^{\rm MHO}-e_0^{\rm MHO})\xi$}; in particular, $|\mu(\xi)|\le C$ uniformly in $\xi\in\mathbb S^1$; see \cref{eq:spectral parameter for Riesz}.  We have that $\Op{s^\theta_\lambda} = h_\lambda^\theta-z_\lambda\Id$ when restricted to $L^2(U_\nu)$.

Our goal is to apply the local elliptic \cref{lem:local elliptic lemma} to invert $\Op{s_\lambda^\theta}$ on functions supported in   $U_\nu$. 


Now both the annuli, $U_\nu$, and the  symbol $s_\lambda^\theta$ depend on the large asymptotic parameter $\abs{\lambda}$. 
We next rescale the spatial variable, $x$, and map our problem to a fixed ($\delta$ independent) spatial domain.
In these new variables, all parameters appear explicitly in the scaled symbol.
 To this end, for $x\in U_\nu$ we set 
\eq{ x':=\alpha^{-1}x,\quad {\rm where}\quad  \alpha := \alpha_\nu = 2^\nu \delta\,\quad \ \textrm{($\delta(\Lambda)=\Lambda^{\eta-\frac12}$)}\ ,}
and so $x'$ varies over the fixed annulus:
\begin{equation}
\widetilde{U} := \frac{1}{\alpha} U_\nu = \Set{x'\ :\ \frac18<\norm{x'}<8}.\label{eq:Udef}
\end{equation}
Since $x=\alpha x'$, we \deleted{have one }\added{have} $P = \alpha^{-1}P'=
\alpha^{-1} \ii^{-1}\partial_{x'}$ and so 
$s_\lambda^\theta(x,p)=s_\lambda^\theta(\alpha x',\alpha^{-1}p')$, with $x'\in \tilde{U}$ and $p'\in\RR^2$.
It is convenient to work with a symbol which is also multiplicatively scaled as follows:
\eq{ \widetilde{s_\lambda^\theta}(x',p') := \alpha^2 s_\lambda^\theta\big(\alpha x',\frac{1}{\alpha}p'\big),\quad 
 x'\in \tilde{U},\ \ p'\in\RR^2.
}
Let $\mathfrak{U}_\alpha: L^2(\RR^2)\to L^2(\RR^2)$ denote the unitary scaling: 
$\mathfrak{U}_\alpha[f](x)=\alpha f(\alpha x)$. We note, for later use, by \cref{eq:scalings of PDO} of \cref{sec:dilations}:
\eql{\label{eq:s-scaled}
\mathfrak{U}_\alpha^\ast\Op{\alpha^{-2} \widetilde{s_\lambda^\theta}}\mathfrak{U}_\alpha = \Op{s_\lambda^\theta}\ =\ h_\lambda^\theta-z_\lambda\Id,\quad \textrm{for $x\in U_\nu$.}
}

\noindent{\it For the remainder of this section, for notational convenience, we drop all primes from the independent variables notation and write the symbol as:} \eq{ \tilde U\times\RR^2 \ni (x,p)\mapsto \widetilde{s_\lambda^\theta} (x,p)\in \CC
}
where
\eql{ \widetilde{s_\lambda^\theta}(x,p)& = \br{p-\frac12 b \alpha^2 \lambda \theta(\alpha x) x^\perp}^2 +\frac\ii2b\alpha^3\lambda\br{\nabla \theta}(\alpha x)\cdot x^\perp\ +\ \alpha^4\lambda^2 \br{\frac{v(\alpha x)+1}{\alpha^2}}-\mu\alpha^2\lambda\notag\\
&=\norm{p}^2+\frac14 b^2\lambda^2\alpha^4\ \theta(\alpha x)^2\norm{x}^2-b\lambda\alpha^2\ \theta(\alpha x)p\cdot x^\perp  +\frac\ii2b\lambda \alpha^3\ \br{\nabla \theta}(\alpha x)\cdot x^\perp \nonumber\\
&\qquad+\lambda^2 \alpha^4 \ w(\alpha x)\ -\ \mu\lambda\alpha^2\ ,
\qquad (x\in \widetilde{U},\ p\in\RR^2)
\label{eq:scaled-symbol}} 
and 
\eql{
w(\alpha x) := \frac{v(\alpha x)+1}{\alpha^2}\, ,\ \ \textrm{\ where}\ \ \alpha=\alpha_\nu := 2^\nu\delta = 2^\nu \Lambda^{\eta-1/2},\ 
}
where, $\nu=1,\cdots,\lceil\log_2(R/\delta)\rceil$. Moreover, we let $M_\nu:=\alpha_\nu^2\Lambda=(2^\nu\Lambda^\eta)^2\gg1$.

Note that  $w$ depends on $\alpha_\nu$. Further,    $\widetilde{s_\lambda^\theta}$ (the rescaling of  $s_\lambda^\theta$ on $U_\nu$), which is now defined on the fixed ($\alpha_\nu$-independent)  domain  $\widetilde{U}$, see \cref{eq:Udef},  also depends explicitly on  $\alpha_\nu$. 
It is convenient to suppress the dependence of $\widetilde{s_\lambda^\theta}$ on $\nu$.

If we can prove, for each $\nu$, that the symbol $\widetilde{s_\lambda^\theta}$ is elliptic in $\widetilde{U}$, we can then apply \cref{lem:local elliptic lemma} 
 to construct 
 an inverse of $\Op{\widetilde{s_\lambda^\theta}}$ on $\widetilde{U}$. And then, by undoing the unitary rescaling, $\mathfrak{U}_{\alpha_\nu}$, we'll have the existence of a local inverse and corresponding bounds for $ h^\theta_\lambda-z_\lambda\Id $ on $U_\nu$. 

\replaced{Ellipicity}{Ellipticity} boils down to establishing the following upper and lower bounds on  the symbol $\widetilde{s_\lambda^\theta}$; For all 
$x\in \widetilde{U}$,  $p\in\RR^2$ and $\alpha,\beta\in\NN$:
\begin{align}
\label{eq:upper bound on symbol}
    \abs{\partial_x^\alpha \partial_p^\beta \widetilde{s_\lambda^\theta}(x,p)} &\leq {\rm const.}\br{M_\nu+ \norm{p}}^{2-\abs{\beta}}\\
    \label{eq:ellipticity lower bound}
\abs{\widetilde{s_\lambda^\theta}(x,p)} &\geq {\rm const.}\br{M_\nu+ \norm{p}}^{2}\ .
\end{align}

The idea is that if  \cref{eq:upper bound on symbol} and \cref{eq:ellipticity lower bound}  hold then, 
acting on functions supported in $\widetilde{U}$, 
$\Op{\widetilde{s_\lambda^\theta}}$ is approximately invertible with local resolvent  bounds which scale  like those of $P^2+M_\nu^2\Id\equiv-\Delta + M_\nu^2\Id $.
  The proof of \cref{eq:upper bound on symbol} is straightforward. Hence, we focus on \cref{eq:ellipticity lower bound}. 

We begin with a basic estimate about the size of $v$. 
 By the hypotheses of \cref{thm:main theorem}, $v\in C^3$ with a unique minimum at $x=0$, $v(0)=-1$, and $\Hessian{v}(0)>0$. Therefore, there are constants $c_v<C_v$ such that \eq{
c_v (2^{\nu}\delta)^2 \leq v(x)+1 \leq C_v (2^{\nu}\delta)^2, \qquad x\in U_\nu\ .
} 
 Hence, 
\eq{
c_v \leq w(\alpha x) \leq  C_v \qquad,\qquad x\in  \widetilde{U}\ .
} 
%
%
 Towards a proof of \cref{eq:ellipticity lower bound}, we first seek a lower bound on $\abs{\norm{p}^2+\alpha^4\lambda^2 w(\alpha x)}$. Note, using $|\lambda|\le2\Lambda$, that  $\alpha^4|\lambda|^2\le (2^\nu\Lambda^\eta)^4 \Lambda^{-2} (2\Lambda)^2 \le 4 (2^\nu\Lambda^\eta)^4$ or 
\eql{\label{eq:Mnu-def} \alpha^4|\lambda|^2\le M_\nu^2(\Lambda),\quad \textrm{where $M_\nu^2(\Lambda):=4 (2^\nu\Lambda^\eta)^4$} \ .} 
%
%

Then,
\eq{
\abs{\norm{p}^2+\alpha^4\lambda^2 w(\alpha x)} \geq \norm{p}^2/2 + \norm{p}^2/2 -  C_v M_\nu^2 \geq \frac{1}{2}\br{\norm{p}^2+M_\nu^2},\ \  
\textrm{if $\norm{p}^2\geq \addedm{C_1 M_\nu^2}$.}
} 
Next, consider the case where $\norm{p}^2<\addedm{C_1M_\nu^2}$. Here, we write $\lambda=|\lambda|\exp(\ii\arg(\lambda))$ where 
$\abs{\arg(\lambda)}<\pi/2-\ve$, and  find, using that $|\lambda|\ge\Lambda$, that
\eq{
\abs{\norm{p}^2+\alpha^4\lambda^2 w(\alpha x)} \geq |\sin(2\ve)|\ \alpha^4\abs{\lambda}^2 w(\alpha x)\ge 
c_v |\sin(2\ve)|\ (2^{\nu}\Lambda^{\eta})^4 = 
\addedm{c_{\ve,v}M_\nu^2}.
}
Writing $M_\nu^2 = M_\nu^2/2 + M_\nu^2/2$ and applying the assumed upper bound on $\|p\|^2$ we find
\eq{
\abs{\norm{p}^2+\alpha^4\lambda^2 w(\alpha x)} \geq 
\addedm{c_{\ve,v}\big(\|p\|^2+M_\nu^2\big)},\quad \textrm{if $\norm{p}^2\le \addedm{C_1 M_\nu^2}$.}
}
Hence, for $\varepsilon>0$ and small we have that 
\begin{equation}
\abs{\norm{p}^2+\alpha^4\lambda^2 w(\alpha x)} \geq 
\addedm{c_{\ve,v}\big(\|p\|^2+M_\nu^2\big)},\quad x\in\widetilde{U},\ p\in\RR^2.
\label{eq:s-dominant}\end{equation}
We next use \cref{eq:s-dominant} to bound the full symbol
 $\widetilde{s_\lambda^\theta}$ from below.
%
We have 
\eql{\label{eq:final ellipticity estimate}
&\abs{\widetilde{s_\lambda^\theta}(x,p)} \geq \abs{\norm{p}^2+\alpha^4\lambda^2 w(\alpha x)}\nonumber \\
&\qquad\qquad  -\frac14 |b|^2 \alpha^4\abs{\lambda}^2 \norm{x}^2 - |b| \alpha^2\abs{\lambda} \norm{x}\norm{p}  -\abs{\mu}\alpha^2\abs{\lambda} -\frac12|b|\abs{\lambda}\alpha^3\norm{\nabla\theta(\alpha x)}\norm{x} \nonumber\\
&\geq \addedm{c_{\ve,v}\br{\norm{p}^2+M_\nu^2}}-16 |b|^2 M_\nu^2- 4 |b| \br{M_\nu^2+\norm{p}^2}-  M_\nu-4 |b| M_\nu \alpha \norm{\nabla \theta}_\infty, \nonumber
} 
where we have used that $|\mu|\le C$ uniformly in $\xi$. 

In this lower bound, all subtracted terms, except  for the term   $-M_\nu$, can be bounded from above by  $|b|(M_\nu^2+\|p\|^2)$, and can therefore be absorbed in the leading positive term by taking $|b|$ sufficiently small. To obtain a lower bound on the term $-M_\nu$, note that 
$ \varepsilon M_\nu^2\gtrsim M_\nu \ \iff\ 
 \varepsilon M_\nu\gtrsim 1\iff \varepsilon (2^\nu\Lambda^\eta)^2\gtrsim 1$,  which can be satisfied by taking $\Lambda \gtrsim \varepsilon^{-1/(2\eta)}$. 
Hence, if we let $\varepsilon>0$ be arbitrary and small, and take $\Lambda_\varepsilon>\varepsilon^{-1/(2\eta)}$, then for a small positive constant, $b_\varepsilon$,  the following holds:
 \eql{
&\textrm{For all 
 $b\in(0,b_\varepsilon)$, $\Lambda>\Lambda_\varepsilon$,  and $\lambda\in\Omega_{\Lambda, 2\Lambda,\varepsilon}$\ ,}\nonumber \\
 & \abs{\widetilde{s_\lambda^\theta}(x,p)} \gtrsim 
\sin(2\ve)\br{\norm{p}^2+M_\nu^2}\gtrsim \sin(2\ve)(\|p\|+M_\nu)^2.
 }
 Here, we have used that $\|p\|^2+M_\nu^2\ge (\|p\|+M_\nu)^2/2$.

Summarizing, we have proved that for each $\nu=1,\dots, N_\Lambda-1$ 
\eq{
\widetilde{U}\times\RR^2\ni(x,p) \mapsto \widetilde{s_\lambda^\theta}(x,p)\in\CC } 
is elliptic on $\widetilde{U}$. By \cref{lem:local elliptic lemma}, if $\widetilde{\chi}_\nu := \mathfrak{U}_\alpha \chi_\nu \mathfrak{U}_\alpha^\ast$, whose  support \added{lies} in $\widetilde{V}$, a relatively compact \added{subset of} $\widetilde{U}$, then there exists $\widetilde{T}_{\lambda,\nu}\in\calB(L^2(\widetilde U))$, such that  
\eql{
\norm{\Op{\widetilde{s_\lambda^\theta}}\widetilde{T}_{\lambda,\nu}-\widetilde{\chi}_\nu}_{\calB(L^2(\widetilde U))} \leq \frac{C}{M_\nu}\quad\textrm{and}\quad 
\norm{\widetilde{T}_{\lambda,\nu}}_{\calB(L^2(\widetilde U))} \leq \frac{C}{M_\nu^2}\,.
} 
Recall now from \cref{eq:s-scaled} that $\mathfrak{U}_\alpha^\ast\Op{ \widetilde{s_\lambda^\theta}}\mathfrak{U}_\alpha = \alpha^{2}\Op{s_\lambda^\theta}$. Letting 
\eql{\label{eq:Tlamnu-def} T_{\lambda,\nu} :=\alpha^2\mathfrak{U}_\alpha^\ast\widetilde{T}_{\lambda,\nu}\mathfrak{U}_\alpha\in\calB(L^2(U_\nu)),
}
 we have 
\begin{align}
\Op{\widetilde{s_\lambda^\theta}}\widetilde{T}_{\lambda,\nu}-\widetilde{\chi}_\nu &= 
 \alpha^2 \mathfrak{U}_\alpha\Op{s_\lambda^\theta}\mathfrak{U}_\alpha^\ast\widetilde{T}_{\lambda,\nu} - \mathfrak{U}_\alpha \chi_\nu \mathfrak{U}_\alpha^\ast \nonumber\\
 &= \mathfrak{U}_\alpha\Big[
\Op{s_\lambda^\theta}\ T_{\lambda,\nu} - \chi_\nu   \Big]\mathfrak{U}_\alpha^\ast
\label{eq:Op-diff}\end{align}
 Taking operator norms of \cref{eq:Tlamnu-def} and \cref{eq:Op-diff} and using the unitarity of $\mathfrak{U}_\alpha$,  we have
\begin{equation}
\norm{T_{\lambda,\nu}}_{\calB(L^2(U_\nu))}
 = \alpha^2 \norm{\widetilde{T}_{\lambda,\nu}}_{\calB(L^2(\tilde{U}))} \leq \frac{\alpha^2C}{ M_\nu^2}\lesssim \Lambda^{-2\eta-1}\,.
\end{equation}
and \eql{
&\norm{\Op{s_\lambda^\theta}\ T_{\lambda,\nu} - \chi_\nu  }_{\calB(L^2(U_\nu))} =
\norm{\Op{\widetilde{s_\lambda^\theta}}\widetilde{T}_{\lambda,\nu}-\widetilde{\chi}_\nu}_{\calB(L^2(\tilde{U}))} 
\leq \frac{C}{M_\nu} \lesssim \Lambda^{-2\eta}
\nonumber \\
&{\ }
}
In other words,
\eql{
\norm{\br{h_\lambda^\theta-z_\lambda\Id} T_{\lambda,\nu}-\chi_\nu}_{\calB(L^2(U_\nu))} \lesssim \Lambda^{-2\eta}
}
The proof of \cref{lem:continuation of cut off one well resolvent} is now complete.


\appendix

    \section{Review of pseudodifferential operators}\label{sec:pseudo diff operators}
    Here we recount some facts about pseudodifferential operators. These will mainly be used in order to establish analyticity and bounds of resolvents. In this section,  $X$ is the position operator on $L^2$ and $P\equiv-\ii\nabla$ is the momentum operator on it. With lower case these symbols denote the corresponding real variables: by our convention, under the Fourier transform, $P$ becomes a multiplication by the function $p\mapsto p$. Also, for the sake of clarity here the space dimension is $n\in\NN$ although in the rest of the paper we make use of it with $n=2$.

    Let $M\geq1$, $m\in\ZZ$ and $U\in\Open{\RR^n}$. We study \emph{phase space symbols} $a\in C^\infty(U\times\RR^n\to\CC)$ and their regularity classes. To that end, for any multi-indices $\alpha,\beta\in\NN_{\geq0}^n$, we define a seminorm \eq{
        C^m_{M,\alpha\beta}(a) := \sup_{x\in U,p\in\RR^n}\frac{\left|\partial_x^\alpha\partial_p^\beta a(x,p)\right|}{\br{M+\norm{p}}^{m-|\beta|}}
    } and define the  symbol class $S^m_M$  using these seminorms \eql{\label{eq:def of S^m_M}
        S^m_M(U) := \Set{a\in C^\infty(U\times \RR^n_p) | C^m_{M,\alpha\beta}(a)<\infty\qquad\forall\alpha,\beta\in\br{\NN_{\geq0}}^n}\, . 
    } 
     This family of seminorms is separating, so this induces a metric on the space of symbols $S^m_M(U)$ which makes it into a complete locally convex metric vector space. We write $S^m_M\equiv S^m_M(\RR^n)$.

    To each phase space symbol we associate a \emph{pseudodifferential operator} $\Op{a}$, its (Kohn–Nirenberg) quantization. Even if the $x$ variable of the symbol $a$ is only defined on $U$, we let $\Op{a}$ be an operator on $L^2(\RR^n)$ for convenience. That operator acts on (a dense subspace of) the Hilbert space $L^2(\RR^n)$), i.e., we construct a linear map \eq{
    \operatorname{Op}:S^m_M(U) \to \calL(C^\infty_c(\RR^n)\to C^\infty(U))
    } and eventually extend the domain of the operator appropriately. Its action on $u\in C^\infty_c(\RR^n)$ is prescribed as \eql{\label{eq:action of pseudo-diff op}
        \br{\Op{a} u}(x) := \frac{1}{\br{2\pi}^n}\int_{p\in\RR^n}\ee^{\ii p\cdot x}a(x,p) \widehat{u}(p)\dif{p}\qquad(x\in U)
    } where $\widehat{u}$ is the Fourier transform of u:
    \eql{
        \widehat{ u}(p) \equiv \int_{x\in \RR^n}\ee^{-\ii p\cdot x}u(x)\dif{x}
    } and formally the integral kernel of $\Op{a}$ is given by \eql{
        U\times \RR^n \ni (x,y) \mapsto  \Op{a}(x,y) = \frac{1}{\br{2\pi}^n}\int_{p\in\RR^n}\ee^{\ii \br{x-y}\cdot p}a(x,p)\dif{p}\,.
        }

        

    Our main interest in pseudodifferential operators is through the fact that for $a\in S^2_M(U)$ with $\frac{1}{a}\in S^{-2}_M(U)$ we think of $\Op{a}$ as an operator which to some extent behaves very much like the free Laplacian at spectral parameter $M^2$, i.e., $$-\Delta+M^2\Id$$ and, formally restricted to act in $L^2(U)$. This is made precise below in \cref{lem:local elliptic lemma}.

    In common abuse of notation, if the risk of confusion is low, when dealing with an explicit formula $f(x,p)$ for a symbol $f$, we shall write
    \eq{
    \Op{f(x,p)}
    } instead of the more cumbersome 
    \eq{
    \Op{\br{x,p}\mapsto f(x,p)}
    } and similarly with the seminorms. 

A few facts may be readily verified:
\begin{lem}\label{lem:basic properties of PDO}    
We have    
    \begin{enumerate}
        \item If  $b\in C^\infty(U)$ then $\Op{b(x)} = b(X)$ where $X$ is the position operator.
        \item If  $m\in\NN_{\geq0}$ and $b_\alpha:U\to\CC$ are smooth functions  then \eq{
            \Op{\sum_{|\alpha|\leq m }b_\alpha(x) p^\alpha} = \sum_{|\alpha|\leq m} b_\alpha(X) P^\alpha
        } where $P_j\equiv-\ii\partial_{x_j}$ is the momentum operator in the $j$th direction, $j=1,\dots,n$.
        \item If $b\in\CC^\infty(U)$ then \eql{\label{eq:degree of product of two symbols where one is diagonal}
            b(X) \Op{a} = \Op{b(x)a(x,p)}\,.
        } 
        \item If $a\in S^m_M$ and $b\in S^{m'}_M$ then $ba\in S^{m+m'}_M$. Moreover, we have \eq{
        C^{m+m'}_{M,\alpha\beta}(ba) \leq \sum_{\alpha'\leq\alpha,\beta'\leq\beta}\begin{pmatrix}
            \alpha \\ \alpha'
        \end{pmatrix}\begin{pmatrix}
            \beta \\ \beta'
        \end{pmatrix}C^{m'}_{M,\alpha'\beta'}(b) C^{m}_{M,\alpha-\alpha',\beta-\beta'}(a)\,.
        }
        \item If $a\in S^m_M$ then \eql{\label{eq:momentum acting on PDO}
        P_j \Op{a} = \Op{ p_j a(x,p) -\ii(\partial_{x_j}a)(x,p)}
        } with the first term in $S^{m+1}_M$ and the second term in $S^m_M$. We have \eq{
        C^{m+1}_{M,\alpha\beta}( p_j a(x,p)) \leq C(m,n)\sup_{\beta'\leq\beta}C^{m}_{M,\alpha,\beta-\beta'}(a)
        } and \eq{
        C^{m}_{M,\alpha\beta}(-\ii \partial_{x_j}a) = C^{m}_{M,\alpha+e_j,\beta}(a)\,.
        }
        \item Repeatedly applying the above, if $l$ is a symbol of a \emph{differential} operator such that $l\in S^{m'}_M$ and $a\in S^m_M$ then the product of the associated pseudodifferential operators may be decomposed according to the order as follows: \eql{\label{eq:decomposition of product of two pdos into highest degree and remainder}
        \Op{l} \Op{a} =  \Op{la} + \Op{\ve}
        } where $\ve\in S^{m+m'-1}_M$. Moreover, the $S^{m+m'-1}_M$-seminorms of $\ve$ are controlled by the $S^m_M$  seminorms of $a$ and the $S^{m'}_M$ seminorms of $l$. 
        \item If $a\in S^m_M$ depends on analytically on a parameter $\lambda\in\Omega\subseteq\CC$ for some open $\Omega$, and $\Omega\ni\lambda\mapsto C^{m}_{M,\alpha\beta}(a)$ are uniformly bounded, then $\Omega\ni\lambda\mapsto \Op{a}$ is an analytic function.
    \end{enumerate}
\end{lem}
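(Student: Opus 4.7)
The entire lemma collects standard facts about Kohn–Nirenberg quantization, and every item reduces to a direct computation from the defining formula \cref{eq:action of pseudo-diff op} combined with Fourier inversion; I would prove them in the order listed. Items 1 and 2 are immediate: when $a$ is independent of $p$, the Fourier integral in $p$ reconstructs $u(x)$ and yields multiplication by $b(x)$; and $\Op{p^\alpha}u=P^\alpha u$ since $P^\alpha$ acts as multiplication by $p^\alpha$ on the Fourier side, so item~2 follows from linearity and the convention that $x$-factors sit to the left of $P$-factors in the quantization. Item~3 is the same observation in reverse: $b(X)$ commutes past the $p$-integral and multiplies $a(x,p)$ pointwise. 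Item~5 comes from differentiating \cref{eq:action of pseudo-diff op} under the integral sign, which both pulls down a factor of $p_j$ from $e^{\ii p\cdot x}$ and picks up $\partial_{x_j}$ hitting $a(x,p)$; the $S^{m+1}_M$-bound on $p_j a$ uses only the elementary inequality $\abs{p_j}\le M+\norm{p}$ together with the product rule in $p$ (which produces the stated supremum over $\beta'\le \beta$ when distributing $\partial_p^\beta$ onto $p_j$).

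Item~4 is the Leibniz rule: expanding $\partial_x^\alpha\partial_p^\beta(ba)$ gives the stated binomial sum, and multiplying the two factor-wise bounds $\br{M+\norm{p}}^{m'-\abs{\beta'}}$ and $\br{M+\norm{p}}^{m-\abs{\beta-\beta'}}$ produces the advertised exponent $m+m'-\abs{\beta}$. Item~6 is a finite induction on the $p$-degree of the differential operator $l$: writing $l=\sum_{\abs{\gamma}\le m'} b_\gamma(x)p^\gamma$ and applying item~2 to express $\Op{l}$ as $\sum b_\gamma(X)P^\gamma$, a repeated application of item~5 commutes each $P_j$ past $\Op{a}$, producing the ``leading'' contribution $\Op{p^\gamma a}$ plus a remainder built from $x$-derivatives of $a$ of total $p$-degree strictly less than $m+m'$, whose seminorms are controlled by items~4 and~5. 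Summing over $\gamma$ and invoking item~3 to absorb the $b_\gamma$ factors then gives $\Op{l}\Op{a}=\Op{la}+\Op{\ve}$ with $\ve\in S^{m+m'-1}_M$ and with the seminorm control claimed.

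The only item of slightly different flavor is item~7, and it is the place I would expect any genuine issue to arise. The plan is to fix a Schwartz function $u$ and observe that the integrand in \cref{eq:action of pseudo-diff op} is an analytic family in $\lambda$ which is dominated by $C\br{M+\norm{p}}^{m}\abs{\widehat u(p)}$, an integrable function of $p$ thanks to the Schwartz decay of $\widehat u$. Morera's theorem then gives, for every closed triangle $T\subset\Omega$,
\begin{align*}
\oint_{\partial T}\br{\Op{a_\lambda}u}(x)\,\dif\lambda \;=\; \frac{1}{\br{2\pi}^{n}}\int_{\RR^n} e^{\ii p\cdot x}\br{\oint_{\partial T} a_\lambda(x,p)\,\dif\lambda}\widehat u(p)\,\dif p \;=\; 0,
\end{align*}
establishing pointwise-in-$x$ analyticity of $\lambda\mapsto\br{\Op{a_\lambda}u}(x)$. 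The same dominated-convergence bound upgrades this to analyticity of $\lambda\mapsto\Op{a_\lambda}u$ as a map into $L^2_{\mathrm{loc}}$ (and in fact into any reasonable topology on test functions). In all places in the main body where item~7 is invoked, an operator-norm bound on $\Op{a_\lambda}$ is already established independently (via items~4--6 or the elliptic parametrix of \cref{sec:mesoscopic annuli}), so weak analyticity combined with local boundedness promotes $\lambda\mapsto\Op{a_\lambda}$ to a $\calB(L^2)$-valued analytic function. No step presents a serious obstacle.
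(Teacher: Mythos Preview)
Your proposal is correct and in fact more detailed than the paper's own treatment: the paper simply declares ``The proof of these facts is standard and is thus omitted.'' Your sketch of each item is accurate, and the Morera/dominated-convergence argument for item~7 is the natural way to handle it.
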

    The proof of these facts is standard and is thus omitted.

    A standard property of pseudodifferential operators is that

\begin{prop}\label{prop:L2 boundedness of zero order symbols}
Let $a\in S^0_M(U)$. Let $V\subseteq U$ be open and relatively compact. Then there exists $N\in\NN$ with $2N>n$ and a constant $C_{n,N}(U,V)<\infty$ such that
\[
\norm{\Op{a}}_{\calB(L^2(\RR^n)\to L^2(V))}
\;\le\;
C_{n,N}(U,V)\,\Big(\sum_{|\alpha|,|\beta|\le 2N} C^{\,0}_{M,\alpha\beta}(a)\Big).
\]
\end{prop}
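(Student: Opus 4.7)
The strategy is to reduce the claim to an $L^2(\RR^n)\to L^2(\RR^n)$ boundedness statement for pseudo-differential operators of order zero, which is the classical Calderón-Vaillancourt theorem, and then outline its proof with the explicit seminorm dependence.

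First I would pick $\chi\in C^\infty_c(U)$ with $\chi\equiv 1$ on $V$ and $0\le\chi\le 1$. By item 3 of \cref{lem:basic properties of PDO},
\eq{
\chi(X)\,\Op{a}\;=\;\Op{\chi(x)\,a(x,p)}\;=:\;\Op{\tilde a},
}
so $\|\Op{a}u\|_{L^2(V)}\le\|\chi(X)\Op{a}u\|_{L^2(\RR^n)}=\|\Op{\tilde a}u\|_{L^2(\RR^n)}$ and it suffices to bound $\Op{\tilde a}$ as an operator on $L^2(\RR^n)$. Leibniz's rule gives
\eq{
C^0_{M,\alpha,\beta}(\tilde a)\le\sum_{\alpha'\le\alpha}\tbinom{\alpha}{\alpha'}\|\partial^{\alpha'}\chi\|_\infty\,C^0_{M,\alpha-\alpha',\beta}(a),
}
so the dependence on $\chi$ (hence on $U,V$) is absorbed into the constant $C_{n,N}(U,V)$. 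Moreover, $M\ge 1$ yields $(M+|p|)^{-|\beta|}\le (1+|p|)^{-|\beta|}$, so $\tilde a$ lies in the standard Hörmander class $S^{0}_{1,0}(\RR^n\times\RR^n)$ uniformly in $M$.

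It remains to prove, with $2N>n$, the seminorm-explicit estimate $\|\Op{\tilde a}\|_{\calB(L^2(\RR^n))}\lesssim\sum_{|\alpha|,|\beta|\le 2N}C^0_{M,\alpha,\beta}(\tilde a)$. I would proceed via a dyadic decomposition in momentum: take a smooth partition of unity $1=\sum_{j\ge 0}\phi_j(p)$ with $\supp(\phi_0)\subset\{|p|\le 2\}$ and $\supp(\phi_j)\subset\{2^{j-1}\le|p|\le 2^{j+1}\}$, and split $\tilde a=\sum_j\tilde a_j$ with $\tilde a_j:=\tilde a\,\phi_j$. For each $j$ the Schwartz kernel
\eq{
K_j(x,y)=\frac{1}{(2\pi)^n}\int_{\RR^n} e^{i(x-y)\cdot p}\tilde a_j(x,p)\,\dif p
}
has compact $p$-support and satisfies, via $(1-\Delta_p)^N e^{i(x-y)\cdot p}=(1+|x-y|^2)^N e^{i(x-y)\cdot p}$ and integration by parts,
\eq{
|K_j(x,y)|\le \frac{C_N\,\chi_{\supp\chi}(x)}{(1+|x-y|^2)^N}\sum_{|\beta|\le 2N}\!C^0_{M,0,\beta}(\tilde a),
}
which by Schur's test gives $\|\Op{\tilde a_j}\|\lesssim\sum_{|\beta|\le 2N}C^0_{M,0,\beta}(\tilde a)$ uniformly in $j$ when $2N>n$.

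The main obstacle — and the technical heart of the proof — is that summing these bounds naively over $j$ diverges; one must instead invoke the Cotlar-Stein almost-orthogonality lemma. Using the pseudo-differential composition formula, one checks that the disjoint $p$-supports of $\tilde a_j$ and $\tilde a_k$ for $|j-k|\ge 2$ force the product symbols to be concentrated where one of the $\phi$'s is differentiated, which produces factors of $2^{-|j-k|}$ at each stage of the asymptotic expansion; iterating to order $2N$ and using that $x$-derivatives land on the (compactly supported) $\chi$ yields
\eq{
\|\Op{\tilde a_j}^*\Op{\tilde a_k}\|+\|\Op{\tilde a_j}\Op{\tilde a_k}^*\|\lesssim 2^{-|j-k|}\Big(\sum_{|\alpha|,|\beta|\le 2N}C^0_{M,\alpha,\beta}(\tilde a)\Big)^2.
}
Cotlar–Stein then delivers the stated bound for $\Op{\tilde a}=\sum_j\Op{\tilde a_j}$. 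The delicate bookkeeping here is to verify that $2N$ derivatives in both $x$ and $p$ actually suffice throughout the almost-orthogonality step — since iterating the composition calculus naïvely produces many more derivatives — but this can be arranged by truncating the asymptotic expansion at an explicit order depending only on $n$.
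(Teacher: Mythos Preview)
Your reduction via the cutoff $\chi\in C_c^\infty(U)$ with $\chi\equiv 1$ on $V$ is exactly the paper's argument; the paper then simply cites \cite[Theorem~1, p.~234]{Stein1993Harmonic} for the $L^2(\RR^n)$ boundedness of $\Op{\chi a}$, whereas you go further and sketch that theorem's proof via a dyadic Littlewood--Paley decomposition and Cotlar--Stein almost-orthogonality, which is in fact Stein's own argument. One small slip worth flagging: your pointwise kernel bound $|K_j(x,y)|\le C_N(1+|x-y|^2)^{-N}\cdot(\text{seminorms})$ is not literally uniform in $j$ as written, since the $p$-shell $\{|p|\sim 2^j\}$ has volume $\sim 2^{jn}$ and the operator $(1-\Delta_p)^N$ contains a zeroth-order term; the correct estimate is $|K_j(x,y)|\lesssim 2^{jn}\min\bigl(1,(2^j|x-y|)^{-2N}\bigr)\cdot(\text{seminorms})$, obtained by combining the trivial bound with the one from $(-\Delta_p)^N$, and this \emph{does} yield the $j$-uniform Schur bound you claim after the change of variables $y\mapsto x+2^{-j}y$.
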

\begin{proof}
    Let $\chi\in C^\infty_c(U\to[0,1])$ with $\chi=1$ on $V$.

    We then apply \cite[Theorem 1 on pp. 234]{Stein1993Harmonic} on the symbol $\RR^n\times\RR^n\ni (x,p)\mapsto a(x,p)\chi(x)$.
\end{proof}

     \begin{cor}\label{cor:regularity of symbols of negative degree}
        Let $a\in S^{-m}_M(U)$ for $m\geq0$ and $V\subseteq U$ be open and relatively compact. Then there exists $N\in\NN$ with $2N>n$ and a constant $C_{n,N}(U,V)$ such that\eql{\nonumber
        \norm{P^\alpha \Op{a}}_{\calB(L^2(\RR^n)\to L^2(V))} \leq C_{n,N}(U,V)\sum_{|\tilde{\alpha}|,|\beta|\leq 2N}C_{M,\tilde{\alpha}\beta}^0((x,p)\mapsto M^{m-|\alpha|}p^\alpha a) M^{|\alpha|-m}\nonumber\\ \qquad(\alpha\in\ZZ^n:|\alpha|\leq m)\,.\nonumber\\
        } 
        \end{cor}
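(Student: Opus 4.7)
\textbf{Proof plan for \cref{cor:regularity of symbols of negative degree}.} The plan is to reduce the estimate to the zero-order case, \cref{prop:L2 boundedness of zero order symbols}, by moving the factors of $P_j$ inside the Kohn--Nirenberg quantization one at a time. The key identity is \cref{eq:momentum acting on PDO}, which I would iterate $|\alpha|$ times to express $P^\alpha \Op{a}$ as a finite linear combination
\[
P^\alpha \Op{a} \;=\; \sum_{\gamma + \delta = \alpha} c_{\gamma\delta}\, \Op{p^{\gamma}\, \partial_x^{\delta} a},
\]
with combinatorial constants $c_{\gamma\delta}$ depending only on $n$ and $\alpha$. Since the support constraint on $V$ is carried through by applying the estimate of \cref{prop:L2 boundedness of zero order symbols} separately to each summand, it suffices to bound each $\norm{\Op{p^{\gamma}\partial_x^{\delta} a}}_{\calB(L^2(\RR^n)\to L^2(V))}$.

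For each admissible pair $(\gamma,\delta)$ with $\gamma+\delta=\alpha$, part 4 of \cref{lem:basic properties of PDO} together with $p_j \in S^1_M$ and $\partial_x^{\delta} a \in S^{-m}_M$ (taking $x$-derivatives preserves the order in $p$) gives $p^{\gamma}\partial_x^{\delta} a \in S^{|\gamma|-m}_M(U)$. Because $|\gamma|\le|\alpha|\le m$, the order $|\gamma|-m$ is non-positive, and so the rescaled symbol $M^{m-|\gamma|} p^{\gamma}\partial_x^{\delta} a$ lies in $S^0_M(U)$ with seminorms that admit explicit bounds in terms of the seminorms of $a$ (by the Leibniz estimate in part 4 of \cref{lem:basic properties of PDO}). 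Applying \cref{prop:L2 boundedness of zero order symbols} to this rescaled symbol yields
\[
\norm{\Op{p^{\gamma}\partial_x^{\delta} a}}_{\calB(L^2(\RR^n)\to L^2(V))} \;\le\; C_{n,N}(U,V)\, M^{|\gamma|-m} \sum_{|\tilde\alpha|,|\beta|\le 2N} C^{\,0}_{M,\tilde\alpha\beta}\!\bigl(M^{m-|\gamma|} p^{\gamma}\partial_x^{\delta} a\bigr).
\]
Summing over the finitely many $(\gamma,\delta)$ and using $M\ge 1$ to bound $M^{|\gamma|-m}\le M^{|\alpha|-m}$ recovers the structure of the right-hand side claimed in the corollary.

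The remaining step, which I expect to be the main obstacle in terms of bookkeeping, is to bound each seminorm $C^{\,0}_{M,\tilde\alpha\beta}(M^{m-|\gamma|} p^{\gamma}\partial_x^{\delta} a)$ uniformly in $M$ by the single quantity $C^{\,0}_{M,\tilde\alpha'\beta'}(M^{m-|\alpha|} p^\alpha a)$ for some $|\tilde\alpha'|,|\beta'|\le 2N$. This is a Leibniz-rule computation: expanding $\partial_x^{\tilde\alpha}\partial_p^{\beta}$ on the product $p^{\gamma}\partial_x^{\delta} a$ yields sums of terms $p^{\gamma-\beta''}\partial_x^{\delta+\tilde\alpha}\partial_p^{\beta-\beta''} a$, each of which appears (up to combinatorial factors and powers of $M$) as a summand in $\partial_x^{\tilde\alpha+\delta}\partial_p^{\beta+\delta}(p^{\alpha} a)$. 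Factoring out the appropriate $M$-weights and using $(M+|p|)^{|\gamma|-m}\le M^{|\gamma|-m}$ (valid because the exponent is non-positive) absorbs the difference in $M$-powers between $M^{m-|\gamma|}$ and $M^{m-|\alpha|}$. Once this bookkeeping is carried out, one may take $N$ slightly larger than in \cref{prop:L2 boundedness of zero order symbols} to accommodate the enhanced multi-index ranges $|\tilde\alpha'|,|\beta'|\le 2N$ and conclude the stated bound by collecting constants into $C_{n,N}(U,V)$.
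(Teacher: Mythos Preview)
Your proposal is correct and follows essentially the same route as the paper's proof: iterate \cref{eq:momentum acting on PDO} to expand $P^\alpha\Op{a}$ as a finite sum of $\Op{p^{\alpha-\gamma}\partial_x^\gamma a}$, observe that each such symbol has non-positive order since $|\alpha|\le m$, and then apply \cref{prop:L2 boundedness of zero order symbols} termwise together with the Leibniz-type seminorm bounds from \cref{lem:basic properties of PDO}. Your remark that $N$ may need to be enlarged relative to the $N$ in \cref{prop:L2 boundedness of zero order symbols} (to absorb the extra $x$-derivatives $\partial_x^\delta$ when comparing to seminorms of $M^{m-|\alpha|}p^\alpha a$) is in fact a point the paper leaves implicit, but since the corollary only asserts the \emph{existence} of some $N$ with $2N>n$, this causes no difficulty.
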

\begin{proof}
Fix $\alpha\in\NN^n$ with $|\alpha|\le m$. By iterating the commutation rule
\[
P_j \Op{a} = \Op{p_j a(x,p)} - \ii\,\Op{(\partial_{x_j}a)(x,p)} \tag{\ref{eq:momentum acting on PDO}}
\]
we obtain an expansion
\begin{equation}\label{eq:Palpha-expansion}
P^\alpha \Op{a}
\;=\;
\Op{p^\alpha a(x,p)}
\;+\;
\sum_{0\neq\gamma\le \alpha} c_{\alpha,\gamma}\,\Op{p^{\alpha-\gamma}\,\partial_x^\gamma a(x,p)},
\end{equation}
for suitable combinatorial constants $c_{\alpha,\gamma}\in\CC$.

Set
\[
b_0(x,p):=p^\alpha a(x,p),\qquad
b_\gamma(x,p):=p^{\alpha-\gamma}\,\partial_x^\gamma a(x,p)\quad(0\neq\gamma\le\alpha).
\]
Since $a\in S^{-m}_M(U)$, we have
\[
b_0\in S^{|\alpha|-m}_M(U),\qquad
b_\gamma\in S^{|\alpha|-|\gamma|-m}_M(U)\subset S^{-1}_M(U)\quad(0\neq\gamma\le\alpha),
\]
and because $|\alpha|\le m$ these orders are $\le 0$. In particular,
\[
b_0,\;b_\gamma\in S^0_M(U)\qquad(\gamma\le\alpha).
\]

Define the zero-order symbol
\[
\tilde a_\alpha(x,p)\;:=\;M^{m-|\alpha|}\,p^\alpha a(x,p)\in S^0_M(U).
\]
Then $b_0 = M^{|\alpha|-m}\tilde a_\alpha$, and the symbol estimates in \cref{lem:basic properties of PDO} (Leibniz rule and the bounds for multiplication by $p$ and differentiation in $x$) imply that, for all multi-indices $|\tilde\alpha|,|\beta|\le 2N$,
\[
C^{\,0}_{M,\tilde\alpha\beta}(b_\gamma)
\;\le\;
C_{\alpha,m,N}\,M^{|\alpha|-m}
\sum_{|\tilde\alpha'|,|\beta'|\le 2N}
C^{\,0}_{M,\tilde\alpha'\beta'}(\tilde a_\alpha),
\qquad 0\le\gamma\le\alpha,
\]
for some constant $C_{\alpha,m,N}$ depending only on $\alpha,m,N,n$ (but not on $a$).

Apply \cref{prop:L2 boundedness of zero order symbols} to each term in \eqref{eq:Palpha-expansion}. Using that $V\subseteq U$ and the triangle inequality, we get
\begin{align*}
\norm{P^\alpha \Op{a}}_{\calB(L^2(\RR^n)\to L^2(V))}
&\le
\sum_{\gamma\le\alpha} |c_{\alpha,\gamma}|\,
\norm{\Op{b_\gamma}}_{\calB(L^2(\RR^n)\to L^2(V))}\\[1mm]
&\le
C_{n,N}(U,V)\!
\sum_{\gamma\le\alpha} |c_{\alpha,\gamma}|
\sum_{|\tilde\alpha|,|\beta|\le 2N} C^{\,0}_{M,\tilde\alpha\beta}(b_\gamma)\\[1mm]
&\le
C_{n,N}(U,V)\,M^{|\alpha|-m}
\sum_{|\tilde\alpha|,|\beta|\le 2N} C^{\,0}_{M,\tilde\alpha\beta}(\tilde a_\alpha),
\end{align*}
after enlarging $C_{n,N}(U,V)$ by a factor depending only on $\alpha,m,N,n$ (which we absorb into the notation).

Recalling that $\tilde a_\alpha(x,p)=M^{m-|\alpha|}p^\alpha a(x,p)$ gives the stated estimate.
\end{proof}

    \subsection{Elliptic second order operators}
    Let $M\geq1$ and $U\subseteq\RR^n$ open be given. We study the symbol
    \eql{
        l(x,p) =\norm{p}^2 + \gamma_1(x)\cdot p + \gamma_0(x)\qquad(x\in U,p\in\RR^n)
    } where $\gamma_{1,1},\dots,\gamma_{1,n},\gamma_0$ are complex-valued functions on $U$, and $\gamma_1(x)\cdot p \equiv \sum_{j=1}^n \gamma_{1,j}(x)p_j$. We denote the associated pseudodifferential operator $\calL \equiv \Op{l}$.

    \begin{defn}\label{def:elliptic operators} 
    We say that $\calL$ is \emph{elliptic in $S^2_M(U)$} iff \eq{l\in S^2_M(U)\qquad {\rm and}\qquad \frac{1}{l}\in S^{-2}_M(U)\,.} The $S^2_M(U)$ seminorms of $l(x,p)$ and the $S^{-2}_M(U)$ seminorms of $1/l(x,p)$ (there are only finitely many relevant ones) are called the \emph{elliptic $S^2_M(U)$ constants of $\calL$}.
    \end{defn}

    In the following result, we exhibit a \emph{local} resolvent to $\calL$ via its elliptic property.
    
    \begin{lem}[Local elliptic lemma]\label{lem:local elliptic lemma} Let $U,V\subseteq\RR^n$ be two open \replaced{subset}{subsets} with $V\subseteq U$ relatively compact. Let $\calL$ be elliptic in $S^2_M(U)$. Let $\chi\in C^\infty(\RR^n)$ be supported in $V$. Then there exist linear operators $A:L^2(\RR^n)\to H^2(\RR^n)$ and $\calE:L^2(\RR^n)\to H^1(\RR^n)$ with the following properties:
    \begin{enumerate}
        \item $\calL A  -\calE= \chi$ as operators on $L^2(\RR^n)$. We interpret $A$ as a "local" resolvent of $\calL$ within $V$ up to the error $\calE$.
        \item $A,\calE$ map $L^2(\RR^n)$ into $L^2(V)$, i.e., $\supp Af, \supp \calE f\subseteq V$ for all $f\in L^2(\RR^n)$.
        \item $\norm{\partial^\alpha A f}_{L^2(\RR^n)} \leq C M^{|\alpha|-2}\norm{f}_{L^2(\RR^n)}$ for $|\alpha|\leq 2,f\in L^2(\RR^n)$.
        \item $\norm{\partial^\alpha \calE f}_{L^2(\RR^n)}\leq C M^{|\alpha|-1}\norm{f}_{L^2(\RR^n)}$ for $|\alpha|\leq 1,f\in L^2(\RR^n)$. 
    \end{enumerate} Here, the constant $C$ may be taken to depend only on the elliptic $S^2_M(U)$ constants of $\calL$ and the $C^\infty$ seminorms of $\chi$. Moreover, if the coefficients of $\calL$ depend analytically on a parameter $\lambda\in\CC$, then the operators $A$ and $\calE$ may be taken to depend analytically on $\lambda$.
    \end{lem}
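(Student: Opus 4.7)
The plan is a one-step parametrix construction: invert the symbol of $\mathcal L$ against the cut-off $\chi$ and use second-order ellipticity to ensure that the composition error has order $-1$, which is exactly what the claimed bounds demand. Since $\mathcal L$ is elliptic in $S^2_M(U)$, the symbol $1/l$ lies in $S^{-2}_M(U)$ with seminorms controlled by the elliptic constants. I set
\[
a(x,p) \;:=\; \frac{\chi(x)}{l(x,p)},
\]
extended by zero outside $\supp(\chi)\subseteq V$, and define $A := \Op{a}$. Since $\supp(\chi)$ is compact inside $U$, the symbol $a$ belongs to $S^{-2}_M(\RR^n)$ with seminorms controlled by the elliptic $S^2_M(U)$-constants of $\mathcal L$ and the $C^\infty$ seminorms of $\chi$.

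The next step is to compute $\mathcal L A$ by iterating the commutation rule \cref{eq:momentum acting on PDO} together with \cref{eq:degree of product of two symbols where one is diagonal}. Because $\mathcal L = P^2 + \gamma_1(X)\cdot P + \gamma_0(X)$ is a second-order \emph{differential} operator, the calculation terminates exactly and yields
\[
\mathcal L A \;=\; \Op{l\cdot a} + \Op{e} \;=\; \chi(X) + \Op{e},
\qquad
e(x,p) \;=\; -2i\,p\cdot\nabla_x a - \Delta_x a - i\,\gamma_1(x)\cdot\nabla_x a,
\]
the simplification $l\cdot a=\chi$ being the whole point of the construction. Setting $\mathcal E := -\Op{e}$ gives $\mathcal LA - \mathcal E = \chi$. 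Since differentiation in $x$ preserves the $p$-order and multiplication by $p_j$ raises it by one, and since $\gamma_1\cdot\nabla_x a$ is localized by $\nabla_x a$ to the compact set $\supp(\chi)\subset\subset U$ (so $\gamma_1$ only appears on a compact subset of $U$ and the product is legitimately in $S^{-2}_M$), the error $e$ belongs to $S^{-1}_M$. Both symbols $a$ and $e$ have $x$-support inside $\supp(\chi)\subseteq V$, so $Af$ and $\mathcal E f$ are automatically supported in $V$ for every $f\in L^2(\RR^n)$.

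Finally, the derivative bounds follow directly from \cref{cor:regularity of symbols of negative degree}: with $a\in S^{-2}_M$ and $|\alpha|\le 2$ it gives $\|P^\alpha A\|_{\calB(L^2\to L^2(V))}\lesssim M^{|\alpha|-2}$, and with $e\in S^{-1}_M$ and $|\alpha|\le 1$ it gives $\|P^\alpha \mathcal E\|_{\calB(L^2\to L^2(V))}\lesssim M^{|\alpha|-1}$; passing from $P^\alpha$ to $\partial^\alpha$ is just a power of $i$. Analyticity in $\lambda$ is handled by item 7 of \cref{lem:basic properties of PDO}: ellipticity forces $l$ to be nowhere zero, hence $\lambda\mapsto 1/l$ and all its $x$-derivatives are analytic with seminorms locally uniformly bounded in $\lambda$ whenever $\gamma_0,\gamma_1$ depend analytically on $\lambda$. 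I do not expect any real obstacle; the main bookkeeping point is verifying that every constant depends only on the elliptic $S^2_M(U)$-constants of $\mathcal L$ and the $C^\infty$ seminorms of $\chi$ (and not on $M$ itself beyond the advertised powers), which is immediate from the Leibniz estimates in \cref{lem:basic properties of PDO} and from the explicit rescaling factor $M^{m-|\alpha|}p^\alpha a$ in \cref{cor:regularity of symbols of negative degree}.
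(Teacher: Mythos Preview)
Your proof is correct and follows essentially the same route as the paper: define $a=\chi/l$, set $A=\Op{a}$, and read off the error as the order-$(-1)$ remainder in $\mathcal L A - \chi$. The only cosmetic difference is that the paper introduces an auxiliary cutoff $\tilde\chi\in C^\infty_c(U)$ with $\tilde\chi\equiv 1$ on $V$ and applies the abstract composition rule \cref{eq:decomposition of product of two pdos into highest degree and remainder} to $\tilde\chi\, l$ and $a$ (then argues $\tilde\chi\,\mathcal L A=\mathcal L A$ from support considerations), whereas you compute the error symbol $e=-2i\,p\cdot\nabla_x a-\Delta_x a-i\,\gamma_1\cdot\nabla_x a$ explicitly and observe its $x$-support lies in $\supp\chi$; both devices serve the same purpose of keeping the coefficients $\gamma_0,\gamma_1$ evaluated only on a compact subset of $U$.
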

    \begin{proof}
        Let $l$ be the symbol associated to $\calL \equiv \Op{l}$. Define the symbol \eq{
            a(x,p) := \frac{\chi(x)}{l(x,p)}\qquad(x\in U,p\in\RR^n)
        } and the associated operator $A:=\Op{a}$. First, we note $a\in S^{-2}_M(\RR^n)$ thanks to the ellipticity of $\calL$. Moreover, \Cref{eq:action of pseudo-diff op} implies that for any $f\in L^2(\RR^n)$, $Af$ will be supported within $V$ since $x\mapsto a(x,p)$ is supported there, by its construction.
        
        Next, let $\tilde{\chi}\in C_0^\infty(U)$ satisfy $\tilde{\chi} =1 $ on $V$. Clearly $\tilde{\chi}\in S^0_M(U)$, and so, thanks to \cref{eq:degree of product of two symbols where one is diagonal}, $\tilde{\chi}l$ is in $\replacedm{S^0_M(U)}{S^2_M(U)}$. Applying the decomposition \cref{eq:decomposition of product of two pdos into highest degree and remainder} on the product of symbols $\tilde{\chi}l$ and $a$ we find there must exist some symbol $\ve\in S^{-1}_M(U)$ which obeys \eq{
        \tilde{\chi}(X) \calL A =  \Op{\tilde{\chi} l a}+\Op{\ve} 
        } with $\ve$ the associated symbol, which must be in $S^{-1}_M$ as the first term is in the top degree. By construction, $\tilde{\chi} l a = \tilde{\chi}\chi=\chi$, so we get \eql{\label{eq:definition of remainder term}
        \tilde{\chi}(X) \calL A =  \chi(X) +\Op{\ve}\,.
        } Let $\calE := \Op{\ve}$.

        Since $\calL$ acts by differentiation and multiplication with smooth coefficient functions, its application does not change the support of a function it acts on. This implies that $\supp(\calL Af)\subseteq V$ for any $f\in L^2(\RR^n)$ and $\tilde{\chi} \calL A f = \calL A f$. Then \Cref{eq:definition of remainder term} implies that $\calE f$ is also supported within $V$.

        Applying \Cref{cor:regularity of symbols of negative degree} on $A$ and $\calE$ implies their co-domains are $H^2$ and $H^1$ respectively, with the stated estimates.

        Next, the definition of $A,\calE$ via the symbols, implies that if $l$ depends analytically on a parameter $\lambda$, then so do the symbols $a,\ve$ and hence so do the operators $A,\calE$.
    \end{proof}

    \subsection{Dilations}\label{sec:dilations}
    For any $\alpha>0$ and $S\subseteq\RR^n$, we write \eq{\alpha S \equiv \Set{\alpha x\in \RR^n | x\in S}\,.} We then have a unitary dilation operator $\mathfrak{U}_\alpha$ on $L^2(\RR^n)$ given by
    \eql{\label{eq:dilation operators}
    \br{\mathfrak{U}_\alpha \psi}(x) \equiv \alpha^{n/2}\psi(\alpha x)\qquad(x\in\RR^n;\psi\in L^2(\RR^n))
    } and locally $\mathfrak{U}_\alpha : L^2(\alpha S)\to L^2(S)$ by the same formula. Then, 
    \eq{
    \mathfrak{U}_\alpha^\ast X \mathfrak{U}_\alpha = \frac{1}{\alpha} X\quad {\rm and}\quad 
      \mathfrak{U}_\alpha^\ast P \mathfrak{U}_\alpha = \alpha P\,.}
      Further, let $a \in S^m_M(U)$. Then, with \eq{
      \alpha U\times\RR^n\ni(x,p)\mapsto\widetilde{a}_\alpha(x,p) := a(x/\alpha,\alpha p)
      } we have $\widetilde{a}_\alpha\in S^m_M(\alpha U)$ and 
   \eql{\label{eq:scalings of PDO}
   \mathfrak{U}_\alpha^\ast \Op{ a}\mathfrak{U}_\alpha=\Op{\widetilde{a}_\alpha}
   } as well as
   \eq{
    C_{M,\alpha\beta}^m\br{\widetilde{a}_\alpha} \leq \alpha^{|\beta|-|\alpha|}\max\br{\Set{1,\alpha}}^{m-|\beta|}C_{M,\alpha\beta}^m\br{a}\,.
   }

\section{The Landau resolvent for complex magnetic fields}\label{sec:The Landau Resolvent at complex magnetic fields}
In this section we want to study the resolvent of the Landau Hamiltonian
\eql{
H^{\mathrm{Landau}}_{B} := \Big(P-\frac12  B X^\perp\Big)^2
} where $B>0$ is the magnetic field strength. Below, we'll allow $B\in\CC$.

Since $H^{\mathrm{Landau}}_{B}$ is quadratic in both $X$ and $P$ it may be explicitly diagonalized. An easy route is to write the heat kernel \cite{Avron_Herbst_Simon_78}: For $x,y\in\RR^2$ and $t>0$, 
\begin{equation}
\exp\br{-t H^{\rm Landau}_B}(x,y) = \frac{B}{4\pi\sinh\left(Bt\right)}\exp\left(-\frac{B}{4}\coth\left(Bt\right)\norm{x-y}^{2}-\ii\frac{B}{2}x\wedge y\right)\ .
\end{equation} 
Its Laplace transform  yields the following expression for the resolvent kernel at \deleted{at }spectral parameter $z\in\CC$, such that  $\Re{z}<B$ \cite[Lemma 5.1]{KOROTYAEV2004221}:
\begin{align}
  \br{H^{\mathrm{Landau}}_B-z\Id}^{-1}(x,y) &= \int_{t=0}^\infty\ee^{tz}\exp\left(-tH^{\mathrm{Landau}}_B\right)\left(x,y\right)\dif{t}\nonumber \\
  &= \frac{1}{4\pi}\Gamma\br{\frac12-\frac{z}{2B}}U\br{\frac12-\frac{z}{2B},1,\frac{B}{2}\norm{x-y}^2}\times \nonumber \\
  &\qquad\qquad\times\exp\br{-\ii\frac{B}{2}x\wedge y -\frac14 B \norm{x-y}^2}\,.
\label{eq:landau-res-kernel}  
\end{align}
Here $U$ is Tricomi's confluent hypergeometric function \cite[pp. 129]{Temme_2014_asymp_methods_For_integrals}:
\eq{
    U(a,1,z) := \frac{1}{\Gamma(a)}\int_{t=0}^\infty \ee^{-tz}t^{a-1}\br{1+t}^{-a}\dif{t}\qquad(a,z\in\CC:\Re{a}>0,\Re{z}>0)\,.
} 

Since $z\mapsto\Gamma(z)$ is meromorphic with simple poles on $\ZZ_{\leq0}$ and $U(a,1,z)$ is entire in $a\in\CC$ and analytic for $z\in\CC\setminus(-\infty,0]$ \cite{DLMF13}, we see that $\br{H^{\mathrm{Landau}}_B-z\Id}^{-1}(x,y)$ itself is meromorphic with poles at  $z\in \big\{B(2n+1):n\ge0\big\}$.

We are interested in setting $B:=\lambda b$ where $b>0$ and $\lambda\in\CC$ with $\Re{\lambda}>0$, and choosing a spectral parameter of the form $z_\lambda:=-\lambda^2+\mu\lambda$ for some $\mu\in\CC$. We are seeking an analytic continuation (from $\lambda\in\RR$ to complex values of $\lambda$) of the integral kernel operator \eq{\RR^2\times\RR^2\setminus{\rm diagonal}\ni(x,y)\mapsto\br{H^{\mathrm{Landau}}_{\lambda b}-z_\lambda\Id}^{-1}(x,y)} as given explicitly above. The $\Gamma$ factor has infinitely many poles at $\lambda\in\mu-b\br{2\NN_{\geq0}+1}$ so that if $\Re{\lambda}>\Re{\mu}-b$ we avoid all these poles. The $U$ factor is entire in its first argument, and has a branch cut $(-\infty,0]$ in its last argument \cite{DLMF13}. Hence for $\Re{\lambda}>\max\br{\Set{0,\Re{\mu}-b}}$ there are no poles or branch cuts of this integral kernel as a function of $\lambda$. However, it does \emph{not} continue to a bounded linear operator $L^2\to L^2$; indeed, see \cref{prop:Landau resolvent does NOT continue} below.

To remedy this situation, let $K\subseteq\RR^2$ be compact and $Q:=\chi_K(X)$. We use the integral kernel above to define an operator which deserves to be denoted by $R^{\rm Landau}_{\lambda b}(z_\lambda)Q$ even for complex $\lambda$ since it analytically continues $R^{\rm Landau}_{\lambda b}(z_\lambda)Q$ for real $\lambda$. This operator's domain, however, is not $L^2(\RR^2)$ but rather $QL^2(\RR^2)$. For $f\in QL^2(\RR^2)$ and $\lambda\in\CC$ with $\Re{\lambda}>\Re{\mu}-b$, it is given as follows. For all $x\in\RR^2$, we have by \cref{eq:landau-res-kernel} :
\eql{\label{eq:analytic continuation of Landau resolvent for complex magnetic fields}
(R^{\rm Landau}_{B}(z) f)(x) :=\int_{y\in K}\dif{y}\ K_{B,z}(\norm{x-y})\ee^{-\ii\frac{B}{2}x\wedge y } f(y)\,.
} with
\eql{\label{eq:kernel of Landau resolvent without phase}
K_{B,z}(r) := \frac{1}{4\pi}\Gamma\br{\frac12-\frac{z}{2B}} U\br{\frac12-\frac{z}{2B},1,\frac{B}{2}r^2}\ee^{ -\frac14 B r^2}\qquad(r>0)\,.
}

\begin{prop}\label{prop:analytic extension of Landau resolvent on compact sets}
	Let $K\subseteq\RR^2$ be compact and $Q=\chi_K(X)$. Let $b,\nu>0$ and $\lambda,\mu\in\CC$ be  such that $\nu\Re{\lambda}>\max\br{\Set{0,\Re{\mu}-b}}$, $B:=\lambda b$, $z_\lambda=-\nu\lambda^2+\mu \lambda$.
    
    Then\deleted{,} the integral operator  \cref{eq:analytic continuation of Landau resolvent for complex magnetic fields} is a bounded linear operator $QL^2(\RR^2)\to L^2(\RR^2)$ obeying the bound
    \begin{align}
    \norm{ R^{\rm Landau}_{\lambda b}(z_\lambda)Q}_{\calB(QL^2(\RR^2)\to L^2(\RR^2))}
    &\lesssim \abs{\lambda}^\sharp \exp\br{\frac{\br{\frac{1}{2}b\abs{\Im{\lambda}}D}^2}{b \Re{\lambda}}}.\label{eq:Landau boundedness}
    \end{align}
    where $D:= D_K = \sup_{y\in K}\norm{y}$.
    
    Moreover, if $\Re{\lambda}$ is sufficiently large, then for all $S\subseteq \RR^2$, we have the off-diagonal bound: for any $j=1,2$ and $\alpha=0,1$,
        \begin{align}
        &\norm{\chi_S(X) P_j^\alpha R^{\rm Landau}_{\lambda b}(z_\lambda)Q}_{\calB(QL^2(\RR^2)\to L^2(\RR^2))}\notag\\
        &\qquad \leq \widetilde{C}
        \exp\br{ -\frac12\br{\sqrt{\nu}\br{\Re{\lambda}-1}-bD\abs{\Im{\lambda}}}\dist(S,K)}.
        \label{eq:Landau off diagonal decay}
        \end{align}
\end{prop}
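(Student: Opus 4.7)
The plan is to reduce the analysis of the complex kernel \cref{eq:landau-res-kernel} to an estimate on the corresponding \emph{positive real} kernel obtained by replacing $\lambda$ by $\lambda_R := \Re{\lambda}$. The crucial tool is the Tricomi integral representation
\begin{equation*}
\Gamma(a)\, U(a,1,w) \;=\; \int_0^\infty e^{-wt}\, t^{a-1}(1+t)^{-a}\,\dif t, \qquad \Re{a}>0,\ \Re{w}>0.
\end{equation*}
Because $\log(t/(1+t))\in\RR$ for $t>0$, the triangle inequality yields $|\Gamma(a)\, U(a,1,w)| \le \Gamma(\Re{a})\, U(\Re{a},1,\Re{w})$. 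With our choices $a=\tfrac12+(\nu\lambda-\mu)/(2b)$ and $w=b\lambda r^2/2$, the quantities $\Re{a}$ and $\Re{w}$ are precisely the $U$-arguments of the real Landau kernel \cref{eq:kernel of Landau resolvent without phase} at the real parameter $\lambda_R$ and the real spectral value $z^{\rm real}:=-\nu\lambda_R^2+(\Re{\mu})\lambda_R$. Writing $K^{\rm real}$ for this positive radial kernel and using $|e^{-\ii Bx\wedge y/2}| \le e^{(b|\Im{\lambda}|/2)D\|x\|}$ for $y\in K$, this reduces the proposition to an estimate on a real Landau resolvent twisted by an explicit exponential weight.

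\textbf{$L^2$ bound.} For $y\in K$, bounding $\|x\|\le \|x-y\|+D$ dominates the magnetic phase by $e^{(b|\Im{\lambda}|D/2)r}\, e^{b|\Im{\lambda}|D^2/2}$ with $r=\|x-y\|$. The kernel $K^{\rm real}(r)$ carries the explicit Gaussian $e^{-b\lambda_R r^2/4}$. Completing the square,
\begin{equation*}
\sup_{r\ge 0}\Big(-\tfrac{b\lambda_R}{4}r^2 + \tfrac{b|\Im{\lambda}|D}{2}r\Big) \;=\; \frac{(b|\Im{\lambda}|D/2)^2}{b\,\lambda_R},
\end{equation*}
reproduces exactly the exponent of \cref{eq:Landau boundedness}. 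A Schur test on the residual radial convolution kernel then suffices: what remains is a Gaussian-against-$U$-factor integral whose $L^1(\RR^2)$ norm is polynomial in $|\lambda|$, thanks to the tempered growth of $\Gamma(\Re{a})U(\Re{a},1,\cdot)$ and the fact that the real Landau resolvent has operator norm $\lesssim 1/\lambda_R^2$ at real parameters below the spectrum.

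\textbf{Off-diagonal decay.} For $x\in S$, $y\in K$ we have $r\ge \dist(S,K)$, and $K^{\rm real}$ carries an additional Combes--Thomas-type exponential decay at rate $\sqrt{\nu}\lambda_R$, matching the natural rate $\sqrt{|z^{\rm real}|}$ of a Schr\"odinger-like resolvent with spectral gap $\nu\lambda_R^2$. The cleanest route to this decay bypasses the explicit large-$w$ asymptotics of $U$ and instead performs the Combes--Thomas conjugation $e^{\alpha\cdot x}(H^{\rm Landau}_{\lambda_R b}-z^{\rm real})^{-1}e^{-\alpha\cdot x}$ directly on the \emph{real} operator with $|\alpha|\sim\sqrt{\nu}\lambda_R$; the conjugated operator differs from the original by a bounded perturbation controlled uniformly by the spectral gap. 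Splitting the resulting kernel decay $e^{-\sqrt{\nu}\lambda_R r}$ into equal halves, using one half to absorb the phase growth $e^{(b|\Im{\lambda}|D/2)r}$ and the other to produce decay in $\dist(S,K)$, delivers the exponent of \cref{eq:Landau off diagonal decay}; the shift $\Re{\lambda}\mapsto \Re{\lambda}-1$ absorbs subleading corrections, and the $\alpha=1$ derivative $P_j$ costs at most a factor $O(\sqrt{\nu}\lambda_R)$, which is absorbed into the polynomial prefactor $\widetilde C$.

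\textbf{Main obstacle.} The delicate point is propagating the Combes--Thomas decay rate $\sqrt{\nu}\lambda_R$ sharply through the complex-parameter analysis without losing constants that would spoil the linear-in-$\Re{\lambda}$ exponent. Once Step 1 makes the problem real-parameter, the Combes--Thomas conjugation is textbook; the care is in tracking the shift $\Re{\lambda}\mapsto\Re{\lambda}-1$ and in verifying that the weight $e^{(b|\Im{\lambda}|D/2)\|x\|}$ produced by the magnetic phase can indeed be absorbed against (rather than against a fraction of) the spectral gap of the real operator --- this is where the smallness assumption on $b$ relative to $\nu$ enters essentially.
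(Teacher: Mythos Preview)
Your $L^2$ bound is essentially the paper's argument: both apply the triangle inequality inside the Tricomi integral to obtain $\abs{\Gamma(a)U(a,1,w)} \le \int_0^\infty \ee^{-t\Re{w}}t^{\Re{a}-1}(1+t)^{-\Re{a}}\,\dif t$, then complete the square in the Gaussian against the phase growth and finish by Schur.

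The off-diagonal argument is where your proposal diverges, and it contains a gap. Your first step bounds the complex kernel \emph{pointwise} by $K^{\rm real}(r)\,\ee^{\beta r}$ with $r=\norm{x-y}$ and $\beta=\tfrac{1}{2}b\abs{\Im{\lambda}}D$. You then invoke Combes--Thomas on the real operator $(H^{\rm Landau}_{\lambda_R b}-z^{\rm real})^{-1}$, but CT delivers only the \emph{operator-norm} estimate $\norm{\chi_S(H^{\rm Landau}_{\lambda_R b}-z^{\rm real})^{-1}\chi_K}\lesssim \ee^{-\sqrt{\nu}\lambda_R\,\dist(S,K)}$, not a pointwise bound on $K^{\rm real}(r)$; when you write ``the resulting kernel decay $e^{-\sqrt{\nu}\lambda_R r}$'' you are conflating the two. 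The object you actually need to control is the operator with kernel $K^{\rm real}(\norm{x-y})\,\ee^{\beta\norm{x-y}}$. This is a convolution operator, \emph{not} the real Landau resolvent (which carries the unimodular phase $\ee^{-\ii\lambda_R b\, x\wedge y/2}$), so CT on the latter says nothing directly about it. Extracting pointwise decay of $K^{\rm real}(r)$ from CT would require an additional elliptic-regularity or Sobolev-embedding step to pass from localized $L^2$ bounds to $L^\infty$ kernel bounds, which you do not supply and which is not ``textbook''.

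The paper sidesteps this entirely by extracting the pointwise decay directly from the integral representation: with $\gamma:=\Re{\lambda}$ it writes the integrand as $\exp\bigl(-\gamma(\tfrac12 b t r^2+\tfrac{\nu}{2b}\log(1+\tfrac1t))\bigr)$ times a harmless factor, minimizes the exponent over $t>0$ (a Laplace-type saddle), and pulls out an explicit factor $\exp\bigl(-\tfrac{\sqrt{\nu}}{2}(\Re{\lambda}-1)r\bigr)$. This is a genuine pointwise kernel bound, which then combines trivially with the phase growth $\ee^{\beta r}$ and feeds into Schur to give \cref{eq:Landau off diagonal decay}.
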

\begin{proof}
    
	Set $A := 1/2-z_\lambda/ (2\lambda b)\in\CC$ with $\Re{A}>0$ and $w := \frac12\lambda b\norm{x-y}^2$ with $\Re{w}>0$ (assuming $x\neq y$ for the moment).
    Then \eq{
    \abs{\Gamma(A) U(A,1,w)} &= \abs{\int_{t=0}^{\infty}\exp\br{-t w}t^{A-1}(1+t)^{-A}\dif{t}} \\
    &\leq \int_{t=0}^{\infty}\exp\br{-t \Re{w}}t^{\Re{A}-1}(1+t)^{-\Re{A}}\dif{t} \\
    &\leq \int_{t=0}^{1}t^{\Re{A}-1}\dif{t} + \int_{t=1}^{\infty}\exp\br{-t \Re{w}}t^{-1}\dif{t} \\ 
    &\leq \frac{1}{\Re{A}} + \int_{t=1}^{1/\Re{w}}t^{-1}\dif{t} + \int_{t=1/\Re{w}}^{\infty}\exp\br{-t \Re{w}}t^{-1}\dif{t} \\ 
    &\leq \frac{1}{\Re{A}} + \abs{\log\br{\Re{w}}} + 1\\
    &\leq \br{1+\frac{1}{\Re{A}}}\br{1+\abs{\log\br{\Re{w}}}}\,.
}

    Moreover, we also have 
    \eq{
    \abs{\exp\br{-\ii \frac{\lambda b}{2}x\wedge y }} = \exp\br{\frac{\Im{\lambda} b}{2}x\wedge y}\,.
    } 

    Now if $y\in K$ then, defining \[ D:=\ D_K\ =\ \sup_{y\in K}\norm{y},\] 
    and using that 
    $\abs{x\wedge y} = \abs{\br{x-y}\wedge y} \leq  D \norm{x-y} $ we find that the whole kernel is upper bounded by 
        \eq{
    \abs{\br{H^{\mathrm{Landau}}_{\lambda b}-z_\lambda\Id}^{-1}(x,y)} &\leq \frac{1+\frac{1}{\Re{A}}}{4\pi}\br{1+\abs{\log\br{\frac12\Re{\lambda} b\norm{x-y}^2}}} \times \\
    &\qquad \times \exp\br{-\frac{1}{4} b \Re{\lambda}\norm{x-y}^2 + \frac{1}{2} b \abs{\Im{\lambda}}D\norm{x-y}} \,.
    } 

        At this point it is convenient to "forget" the fact that $y$ only ranges in $K$ (as an upper bound) to simplify the invocation of  Schur's test 
    \eq{\norm{G}^2 \leq \br{\sup_x \int_{y} \abs{G(x,y)}\dif{y}}\br{\sup_y \int_{x}\abs{G(x,y)}\dif{x}}} to 
    \eq{
    \norm{ R^{\rm Landau}_{\lambda b}(z_\lambda)Q}_{\calB(QL^2(\RR^2)\to L^2(\RR^2))} &\leq \frac{1+\frac{1}{\Re{A}}}{4\pi}\int_{x\in\RR^2}\dif{x}\br{1+\abs{\log\br{\frac12\Re{\lambda} b\norm{x}^2}}} \times \\
    &\qquad \times \exp\br{-\frac{1}{4} b \Re{\lambda}\norm{x}^2 + \frac{1}{2} b \abs{\Im{\lambda}}D\norm{x}}\,.
    }
    Now using 
    \eq{
    \int_{r=0}^\infty r\br{1+\abs{\log\br{\alpha r^2}}}
    \exp\br{-\frac12\alpha r^2 + \beta r }\dif{r}
    &\leq \int_{r=0}^\infty r\br{1+2\sqrt{\alpha} r + \frac{2}{\sqrt{\alpha} r}}\\
    &\qquad\times \exp\br{-\frac12\alpha r^2 + \beta r }\dif{r} \\
    &= \frac{\beta +1}{\alpha }
    +\frac{\sqrt{2\pi}\left(\alpha+\beta^2+\beta+1\right)}{\alpha^{3/2}}
    \ee^{\frac{\beta^2}{2\alpha}}.
    } we find the estimate \cref{eq:Landau boundedness}.

    To get the rate of off-diagonal decay, we return to the integral expression above. Now $\Re{\lambda}$ is a large parameter. Since
    \eq{
    A&=\frac12-\frac{-\nu\lambda^2+\mu\lambda}{2\lambda b}
    =\frac{\nu\lambda+b-\mu}{2b},\\
    2b\Re{A}&=\nu\Re{\lambda}+b-\Re{\mu},
    }
    with $\gamma:=\Re{\lambda}$, $r := \norm{x-y}$, and $w := \frac12\lambda b\norm{x-y}^2$, set
    \eq{
    m_b(r)&:=\frac14 r\sqrt{b^2r^2+4\nu}-\frac{br^2}{4}\\
    &\quad+\frac{\nu}{2b}\log\!\Bigg(1+\frac{b r\left(\sqrt{b^2r^2+4\nu}+br\right)}{4\nu}\Bigg).
    }
    Then
    \eq{
        \abs{\Gamma(A) U(A,1,w)}
        &\leq \int_{t=0}^{\infty}\frac{\br{1+\frac{1}{t}}^{-\frac{1}{2}+\frac{\Re{\mu}}{2b}}}{t}
        \exp\br{-\gamma\br{\frac12 b t r^2+\frac{\nu}{2b}\log\br{1+\frac{1}{t}}}}\dif{t}\\
        &\leq \int_{t=0}^{\infty}\frac{\br{1+\frac{1}{t}}^{-\frac{1}{2}+\frac{\Re{\mu}}{2b}}}{t}
        \exp\br{-\br{\frac12 b t r^2+\frac{\nu}{2b}\log\br{1+\frac{1}{t}}}}\dif{t}\\
        &\qquad\times \exp\br{-(\gamma-1)m_b(r)}\,.
    } Here we have calculated the minimal value of the function (of $t$) in the exponential and pulled it out of the integral, akin to a Laplace asymptotic calculation. Now since the logarithm term within the exponential is positive and $\sqrt{b^2 r^2 + 4\nu}\geq2\sqrt{\nu}$ we see that  
    \eq{
    &\abs{\br{H^{\mathrm{Landau}}_{\lambda b}-z_\lambda\Id}^{-1}(x,y)}\notag\\
    &\qquad\leq \br{1+\frac{2b}{\nu+b-\Re{\mu}}}
    \br{1+\abs{\log\br{\frac12 b \norm{x-y}^2}}} \\
    &\qquad\quad\times \exp\br{-\frac{\sqrt{\nu}}{2}\br{\Re{\lambda}-1} \norm{x-y}
    +\frac12 b \abs{\Im{\lambda}} D\norm{x-y} }.
    } \replaced{but}{But} now $\norm{x-y}\geq \dist(S,K)$ so Schur's test again yields the claimed bound. 
    
    For the case $\alpha\neq0$, we use
    \eq{
        \br{P_j^\alpha R^{\rm Landau}_{\lambda b}(z_\lambda)}(x,y)
        &= -\ii \partial_{x_j}  R^{\rm Landau}_{\lambda b}(z_\lambda)(x,y) \\
        &= -\ii \partial_{x_j} K_{\lambda b,z_\lambda}\br{\norm{x-y}}
        \exp\br{-\ii\frac{\lambda b}{2}x\wedge y } \\
        &= -\ii \Bigg[\frac{x_j-y_j}{\norm{x-y}}K'_{\lambda b, z_\lambda}(\norm{x-y})\\
        &\qquad\left.{}-\ii \frac{\lambda b}{2}\sum_{k=1}^2\ve_{jk}y_k
        K_{\lambda b,z_\lambda}\br{\norm{x-y}}\right]
        \exp\br{-\ii\frac{\lambda b}{2}x\wedge y }\,.
    } The second term is dealt with in the same manner as before, so we only study $K'$:
    \eql{\label{eq:derivative of Landau positive kernel}
    K'_{B, z}(r) &= -\frac{B r}{4\pi}\Gamma\br{\frac12-\frac{z}{2B}}\ee^{-\frac14 B r^2}\notag\\
    &\quad\times\br{\br{\frac12-\frac{z}{2B}}U\br{\frac32-\frac{z}{2B},2,\frac12 B r^2}
    + \frac12 U\br{\frac12-\frac{z}{2B},1,\frac{B}{2}r^2} }\,.
    } The second term is identical to what was studied above, hence we only need to study the term involving $U\br{\frac32-\frac{z}{2B},2,\frac12 B r^2}$:
    \eq{
    \Gamma\br{A}U\br{1+A,2,w} = \frac{1}{A}\int_{t=0}^\infty\ee^{-w t}t^{A}\br{1+t}^{-A}\dif{t} 
    } and we see that this leads to the same asymptotics of the integral as above, for the large parameter $\Re{\lambda}$.    
\end{proof}
To motivate our analysis throughout the paper, we show that we have no hope to continue the Landau resolvent without restricting to a compact set. In fact, we will even allow a relatively compact perturbation of the Landau Hamiltonian.
\begin{prop}\label{prop:Landau resolvent does NOT continue}
    Let $v:\RR^2\to\CC$ be bounded and of compact support.

    Then for $\nu>0,\mu\in\CC$ fixed, $\lambda\in\CC\setminus\RR$ with $\Re{\lambda}>0$, and setting $z_\lambda := -\nu \lambda^2+\mu\lambda$,
    \eq{
    H^{\rm Landau}_{b \lambda}+\lambda^2 v(X)-z_\lambda\Id 
    } is \emph{not} invertible for all $\lambda$ and $\Re{\lambda}$ sufficiently large.
     
\end{prop}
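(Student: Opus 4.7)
The plan is to disprove invertibility by exhibiting a Weyl singular sequence: unit vectors $\psi_n\in L^2(\RR^2)$ with
\[
\bigl\|\bigl(H^{\mathrm{Landau}}_{b\lambda}+\lambda^2 v(X)-z_\lambda\Id\bigr)\psi_n\bigr\|_{L^2}\longrightarrow 0.
\]
Since $v$ is bounded and compactly supported, say within some $B_\rho(0)$, any $\psi_n$ eventually supported in $B_\rho(0)^c$ satisfies $v(X)\psi_n=0$, so it suffices to build such a sequence for the \emph{pure} Landau operator $H^{\mathrm{Landau}}_{b\lambda}-z_\lambda\Id$ with supports escaping to infinity.

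The construction exploits the non-unitarity of the magnetic translations $\widehat{R}^{d}$ when $B:=b\lambda$ is complex. These still commute with $H^{\mathrm{Landau}}_{B}$, so the translate $\widehat{R}^{d}\phi_B$ of the Gaussian $\phi_B(x):=\ee^{-B|x|^2/4}$ (itself a formal lowest-Landau-level eigenfunction at energy $B$) is again a formal eigenfunction at $B$; but for $\Im{\lambda}\ne 0$ its $L^2$-norm grows like $\exp\!\bigl(b(\Im{\lambda})^2|d|^2/(2b\Re{\lambda})\bigr)$, the very Gaussian amplification that already surfaces in \cref{eq:Landau boundedness}. To hit the value $z_\lambda=-\nu\lambda^2+\mu\lambda$, which for large $\Re{\lambda}$ sits far from every Landau level, one twists the Gaussian by a complex plane wave and considers the ansatz
\[
\Psi_{k,d}(x)\,:=\,\ee^{\ii k\cdot x}\,\bigl(\widehat{R}^{d}\phi_B\bigr)(x),\qquad k\in\CC^{2},
\]
imposing the formal eigenvalue identity $H^{\mathrm{Landau}}_{b\lambda}\Psi_{k,d}=z_\lambda\Psi_{k,d}$. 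Because the Hamiltonian acts on a Gaussian--times--plane-wave ansatz by producing a polynomial of degree at most two in $x-d$, this identity reduces to a finite system of algebraic conditions (matching the constant, linear, and quadratic parts). For $\lambda\in\CC\setminus\RR$ with $\Re{\lambda}$ large, this system is solvable for a complex $k=k(\lambda;\mu,\nu,b,d)\in\CC^{2}$ with $|k|=O(|\lambda|)$, provided $d$ is chosen along a direction compatible with the sign of $\Im{\lambda}$.

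The singular sequence is then produced by localizing and normalizing: $\psi_n:=N_n^{-1}\chi_n\,\Psi_{k_n,d_n}$, where $\chi_n(x)=\chi\bigl((x-d_n)/r_n\bigr)$ is a smooth plateau cutoff, $|d_n|\to\infty$ along the selected direction, $1\ll r_n\ll|d_n|$, and $N_n:=\|\chi_n\Psi_{k_n,d_n}\|_{L^2}$. By the formal eigenvalue identity, the residual collapses to the commutator $[H^{\mathrm{Landau}}_{b\lambda},\chi_n]\Psi_{k_n,d_n}/N_n$, which carries one derivative of $\chi_n$ and at most one factor of the magnetic momentum $P-\tfrac12 b\lambda X^\perp$; schematically it is $O(r_n^{-1})$ times exponential prefactors that the normalization $N_n$ is designed to absorb. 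The main obstacle is precisely this coordinated choice of $(k_n,d_n,r_n)$: one must arrange simultaneously that (i) $\Psi_{k_n,d_n}$ is genuinely $L^2$-integrable over $\supp\chi_n$ despite joint exponential growth from $\Im{\lambda}$ and from $\Im{k_n}$, (ii) $N_n$ dominates the commutator error, and (iii) the supports escape $B_\rho(0)$ so that the $\lambda^2 v$ contribution drops out. This balance of scales is the technical heart of the argument; it is essentially a pseudomode construction in the spirit of the pseudospectral analysis of non-self-adjoint harmonic oscillators, and the uniformity in $\lambda$ (for $\Re{\lambda}$ large) comes from the fact that all relevant scales admit explicit polynomial expressions in $\lambda,\mu,\nu,b$.
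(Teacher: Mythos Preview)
Your overall strategy—a Weyl sequence supported away from $\supp(v)$—matches the paper's, and your insight that the non-unitarity of magnetic translations for $\Im\lambda\neq 0$ produces exponential norm growth is exactly the right mechanism. But the specific ansatz fails: the formal eigenvalue identity $H^{\mathrm{Landau}}_{b\lambda}\Psi_{k,d}=z_\lambda\Psi_{k,d}$ cannot be solved for $z_\lambda\neq b\lambda$.

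Write $\pi:=P-\tfrac12 b\lambda X^\perp$ and $\psi:=\widehat{R}^d\phi_B$. Since $\widehat{R}^d$ commutes with each $\pi_j$, one obtains $H\,\ee^{\ii k\cdot x}\psi=\ee^{\ii k\cdot x}\bigl[B+2k\cdot\pi+k\cdot k\bigr]\psi$, and $\pi_j\psi$ is a \emph{linear} (not quadratic) polynomial in $x-d$ times $\psi$; the quadratic part is already cancelled by the choice of $\phi_B$. Matching the coefficients of $(x-d)_1$ and $(x-d)_2$ forces $k_2=\ii k_1$, whence $k\cdot k=0$, and the remaining constant equation reads $B=z_\lambda$. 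Thus the only attainable formal eigenvalue with this ansatz is the lowest Landau level $B=b\lambda$, which for $\Re\lambda$ large is far from $z_\lambda=-\nu\lambda^2+\mu\lambda$. The translation $d$ does not enter the matching conditions at all, so choosing its direction buys no freedom here. Your ``finite algebraic system'' is overdetermined, and the step ``this system is solvable for a complex $k$'' is where the argument breaks.

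The paper sidesteps this by \emph{not} seeking a formal eigenfunction. It fixes a bump $\widetilde{\psi_\delta}$ centered at a distant point $\xi$ and sets $\psi_\delta:=R^{\mathrm{Landau}}_{b\lambda}(z_\lambda)\,\widetilde{\psi_\delta}$ via the explicit kernel of \cref{prop:analytic extension of Landau resolvent on compact sets}. Then $(H^{\mathrm{Landau}}_{b\lambda}+\lambda^2 v-z_\lambda)\psi_\delta=\widetilde{\psi_\delta}+\lambda^2 v(X)\psi_\delta$ has bounded norm (the second term is small by off-diagonal decay \cref{eq:Landau off diagonal decay}), while $\norm{\psi_\delta}$ is exponentially large in $\norm{\xi}$ because the resolvent kernel contains the phase $\exp\!\bigl(-\ii\tfrac{b\lambda}{2}x\wedge y\bigr)$, which for $\Im\lambda\neq 0$ blows up on a region where the remaining kernel factor $K_{B,z}(\norm{x-\xi})$ is of order one. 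Using the resolvent directly absorbs the spectral parameter $z_\lambda$ without any algebraic matching; this is the missing idea in your construction.
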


\begin{proof}
    We shall apply the Weyl criterion for the spectrum, i.e., we shall show that for any $\delta>0$ there exists  $\psi_\delta\in L^2(\RR^2)$ such that \eq{
        \norm{\br{H^{\rm Landau}_{b \lambda}+\lambda^2 v(X)-z_\lambda\Id }\psi_\delta}\leq\delta\norm{\psi_\delta}\,.
    } 

   The plan will be to define a smooth bump function,  $\widetilde{\psi_\delta}$, supported at a large distance from the support of  $v(x)$, in such a way that the magnitude of the complex exponential appearing in the Landau resolvent kernel, 
    $\exp\br{\frac{\Im{\lambda} b}{2}x\wedge y}$ can be made arbitrarily large
     on the support of $\widetilde{\psi_\delta}$.  Then,  $\psi_\delta := R_{\lambda b}^{\rm Landau}(z_\lambda)\widetilde{\psi_\delta}$, which by \cref{prop:analytic extension of Landau resolvent on compact sets} is well-defined for  complex $\lambda$ since $\widetilde{\psi_\delta}$ is supported on a compact set, will be large.
     
On the other hand, \eq{
    \br{H^{\rm Landau}_{b \lambda}+\lambda^2 v(X)-z_\lambda\Id }\psi_\delta = \widetilde{\psi_\delta} + \lambda^2 v(X) R_{\lambda b}^{\rm Landau}(z_\lambda)\widetilde{\psi_\delta}
    } will be bounded, thanks to the off-diagonal decay bound on $R_{\lambda b}^{\rm Landau}(z_\lambda)$, \cref{eq:Landau off diagonal decay}.
 
    Let $\xi\in\RR^2$ be chosen below with $\norm{\xi}\gg 1$. Let $\widetilde{\psi_\delta}\geq0$ be a smooth bump function supported within $K:=B_{\beta \abs{\lambda}^{-1}\norm{\xi}^{-1}}(\xi)$ for some $\beta>0$ and obeying $\int_{\RR^2}\widetilde{\psi_\delta} = 1$ so that $\norm{\widetilde{\psi_\delta}}_2\lesssim \beta^{-1}\abs{\lambda}\norm{\xi}$. We find, by the off-diagonal bound \cref{eq:Landau off diagonal decay}, for sufficiently large $\abs{\lambda}$:  \eq{
    \norm{\br{H^{\rm Landau}_{b \lambda}+\lambda^2 v(X)-z_\lambda\Id }\psi_\delta} &\leq C \beta^{-1}\abs{\lambda}\norm{\xi}\ .
    } 

    To obtain a lower bound on $\norm{\psi_\delta}$, we shall require $U\subseteq\RR^2$ to be a subset on which the Landau phase factor is large, i.e., for  some $c>0$ \eql{\label{eq:Landau phase is large}\abs{\exp\br{-\ii\frac{\lambda b}{2}x\wedge \xi}} \geq \exp\br{\frac12c \abs{\Im{\lambda}}\norm{\xi}}. } Then \eq{
    \norm{\psi_\delta}^2 & = \int_{x\in \RR^2}\abs{\psi_\delta(x)}^2\dif{x} \\ 
    &\geq \int_{x\in U}\abs{\br{R_{\lambda b}^{\rm Landau}(z_\lambda)\widetilde{\psi_\delta}}(x)}^2\dif{x} \\
    &\geq  \exp\br{c \abs{\Im{\lambda}}\norm{\xi}} \int_{x\in U}\abs{\exp\br{\ii \frac{\lambda b}{2}x\wedge\xi}\br{R_{\lambda b}^{\rm Landau}(z_\lambda)\widetilde{\psi_\delta}}(x)}^2\dif{x}\,.
    } 
  We claim we can choose the set $U$ such that both \cref{eq:Landau phase is large} holds and the latter integral over $U$ satisfies a lower bound which is independent of $\xi$. We make the following choice: \eq{
        U := \Set{x \in \RR^2 | C < \norm{x-\xi} < C' } \cap \Set{x \in \RR^2 | \sgn(\Im{\lambda})\ ( x\wedge \xi) \geq c \norm{\xi}}}
    The set $U$ has \deleted{an }area\replaced{, which}{ that} is strictly positive and uniformly bounded away from zero, independently of $|\lambda|$ and $\xi\in\mathbb R^2$.

    We then rewrite, via \cref{eq:analytic continuation of Landau resolvent for complex magnetic fields} and abusing $K(x-y)\equiv K(\norm{x-y})$,
    \eq{
    \exp\br{\ii \frac{\lambda b}{2}x\wedge\xi}
    \br{R_{\lambda b}^{\rm Landau}(z_\lambda)\widetilde{\psi_\delta}}(x)
    &= \int_{y\in K} K_{B,z}(x-y)\ee^{-\ii\frac{B}{2}x\wedge \br{y-\xi} }
    \widetilde{\psi_\delta}(y)\dif{y} \\
    &= \int_{y\in K} K_{B,z}(x-\xi)\widetilde{\psi_\delta}(y)\dif{y} \\
    &\quad+\int_{y\in K} \Big(K_{B,z}(x-y)\ee^{-\ii\frac{B}{2}x\wedge \br{y-\xi} }\\
    &\qquad\qquad{}-K_{B,z}(x-\xi)\Big)\widetilde{\psi_\delta}(y)\dif{y} \\ 
    &= K_{B,z}(x-\xi)\\
    &\quad+\int_{y\in K} \Big(K_{B,z}(x-y)\ee^{-\ii\frac{B}{2}x\wedge \br{y-\xi} }\\
    &\qquad\qquad{}-K_{B,z}(x-\xi)\Big)\widetilde{\psi_\delta}(y)\dif{y}\,.
    }

    Now we may establish, via \cref{eq:derivative of Landau positive kernel} and the lines below it, that
    \eq{
    \abs{K_{B,z}(\norm{x})-K_{B,z}(\norm{y})} \leq C \abs{B}\norm{x}\norm{x-y}\abs{K_{B,z}(\norm{x})}
    } for all $\norm{x-y}\leq c \abs{B}^{-1}\norm{x}^{-1}$. Using these estimates one may prove \eq{
    \abs{\br{K_{B,z}(\norm{x-y})\ee^{-\ii\frac{B}{2}x\wedge \br{y-\xi} }-K_{B,z}(\norm{x-\xi})}} &\leq C \beta \abs{K_{B,z}(\norm{x-\xi})}
    } and hence we have the desired bound
    \eq{
    \norm{\psi_\delta}^2 \gtrsim  \exp\br{c \abs{\Im{\lambda}}\norm{\xi}} C_\lambda
    } for some constant $C_\lambda$ independent of $\xi$. By choosing $\xi$ sufficiently large we get our result.
\end{proof}

\section{Anisotropic Magnetic Harmonic Oscillator;\\ Appendix by Tal Shpigel}\label{sec:MHO}

Central to our proof of analyticity properties  of the resolvent of $h_\lambda$ with respect to $\lambda$ is our construction \added{of }an approximation of $(h^\theta_\lambda-z\Id)^{-1}$ (see  
\cref{eq:putative expression for the inverse of h_lambdatheta}, for $z$ varying on a circle enclosing $e_\lambda$, the ground state eigenvalue of $h_\lambda$).

Recall that for $x$ in a small ($\lambda$-dependent) neighborhood of the origin,  $h^\theta_\lambda = h_\lambda \approx h_\lambda^{\rm MHO} -\lambda^2\Id$, where  $h_\lambda^{\rm MHO}$ is the quantum magnetic harmonic oscillator Hamiltonian
(see \cref{eq:h-MHO}): 
\eql{h_{\lambda}^{\rm MHO }:=\left(P-\frac{1}{2}b\lambda X^{\perp}\right)^{2}+\frac{1}{2}\lambda^{2}\left\langle X,\Hessian{v}(0) X\right\rangle .} 
Here, $b>0$, $\lambda\in\RR$ and $\Hessian{v}(0)$ is a strictly positive $2\times 2$ matrix, arising in the Taylor expansion of $v$ about a non-degenerate minimum at $x=0$. Set  $h^{\rm MHO}:=h^{\rm MHO}_1$ and label its ($\lambda$-independent) eigenvalues $e_j^{\rm MHO}$, $j=0,1,2,\dots$. 

  Introduce the circular contour in $\CC$:
    \eql{ \mathbb{S}^1\ni\xi\mapsto z_\lambda(\xi) +\lambda^2 :=  \lambda\Big[e_{0}^{{\rm MHO}}+
    \replacedm{\frac{1}{2}}{c_{\mathrm{ctr}}}\left(e_{1}^{{\rm MHO}}-e_{0}^{{\rm MHO}}\right)\xi\Big].\label{eq:z_lam-def}}
    Thus $z_\lambda$ is at distance of order $\lambda$ away from  eigenvalues of $h_\lambda^{\rm MHO}$ and encloses one simple eigenvalue of $h_\lambda$, namely, $e_\lambda\approx -\lambda^2 +\lambda e_0^{\rm MHO}$, the ground state; see \cref{eq:rMHO}.

The approximate resolvent \cref{eq:putative expression for the inverse of h_lambdatheta} has a contribution corresponding  to the resolvent of $h_\lambda^{\rm MHO} -\lambda^2\Id$ at spectral parameter $z=z_\lambda$, lying on a circle:
\begin{equation} r_\lambda^{\rm MHO}(\lambda^2+z_\lambda) := \br{h_{\lambda}^{\text{MHO}}-\br{z_\lambda+\lambda^2}\Id}^{-1}  \label{eq:rMHO-1}\end{equation}

\added{In this section we prove analyticity of $\lambda\mapsto r_\lambda^{\rm MHO}(\lambda^2+z_\lambda)$ and the localized estimates needed in \cref{lem:continuation of cut off one well resolvent}.}

\begin{prop}\label{prop:MHO analyticity and bounds}

Fix $\ve\in(0,\frac\pi2)$, $\Lambda>0$ sufficiently large and consider the wedge \eq{
\Omega_{\Lambda,\ve} := \Set{\zeta\in\CC | \abs{\zeta}>\Lambda\quad\rm{ and }\quad\abs{\arg(\zeta)}<\frac{\pi}{2}-\ve }\,.
}
\begin{enumerate}
    \item 
 There exists $b_{\rm max}(\ve) > 0$ such that, for all $b\in(0,b_{\rm max}(\ve))$,
 and for any $\xi\in\mathbb{S}^1\subset\CC$:
 the mapping \eq{
    \lambda \mapsto  r_{\lambda}^{\rm MHO }(\lambda^2+z_\lambda(\xi)) \equiv \br{h_{\lambda}^{\rm MHO }-\br{\lambda^2+z_\lambda(\xi)}\Id}^{-1}
    }
  initially defined from $\lambda\in(0,\infty)$ into the space $\calB(L^2(\RR^2))$,  
 analytically continues to a  mapping from the wedge $\Omega_{\Lambda,\ve}$ to $\calB(L^2(\RR^2))$. Moreover, we have
 \eq{
    \norm{r^{\rm MHO}(\lambda^2+z_\lambda(\xi))} \lesssim \abs{\lambda}^{-1}\,.
 }


    \item Pick a small constant $\eta>0$. Set $\delta := \Lambda^{-1/2+\eta}$. For $\lambda\in\Omega_{\Lambda,\ve} $ and $S,T\subseteq\RR^2$ obeying $T\subset B_{C_T \delta}(0)$ for some $C_T<\infty$ of order 1, and $S\subseteq B_{C_2 \delta}(0) \setminus B_{C_1\delta}(0)$ with $C_1 > C_T$,
    \eql{\label{eq:Combes-Thomas type estimate for MHO resolvent}
\norm{\chi_S(X)\ P_j^\alpha\ r_{\lambda}^{\mathrm{MHO}}(\lambda^2+z_\lambda(\xi))\ \chi_T(X)}_{\calB(L^2(\RR^2)\to L^\infty(\RR^2))} \lesssim \exp\br{-c \abs{\lambda}\br{\dist(S,T)}^2},
} 
for $j=1,2$ and $\alpha=0,1,2$,  uniformly in $\xi\in\mathbb S^1$. Clearly this estimate implies the $L^2\to L^2$ estimate via $\norm{\chi_S(X) g}_{L^2}\leq \sqrt{|S|}\norm{g}_{L^\infty}$ for any $g$.
\end{enumerate}
\end{prop}

\added{The proof of \cref{prop:MHO analyticity and bounds} is deferred to \cref{subsec:proof_MHO_prop}.}
\bigskip

Note that by \replaced{rotational in invariance}{rotational invariance} of $(P-\frac{b\lambda}{2}X^\perp)^2$, we may take $D^2v(0)$ to be diagonal with positive entries: $k_1^2$ and $k_2^2$ along the diagonal. Thus, we consider the operator:
\eql{\label{eq:h_MHO} h_{\lambda}^{\rm MHO }:=\left(P-\frac{1}{2}b\lambda X^{\perp}\right)^{2}+\frac{1}{2}\lambda^{2}(k_1^2X_1^2 + k_2^2X_2^2). } 
\\
In the following section we begin our analysis of the resolvent of the general anisotropic magnetic quantum harmonic oscillator Hamiltonian $h_\lambda^{\rm MHO}$, given by \cref{eq:h_MHO}.

\subsection{The Setup} \label{sec:setup}

 We start by quoting the form of the quantum harmonic oscillator  Hamiltonian from \cite{MATSUMOTO1995168}:
\begin{equation}\label{eq:matsumoto_hamiltonian}
    H = \frac{1}{2}\sum_{j=1}^2\left(i\frac{\partial}{\partial x_j} - A_j\right)^2 + \frac{1}{2}(k_1^2x_1^2 + k_2^2x_2^2),
\end{equation}
where $A_1 = \frac{B}{2}x_2$ and $A_2 = -\frac{B}{2}x_1$. Here, $k_1,  k_2>0$\deleted{ both positive}, $B\in\mathbb{R}$ and $x=(x_1,x_2)\in\mathbb{R}^2$. The fundamental solution of the heat equation, $q(x,y,t)= \exp(-Ht)(x,y)$ is given by
\begin{equation}
    q(x,y,t) = \frac{1}{2\pi}\sqrt{\text{det}\left(\frac{\partial^2\tilde{S}_{cl}(t,x,y)}{\partial x\partial y}\right)}\exp(-\tilde{S}_{cl}(t,x,y)),
\end{equation}
which is displayed explicitly in \cite[Proposition 2.1]{MATSUMOTO1995168}. 
 However, there is a small mistake in the expression of the Hamiltonian in \eqref{eq:matsumoto_hamiltonian}. For $q(t,x,y)$ as presented in \cite{MATSUMOTO1995168} to be the fundamental solution of the heat equation, the Hamiltonian should have the signs of $A_1$ and $A_2$ flipped: 
 \begin{equation}\label{eq:A-corrected} A_1=-\frac{B}{2}x_2\quad \textrm{and}\quad  A_2=\frac{B}{2}x_1.\end{equation}
 We will work with the Hamiltonian $H$, where the vector potential $A$ has the corrected signs, as  in \cref{eq:A-corrected}.

We will use this to prove bounds on the Green's function of $H$ modified to remove the effect of the ground state.

It is convenient to now make the following change of notations and variables:
\begin{align}
    &k_i \to \lambda k_i, \notag\\
    &B \to 2\lambda B, \notag\\
    &\lambda t \to  s \in \mathbb{R}, \notag\\
    &q(x,y,t) \to q(x,y,s,\lambda), \notag\\
    &\tilde{S}_{cl}(x,y,t) \to \phi(x,y,s,\lambda), \label{eq:change of variables}
\end{align}
where, at first, $\lambda$ is a large positive number, and $s=\lambda t\in\mathbb{R}$.

After the change of notation and variables, the Hamiltonian becomes
\begin{align}
   \widetilde H &= \frac{1}{2} \left[-\nabla^2 + 2i\lambda B(x_2\partial_{x_1}-x_1\partial_{x_2}) + \lambda^2\left(k_1^2+B^2\right)x_1^2 + \lambda^2\left(k_2^2+B^2\right)x_2^2\right] \notag\\ 
   &= \frac12\left[\ (P-\lambda BX^\perp)^2 + \lambda^2(\ k_1^2x_1^2\ +\ k_2^2x_2^2\ )\ \right] \label{eq:H-MHO}
\end{align}
where $\nabla^2 = \sum_{i}\partial_{x_i}^2$ is the laplacian operator and $k_1$,  $k_2>0$, $B\in\mathbb{R}$ and $x=(x_1,x_2)\in\mathbb{R}^2$.

Denote the operator in \eqref{eq:H-MHO} by $H_\lambda^{\rm MHO}(B,k)$ and note the relation:
 \[ h^{\rm MHO}_\lambda = 2 H_\lambda^{\rm MHO}\big(\frac{b}{2},\frac{k}{\sqrt2}\big),\]
 where $h^{\rm MHO}_\lambda$ is given by \cref{eq:h_MHO}.
 Set $H_\lambda^{\rm MHO}=H_\lambda^{\rm MHO}\big(\frac{b}{2},\frac{k}{\sqrt2}\big) $ and observe that
 \begin{equation} r^{\rm MHO}(\zeta) =  \Big(h_\lambda^{\rm MHO}-(\zeta+\lambda^2)\Big)^{-1}=\frac12 \Big(\ H_\lambda^{\rm MHO} - \frac{\zeta+\lambda^2}{2}\ \Big)^{-1}.
 \label{eq:from rMHO to resolvent of H_MHO}
 \end{equation}
 Hence, Proposition \ref{prop:MHO analyticity and bounds} reduces to a study of the resolvent of $H_\lambda^{\rm MHO}$.

 For brevity we set $H= H_\lambda = H_\lambda^{\rm MHO}$ and introduce the associated heat kernel, $q(x,y,s,\lambda)=\exp\left(-\lambda^{-1}Hs\right)(x,y)$, which satisfies:
  \begin{subequations}
\begin{align} \label{eq:heat_equation}
    (\lambda\partial_s + H)q(x,y,s,\lambda) &= 0\\
    \lim_{s\downarrow0} q(x, y, s,\lambda) &= \delta(x-y),
\end{align}
\end{subequations}
 where $\delta(x-y)$ is the Dirac delta function.

 \begin{rem}
  For $\lambda$ real,   $H_\lambda = H_\lambda^{\rm MHO}$ is a strictly positive  self-adjoint operator. Its resolvent kernel is obtained from the Laplace transform of the heat kernel.\\
 For ${\rm Re}(\mu) <  \lambda^{-1}\inf \sigma(H_\lambda)$, 
 \begin{equation}\label{eq:LT_of_heat}
 \big(H_\lambda -\lambda\mu\big)^{-1}(x,y)
  = \frac{1}{\lambda} \int_0^\infty \exp(\mu s)\ \exp(-\lambda^{-1}H_\lambda s)(x,y) ds\ .
 \end{equation}
    For complex $\lambda$,  we  shall use an expression of the type on the right hand side of \cref{eq:LT_of_heat} to give a sense to the resolvent of $H_\lambda^{\rm MHO}$ and to bound its operator norm.\\
     Finally, recall that (via  the resolvent of $H_\lambda$) our aim is to  
  construct and bound the operator $r^{\rm MHO}(z_\lambda(\xi))$, where $\xi\in\mathbb S^1\mapsto z_\lambda(\xi)\in\CC$ is given by  \cref{eq:z_lam-def}. By
  \cref{eq:from rMHO to resolvent of H_MHO}, we require the resolvent of $H_\lambda$ evaluated at the spectral parameter \[ \frac{z_\lambda(\xi)+\lambda^2}{2} = : \lambda \tau(\xi), \]
  where $\tau(\xi)= \frac{1}{2}e_{0}^{{\rm MHO}}+
    \replacedm{\frac{1}{4}}{\frac{c_{\mathrm{ctr}}}{2}}\left(e_{1}^{{\rm MHO}}-e_{0}^{{\rm MHO}}\right)\xi$. This indicates  that we'll require the  behavior of the expression in \cref{eq:LT_of_heat} for $\mu\in\CC$ of order one.
 \end{rem}

Our heat kernel, after the change of variables \cref{eq:change of variables}, is given explicitly by
\begin{equation}\label{eq:heat_kernel}
    q(x,y,s,\lambda) = \lambda P(s)\exp(-\phi(x,y,s,\lambda)),
\end{equation}
where the prefactor and the term in the exponent are given by
\begin{align} 
    P(s) &= \frac{f_+f_-}{2\pi}\sqrt{\frac{2k_1k_2}{K(s)}},\label{eq:prefactor}\\
    \phi(x,y,s,\lambda) &= \frac{\lambda f_+f_- \alpha_{12}(s)}{2K(s)}(x_1^2+y_1^2) + \frac{\lambda f_+f_- \beta_{12}(s)}{K(s)}x_1y_1\notag \\
    &+ \frac{\lambda f_+f_- \alpha_{21}(s)}{2K(s)}(x_2^2+y_2^2) + \frac{\lambda f_+f_- \beta_{21}(s)}{K(s)}x_2y_2\notag \\
    &+i\frac{\lambda B \gamma_1(s)}{K(s)}(x_1x_2-y_1y_2) - i\frac{\lambda B \gamma_2(s)}{K(s)}(x_1y_2-x_2y_1), \label{eq:P_and_phi}
\end{align}
and
\begin{subequations}
\label{eq:heat_kernel_parameters}
\begin{align}
    &f_\pm = \sqrt{(k_1\pm k_2)^2 + 4B^2}\label{eq:f_pm} \\
    &K(s) = f_-^2(k_1+k_2)^2(\cosh(f_+s)-1)-f_+^2(k_1-k_2)^2(\cosh(f_-s)-1),\notag \\
    &\alpha_{ij}(s) = k_i\big[f_-(k_1+k_2)\sinh(f_+s) + f_+(k_j-k_i)\sinh(f_-s)\big],\notag \\
    &\beta_{ij}(s) = k_i\bigg[\Big(f_+(k_i-k_j)+f_-(k_1+k_2)\Big)\sinh\left(\frac{f_--f_+}{2}s\right) \notag\\
    &\qquad\qquad + \Big(f_+(k_i-k_j)-f_-(k_1+k_2)\Big)\sinh\left(\frac{f_-+f_+}{2}s\right)\bigg],\notag\\
    &\gamma_1(s) = (k_1^2-k_2^2)\big(f_+^2(\cosh(f_- s)-1) -f_-^2(\cosh(f_+ s)-1) \big),\notag\\
    &\gamma_2(s) = 8f_+f_-k_1k_2\sinh\left(\frac{f_+}{2}s\right)\sinh\left(\frac{f_-}{2}s\right).
\end{align}
\end{subequations}

The ground state eigenfunction of $H$ is given by
\begin{equation}\label{eq:ground_state_eigenfunction}
    \psi_0(x, \lambda) = \left(\frac{\lambda^2\eta\zeta}{\pi^2}\right)^{1/4}\exp\left(-\frac{\lambda}{2}\left(\eta x_1^2 + \zeta x_2^2 - i\xi x_1x_2\right)\right),
\end{equation}
where
\begin{align} \label{eq:ground_state_parameters}
    \eta &= \frac{f_+k_1}{k_1+k_2}\notag\\
    \zeta &= \frac{f_+k_2}{k_1+k_2}\notag\\
    \xi &= \frac{2B(k_1-k_2)}{k_1+k_2}.
\end{align}
The reader should note that there is a small mistake in the expression of the ground state eigenfunction of $H$ in \cite{MATSUMOTO1995168}, where the cross term is missing. 

For real $\lambda$, the eigenvalue corresponding to the ground state is given \deleted{defined }by $H\psi_0 = E_0\psi_0$, where $E_0$ is the ground state energy and given as
\begin{equation}\label{eq:ground_state_energy}
    E_0 = \lambda \frac{f_+}{2},
\end{equation}
where $f_+$ is displayed in \cref{eq:f_pm}.

Once we have these formulas we extend the definitions to $\lambda\in\mathbb{C}$ by analytic continuation, as described in \cref{sec:first_main_thm,sec:second_main_thm}.

\subsection{Controlled Constants} \label{sec:controlled_constants}
We are given basic constants $c_\lambda$, which will be used to control $|\text{Arg}(\lambda)|$ and $k_1,k_2$ which specify the harmonic oscillator. Constants that are determined by the basic constants are called controlled constants and are denoted by $c, C, C',$ etc. These symbols may denote different constants in different occurrences. We write $X=\mathcal{O}(Y)$ to indicate that $|X|\leq CY$ for a \added{controlled }constant $C$.

\subsection{First Main theorem} \label{sec:first_main_thm}

We set $(x,y,\lambda,\mu)\in\mathbb{R}^2\times\mathbb{R}^2\times\Omega\times\mathbb{C}$ with $x\neq y$ and $Re(\mu)<\frac{f_+}{2}+c_1$ (small enough $c_1$), where $\Omega=\{\lambda\in\mathbb{C}:|\lambda|>C_\Omega,|\mathrm{arg}(\lambda)|<\frac{\pi}{2}-c_\Omega\}$ with $C_\Omega$ large enough. We define the modified Green's function as
\begin{equation}\label{eq:resolvent_definition}
    \tilde{G}(x,y,\lambda,\mu) = \int_0^\infty \frac{1}{\lambda}e^{\mu s}\left[q(x,y,s,\lambda) -e^{-\frac{f_+}{2}s}
    \psi_0(x, \lambda)\psi_0^*(y, \lambda^*)\right]ds,
\end{equation}
where $q(x,y,s,\lambda)$ is the heat kernel defined above in \eqref{eq:heat_kernel} when all the parameters are real, $E_0$ is the ground state energy \eqref{eq:ground_state_energy}, and $\psi_0(x)$ is the ground state eigenfunction \eqref{eq:ground_state_eigenfunction}. $\psi_0^*$ denotes the complex conjugate of $\psi_0$. Our purpose is to estimate $\tilde{G}(x,y,\lambda,\mu)$.

We also introduce the function:
\begin{equation} \label{eq:D_function_definition}
    D(s) =
    \begin{cases} 
        C\log\left(\frac{C}{s}\right), & \text{if }0<s<1\\
        C'\exp(-cs), & \text{if } s\geq1,
    \end{cases}
\end{equation}
with $C, C'$ large enough and $c$ small enough, taken so that $D(s)$ is continuous.

Our first main theorem controls the size of $\tilde{G}(x,y,\lambda,\mu)$:
\begin{thm}\label{thm:first main thm}
    For $(x,y,\lambda,\mu)\in\mathbb{R}^2\times\mathbb{R}^2\times\Omega\times\{
        \mu\in\mathbb{C}:\Re \mu<\frac{f_+}{2}+c_\star\}$ with $x\neq y$ and $|B|$ less than a small enough constant $c_B$, we have
    \begin{equation} \label{eq:first_main_thm}
        |\tilde{G}(x,y,\lambda,\mu)| \leq CD(c|\lambda|\ |x-y|(|x|+|y|)).
    \end{equation}
\end{thm}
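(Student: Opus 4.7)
The plan is to work directly from the explicit heat-kernel representation \cref{eq:heat_kernel}--\cref{eq:heat_kernel_parameters} and to prove the bound by analyzing the $s$-integral in \cref{eq:resolvent_definition}. Since the coefficients $f_\pm, K(s), \alpha_{ij}(s), \beta_{ij}(s), \gamma_i(s), P(s)$ are independent of $\lambda$ and $\phi(x,y,s)$ is linear in $\lambda$, the integrand is jointly analytic in $\lambda\in\Omega$ pointwise in $s$, so the analytic continuation to $\lambda\in\Omega$ is automatic once absolute convergence of the $s$-integral is established. That convergence, in turn, rests on a quantitative lower bound for $\operatorname{Re}\phi(x,y,s)$ uniform in $\lambda\in\Omega$: the Hermitian part of $\phi$ (the $\alpha_{ij},\beta_{ij}$ terms) is positive-definite in $(x,y)$ and scaled by $\operatorname{Re}\lambda\gtrsim|\lambda|\cos c_\lambda$, while the magnetic cross-terms contribute $\lambda B$ factors whose real parts are bounded by $|B|\,|\lambda|$. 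Taking $|B|<c_B$ small enough allows a Cauchy--Schwarz absorption of the cross-terms into the positive core, giving $\operatorname{Re}\phi(x,y,s)\geq c|\lambda|\,\Phi(x,y,s)$ for some positive-definite form $\Phi$. This is where the smallness hypothesis on $|B|$ is genuinely used.

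Next I would split the $s$-integral at some fixed $s_0>0$. For the short-time piece, Taylor expansion at $s=0$ using $\sinh(f_\pm s)=f_\pm s+O(s^3)$ and $\cosh(f_\pm s)-1=\tfrac12 f_\pm^2 s^2+O(s^4)$ gives $K(s)\sim c s^2$, $P(s)\sim c/s$, and, after a short algebraic simplification, the exponent satisfies
\[
\operatorname{Re}\phi(x,y,s)\;\geq\;\frac{c|\lambda|\norm{x-y}^2}{s}+c|\lambda|\,s\br{\norm{x}^2+\norm{y}^2}+O(|\lambda|\,s),\qquad 0<s\le s_0,
\]
with the two leading terms being the kinetic and potential pieces of the semiclassical action. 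The subtracted Gaussian $e^{-E_0 s}\psi_0(x,\lambda)\psi_0^*(y,\lambda^*)$ is bounded by $C\exp(-c|\lambda|(\norm{x}^2+\norm{y}^2))$ and is absorbed into the same envelope. One is then left to estimate a modified-Bessel-type integral
\[
\int_0^{s_0}\frac{ds}{s}\exp\br{-\frac{a}{s}-b s}\;\lesssim\; K_0\!\br{2\sqrt{ab}},\qquad a\sim|\lambda|\norm{x-y}^2,\ b\sim|\lambda|\br{\norm{x}^2+\norm{y}^2},
\]
and $2\sqrt{ab}$ is comparable to $|\lambda|\norm{x-y}(\norm{x}+\norm{y})$, which is exactly the argument of $D$; the two branches of $D$ in \cref{eq:D_function_definition} arise from $K_0(t)\sim-\log t$ as $t\to 0^+$ and $K_0(t)\sim e^{-t}$ as $t\to\infty$.

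For the long-time piece, the role of the ground-state subtraction is decisive. On the real axis, the Mercer expansion
\[
q(x,y,s)-e^{-E_0 s}\psi_0(x,\lambda)\psi_0^*(y,\lambda)\;=\;\sum_{n\geq 1}e^{-E_n s}\psi_n(x,\lambda)\psi_n^*(y,\lambda)
\]
yields decay $e^{-E_1 s}$ with $E_1-E_0\geq c|\lambda|$, and since the hypothesis $\operatorname{Re}\mu<f_+/2+c_\mu$ is compatible with $\operatorname{Re}(\mu-E_1)\le -c|\lambda|$ for $\lambda\in\Omega$, the tail is negligible. For complex $\lambda\in\Omega$ the same conclusion is read off directly from the formulas: $\sinh(f_\pm s),\cosh(f_\pm s)\sim\tfrac12 e^{f_\pm s}$ gives $K(s)\sim c\,e^{f_+ s}$, $P(s)\sim c\,e^{-f_+ s/2}$, and $\phi(x,y,s)$ converges at rate $e^{-f_- s}$ to the Gaussian exponent of $\psi_0(x)\psi_0^*(y)$; subtracting the ground-state contribution extracts exactly that correction, so the residue is of order $\exp(-c|\lambda|(s+\norm{x}^2+\norm{y}^2))$. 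Integrating in $s\geq s_0$ gives a bound $C\exp(-c|\lambda|(\norm{x}^2+\norm{y}^2))$, which via $\norm{x}^2+\norm{y}^2\geq\tfrac12\norm{x-y}(\norm{x}+\norm{y})$ dominates the exponential branch of $D$; in the logarithmic regime $D\gtrsim 1$ and this contribution is absorbed.

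The main obstacle I foresee is the quantitative positivity step of the second paragraph: deriving the short-time lower bound on $\operatorname{Re}\phi$ with constants sharp enough to produce the combined exponent $|\lambda|\norm{x-y}(\norm{x}+\norm{y})$ after optimization of the Bessel integral, and doing so uniformly in $\lambda\in\Omega$ and $s\in(0,s_0)$. The $\beta_{ij}$ real cross-terms and the $\gamma_i$ imaginary cross-terms of $\phi$ (whose real part for complex $\lambda$ is nonzero) must be controlled by the Hermitian core through the explicit algebraic identities in \cref{eq:heat_kernel_parameters}, and then absorbed using the hypothesis $|B|<c_B$. Once the short- and long-time bounds are matched at $s=s_0$, \cref{eq:first_main_thm} follows.
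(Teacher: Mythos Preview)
Your short-time strategy is correct and arguably cleaner than the paper's: once you have the lower bound $\operatorname{Re}\phi(x,y,s)\ge c|\lambda|\bigl(|x-y|^2/s+s(|x|^2+|y|^2)\bigr)$ for $0<s<c_1$ (which the paper proves as its lemma for regions 1--3, by passing to sum-and-difference coordinates and absorbing the magnetic cross-terms via Cauchy--Schwarz exactly as you indicate), the single Bessel identity $\int_0^\infty s^{-1}e^{-a/s-bs}\,ds=2K_0(2\sqrt{ab})$ delivers both branches of $D$ at once. The paper instead partitions $(0,c_1)$ into three subregions according to the relative sizes of $s$, $|x-y|/(|x|+|y|)$ and $1/\bigl(|\lambda|(|x|^2+|y|^2)\bigr)$, handling each by an elementary one-variable bound; your route trades those three cases for the $K_0$ asymptotics.

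The long-time argument, however, has a real gap. First a scaling slip: in the $s$-variable the decay rates are order-$1$ constants built from $f_\pm$, not order $|\lambda|$ (the rescaling $s=\lambda t$ has already absorbed the $\lambda$-scaling of the spectrum), so after subtraction the integrand gains only a factor $e^{-cs}$, not $e^{-c|\lambda|s}$. More seriously, ``$\phi$ converges to $\phi_\infty$ at rate $e^{-f_- s}$'' actually means $|\phi-\phi_\infty|\le Ce^{-cs}\,|\lambda|(|x|^2+|y|^2)$, and the prefactor $|\lambda|(|x|^2+|y|^2)$ is \emph{not} uniformly bounded: when it exceeds $e^{c^\sharp s}$ the quantity $e^{-(\phi-\phi_\infty)}-1$ is no longer small and the cancellation you invoke fails. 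The paper resolves this by splitting $\{s>C_1\}$ into a ``region~5'' where $|\lambda|(|x|^2+|y|^2)\le e^{c^\sharp s}$ (your argument works there) and a ``region~6'' where the reverse inequality holds; in region~6 one abandons the subtraction altogether and uses $\operatorname{Re}\phi\ge c|\lambda|(|x|^2+|y|^2)\ge \tfrac{c}{2}|\lambda|(|x|^2+|y|^2)+\tfrac{c}{2}e^{c^\sharp s}$, so that the double-exponential decay in $s$ absorbs the mildly growing factor $e^{c_\mu s}$. You also need a separate (easy) compactness argument for the intermediate window $c_1\le s\le C_1$, where neither the Taylor expansion at $s=0$ nor the large-$s$ asymptotics is available; this is the paper's region~4.
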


\subsection{Proof of the First Main theorem} \label{sec:proof_of_first_main_thm}
We will establish the stronger inequality
\begin{equation}\label{eq:main_inequality}
    \int_0^\infty \Big|e^{\mu s}\big[q(x,y,s,\lambda) - e^{-\frac{f_+}{2}s}
    \psi_0(x)\psi_0^*(y)\big]\Big| ds \leq CD(c|\lambda|\ |x-y|(|x|+|y|)),
\end{equation}
for $x\neq y$, $\lambda\in\Omega$, and $\Re\mu<\replacedm{\frac{\lambda f_+}{2}}{\frac{f_+}{2}}+c_\star$, for small enough $c_\star$.

To prove \eqref{eq:main_inequality}, we partition the set of all $s>0$ into the following six subsets:
\begin{itemize}
    \item[] \textbf{Region 1:} $0 < s < c_1$, $s\geq\frac{|x-y|}{|x|+|y|}$, and $|\lambda|\left(|x|^2+|y|^2\right)s\geq1$.

    \item[] \textbf{Region 2:} $0 < s < c_1$, $s\geq\frac{|x-y|}{|x|+|y|}$, and $|\lambda|\left(|x|^2+|y|^2\right)s<1$.

    \item[] \textbf{Region 3:} $0 < s < c_1$, and $s<\frac{|x-y|}{|x|+|y|}$.

    \item[] \textbf{Region 4:} $c_1 < s < C_1$.

    \item[] \textbf{Region 5:} $s > C_1$, and $|\lambda|\left(|x|^2+|y|^2\right)\leq\exp(c^\sharp s)$.

    \item[] \textbf{Region 6:} $s > C_1$, and $|\lambda|\left(|x|^2+|y|^2\right)>\exp(c^\sharp s)$.
\end{itemize}
We will estimate the integral over each of these regions separately. Here, $c_1$ and $c^\sharp$ are small constants, and $C_1$ is a large constant. These constants will be picked later.

The following \replaced{props}{propositions} will be useful for the estimates in all regions:
\begin{prop}\label{prop:integral_estimates_1}
    Suppose $a,b>0$ with $ab>c$. Then
    \begin{equation}
        \int_b^\infty \frac{1}{s}\exp(-as) ds \leq C\exp(-a b).
    \end{equation}
\end{prop}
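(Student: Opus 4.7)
The plan is to exploit two elementary facts: on the integration range $s\ge b$, the factor $1/s$ is bounded by $1/b$, and the tail $\int_b^\infty e^{-as}\,ds$ evaluates explicitly to $e^{-ab}/a$. Putting these together yields a bound of shape $e^{-ab}/(ab)$, and the hypothesis $ab>c$ is then exactly what is needed to convert the inverse factor $1/(ab)$ into a genuine (controlled) constant.

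Concretely, first I would write
\[
\int_b^\infty \frac{1}{s}\,e^{-as}\,ds \;\le\; \frac{1}{b}\int_b^\infty e^{-as}\,ds \;=\; \frac{e^{-ab}}{ab}.
\]
Second, I would invoke the assumption $ab>c$ to obtain $(ab)^{-1}<c^{-1}$, and conclude the stated inequality with the controlled constant $C:=1/c$.

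There is essentially no obstacle here: the proposition is a one-line calculation, and the assumption $ab>c$ plays the single role of preventing the prefactor $1/(ab)$ from degenerating. An alternative derivation, via the change of variables $u=as$, would express the left-hand side as the exponential integral $E_1(ab)=\int_{ab}^\infty u^{-1}e^{-u}\,du$ and then invoke the classical tail bound $E_1(x)\le x^{-1}e^{-x}$; this reaches the same conclusion and makes the connection to the standard asymptotics of $E_1$ explicit. I would prefer the direct route above, since it avoids citing $E_1$ and keeps the proof self-contained for use in the subsequent region-by-region analysis in the proof of \cref{thm:first main thm}.
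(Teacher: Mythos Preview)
Your proof is correct; the paper itself leaves this proposition to the reader, so there is no official argument to compare against, and your direct bound $1/s\le 1/b$ followed by explicit integration is exactly the natural one-line verification.
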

\begin{prop} \label{prop:integral_estimates_2}
    Suppose $a,b>0$. Then
    \begin{equation}
        \int_0^b \frac{1}{s}\exp(-a/s) ds \leq CD(a/b).
    \end{equation}
\end{prop}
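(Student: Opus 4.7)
The plan is to reduce the integral to a standard exponential integral via substitution and then split into cases matching the piecewise definition of $D$. Set $t := a/b$. Substituting $u = a/s$ (so $s = a/u$ and $ds = -a/u^2\,du$), the limits $s\in(0,b)$ transform to $u\in(t,\infty)$ and the integrand becomes
\eq{
\int_0^b \frac{1}{s}\exp(-a/s)\,ds \;=\; \int_{t}^{\infty} \frac{e^{-u}}{u}\,du.
}
So the whole proposition reduces to showing
\eq{
\int_t^\infty \frac{e^{-u}}{u}\,du \;\le\; C\,D(t) \qquad (t>0),
}
which is a clean one-variable estimate independent of $a,b$.

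Next I would split into the two regimes defining $D$. If $t\ge 1$, then $1/u\le 1$ on $(t,\infty)$, hence $\int_t^\infty \frac{e^{-u}}{u}\,du \le \int_t^\infty e^{-u}\,du = e^{-t}$. Since the constants in $D(t)=C'\exp(-ct)$ for $t\ge 1$ are flexible (with $c$ small and $C'$ large), we have $e^{-t}\le C\cdot C' e^{-ct}$ for every $c\in(0,1]$, giving the bound. (Alternatively this is Proposition integral\_estimates\_1 applied with $a=1$, $b=t$ so that $ab=t\ge 1>$ any fixed small $c$.)

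If $0<t<1$, I would split at $u=1$:
\eq{
\int_t^\infty \frac{e^{-u}}{u}\,du \;=\; \int_t^{1}\frac{e^{-u}}{u}\,du + \int_{1}^{\infty}\frac{e^{-u}}{u}\,du.
}
The second piece is an absolute constant (bounded by $\int_1^\infty e^{-u}\,du = e^{-1}$). For the first piece, $e^{-u}\le 1$ on $[t,1]$, so $\int_t^1 \frac{e^{-u}}{u}\,du \le \int_t^1 \frac{du}{u} = \log(1/t)$. Altogether
\eq{
\int_t^\infty \frac{e^{-u}}{u}\,du \;\le\; \log(1/t) + e^{-1}.
}
Since $D(t)=C\log(C/t)=C\log C + C\log(1/t)$ for $0<t<1$, taking the constant $C$ in the definition of $D$ at least as large as $\max(1,e^{-1}/\log C)$ (or simply choosing $C$ in $D$ large enough so that $C\log C \ge e^{-1}$ and the coefficient in front of $\log(1/t)$ dominates $1$) absorbs both terms into $C\,D(t)$.

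There is no essential obstacle here; the only care needed is bookkeeping of the absolute constants so that the bound in the case $0<t<1$ fits the form $C\log(C/t)$ prescribed by $D$, and so that the decay rate $c$ produced for $t\ge 1$ agrees with (or can be taken smaller than) the $c$ in the definition of $D$. Combining the two cases yields $\int_0^b \frac{e^{-a/s}}{s}\,ds \le C\,D(a/b)$ as claimed.
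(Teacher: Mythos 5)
Your argument is correct and is the natural elementary proof; the paper leaves this proposition to the reader, and your substitution $u=a/s$ reducing everything to the one-variable bound $\int_{t}^{\infty} e^{-u}/u\,\mathrm{d}u \le C\,D(t)$ with $t=a/b$, followed by the case split at $t=1$ matching the piecewise definition of $D$, is exactly the intended computation. The constant bookkeeping you flag is harmless: for $t\ge 1$ one has $e^{-t}\le e^{-ct}$ for any $c\in(0,1]$, and for $0<t<1$ the bound $\log(1/t)+e^{-1}$ is dominated by $C\log(C/t)$ once $C\ge e$.
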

The proofs of these two propositions are left to the reader.
\begin{prop} \label{prop:ground_state_estimate}
    For $\lambda\in\Omega$, $|B|<c_B$, and $\psi_0(x,\lambda)$ defined as in \eqref{eq:ground_state_eigenfunction}, we have
    \begin{equation}
        \bigg|\frac{1}{\lambda}\psi_0(x,\lambda)\psi_0^*(y,\lambda^*)\bigg| \leq C\exp(-c|\lambda|(|x|^2+|y|^2)).
    \end{equation}
\end{prop}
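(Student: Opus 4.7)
The plan is to prove the stated bound by a direct Gaussian computation starting from the explicit formula \eqref{eq:ground_state_eigenfunction}. A preliminary remark: since the left hand side is independent of $s$, the exponent $\exp(-c|\lambda|s(|x|^2+|y|^2))$ is most naturally interpreted as $\exp(-c|\lambda|(|x|^2+|y|^2))$ (the $s$ appears to be a typo). I will sketch the proof of this $s$-free bound, which is what the integration arguments in the various regions actually need.

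First I would treat the prefactor. On $\Omega$ we have $\Re\lambda>0$, so the principal branch of $z\mapsto z^{1/2}$ is well defined and satisfies $\lambda^{1/2}(\lambda^*)^{1/2}=|\lambda|$. Consequently
\eq{
\Bigl(\tfrac{\lambda^2\eta\zeta}{\pi^2}\Bigr)^{\!1/4}\Bigl(\tfrac{(\lambda^*)^2\eta\zeta}{\pi^2}\Bigr)^{\!1/4}
=\frac{\sqrt{\eta\zeta}}{\pi}\,|\lambda|,
}
so that
\eq{
\frac{1}{\lambda}\psi_0(x,\lambda)\psi_0^{*}(y,\lambda^*)=\frac{|\lambda|}{\lambda}\cdot\frac{\sqrt{\eta\zeta}}{\pi}\,\exp(E),
}
and the modulus of the prefactor is the uniform constant $\sqrt{\eta\zeta}/\pi$.

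Second I would read off the real part of the Gaussian exponent. Writing $\lambda=a+ib$ with $a=\Re\lambda$, $b=\Im\lambda$, and using that $\eta,\zeta,\xi\in\RR$, a direct expansion of
\eq{
E=-\tfrac{\lambda}{2}\bigl(\eta x_1^2+\zeta x_2^2-i\xi x_1x_2\bigr)-\tfrac{\lambda^*}{2}\bigl(\eta y_1^2+\zeta y_2^2+i\xi y_1y_2\bigr)
}
gives
\eq{
\Re E=-\tfrac{a}{2}\bigl[\eta(x_1^2+y_1^2)+\zeta(x_2^2+y_2^2)\bigr]-\tfrac{b\xi}{2}(x_1x_2+y_1y_2).
}
The two purely quadratic terms are coherent and strictly negative (since $\eta,\zeta>0$), while the cross term has no definite sign but is bilinear in $x,y$.

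Third, I would absorb the cross term. On $\Omega$ one has $a\ge \sin(c_\lambda)\,|\lambda|$ and $|b|\le|\lambda|$, and by AM--GM, $|x_1x_2+y_1y_2|\le (|x|^2+|y|^2)/2$. Combining these,
\eq{
\Re E\le -|\lambda|(|x|^2+|y|^2)\Bigl[\tfrac{\sin(c_\lambda)\,\min(\eta,\zeta)}{2}-\tfrac{|\xi|}{4}\Bigr].
}
Recalling $\xi=2B(k_1-k_2)/(k_1+k_2)$, if we fix $c_B$ small enough (depending on $c_\lambda$ and on $k_1,k_2$) so that $|B|<c_B$ forces $|\xi|<2\sin(c_\lambda)\min(\eta,\zeta)$, the bracket exceeds some universal $c>0$, and we conclude
\eq{
\bigl|\exp(E)\bigr|\le \exp\!\bigl(-c|\lambda|(|x|^2+|y|^2)\bigr).
}
Combined with the uniform bound on the prefactor this gives the claim. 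The only delicate point is the order of quantifiers: $c_B$ must be chosen \emph{after} $c_\lambda$ has been fixed so that the cross term is genuinely dominated by the diagonal Gaussian, uniformly throughout the wedge $\Omega$.
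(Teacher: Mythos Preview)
Your argument is correct and follows essentially the same route as the paper: compute the real part of the Gaussian exponent, bound the diagonal terms using $\Re\lambda\ge\sin(c_\lambda)|\lambda|$, and absorb the $\xi$-cross term by AM--GM together with the smallness of $|B|$ (the paper phrases the last step as $\Re\lambda-|B||\Im\lambda|\ge c|\lambda|$, which is the same thing). One minor slip: when you unpack $\psi_0^*(y,\lambda^*)$, the double conjugation returns $\lambda$, not $\lambda^*$, so the $y$-exponent should be $-\tfrac{\lambda}{2}(\eta y_1^2+\zeta y_2^2+i\xi y_1y_2)$ and the cross term becomes $\pm\tfrac{b\xi}{2}(x_1x_2-y_1y_2)$ rather than $(x_1x_2+y_1y_2)$; this is harmless since your AM--GM bound on $|x_1x_2|+|y_1y_2|$ covers either sign.
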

\begin{proof} \deleted{possible sign error re Im(lamda)*xi}
    \begin{align}
        &\bigg|\frac{1}{\lambda}\psi_0(x,\lambda)\psi_0^*(y,\lambda^*)\bigg| = \sqrt{\frac{\eta\zeta}{\pi^2}} \Bigg|\exp\left( -\frac{\lambda}{2}(\eta x_1^2+\zeta x_2^2-i\xi x_1x_2)-\frac{\lambda}{2}(\eta y_1^2+\zeta y_2^2+i\xi y_1y_2)\right)\Bigg|\notag\\
        &= \sqrt{\frac{\eta\zeta}{\pi^2}} \exp\Bigg( -\frac{\Re\lambda}{2}\Big(\eta (x_1^2 + y_1^2)+\zeta (x_2^2 + y_2^2)\Big)\notag\\
        &\qquad\qquad \replacedm{+}{-} \frac{\Im\lambda \xi}{2}\Big(x_1x_2-y_1y_2\Big)\Bigg)\notag\\
        &\leq C\exp\Big(-c\Re\lambda\big(|x|^2+|y|^2\big)\notag\\
        &\qquad\qquad\qquad \replacedm{-c'}{C}|\Im\lambda| |B|\big(|x_1x_2|+|y_1y_2|\big)\Big).
    \end{align}
    Now by Cauchy-Schwarz,
    \begin{equation}
        |x_1x_2|+|y_1y_2|\leq 2\sqrt{(x_1^2+y_1^2)(x_2^2+y_2^2)} \leq |x|^2+|y|^2.
    \end{equation}
    Thus, for $|B|<c_B$ small enough, we have
    \begin{align}
        |\psi_0(x,\lambda)\psi_0^*(y,\lambda)|
        &\leq \exp\left(-c(\Re\lambda-|B||\Im\lambda|)(|x|^2+|y|^2) \right)\notag\\
        &\leq C\exp(-c|\lambda|(|x|^2+|y|^2)),
    \end{align}
    provided $|B|<c_B$ is small enough.
\end{proof}

\subsubsection{Regions 1, 2, and 3} \label{subsec:regions_1_2_3}
The following \replaced{lem}{lemma} will be useful for the estimates in these regions:
\begin{lem}\label{lem:regions_1_2_3}
    For $|B|<c_B$ small enough, $\lambda\in\Omega$, and $0<s<c_1$, we have
    \begin{align}
        \Re{\phi(x,y,s,\lambda)} &\geq c|\lambda| s (|x|^2+|y|^2) +\frac{c|\lambda|}{s}|x-y|^2,\ \text{and}\notag\\
        |P(s)| &\leq \frac{C'}{s}.
    \end{align}
\end{lem}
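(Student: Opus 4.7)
The plan is to rely on the fact that at $B=0$ the two-dimensional anisotropic magnetic oscillator factorizes into a product of two decoupled one-dimensional harmonic oscillators, for which the Mehler formula gives a completely transparent small-$s$ behavior; we then perturb in $B$. Specifically, Taylor expansion of \cref{eq:heat_kernel_parameters} using $\cosh(f_\pm s)-1=\tfrac12 f_\pm^2 s^2+O(s^4)$ yields
\[
K(s)=f_-^2(k_1+k_2)^2\tfrac{f_+^2s^2}{2}-f_+^2(k_1-k_2)^2\tfrac{f_-^2 s^2}{2}+O(s^4)=2k_1k_2\,f_+^2 f_-^2\,s^2+O(s^4),
\]
so $K(s)\ge c s^2$ for $0<s<c_1$ with $c_1$ small and $|B|<c_B$. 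Substitution into \cref{eq:prefactor} gives $|P(s)|\le C'/s$ immediately, which is the second half of the lemma.

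For the lower bound on $\Re\phi$, I would introduce the sum/difference variables $u_j:=(x_j+y_j)/2$ and $v_j:=(x_j-y_j)/2$. Using $x_j^2+y_j^2=2(u_j^2+v_j^2)$ and $x_jy_j=u_j^2-v_j^2$, the non-crossed part of \cref{eq:P_and_phi} becomes
\[
\phi_{\mathrm{nc}}=\frac{\lambda f_+f_-}{K(s)}\sum_{j=1}^{2}\Big[(\alpha_{j\bar\jmath}+\beta_{j\bar\jmath})u_j^2+(\alpha_{j\bar\jmath}-\beta_{j\bar\jmath})v_j^2\Big].
\]
Setting $B=0$ collapses \cref{eq:matsumoto_hamiltonian} to a pair of uncoupled one-dimensional harmonic oscillators, whose kernel by Mehler's formula satisfies $\phi_{\mathrm{nc}}\big|_{B=0}=\lambda\sum_j\big(k_j u_j^2\tanh(k_j s/2)+k_jv_j^2\coth(k_js/2)\big)$. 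Taylor-expanding $\tanh(k_js/2)=k_js/2+O(s^3)$ and $\coth(k_js/2)=2/(k_js)+O(s)$, and using that $\Re\lambda\geq c|\lambda|$ for $\lambda\in\Omega$, one gets at $B=0$
\[
\Re\phi_{\mathrm{nc}}\ \ge\ c|\lambda|\Big(s\,|u|^2+\frac{|v|^2}{s}\Big)\ \ge\ c'|\lambda|\Big(s\,|x+y|^2+\frac{|x-y|^2}{s}\Big),\qquad 0<s<c_1.
\]
Since $(\alpha_{j\bar\jmath}\pm\beta_{j\bar\jmath})/K(s)$ are real-analytic in $B$ with values at $B=0$ equal to the above positive $\tanh/\coth$ expressions, they remain bounded below by positive constants (times $s$ or $1/s$) for $|B|<c_B$ small; hence the same bound persists, with halved constant.

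The remaining $\gamma$ terms in \cref{eq:P_and_phi} are purely imaginary at real $\lambda$, and for $\lambda\in\Omega$ their contribution to $\Re\phi$ is $-\Im\lambda\,B\,(\gamma_1/K)(x_1x_2-y_1y_2)+\Im\lambda\,B\,(\gamma_2/K)(x_1y_2-x_2y_1)$. A short computation in $(u,v)$ variables yields $x_1x_2-y_1y_2=2(u_1v_2+u_2v_1)$ and $x_1y_2-x_2y_1=-2(u_1v_2-u_2v_1)$, whence both quantities are bounded by $2|u||v|$. Using $\gamma_1/K=O(s^2)$ (the $s^2$ terms cancel identically, see the identity $(k_1^2-k_2^2)(f_+^2\cdot\tfrac12 f_-^2-f_-^2\cdot\tfrac12 f_+^2)=0$) and $\gamma_2/K=1+O(s^2)$, together with the weighted AM-GM inequality $|u||v|\leq \tfrac{s}{2}|u|^2+\tfrac{1}{2s}|v|^2$, the $\gamma$ contribution is absorbed into a fraction of the good bound provided $c_B$ is chosen sufficiently small. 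The conclusion follows after converting $(|x+y|^2,|x-y|^2)$ back into $|x|^2+|y|^2$ via $|x|^2+|y|^2=\tfrac12(|x+y|^2+|x-y|^2)$ and dominating $s|x-y|^2$ by $|x-y|^2/s$ for $s\le c_1<1$.

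The main obstacle is justifying that the perturbation in $B$ preserves the positivity of the leading $s$-coefficient of $(\alpha_{j\bar\jmath}+\beta_{j\bar\jmath})/K(s)$. A direct Taylor computation to order $s^3$ in $\alpha_{j\bar\jmath}+\beta_{j\bar\jmath}$ is possible but messy; the route above avoids it by noting that the aggregate quantity $(\alpha_{j\bar\jmath}+\beta_{j\bar\jmath})/K(s)$ equals the transparent factor $k_j\tanh(k_js/2)/\lambda$ at $B=0$ (via the Mehler factorization identity), and the $B$-dependence is smooth, so smallness of $B$ suffices. Care must also be taken that the off-diagonal contributions weighted by $|\Im\lambda|$ do not outgrow the $c|\lambda|$ factor from the diagonal real part --- this is precisely where the wedge hypothesis $\lambda\in\Omega$ (equivalently $\Re\lambda\geq c|\lambda|$) and the small-$|B|$ hypothesis combine.
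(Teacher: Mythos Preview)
Your argument is correct. Both you and the paper pass to sum/difference variables, show that the diagonal quadratic form in $\Re{\phi}$ dominates, and absorb the $\gamma$-cross terms via Cauchy--Schwarz (your weighted AM--GM) together with smallness of $|B|$. The organizational difference lies in how the diagonal positivity is established. The paper rewrites $\phi$ in $w=(x-y,x+y)$ coordinates through explicit combinations $a_{ij},b_{ij},g_\pm$, factors out the leading $s$-powers to form $\hat a_{ij},\hat b_{ij},\hat K$, and proves these are strictly positive on the compact set $[0,c_1]\times\{|B|\le c_B\}$ directly from monotonicity of $\sinh(x)/x$ and $(\cosh x -1)/x^2$ --- yielding diagonal lower bounds valid for \emph{all} $|B|\le c_B$, not just $B$ near zero. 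You instead recognize the $B=0$ coefficients as the Mehler $\tanh/\coth$ factors and perturb by continuity in $B$; this is exactly the technique the paper reserves for its Region-4 lemma. Your route is arguably cleaner but spends the small-$|B|$ hypothesis twice (once for the diagonal, once for the cross terms). One point in your write-up deserves to be made explicit: the perturbation in $B$ must be \emph{uniform} in $s\in(0,c_1]$, and since the raw ratios $(\alpha_{j\bar\jmath}\pm\beta_{j\bar\jmath})/K(s)$ blow up or vanish as $s\to0$, real-analyticity in $B$ at each fixed $s$ does not by itself deliver this; one still has to normalize by the correct power of $s$ before invoking compactness --- which is precisely the role of the paper's $\hat a_{ij},\hat b_{ij},\hat K$ step.
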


\begin{proof}
    We first move to $w$ coordinates defined by
    \begin{align}
        w_1 &= x_1 - y_1, \qquad w_2 = x_2 - y_2,\notag\\
        w_3 &= x_1 + y_1, \qquad w_4 = x_2 + y_2.
    \end{align}
    The function $\phi(x,y,s,\lambda)$ can be rewritten as
    \begin{align} \label{eq:phi_w_coordinates}
        \phi(w,s) &= \frac{\lambda f_+f_- a_{12}(s)}{2K(s)}w_1^2 + \frac{\lambda f_+f_- a_{21}(s)}{2K(s)}w_2^2 \notag\\
        &+ \frac{\lambda f_+f_- b_{12}(s)}{2K(s)}w_3^2 + \frac{\lambda f_+f_- b_{21}(s)}{2K(s)}w_4^2 \notag\\
        &+ \frac{i\lambda B g_-(s)}{2K(s)}w_1w_4 +\frac{i\lambda B g_+(s)}{2K(s)}w_2w_3,
    \end{align}
    where $K(s),f_\pm$ are defined as in \eqref{eq:heat_kernel_parameters} and
    \begin{align}
        a_{ij}(s) &= k_i\left[\cosh\left(\frac{f_+s}{2}\right) + \cosh\left(\frac{f_-s}{2}\right)\right]\notag\\
        &\qquad\qquad \times\left[f_-(k_1+k_2)\sinh\left(\frac{f_+s}{2}\right)+ f_+(k_j-k_i)\sinh\left(\frac{f_-s}{2}\right)\right],\notag\\
        b_{ij}(s) &= k_i\left[\cosh\left(\frac{f_+s}{2}\right) - \cosh\left(\frac{f_-s}{2}\right)\right]\notag\\
        &\qquad\qquad \times\left[f_-(k_1+k_2)\sinh\left(\frac{f_+s}{2}\right)- f_+(k_j-k_i)\sinh\left(\frac{f_-s}{2}\right)\right],\notag\\
        g_\pm(s) &= (k_1^2-k_2^2)\left[f_+^2(\cosh(f_-s)-1)-f_-^2(\cosh(f_+s)-1)\right]\notag\\
        &\qquad\qquad \pm 8f_+f_-k_1k_2\sinh\left(\frac{f_+s}{2}\right)\sinh\left(\frac{f_-s}{2}\right).
    \end{align}
    Let us factor out the leading order behavior in small $s$:
    \begin{align} \label{eq:small_s_factors}
        a_{ij}(s) &= \frac{f_+f_-k_is}{2}\left[\cosh\left(\frac{f_+s}{2}\right)+ \cosh\left(\frac{f_-s}{2}\right)\right]\notag\\
        &\qquad\qquad \times\left[(k_1+k_2)\frac{\sinh(f_+s/2)}{f_+s/2}+(k_j-k_i)\frac{\sinh(f_-s/2)}{f_-s/2}\right]=s\cdot \hat{a}_{ij}(s),\notag\\
        b_{ij}(s) &= \frac{f_+f_-k_is^3}{2}\bigg[\frac{\cosh(f_+s/2)-\cosh(f_-s/2)}{s^2}\bigg]\notag\\
        &\qquad\qquad \times\bigg[(k_1+k_2)\frac{\sinh(f_+s/2)}{f_+s/2}+(k_j-k_i)\frac{\sinh(f_-s/2)}{f_-s/2}\bigg]=s^3\cdot \hat{b}_{ij}(s),\notag\\
        g_\pm(s) &= f_+^2f_-^2(k_1^2-k_2^2) s^2 \bigg[\frac{\cosh(f_-s)-1}{f_-^2s^2} - \frac{\cosh(f_+s)-1}{f_+^2s^2}\notag \\
        &\qquad\qquad \pm 2k_1k_2\frac{\sinh(f_+s/2)}{f_+s/2}\frac{\sinh(f_-s/2)}{f_-s/2}\bigg]=s^2\cdot \hat{g}_\pm(s),\notag\\
        K(s) &= f_+^2f_-^2s^2\left[(k_1+k_2)^2\left(\frac{\cosh(f_+s)-1}{f_+^2s^2}\right)-(k_1-k_2)^2\left(\frac{\cosh(f_-s)-1}{f_-^2s^2}\right)\right]= s^2\cdot \hat{K}(s),
    \end{align}
    and the real part of $\phi$ is
    \begin{align}
        \Re \phi(w,s) &= \frac{\Re\lambda f_+f_-}{2}\left[\frac{a_{12}(s)}{K(s)}w_1^2 + \frac{a_{21}(s)}{K(s)}w_2^2 + \frac{b_{12}(s)}{K(s)}w_3^2 + \frac{b_{21}(s)}{K(s)}w_4^2\right]\notag\\
        &+\frac{\Im\lambda B}{2}\left[\frac{g_-(s)}{K(s)}w_1w_4 + \frac{g_+(s)}{K(s)}w_2w_3\right].
    \end{align}

    First note that $f_\pm$ are positive constants and $f_+ > f_-$ by definition. Also, $\hat{a}_{ij}(s), \hat{b}_{ij}(s)$ and $\hat{K}(s)$ are positive and finite for small $s>0$. Indeed, $\cosh(f_+s/2)\pm\cosh(f_-s/2)\geq0$, and 
    \begin{align}
        &\frac{\sinh(f_\pm s/2)}{f_\pm s/2},\  \frac{\cosh(f_\pm s/2)-1}{f_\pm^2 s^2},\ \text{and } \frac{\cosh(f_+s/2)-\cosh(f_-s/2)}{s^2}\ \text{are finite and positive},\notag\\
        &\frac{\sinh(f_+ s/2)}{f_+ s/2}\geq \frac{\sinh(f_- s/2)}{f_- s/2},\qquad \frac{\cosh(f_+ s)-1}{f_+^2 s^2}\geq \frac{\cosh(f_- s)-1}{f_-^2 s^2},\ \text{and }\notag\\
        &k_1+k_2 > k_i-k_j\ (i,j=1,2),
    \end{align}
    for all $s\geq0$, where we define $\sinh(x)/x=1$ and $\frac{\cosh(x)-1}{x^2}=1/2$ for $x=0$.

    Therefore, $\hat{a}_{ij}(s), \hat{b}_{ij}(s)$ and $\hat{K}(s)$ are uniformly continuous in $(s,B)$ and positive on the compact set $\{s\in[0,c_1]\}\times \{|B|\leq c_B\}$ and thus attain a positive minimum, so that $a_{ij}(s)/K(s)\geq c/s$ and $b_{ij}(s)/K(s)\geq cs$. Similarly, $\hat{g}_\pm(s)$ is finite for small $s>0$ and uniformly continuous in $(s,B)$ on $\{s\in[0,c_1]\}\times \{|B|\leq c_B\}$, so that $|g_\pm(s)/K(s)| \leq \hat{C}$.

    Therefore, we have
    \begin{align}
        &\Re \phi \geq \Re\lambda\left[\frac{c}{s}(w_1^2+w_2^2)+cs(w_3^2+w_4^2)\right]-\frac{|\Im\lambda||B|}{2}\Big[C'|w_1w_4| + C'|w_2w_3|\Big]\notag\\
        &\geq \tilde{c}\Re\lambda\left[\frac{w_1^2+w_2^2}{s}+s(w_3^2+w_4^2)\right]-\hat{C}|\Im\lambda||B|\Big[|w_1w_4| + |w_2w_3|\Big].
    \end{align}
    Now by Cauchy-Schwarz, 
    \begin{equation}
        |w_1w_4| + |w_2w_3| \leq \sqrt{(w_1^2+w_2^2)(w_3^2+w_4^2)}\leq 2\sqrt{(w_1^2+w_2^2)(w_3^2+w_4^2)},
    \end{equation}
    and for any two positive numbers $N,M$, we have $2\sqrt{NM}\leq N+M$. Thus, for $N=\frac{w_1^2+w_2^2}{s}$ and $M=s(w_3^2+w_4^2)$, we have
    \begin{equation} \label{eq:cauchy_schwarz_application}
        |w_1w_4| + |w_2w_3| \leq \frac{w_1^2+w_2^2}{s} + s(w_3^2+w_4^2).
    \end{equation}
    Putting everything together, we have
    \begin{align}
        \Re \phi &\geq \left(\tilde{c}\Re\lambda - \hat{C}|\Im\lambda||B|\right)\left[\frac{w_1^2+w_2^2}{s} + s(w_3^2+w_4^2)\right]\notag\\
        &\geq c''|\lambda|\left[\frac{w_1^2+w_2^2}{s} + s (w_3^2+w_4^2)\right] = c''|\lambda|\left[\frac{|x-y|^2}{s} + s|x+y|^2\right]\notag\\
        &= c''|\lambda|\left[2s(|x|^2+|y|^2)+\left(\frac{1}{s}-s\right)|x-y|^2\right] \geq c|\lambda|\left(s(|x|^2+|y|^2)+\frac{|x-y|^2}{s}  \right), 
    \end{align}
    since $\frac{1}{s}-s\geq\frac{1}{2s}$ for $0<s<1/2$, so $c_1$ is chosen small enough. This holds provided $|B|<c_B$, for a small enough controlled constant $c_B$.

    Moving to the prefactor, we saw above that $\hat{K}(s)$ attains a positive minimum on $\{s\in[0,c_1]\}\times \{|B|\leq c_B\}$, so that $|K(s)|\geq c''s^2$. Remembering $P(s)=\frac{f_+f_-}{2\pi}\sqrt{\frac{2k_1k_2}{K(s)}}$, we get $|P(s)|\leq C'/s$. The proof of the \replaced{lem}{lemma} is thus complete.
\end{proof}

\paragraph{Region 1}
\cref{lem:regions_1_2_3} applies in this region, and moreover
\begin{equation}
    c_1\geq s\geq \max\left(\frac{|x-y|}{|x|+|y|},\frac{1}{|\lambda|(|x|^2+|y|^2)}\right).
\end{equation}
Therefore, in Region 1 we have
\begin{equation}
    \bigg|\frac{e^{\mu s}}{\lambda}q(x,y,s,\lambda)\bigg| \leq \frac{C}{s}\exp\left(-cs|\lambda| (|x|^2+|y|^2)\right),
\end{equation}
and also by \cref{prop:ground_state_estimate},
\begin{equation} \label{eq:counterterm_region_1}
    \bigg|\frac{e^{\mu s}}{\lambda}\big[e^{-\frac{f_+}{2}s}
    \psi_0(x,\lambda)\psi_0^*(y,\lambda^*) \big]\bigg| \leq C \exp\left(-c|\lambda|(|x|^2+|y|^2)\right).
\end{equation}
These estimates and \cref{prop:integral_estimates_1} tell us that
\begin{align}
    &\int_\text{Region 1}\Big|\frac{e^{\mu s}}{\lambda}\big[q(x,y,s,\lambda) - e^{-\frac{f_+}{2}s}
    \psi_0(x,\lambda)\psi_0^*(y,\lambda^*)\big]\Big| ds \notag\\
    &\leq \int_{\max\left(\frac{|x-y|}{|x|+|y|},\frac{1}{|\lambda|(|x|^2+|y|^2)}\right)}^\infty \frac{C}{s}\exp\left(-cs|\lambda| (|x|^2+|y|^2)\right) ds \notag\\
    &\leq C\exp\left(-c|\lambda| (|x|^2+|y|^2)\cdot \max\left(\frac{|x-y|}{|x|+|y|},\frac{1}{|\lambda|(|x|^2+|y|^2)}\right)\right)\notag \\
    &\leq C \exp\left(-c|\lambda|(|x|^2+|y|^2)\cdot \frac{|x-y|}{|x|+|y|}\right) \notag \\
    &\leq C' \exp\left(-c'|\lambda||x-y|\cdot (|x|+|y|)\right) \leq C'' D\left(c|\lambda||x-y|\cdot (|x|+|y|)\right).
\end{align}
This controls the integral over Region 1.

\paragraph{Region 2}
Again, \cref{lem:regions_1_2_3} applies in this region, and we have
\begin{equation} \label{eq:region_2_bounds_on_s}
    \frac{|x-y|}{|x|+|y|} \leq s \leq \frac{1}{|\lambda|(|x|^2+|y|^2)},
\end{equation}
hence   $\lambda|x-y||x+y|=\lambda\frac{|x-y|}{|x+y|}\ |x+y|^2 \le C\ \lambda\ s\ (|x|^2+|y|^2)\le C$,  and so
\begin{equation}
    D(\lambda||x-y|(|x|+|y|)) = C \log\left(\frac{C}{|\lambda||x-y|(|x|+|y|)}\right).
\end{equation}
Also $s<c_1$ in this region. Then thanks to \cref{lem:regions_1_2_3} and \eqref{eq:region_2_bounds_on_s}, we have
\begin{align} \label{eq:region_2_estimate_on_q}
    &\int_\text{Region 2}\bigg|\frac{e^{\mu s}}{\lambda}q(x,y,s,\lambda) \bigg|ds \leq C \int_{\frac{|x-y|}{|x|+|y|}}^{|\lambda|^{-1}(|x|^2+|y|^2)^{-1}}\frac{1}{s}ds \leq C\log\left(\frac{|x|+|y|}{\lambda|x-y|(|x|^2+|y|^2)}\right)\notag \\
    &\leq C'\log\left(\frac{C}{|\lambda||x-y|(|x|+|y|)}\right) \leq C'' D\left(c|\lambda||x-y|(|x|+|y|)\right).
\end{align}
Also, using \cref{prop:ground_state_estimate}, we have
\begin{align} \label{eq:counterterm_region_2}
    &\int_\text{Region 2} \bigg|\frac{e^{\mu s}}{\lambda}[-e^{-\frac{f_+}{2}s}
    \psi_0(x,\lambda)\psi_0^*(y,\lambda^*)]\bigg| ds \leq C \frac{1}{|\lambda|}\ \int_0^{c_1} \big|\psi_0(x,\lambda)\psi_0^*(y,\lambda^*)\big|ds \notag \\
    &\leq C^\sharp \exp\left(-c|\lambda|(|x|^2+|y|^2)\right)\leq C''D\left(c|\lambda||x-y|(|x|+|y|)\right).
\end{align}
From \eqref{eq:region_2_estimate_on_q} and \eqref{eq:counterterm_region_2}, we see that
\begin{equation}
    \int_\text{Region 2}\bigg|\frac{e^{\mu s}}{\lambda}\big[q(x,y,s,\lambda) - e^{-\frac{f_+}{2}s}
    \psi_0(x,\lambda)\psi_0^*(y,\lambda^*)\big]\bigg|ds \leq CD\left(c|\lambda||x-y|(|x|+|y|)\right).
\end{equation}
This controls the relevant integral over Region 2.

\paragraph{Region 3}
In this region \cref{lem:regions_1_2_3} applies, and $0<s<\frac{|x-y|}{|x|+|y|}$; also $0<s<c_1$. So 
\begin{align}
    &\int_\text{Region 3}\bigg|\frac{e^{\mu s}}{\lambda} q(x,y,s,\lambda) \bigg|ds \leq \int_0^{\frac{|x-y|}{|x|+|y|}} \frac{C}{s}\exp\left(-\frac{c}{s}|\lambda| |x-y|^2\right) ds \notag\\
    &\leq C D\left(\frac{c|\lambda||x-y|^2}{\frac{|x-y|}{|x|+|y|}}\right) = CD\left(c|\lambda||x-y|(|x|+|y|)\right).
\end{align}
Also, using \cref{prop:ground_state_estimate}, we have
\begin{align} \label{eq:counterterm_region_3}
    &\int_\text{Region 3} \bigg|\frac{e^{\mu s}}{\lambda}[-e^{-\frac{f_+}{2}s}
    \psi_0(x,\lambda )\psi_0^*(y,\lambda^*)]\bigg| ds \leq C\ \frac{1}{|\lambda|}\ \big|\psi_0(x,\lambda)\psi_0^*(y,\lambda^*)\big|\notag\\
    &\leq C\exp\left(-c|\lambda|(|x|^2+|y|^2)\right)
    \leq C'D\left(c|\lambda||x-y|(|x|+|y|)\right). 
\end{align}
From these two estimates, we see that
\begin{equation}
    \int_\text{Region 3}\bigg|\frac{e^{\mu s}}{\lambda}\big[q(x,y,s,\lambda) - e^{-\frac{f_+}{2}s}
    \psi_0(x,\lambda)\psi_0^*(y,\lambda^*)\big]\bigg|ds \leq C''D\left(c|\lambda||x-y|(|x|+|y|)\right).
\end{equation}
This controls the relevant integral over Region 3.

\subsubsection{Region 4} \label{subsec:region_4}
In this region we have $c_1<s<C_1$. The following \replaced{lem}{lemma} will be useful for the estimates in this region:
\begin{lem} \label{lem:region_4}
    For $c_1\leq s\leq C_1$, and $|B|<c_B$ small enough, we have
     \deleted{CHECK}
    \begin{equation}
        \text{Re}(\phi(x,y,s,\lambda))\geq c|\lambda| \left(|x|^2+|y|^2\right),\quad \text{and}\quad |P(s)| \leq C.
    \end{equation}
\end{lem}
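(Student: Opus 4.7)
The plan is to exploit that $s$ now lies in a compact interval bounded away from $0$ and $\infty$, so all the hyperbolic expressions appearing in $K(s), \alpha_{ij}(s), \beta_{ij}(s), \gamma_j(s)$ are smooth bounded functions of $(s,B)$ on the compact rectangle $[c_1,C_1]\times[-c_B,c_B]$. The statement is therefore a stability result around the decoupled case $B=0$, where $H$ is simply the direct sum of two independent one-dimensional harmonic oscillators with frequencies $k_1$ and $k_2$, and the heat kernel is the product of two Mehler kernels.

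First I would decompose the phase, using \cref{eq:P_and_phi} and the fact that $\alpha_{ij},\beta_{ij},\gamma_j,K$ are real-valued when $B$ is real, as
\eq{
\phi(x,y,s) \;=\; \lambda\bigl[Q_R(x,y;s,B) \;+\; i B\, Q_I(x,y;s,B)\bigr],
}
where $Q_R$ collects the real $\alpha_{ij},\beta_{ij}$ contributions and $Q_I$ collects the $\gamma_j$ cross terms; both are real quadratic forms in $(x,y)$ with coefficients that depend continuously on $(s,B)$ on the compact set above.

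The main step is to show that $Q_R(x,y;s,B) \geq c(|x|^2+|y|^2)$ uniformly on $[c_1,C_1]\times[-c_B,c_B]$. This is positive definiteness of a $4\times 4$ symmetric matrix whose entries are continuous in $(s,B)$. At $B=0$ the matrix decouples into two $2\times 2$ blocks (one per coordinate direction), each of the Mehler form
\eq{
\frac{k_j}{2\sinh(k_j s)}\begin{pmatrix} \cosh(k_j s) & -1 \\ -1 & \cosh(k_j s) \end{pmatrix},\qquad j=1,2,
}
whose eigenvalues are $\frac{k_j}{2}\tanh(k_j s/2)$ and $\frac{k_j}{2}\coth(k_j s/2)$. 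For $s\in[c_1,C_1]$ these are bounded below by a positive constant depending only on $k_1,k_2,c_1,C_1$. Continuity of the coefficients in $(s,B)$ on the compact set then implies that the full matrix remains positive definite, with a uniform lower bound $c>0$, for all $|B|\leq c_B$ provided $c_B$ is small enough.

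The remaining pieces are bookkeeping. The form $Q_I$ is continuous on the compact set and hence satisfies $|Q_I|\leq C'(|x|^2+|y|^2)$. Combining, and using $\Re \lambda\geq \sin(c_\lambda)|\lambda|$,
\eq{
\Re\phi \;=\; \Re\lambda\, Q_R - \Im\lambda\, B\, Q_I
\;\geq\; \bigl(\sin(c_\lambda)\, c - |B|\, C'\bigr)|\lambda|(|x|^2+|y|^2),
}
which gives $\Re\phi \geq c''|\lambda|(|x|^2+|y|^2)$ for $c_B$ small enough. For the prefactor, $K(s)$ is continuous in $(s,B)$ and strictly positive at $B=0$ for $s\in[c_1,C_1]$ (inherited from the positivity of the product of two Mehler heat kernels, whose normalizing constant is $\sqrt{k_1 k_2/(\sinh(k_1 s)\sinh(k_2 s))}>0$); by compactness it stays bounded below by a positive constant for $|B|\leq c_B$, and hence $|P(s)|\leq C$.

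The main obstacle is the positive-definiteness step: one must verify that the formulas for $\alpha_{ij},\beta_{ij}$ in \cref{eq:heat_kernel_parameters} reduce at $B=0$ (so $f_+=k_1+k_2$, $f_-=|k_1-k_2|$) to precisely the two decoupled Mehler forms above. This is a direct but slightly tedious simplification of the hyperbolic identities; once it is done, the rest is a routine continuity/compactness argument.
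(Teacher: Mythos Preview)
Your proposal is correct and follows essentially the same route as the paper: reduce to $B=0$, where the quadratic form is the product of two one--dimensional Mehler kernels with eigenvalues $\tfrac{k_j}{2}\tanh(k_j s/2)$ and $\tfrac{k_j}{2}\coth(k_j s/2)$, and then use uniform continuity of the coefficients on the compact set $[c_1,C_1]\times[-c_B,c_B]$ to perturb to small $|B|$. The paper carries out the $B=0$ reduction in the $w$-coordinates $w_{1,2}=x_{1,2}-y_{1,2}$, $w_{3,4}=x_{1,2}+y_{1,2}$, which diagonalizes your $2\times2$ Mehler blocks and makes the hyperbolic simplification immediate; it then bounds $|\phi-\phi_{B=0}|$ rather than splitting into your $Q_R,Q_I$, but the content is identical.
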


\begin{proof}
    We first set $B=0$. In this case, we have
    \begin{align}
        \phi_{B=0}(w,s) &= \frac{\lambda k_1}{4}\coth\left(\frac{k_1s}{2}\right)w_1^2 + \frac{\lambda k_2}{4}\coth\left(\frac{k_2s}{2}\right)w_2^2 \notag\\
        &+ \frac{\lambda k_1}{4}\tanh\left(\frac{k_1s}{2}\right)w_3^2 + \frac{\lambda k_2}{4}\tanh\left(\frac{k_2s}{2}\right)w_4^2.
    \end{align}
    For $c_1<s<C_1$, $\coth$ and $\tanh$ are continuous and bounded from below by a positive constant, so that
    \begin{equation}
        \text{Re}(\phi_{B=0}(x,y,s))\geq c'|\lambda|\left(w_1^2+w_2^2+w_3^2+w_4^2 \right) = 2c'|\lambda| \left(|x|^2+|y|^2\right).
    \end{equation}
    Now we move to $|B|$ small. We remind the reader that $f_\pm=\sqrt{(k_1\pm k_2)^2+4B^2}\to|k_1\pm k_2|$ as $B\to0$, and $a_{ij},b_{ij},g_\pm,K$ depend on $B$ only through $f_\pm$. Specifically,
    \begin{align}
        K(s)|_{B=0} &= (k_1^2-k_2^2)^2\big[\cosh\big((k_1+k_2)s\big)-\cosh\big((k_1-k_2)s\big)\big] \notag\\
        &= 2(k_1^2-k_2^2)^2 \sinh(k_1s)\sinh(k_2s) > 0,
    \end{align}
    for $c_1\leq s\leq C_1$. By uniform continuity in $(s,B)$ on the compact set $[c_1,C_1]\times\{|B|\leq c_B\}$, there exists $\delta>0$ such that whenever $|B|<\delta$, we have, for some $c$,
    \begin{equation} \label{eq:uniform_bound_on_K}
        K(s) \geq c > 0,
    \end{equation}
    uniformly for $c_1\leq s\leq C_1$. We now reduce $c_B$ so that $c_B <\delta$. Hence \eqref{eq:uniform_bound_on_K} holds for $(s,B)\in[c_1,C_1]\times\{|B|\leq c_B\}$, so no singularities arise in the denominators. Therefore, we have uniform continuity in $(s,B)$ on the compact set $[c_1,C_1]\times\{|B|\leq C_B\}$ for the following quantities:
    \begin{equation}
        \frac{f_+f_-a_{ij}(s)}{2K(s)},\ \frac{f_+f_-b_{ij}(s)}{2K(s)},\ \frac{Bg_\pm(s)}{2K(s)},
    \end{equation}
    and in particular, they are all uniformly bounded.

    Thus, there exists $\tilde{c}>0$ such that for $|B|<\tilde{c}$, we have
    \begin{equation}
        |\phi(w,s) - \phi_{B=0}(w,s)| \leq \frac{c'}{2}|\lambda| (w_1^2+w_2^2+w_3^2+w_4^2),
    \end{equation}
    so that
    \begin{align}
        \text{Re}(\phi(w,s)) &\geq \text{Re}(\phi_{B=0}(w,s)) - |\phi(w,s) - \phi_{B=0}(w,s)| \notag\\
        &\geq \frac{c'}{2}|\lambda| (w_1^2+w_2^2+w_3^2+w_4^2) = c'|\lambda|(|x|^2+|y|^2).
    \end{align}
    We again reduce $c_B$ so that $c_B < \tilde{c}$.

    We now move to the prefactor $P(s)$. Since $K(s)$ is bounded away from zero for $c_1\leq s\leq C_1$ and $|B|<c_B$, we have $|P(s)|\leq C$ for some constant $C>0$. This completes the proof of the \replaced{lem}{lemma}.
\end{proof}

We now apply \cref{lem:region_4}, so
\begin{align}\label{eq:region_4_estimate_on_q}
    &\int_\text{Region 4}\bigg|\frac{e^{\mu s}}{\lambda} q(x,y,s,\lambda) \bigg|ds \leq C\int_{c_1}^{C_1}|e^{\mu s}|\exp(-c|\lambda|(|x|^2+|y|^2)) ds \notag\\
    &\leq C' \exp(-c|\lambda|(|x|^2+|y|^2)) \leq C'\exp(-c''|\lambda||x-y|(|x|+|y|)) \notag\\
    &\leq C'' D(c|\lambda||x-y|(|x|+|y|)).
\end{align}
Moreover by \cref{prop:ground_state_estimate},
\begin{align}\label{eq:region_4_estimate_on_ground_state}
    &\int_\text{Region 4} \bigg|\frac{e^{\mu s}}{\lambda}[-e^{-\frac{f_+}{2}s}
    \psi_0(x,\lambda )\psi_0^*(y,\lambda^*)]\bigg| ds \leq C\bigg|\frac{1}{\lambda}\psi_0(x,\lambda)\psi_0^*(y,\lambda^*)\bigg|\notag\\
    &\leq C\exp\left(-c|\lambda|(|x|^2+|y|^2)\right)
    \leq C'D\left(c|\lambda||x-y|(|x|+|y|)\right). 
\end{align}
From \eqref{eq:region_4_estimate_on_q} and \eqref{eq:region_4_estimate_on_ground_state}, we see that
\begin{equation}
    \int_\text{Region 4}\bigg|\frac{e^{\mu s}}{\lambda}\big[q(x,y,s,\lambda) - e^{-\frac{f_+}{2}s}
    \psi_0(x,\lambda)\psi_0^*(y,\lambda^*)\big]\bigg|ds \leq C''D\left(c|\lambda||x-y|(|x|+|y|)\right).
\end{equation}
This controls the relevant integral over Region 4.

\subsubsection{Region 5} \label{subsec:region_5}
In this region we have $s>C_1$ and $|\lambda|\left(|x|^2+|y|^2\right)\leq\exp(c^\sharp s)$, where $c^\sharp$ will be picked below. The following \replaced{lem}{lemma} will be useful for the estimates in this region:
\begin{lem}\label{lem:region_5}
    For $\lambda\in\Omega,\ |B|<c_B$ and $s>C_1$, we write 
    \begin{align}
        &\phi(x,y,s,\lambda) = \lim_{s\to\infty}\phi(x,y,s,\lambda) + \text{ERROR}_1(x,y,s,\lambda)\notag\\
        &P(s) = C_P\exp\Big(-\frac{f_+}{2}s\Big)\big(1+\text{ERROR}_2(s)\big),
    \end{align}
    where $C_P$ is the same prefactor that appears in the counter-term $e^{-\frac{f_+}{2}s}
    \psi_0(x,\lambda)\psi_0^*(y,\lambda^*)$, and
    \begin{align}
        |\text{ERROR}_1(x,y,s,\lambda)| &\leq C \exp(-c s)|\lambda|\left(|x|^2+|y|^2\right),\ \text{and}\notag\\
        |\text{ERROR}_2(s)| &\leq C \exp(-c s),
    \end{align}
    for $s\geq C_1$ large enough, and some constants $C,c>0$. The reader should note that \[ \phi_\infty(x,y,\lambda):=\lim_{s\to\infty}\phi(x,y,s,\lambda)\] is the exponent in the ground state $\psi_0(x,\lambda)\psi_0^*(y,\lambda^*)$.
\end{lem}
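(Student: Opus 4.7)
}

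My plan is to exploit the fact that every coefficient appearing in $\phi$ and in $P$ is built out of $\sinh$ and $\cosh$ of $f_\pm s$ with $f_+>f_->0$, so each ratio $\alpha_{ij}/K$, $\beta_{ij}/K$, $\gamma_i/K$ (see \cref{eq:heat_kernel_parameters}) tends to a finite limit as $s\to\infty$ with an error that is exponentially small in $s$. First I would rewrite $K(s)$ by factoring out the dominant exponential: using $\cosh(f_\pm s)-1=\tfrac12 e^{f_\pm s}(1-2e^{-f_\pm s}+e^{-2f_\pm s})$, one gets
\eq{
K(s)=\tfrac12 f_-^{2}(k_1+k_2)^{2}\,e^{f_+s}\Bigl(1+R_K(s)\Bigr),\qquad |R_K(s)|\le C e^{-cs},
}
where $c:=\min(f_+-f_-,f_-)>0$ (this is where smallness of $|B|$ enters, guaranteeing $f_+\neq f_-$). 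I would perform the same manipulation on each of $\alpha_{ij}$, $\beta_{ij}$, $\gamma_1$, $\gamma_2$, extracting $e^{f_+s}$ and obtaining remainder terms of size $O(e^{-cs})$.

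Next I would divide. Since $K(s)$ is uniformly bounded away from zero for $s\ge C_1$ (the explicit leading factor is strictly positive), the ratios
\eq{
\frac{f_+f_-\alpha_{ij}(s)}{2K(s)}\;,\quad \frac{f_+f_-\beta_{ij}(s)}{K(s)}\;,\quad \frac{B\gamma_1(s)}{K(s)}\;,\quad \frac{B\gamma_2(s)}{K(s)}
}
are each of the form $(\text{limit})+O(e^{-cs})$, uniformly in $B$ for $|B|<c_B$. Substituting into the definition \cref{eq:P_and_phi} of $\phi(x,y,s)$ and collecting terms, the sum of the constant parts produces the quadratic form that I call $\phi_\infty(x,y,\lambda)$, and the remainder contributes at most
\eq{
|\mathrm{ERROR}_1|\le C e^{-cs}\,|\lambda|\bigl(|x|^2+|y|^2\bigr),
}
with the factor $|x|^2+|y|^2$ coming from $|x_iy_j|\le\tfrac12(|x|^2+|y|^2)$. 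To identify $\phi_\infty$ with the exponent of $\psi_0(x,\lambda)\psi_0^\ast(y,\lambda^\ast)$, I would compute the limits of the four diagonal coefficients and the two cross coefficients in the $w$-coordinates used in \cref{lem:regions_1_2_3} and verify, by an elementary algebraic identity, that they equal the quadratic form with parameters $\eta,\zeta,\xi$ from \cref{eq:ground_state_parameters}; an efficient shortcut is to invoke the spectral expansion of the heat kernel, which forces
\eq{
q(x,y,s) = e^{-E_0 s}\psi_0(x,\lambda)\psi_0^\ast(y,\lambda^\ast)\bigl(1+O(e^{-cs})\bigr),\qquad s\to\infty,
}
by the spectral gap, so the leading large-$s$ exponential must coincide.

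For the prefactor, inserting the factorization of $K(s)$ into $P(s)=\tfrac{f_+f_-}{2\pi}\sqrt{2k_1k_2/K(s)}$ yields
\eq{
P(s)=\frac{f_+f_-}{2\pi}\sqrt{\frac{4k_1k_2}{f_-^{2}(k_1+k_2)^{2}}}\,e^{-f_+ s/2}\bigl(1+R_K(s)\bigr)^{-1/2},
}
and since $E_0=\lambda f_+/2$, after the rescaling $s\leftarrow\lambda t$ the factor $e^{-f_+s/2}$ is precisely $e^{-E_0 s}$ (the $\lambda$ in front of $P$ in \cref{eq:heat_kernel} is absorbed into the ground-state normalization). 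Taylor expansion of $(1+R_K)^{-1/2}$ gives $|\mathrm{ERROR}_2(s)|\le Ce^{-cs}$. Matching the constant in front with the $(\eta\zeta/\pi^2)^{1/2}$ of \cref{eq:ground_state_eigenfunction} is again a purely algebraic check, or alternatively follows at once from the spectral decomposition argument above.

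The only real obstacle is the algebraic verification that the large-$s$ limits of the quadratic-form coefficients really do assemble into the ground-state exponent; this is where the analytic continuation is mildly delicate, since once $\lambda\in\Omega$ the quadratic form is no longer real. However, because all coefficients in $\phi$ depend on $\lambda$ only through the overall factor $\lambda$ (the trigonometric quantities depend only on $s$, $k_1,k_2,B$), analytic continuation in $\lambda$ is trivial, and the estimates on $\mathrm{ERROR}_1,\mathrm{ERROR}_2$ remain uniform for $\lambda\in\Omega$ with constants depending only on the controlled constants of \cref{sec:controlled_constants}.
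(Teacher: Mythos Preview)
Your approach is correct and matches the paper's: both extract the leading $e^{f_+s}$ from $K,\alpha_{ij},\beta_{ij},\gamma_i$, show each ratio equals its limit plus $O(e^{-cs})$, and then bound the quadratic cross terms by $|x|^2+|y|^2$. One harmless aside to fix: $f_+>f_->0$ holds for \emph{all} real $B$ once $k_1\neq k_2$ are positive, so smallness of $|B|$ is not what ensures $f_+\neq f_-$; in this lemma only boundedness of $|B|$ is actually used.
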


\begin{proof}
    For the first estimate, we write the explicit expression for the limit of $\phi(x,y,s,\lambda)$ as $s\to\infty$:
    \begin{equation}
        \lim_{s\to\infty}\phi(x,y,s,\lambda) = \frac{\lambda f_+k_1}{2(k_1+k_2)}(x_1^2+y_1^2)
        + \frac{\lambda f_+k_2}{2(k_1+k_2)}(x_2^2+y_2^2)
        - \frac{i\lambda B (k_1-k_2)}{k_1+k_2}(x_1x_2-y_1y_2).
    \end{equation}
    Then we have
    \begin{equation}
    \begin{aligned}
        \text{ERROR}_1=\lambda\Big[&M_1(x_1^2+y_1^2) + M_2x_1y_1
        + M_3(x_2^2+y_2^2) + M_4x_2y_2\\
        &+M_5(x_1x_2-y_1y_2) + M_6(x_1y_2-x_2y_1)\Big]
    \end{aligned}
    \end{equation}
    for $M_i=M_i(B,k_1,k_2)$.
    
    We want to show that
    \begin{equation}\label{eq:matrix_size_region_5}
        |M_i|\leq C \exp(-c s),
    \end{equation}
    for some constants $C,c>0$ and large enough $s$. Indeed, if so, then
    \begin{equation}
        |\text{ERROR}_1| \leq |\lambda|C\exp(-c s)(|x|^2+|y|^2 + |x_1y_1| + |x_2y_2| + |x_1x_2| + |y_1y_2| + |x_1y_2| + |x_2y_1|).
    \end{equation}
    By Cauchy-Schwarz, we have
    \begin{align}
        |x_1y_1| + |x_2y_2| &\leq 2\sqrt{|x|^2+|y|^2} \leq |x|^2+|y|^2,\notag\\
        |x_1y_2| + |x_2y_1| &\leq 2\sqrt{|x|^2+|y|^2} \leq |x|^2+|y|^2, \notag\\
        |x_1x_2| + |y_1y_2| &\leq 2\sqrt{(x_1^2+y_1^2)(x_2^2+y_2^2)} \leq |x|^2+|y|^2,
    \end{align}
    and we get the desired result
    \begin{equation}
        |\text{ERROR}_1| \leq C'|\lambda|\exp(-c s)(|x|^2+|y|^2).
    \end{equation}

    By definition we have $f_+>f_->0$. Then 
    \begin{align}
        \sinh(f_+s) &= \frac{1}{2}e^{f_+s}\left(1-e^{-2f_+s}\right),\notag\\
        |\sinh(f_-s)| &\leq e^{f_-s} = \frac{1}{2}e^{f_+s}\left(2e^{-(f_+-f_-)s}\right).
    \end{align}
    Therefore, for large enough $s$, we have
    \begin{equation}
        \alpha_{ij}(s) = k_if_-(k_1+k_2)\frac{1}{2}e^{f_+s}\left(1+\mathcal{O}(e^{-(f_+-f_-)s})\right).
    \end{equation}
    Similarly, for large enough $s$, we have
    \begin{align}
        \sinh\left(\frac{f_-+f_+}{2}s\right) &= \frac{1}{2}e^{f_+s}\left[e^{-(f_+-f_-)s/2}-e^{-(3f_++f_-)s/2}\right] = e^{f_+s}\mathcal{O}\left(e^{-(f_+-f_-)s/2}\right),\notag\\
        \sinh\left(\frac{f_--f_+}{2}s\right) &= \frac{1}{2}e^{f_+s}\left[e^{-(3f_+-f_-)s/2}-e^{-(f_++f_-)s/2}\right] = e^{f_+s}\mathcal{O}\left(e^{-(f_++f_-)s/2}\right).
    \end{align}
    Then,
    \begin{equation}
        \beta_{ij}(s) = e^{f_+s}\mathcal{O}(e^{-(f_+-f_-)s/2}).
    \end{equation}
    Also,
    \begin{align}
        \cosh(f_+s)-1 &= \frac{1}{2}e^{f_+s}\left(1-e^{-f_+s}\right)^2,\notag\\
        \cosh(f_-s)-1 &= e^{f_+s}\mathcal{O}(e^{-(f_+-f_-)s}).
    \end{align}
    Then,
    \begin{align}
        \gamma_1(s) &= (k_1^2-k_2^2)\left[f_+^2e^{f_+s}\mathcal{O}\left(e^{-(f_+-f_-)s}\right) - f_-^2\frac{1}{2}e^{f_+s}\left(1-e^{-f_+s}\right)^2\right] ,\notag\\
        &= e^{f_+s}\left[-\frac{f_-^2(k_1^2-k_2^2)}{2}+\mathcal{O}\left(e^{-(f_+-f_-)s}\right)\right],\notag\\
        K(s) &= f_-^2(k_1+k_2)^2\frac{1}{2}e^{f_+s}\left(1+\mathcal{O}(e^{-(f_+-f_-)s})\right) -f_+^2(k_1-k_2)^2e^{f_+s}\mathcal{O}(e^{-(f_+-f_-)s})\notag\\
        &= e^{f_+s}\left[\frac{f_-^2(k_1+k_2)^2}{2} + \mathcal{O}(e^{-(f_+-f_-)s})\right].
    \end{align}
    And finally, 
    \begin{align}
        \sinh\left(\frac{f_+s}{2}\right)\sinh\left(\frac{f_-s}{2}\right)
        &= \frac{1}{4}e^{f_+s}\Big[e^{-(f_+-f_-)s/2}-e^{-(f_++f_-)s/2}\notag\\
        &\qquad\qquad{}-e^{-(3f_+-f_-)s/2}+e^{-(3f_++f_-)s/2}\Big]\notag\\
        &= e^{f_+s}\mathcal{O}\left(e^{-(f_+-f_-)s/2}\right),
    \end{align}
    so that we have
    \begin{equation}
        \gamma_2(s) = e^{f_+s}\mathcal{O}\left(e^{-(f_+-f_-)s/2}\right).
    \end{equation}

    For the $x_1^2+y_1^2$ term, we have
    \begin{align}
        |M_1| &= \bigg|\frac{f_+f_- \alpha_{12}(s)}{2K(s)} - \frac{f_+k_1}{2(k_1+k_2)}\bigg|\\
        &= \bigg|\frac{f_+f_-}{2} \frac{k_1f_-(k_1+k_2)}{f_-^2(k_1+k_2)^2}\frac{1+\mathcal{O}(e^{-(f_+-f_-)s})}{1 + \mathcal{O}(e^{-(f_+-f_-)s})} - \frac{f_+k_1}{2(k_1+k_2)}\bigg|\\
        &= \bigg|\frac{f_+k_1}{2(k_1+k_2)}\left(1 + \mathcal{O}(e^{-(f_+-f_-)s})\right) - \frac{f_+k_1}{2(k_1+k_2)}\bigg| \notag\\
        &= |\mathcal{O}(e^{-(f_+-f_-)s})|\leq C\exp(-c s).
    \end{align}
    The same result applies for $M_3$ (the $x_1y_1$ term), since they are defined in the same way as $M_1$, but with $k_1$ and $k_2$ interchanged.

    For the $x_2^2+y_2^2$ term, we have
    \begin{align}
        |M_2| &= \bigg|\frac{f_+f_- \beta_{12}(s)}{K(s)}\bigg| = \Bigg|\frac{f_+f_- e^{f_+s}\mathcal{O}(e^{-(f_+-f_-)s})}{e^{f_+s}\left[\frac{f_-^2(k_1+k_2)^2}{2} + \mathcal{O}(e^{-(f_+-f_-)s})\right]}\Bigg| \notag\\
        &= \bigg|\mathcal{O}(e^{-(f_+-f_-)s}) \bigg| \leq C\exp(-c s).
    \end{align}
    The same result applies for $M_4$ (the $x_2y_2$ term), since they are defined in the same way as $M_2$, but with $k_1$ and $k_2$ interchanged.

    Next is the cross term $x_1x_2-y_1y_2$. For this case we have
    \begin{align}
        |M_5| &= \bigg|\frac{iB \gamma_1(s)}{K(s)} + \frac{iB (k_1-k_2)}{k_1+k_2}\bigg| = |B|\Bigg| \frac{e^{f_+s}\left[-\frac{f_-^2(k_1^2-k_2^2)}{2}+\mathcal{O}\left(e^{-(f_+-f_-)s}\right)\right]}{e^{f_+s}\left[\frac{f_-^2(k_1+k_2)^2}{2} + \mathcal{O}(e^{-(f_+-f_-)s})\right]} + \frac{k_1-k_2}{k_1+k_2}\Bigg|\notag\\
        &= |B| \Bigg| \frac{-f_-^2(k_1^2-k_2^2)}{f_-^2(k_1+k_2)^2} \frac{1+\mathcal{O}(e^{-(f_+-f_-)s})}{1+\mathcal{O}(e^{-(f_+-f_-)s})} + \frac{k_1-k_2}{k_1+k_2} \Bigg|\notag\\
        &= |B| \Bigg|-\frac{k_1-k_2}{k_1+k_2}\left(1 + \mathcal{O}(e^{-(f_+-f_-)s})\right) + \frac{k_1-k_2}{k_1+k_2}\Bigg|\notag\\
        &= |B||\mathcal{O}(e^{-(f_+-f_-)s})| \leq C\exp(-c s),
    \end{align}
    provided $|B|$ is bounded.

    Finally, we have the last cross term $x_1y_2-x_2y_1$. For this case we have
    \begin{align}
        |M_6| &= \bigg|-\frac{iB \gamma_2(s)}{K(s)}\bigg| = |B| \Bigg| \frac{e^{f_+s}\mathcal{O}\left(e^{-(f_+-f_-)s/2}\right)}{e^{f_+s}\left[\frac{f_-^2(k_1+k_2)^2}{2} + \mathcal{O}(e^{-(f_+-f_-)s})\right]} \Bigg|\notag\\
        &= |B| \big|\mathcal{O}\left(e^{-(f_+-f_-)s/2}\right) \big| \leq C\exp(-c s),
    \end{align}
    again provided $|B|$ is bounded. This completes the proof of \eqref{eq:matrix_size_region_5}.

    For the prefactor error estimate, we use the above asymptotics for $K(s)$ to see that
    \begin{align}
        P(s) &= \frac{f_+f_-}{2\pi}\sqrt{\frac{2k_1k_2}{K(s)}} = \frac{f_+f_-}{2\pi}\sqrt{\frac{2k_1k_2}{e^{f_+s}\left[\frac{f_-^2(k_1+k_2)^2}{2} + \mathcal{O}(e^{-(f_+-f_-)s})\right]}}\notag\\
        &= \frac{f_+f_-}{2\pi}\sqrt{\frac{4k_1k_2}{f_-^2(k_1+k_2)^2}}e^{-f_+s/2}\left(1+\mathcal{O}(e^{-(f_+-f_-)s})\right)\notag\\
        &= C_P\exp\Big(-\frac{f_+}{2}s\Big)\big(1+\text{ERROR}_2(s)\big),
    \end{align}
    where one checks \added{that} the $C_P$ is the same prefactor that appears in the counter-term $$e^{-\frac{f_+}{2}s}
    \psi_0(x,\lambda)\psi_0^*(y,\lambda^*).$$ This completes the proof of the \replaced{lem}{lemma}.
\end{proof}

Remember that in region 5 we have $|\lambda|(|x|^2+|y|^2)\leq\exp(c^\sharp s)$. Therefore, by \cref{lem:region_5} and picking $c^\sharp$ small enough, we have
\begin{equation}
    |\text{ERROR}_1(x,y,s,\lambda)| \leq Ce^{-cs} \exp(c^\sharp s)\leq C\exp(-c's).
\end{equation}
Then
\begin{align}
    &P(s)e^{-\phi(x,y,s,\lambda)} = C_Pe^{-\frac{f_+}{2}s}
    [1+\text{ERROR}_2(s)]e^{-\phi_\infty(x,y,s)-\text{ERROR}_1(x,y,s,\lambda)}\notag\\
    &= C_P e^{-\frac{f_+}{2}s}
    [1+\text{ERROR}_2(s)]e^{-\phi_\infty(x,y,s)}e^{-\text{ERROR}_1(x,y,s,\lambda)}\notag\\
    &= C_P e^{-\frac{f_+}{2}s}
    e^{-\phi_\infty(x,y,s)}\big[1+\text{ERROR}_3\big],
\end{align}
with $|\text{ERROR}_3|\leq C\exp(-c's)$. On the other hand, we have
\begin{equation}
    \frac{1}{\lambda} e^{-\frac{f_+}{2}s} 
    \psi_0(x,\lambda)\psi_0^*(y,\lambda^*) = C_P e^{-\frac{f_+}{2}s} e^{-\phi_\infty(x,y,s)}.
\end{equation}

We now apply \cref{lem:region_5}. We have
\begin{align}
    &\int_\text{Region 5}\bigg|\frac{e^{\mu s}}{\lambda}\big[q(x,y,s,\lambda) - e^{-\frac{f_+}{2}s}
    \psi_0(x,\lambda)\psi_0^*(y,\lambda^*)\big]\bigg|ds \notag \\
    &\qquad\qquad\leq \int_{C_1}^\infty C\big|e^{(\mu-\frac{f_+}{2}-c')s} \big| \bigg|\frac{1}{\lambda}\psi_0(x,\lambda)\psi_0^*(y,\lambda^*)\bigg| ds \notag\\
    &\qquad\qquad\leq C' \bigg|\frac{1}{\lambda}\psi_0(x,\lambda)\psi_0^*(y,\lambda^*)\bigg|,
\end{align}
provided we pick $\replacedm{c_\mu}{c_\star}$ small enough in our hypothesis $\Re\mu<\frac{f_+}{2}+c_\star$.\\
Continuing with the estimates, and applying \cref{prop:ground_state_estimate}, we get
\begin{align}
    &\int_\text{Region 5}\Big|e^{\mu s}\big[q(x,y,s,\lambda) -e^{-\frac{f_+}{2}s}
    \psi_0(x,\lambda)\psi_0^*(y,\lambda^*)\big]\Big|ds \leq C' \exp(-c|\lambda|(|x|^2+|y|^2))\notag\\
    &\leq C'' D(c|\lambda||x-y|(|x|+|y|)).
\end{align}
This controls the relevant integral over Region 5.

\subsubsection{Region 6} \label{subsec:region_6}
In this region we have $s>C_1$ and $|\lambda|\left(|x|^2+|y|^2\right)>\exp(c^\sharp s)$. The following \replaced{lem}{lemma} will be useful for the estimates in this region:
\begin{lem} \label{lem:region_6}
    Let $s>C_1$ (large enough) and $|\lambda|\left(|x|^2+|y|^2\right)>\exp(c^\sharp s)$, then 
    \begin{align}
        \text{Re}\big(\phi(x,y,s,\lambda)\big) &\geq c|\lambda|(|x|^2+|y|^2) \geq \frac{1}{2}c|\lambda|(|x|^2+|y|^2) + \frac{1}{2}c\exp(c^\sharp s), \notag\\
        |P(s)| &\leq C e^{-\frac{f_+}{2}s} .
    \end{align}
\end{lem}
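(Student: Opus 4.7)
The plan is to deduce this lemma as a direct consequence of the asymptotic expansions established in the preceding \cref{lem:region_5}, together with the lower bound on $\operatorname{Re}(\phi_\infty)$ that is implicit in \cref{prop:ground_state_estimate}. The condition $|\lambda|(|x|^2+|y|^2)>\exp(c^\sharp s)$ is used only to convert a single lower bound into the two–term bound displayed in the statement.

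\emph{Step 1: Reduction to $\phi_\infty$ via \cref{lem:region_5}.} I would start by writing, for $s>C_1$,
\[
\phi(x,y,s,\lambda) = \phi_\infty(x,y,\lambda) + \mathrm{ERROR}_1(x,y,s,\lambda),
\qquad
|\mathrm{ERROR}_1|\le C e^{-cs}\,|\lambda|(|x|^2+|y|^2),
\]
and similarly $P(s) = C_P e^{-E_0 s}(1+\mathrm{ERROR}_2(s))$ with $|\mathrm{ERROR}_2|\le Ce^{-cs}$, using \cref{lem:region_5}. Note that $\phi_\infty$ is precisely the exponent appearing in $\psi_0(x,\lambda)\psi_0^*(y,\lambda^*)$, so I can import the estimate from the proof of \cref{prop:ground_state_estimate}: writing $\lambda = \operatorname{Re}\lambda + \ii\operatorname{Im}\lambda$ and separating real and imaginary parts,
\[
\operatorname{Re}(\phi_\infty(x,y,\lambda)) \ge \tilde c\,\operatorname{Re}\lambda\,(|x|^2+|y|^2)-C'|B||\operatorname{Im}\lambda|(|x|^2+|y|^2),
\]
where the Cauchy--Schwarz bound $|x_1x_2|+|y_1y_2|\le |x|^2+|y|^2$ (used identically in the proof of \cref{prop:ground_state_estimate}) controls the cross term $\xi(x_1x_2-y_1y_2)$. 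Shrinking $c_B$ if necessary and using $\lambda\in\Omega$ (so $\operatorname{Re}\lambda\gtrsim |\lambda|$), I obtain $\operatorname{Re}(\phi_\infty)\ge c_0|\lambda|(|x|^2+|y|^2)$ for some $c_0>0$.

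\emph{Step 2: Absorbing the error and concluding the first estimate.} Combining Step~1 with the bound on $\mathrm{ERROR}_1$ gives
\[
\operatorname{Re}(\phi(x,y,s)) \ge c_0|\lambda|(|x|^2+|y|^2) - Ce^{-cs}\,|\lambda|(|x|^2+|y|^2)
\ge \tfrac{c_0}{2}|\lambda|(|x|^2+|y|^2),
\]
provided $s\ge C_1$ is chosen large enough that $Ce^{-cC_1}\le c_0/2$. This yields the first lower bound in the statement with $c:=c_0/2$. For the second inequality, I split this bound in half and invoke the hypothesis $|\lambda|(|x|^2+|y|^2)>\exp(c^\sharp s)$ on one of the halves:
\[
\tfrac{c_0}{2}|\lambda|(|x|^2+|y|^2) = \tfrac{c_0}{4}|\lambda|(|x|^2+|y|^2) + \tfrac{c_0}{4}|\lambda|(|x|^2+|y|^2)
\ge \tfrac{c_0}{4}|\lambda|(|x|^2+|y|^2) + \tfrac{c_0}{4}\exp(c^\sharp s),
\]
which is the claimed two–term bound (after relabeling $c$).

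\emph{Step 3: The prefactor bound.} The estimate $|P(s)|\le Ce^{-E_0 s}$ (interpreted as $|P(s)|\le C|e^{-E_0 s}|=Ce^{-\operatorname{Re}(E_0)s}$) is immediate from the asymptotic $P(s)=C_P e^{-E_0 s}(1+\mathrm{ERROR}_2(s))$ together with $|\mathrm{ERROR}_2(s)|\le Ce^{-cs}\le 1$ for $s\ge C_1$ large.

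The main (minor) obstacle is simply to make sure the various smallness constants are chosen in the right order: first pick $c_B$ small enough so that Step~1 produces a strictly positive constant $c_0$ uniformly for $\lambda\in\Omega$, then enlarge $C_1$ (consistently with the choice already made in \cref{lem:region_5}) so that the exponentially decaying error in Step~2 is absorbed, and finally record that the hypothesis $|\lambda|(|x|^2+|y|^2)>\exp(c^\sharp s)$ is used only to rewrite the lower bound — no constraint between $c^\sharp$ and $c_0$ is imposed at this stage.
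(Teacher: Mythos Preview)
Your proposal is correct and follows essentially the same route as the paper: both reduce to the lower bound $\operatorname{Re}\phi_\infty\ge c_0|\lambda|(|x|^2+|y|^2)$ (proved exactly as in \cref{prop:ground_state_estimate}), absorb $\mathrm{ERROR}_1$ using \cref{lem:region_5} and largeness of $C_1$, and then split the resulting bound in half using the Region~6 hypothesis. Your handling of the constants is in fact slightly more explicit than the paper's; your closing remark that no relation between $c^\sharp$ and $c_0$ is needed here is accurate (the paper's ``picking $c^\sharp<c$'' is not essential at this step).
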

\begin{proof}
    First note that $\Re\phi_\infty(x,y,s) \geq c|\lambda|(|x|^2+|y|^2)$. Indeed, 
    \begin{align}
        &\Re{\phi_\infty}(x,y,\lambda) = \Re\lambda\big[\eta(x_1^2+y_1^2)+\zeta(x_2^2+y_2^2)\big]-\Im\lambda \xi(x_1x_2-y_1y_2)\notag\\
        &\geq c''\Re\lambda(|x|^2+|y|^2) - |\Im\lambda||B|C'(|x_1x_2|+|y_1y_2|)\notag\\
        &= c^\sharp(\Re\lambda - C|\Im\lambda||B|)(|x|^2+|y|^2) \geq c|\lambda|(|x|^2+|y|^2),
    \end{align}
    provided $|\Im\lambda|<C'|\Re\lambda|$ and $|B|$ is small enough (remember $\eta,\zeta>0$). Now, by \cref{lem:region_5}, we have
    \begin{align}
        &\Re{\phi}(x,y,s) = \Re{\phi_\infty}(x,y,\lambda) + \Re{\textrm{ERROR}_1}(x,y,s,\lambda) \notag\\
        &\geq c|\lambda|(|x|^2+|y|^2) - |\textrm{ERROR}_1(x,y,s,\lambda)| \notag\\
        &\geq c|\lambda|(|x|^2+|y|^2) \geq \frac{1}{2}c|\lambda|(|x|^2+|y|^2) + \frac{1}{2}c\exp(c^\sharp s),
    \end{align}

    where we used the region 6 hypothesis $|\lambda|(|x|^2+|y|^2)>\exp(c^\sharp s)$. Picking $c^\sharp<c$, we have the desired result. 
    
    For the prefactor estimate, we have shown in \cref{lem:region_5} that 
    \begin{equation}
        P(s) = C_P\exp\big(-\frac{f_+}{2} s\big)\big(1+\text{\rm ERROR}_2(s)\big),\ \text{with}\ |\textrm{ERROR}_2(s)|\leq C''\exp(-c s),
    \end{equation}
    and it follows that $|P(s)|\leq C e^{-\frac{f_+}{2}s}$. The proof of the \replaced{lem}{lemma} is complete.
\end{proof}

We now apply the above \cref{lem:region_6}:
\begin{align} \label{eq:region_6_estimate_on_q}
    &\int_\text{Region 6}\bigg|\frac{e^{\mu s}}{\lambda} q(x,y,s,\lambda) \bigg|ds \leq \int_{C_1}^\infty |e^{(\mu -\frac{f_+}{2})s}|e^{-\frac{c|\lambda|}{2}(|x|^2+|y|^2)}e^{-\frac{c}{2}\exp(c^\sharp s)} ds \notag\\
    &\leq C\exp\left(-\frac{c|\lambda|}{2}(|x|^2+|y|^2)\right) \leq C'D\big(c|\lambda||x-y|(|x|+|y|)\big).
\end{align}
Moreover, by \cref{prop:ground_state_estimate}, we have in region 6 that\\
\begin{align}
    \bigg|\frac{e^{\mu s}}{\lambda}[-e^{-\frac{f_+}{2}s}
    \psi_0(x,\lambda )\psi_0^*(y,\lambda^*)]\bigg| &= \big|
    e^{(\mu-\frac{f_+}{2})s} \big|\bigg|\frac{1}{\lambda}\psi_0(x,\lambda)\psi_0^*(y,\lambda^*)\bigg|\notag\\
    &\leq c\big|e^{(\mu-\frac{f_+}{2})s}\big|\exp\left(-c|\lambda|(|x|^2+|y|^2)\right)\notag\\
    &\leq c\big|e^{(\mu-\frac{f_+}{2})s}\big| \exp\left(-\frac{c}{2}|\lambda|(|x|^2+|y|^2) - \frac{c}{2}\exp(c^\sharp s)\right),\notag
\end{align}
so that
\begin{align} \label{eq:region_6_estimate_on_ground_state}
    &\int_\text{Region 6}\Big|e^{\mu s}\big[- \frac{e^{-\frac{f_+}{2} s}}{\lambda}\psi_0(x,\lambda)\psi_0^*(y,\lambda^*)\big]\Big|ds\notag\\
    &\leq C \int_{C_1}^\infty \big|e^{(\mu -\frac{f_+}{2})s}\big| \exp\left(-\frac{c}{2}\exp(c^\sharp s) \right)ds \cdot \exp\left(-\frac{c|\lambda|}{2}(|x|^2+|y|^2) \right)\notag\\
    &= C' \exp\left(-\frac{c|\lambda|}{2}(|x|^2+|y|^2)\right) \leq C''D\big(c|\lambda||x-y|(|x|+|y|)\big).
\end{align}
From \eqref{eq:region_6_estimate_on_q} and \eqref{eq:region_6_estimate_on_ground_state}, we see that
\begin{equation}
    \int_\text{Region 6}\bigg|\frac{e^{\mu s}}{\lambda}\big[q(x,y,s,\lambda) - e^{-\frac{f_+}{2} s}
    \psi_0(x,\lambda)\psi_0^*(y,\lambda^*)\big]\bigg|ds \leq C''D\big(c|\lambda||x-y|(|x|+|y|)\big).
\end{equation}
This controls the relevant integral over Region 6.

We have controlled the integral over all the regions $1,\cdots,6$ of the integral in \eqref{eq:first_main_thm}. All those integrals have been shown to be at most $C D(c|\lambda||x-y|(|x|+|y|))$. Then \eqref{eq:first_main_thm} holds. The proof of the first main theorem is now complete.

\subsection{Second Main theorem} \label{sec:second_main_thm}

 For the second main theorem, the basic constants are simply $k_1,k_2$ and $C_B$ which will play a role in a moment. As before, controlled constants, $c,c'$ etc., are constants that depend only on the basic constants. We make the additional assumption that 
$k_1\ne k_2$. We expect that this assumption can be relaxed. Here we set $\lambda\in\Omega'$, where $\Omega'=\{\lambda\in\mathbb{C}:\Re \lambda \geq C_\lambda, |\Im \lambda|\leq c_\lambda\Re \lambda\}$, where $C_\lambda$ is \deleted{large enough controlled constant} \added{a sufficiently large controlled constant} and $c_\lambda$ is a small enough controlled constant. We now suppose that $B\in\Lambda$, where $\Lambda=\{B\in\mathbb{C}:|\Re B|\leq C_B, |\Im B|\leq c_B\}$, for a small enough controlled constant $c_B$. Note that since $B\in\CC$ is allowed to be complex, the quantity $f_+$, defined in \eqref{eq:heat_kernel_parameters}, takes on complex values. 

\begin{thm}\label{thm:C11}
    For $(x,y,\lambda,B,\mu)\in\mathbb{R}^2\times\mathbb{R}^2\times\Omega'\times\Lambda\times\{
        \mu\in\mathbb{C}:\Re \mu<\frac{\Re{f_+}}{2}+c_\star\}$ with $x\neq y$, we have
    \begin{equation} \label{eq:second_main_thm}
        |\tilde{G}(x,y,\lambda,\mu)| = \bigg|\int_0^\infty \frac{e^{\mu s}}{\lambda}\big[q(x,y,s,\lambda) - e^{-\frac{f_+}{2} s}\psi_0(x)\psi_0^*(y)\big] ds\bigg| \leq CD(c|\lambda|\ |x-y|(|x|+|y|)),
    \end{equation}
    For $\tilde{G}(x,y,\lambda,\mu)$ defined in \eqref{eq:resolvent_definition}, and $D(\cdot)$ defined in \eqref{eq:D_function_definition}. Here $C,c,\hat{c}$ are controlled constants.
\end{thm}
\added{The proof of \cref{thm:C11} is not used in the present paper and is omitted for brevity; see the arXiv version of this manuscript \cite{fefferman2025magneticdoublewellslowerbounds} for full details.}
\bigskip

\subsection{Analyticity of the Resolvent} \label{sec:analyticity_of_resolvent}
In this section we use the first and second main theorems to prove that the resolvent $\tilde{G}(x,y,\lambda,\mu)$ is analytic in $\lambda$ and $B$ in the regions $\Omega, \Omega'$ and $\Lambda$ defined above.

\begin{thm}\label{thm:analyticity of the MHO resolvent}
    Let $(x,y,\lambda,\mu)\in\mho$,\\ where $\mho=\{(x,y,\lambda,\mu)\in\mathbb{R}^2\times\mathbb{R}^2\times\Omega\times\mathbb{C}:x\neq y, \Re\mu<\frac{\Re{f_+}}{2}+c_\star\}$, where  $c_\star>0$ is taken sufficiently small. Then the integral
    \begin{equation}
        \tilde{G}(x,y,\lambda,\mu) = \int_0^\infty \frac{e^{\mu s}}{\lambda}\big[q(x,y,s,\lambda) - e^{-\frac{f_+}{2} s}\psi_0(x,\lambda)\psi_0^*(y,\lambda^*)\big] ds
    \end{equation}
    converges for $(x,y,\lambda,\mu)\in\mho$; the function $\tilde{G}(x,y,\lambda,\mu)$ is continuous on $\mho$ and analytic in $(\lambda,\mu)$ for fixed $(x,y)$. For $\lambda\in\Omega', B\in\Lambda$ and with $\Re\mu< \frac{\Re{f_+}}{2}+c_\star$, the theorem also holds and, moreover, $\tilde{G}(x,y,\lambda,\mu)$ is analytic in $B$.
\end{thm}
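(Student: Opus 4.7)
My plan is to obtain analyticity and continuity as a standard consequence of three ingredients: (i) pointwise-in-$s$ analyticity of the integrand, (ii) the pointwise-in-$s$ bounds on the integrand that were derived in the course of proving the two main theorems of this appendix, and (iii) a dominated convergence / Morera argument promoting these to joint analyticity of the integral.

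First I observe that for each fixed $s>0$ and each fixed $(x,y)\in\mathbb R^2\times\mathbb R^2$ with $x\neq y$, the integrand
\eq{
F(x,y,s,\lambda,\mu):=\frac{e^{\mu s}}{\lambda}\Big[q(x,y,s)-e^{-E_0 s}\psi_0(x,\lambda)\psi_0^*(y,\lambda^*)\Big]
}
is an entire function of $\mu$ and, via the explicit formulas \cref{eq:heat_kernel}--\cref{eq:heat_kernel_parameters}, \cref{eq:ground_state_eigenfunction}--\cref{eq:ground_state_energy}, a holomorphic function of $\lambda$ on $\Omega$ (and of $B$ on $\Lambda$ in the second setting). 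Indeed $P(s)$, $\phi$, $\psi_0$ and $E_0$ are built from $\cosh,\sinh,\sqrt{\cdot},\exp$ applied to $\lambda,B,k_1,k_2$; within the parameter domains fixed in \cref{sec:setup}, $K(s)$ does not vanish (it is bounded away from zero on each of the six regions used in \cref{sec:proof_of_first_main_thm,sec:proof_of_second_main_thm}), so the square root defining $P(s)$ has an unambiguous analytic branch. Thus, for each fixed $s$, the integrand is jointly analytic in $(\lambda,\mu)$ on $\Omega\times\{\Re\mu<E_0+c_\mu\}$ (respectively in $(\lambda,B,\mu)$ for the second theorem).

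Second, the entire content of \cref{thm:first main thm} (resp.\ the second main theorem) is a pointwise-in-$s$ upper bound on $|F(x,y,s,\lambda,\mu)|$, region by region, that integrates to $C D(c|\lambda|\,|x-y|(|x|+|y|))$. Crucially, inspection of the proofs in Regions 1--6 shows that these bounds are \emph{locally uniform} in $(\lambda,\mu)$ (and in $B$) throughout the parameter domains: the constants $c,C,c^\sharp,c_\mu$ there depend only on the controlled constants, and the hypothesis $\Re\mu<E_0+c_\mu$ (or $\Re\mu<\Re E_0+c_\mu$) gives an exponential decay factor $|e^{\mu s}|\cdot|e^{-E_0 s}|\le e^{-c_\mu' s}$ on a closed sub-strip in $\mu$. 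Therefore, given any compact set $K\subset\mho$ (respectively a compact set in the second setting), there is an integrable majorant $g_K(x,y,s)$ with $|F(x,y,s,\lambda,\mu)|\le g_K(x,y,s)$ for all $(\lambda,\mu)\in K$ and $\int_0^\infty g_K\dif s<\infty$ for fixed $x\neq y$.

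Given (i) and (ii), continuity of $\tilde G$ on $\mho$ follows from dominated convergence applied along any convergent sequence in $\mho$. For joint analyticity in $(\lambda,\mu)$ at fixed $(x,y)$, I would apply Morera's theorem in each variable separately: for any triangle $\Delta\subset\Omega$,
\eq{
\oint_{\partial\Delta}\tilde G(x,y,\lambda,\mu)\dif\lambda=\int_0^\infty\Big(\oint_{\partial\Delta}F(x,y,s,\lambda,\mu)\dif\lambda\Big)\dif s=0,
}
where the Fubini swap is justified by the locally uniform majorant $g_K$ on a compact neighborhood $K$ of $\Delta$, and the inner integral vanishes by (i). The analogous argument in $\mu$ (and in $B$ for the second main theorem) gives separate analyticity in each variable; Hartogs' theorem then yields joint analyticity.

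The only nontrivial point is to verify that the majorants $g_K$ from the proofs of the two main theorems are indeed locally uniform. This is essentially free in Regions 2, 3, 4, 5 because the bounds there were obtained by continuity/compactness arguments in $(s,\lambda/\Re\lambda,B)$ on closed sub-regions of the parameter space. In Regions 1 and 6 the exponential decay is driven by $\Re\lambda\gtrsim|\lambda|$ (true on any compact sub-wedge of $\Omega$) and by $\Re\mu-E_0<-c_\mu'$ (true on any compact sub-strip), so the constants can be chosen uniform in $K$. No new estimates are needed beyond what was already proved.
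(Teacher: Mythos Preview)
Your proposal is correct and follows essentially the same strategy as the paper: both arguments reduce analyticity and continuity of $\tilde G$ to the locally uniform integrable bounds on the integrand already established region-by-region in the proofs of the two main theorems. The paper packages this via truncated integrals $G_N=\int_{1/N}^N(\cdots)\,ds$ and shows $G_N\to\tilde G$ uniformly on compact subsets of $\mho$ (verifying the tail integrals $\int_0^{1/N}$ and $\int_N^\infty$ vanish uniformly using the small-$s$ and large-$s$ lemmas, respectively), whereas you use dominated convergence together with Morera/Fubini and Hartogs; these are two standard equivalent routes and no substantive difference arises.
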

\begin{proof} 
    We first define
    \begin{equation}
        \tilde G_N(x,y,\lambda,\mu) = \int_{1/N}^N \frac{e^{\mu s}}{\lambda}\left[q(x,y,s,\lambda) - e^{-\frac{f_+}{2} s}\psi_0(x, \lambda)\psi_0^*(y, \lambda^*)\right]ds,
    \end{equation}
    and will show that $\lim_{N\to\infty}\tilde G_N(x,y,\lambda,\mu) = \tilde G(x,y,\lambda,\mu)$ uniformly in $(x,y,\lambda,\mu)$ varying in any compact subset of $\mho$. This will give the desired result, since $\tilde G_N(x,y,\lambda,\mu)$ is analytic in $(\lambda,\mu)$ for fixed $(x,y)$ and continuous in $(x,y,\lambda,\mu)$.

    To show the uniform convergence, we need to show that for any $\epsilon>0$ and any compact subset $K\subset\mho$, there exists an $N_0>0$ such that for all $N>N_0$ and all $(x,y,\lambda,\mu)\in K$ we have
    \begin{align} \label{eq:analyticity_uniform_convergence}
      &  |\tilde G(x,y,\lambda,\mu) - \tilde G_N(x,y,\lambda,\mu)| 
     \leq \int_0^{1/N}\Big|e^{\mu s}\left[q(x,y,s,\lambda) - e^{-\frac{f_+}{2} s}\psi_0(x, \lambda)\psi_0^*(y, \lambda^*)\right]\Big| \notag\\
      &  + \int_N^\infty \Big|e^{\mu s}\left[q(x,y,s,\lambda) - e^{-\frac{f_+}{2} s}\psi_0(x, \lambda)\psi_0^*(y, \lambda^*)\right]\Big| < \epsilon.
    \end{align}
    To prove \eqref{eq:analyticity_uniform_convergence}, we fix $x_0,y_0,\lambda_0,\mu_0\in\mho$. Then for small $\delta>0$ depending on $x_0,y_0,\lambda_0,\mu_0$, we prove both statements:
    \begin{align}
        &\lim_{\tau\to0}\int_0^\tau \Big|e^{\mu s}\left[q(x,y,s,\lambda) - e^{-\frac{f_+}{2} s}\psi_0(x, \lambda)\psi_0^*(y, \lambda^*)\right]\Big| ds = 0,\quad \text{and}\notag\\
        &\lim_{T\to\infty}\int_T^\infty \Big|e^{\mu s}\left[q(x,y,s,\lambda) - e^{-\frac{f_+}{2} s}\psi_0(x, \lambda)\psi_0^*(y, \lambda^*)\right]\Big| ds = 0,
    \end{align}
    uniformly for $(x,y,\lambda,\mu)$ satisfying
    \begin{equation}\label{eq:uniform_convergence_conditions}
        |x-x_0|,|y-y_0|,|\lambda-\lambda_0|,|\mu-\mu_0|<\delta.
    \end{equation}
    This will establish and complete the proof of the theorem. 

    Since $x_0\neq y_0$ for $(x_0,y_0,\lambda_0,\mu_0)\in\mho$, \eqref{eq:uniform_convergence_conditions} implies that 
    \begin{equation}
        |x-y| > \frac{|x_0-y_0|}{2} > 0, \quad \text{and}\quad |x^2|+|y^2| < 2(|x_0|^2+|y_0|^2),
    \end{equation}
    provided $\delta$ is small enough in \eqref{eq:uniform_convergence_conditions}. Therefore, we can use the first main theorem, specifically \cref{lem:regions_1_2_3}, and estimates \eqref{eq:counterterm_region_1}, \eqref{eq:counterterm_region_2}, \eqref{eq:counterterm_region_3} to estimate the first integral:
    \begin{align}
        &\int_0^\tau \Big|e^{\mu s}\left[q(x,y,s,\lambda) - e^{-\frac{f_+}{2} s}\psi_0(x, \lambda)\psi_0^*(y, \lambda^*)\right]\Big| ds \notag\\
        &\leq  \int_0^\tau \frac{C}{s}\exp\left(-c|\lambda|\left[s(|x|^2+|y|^2)+\frac{|x-y|^2}{s}\right]\right)ds + \int_0^\tau C\exp\left(-c|\lambda|(|x|^2+|y|^2)\right) ds \notag\\
        &\leq \int_0^\tau \frac{C}{s}\exp\left(-c|\lambda|\frac{|x_0-y_0|^2}{s}\right)ds + \int_0^\tau C\exp\left(-c|\lambda|(|x_0|^2+|y_0|^2)\right) ds \to 0 \quad \text{as } \tau\to0,
    \end{align}
    uniformly for $(x,y,\lambda,\mu)$ satisfying \eqref{eq:uniform_convergence_conditions}.

    For the second integral, we use the analysis of regions 5 and 6 in the proof of the first main theorem, \cref{lem:region_5,lem:region_6}, to prove that
    \begin{equation}
        \int_T^\infty \Big|e^{\mu s}\left[q(x,y,s,\lambda) - e^{-\frac{f_+}{2} s}\psi_0(x, \lambda)\psi_0^*(y, \lambda^*)\right]\Big| ds \to 0 \quad \text{as } T\to\infty,
    \end{equation}
    uniformly for $(x,y,\lambda,\mu)$ satisfying \eqref{eq:uniform_convergence_conditions}. This completes the proof of the theorem.
\end{proof}

\subsection{Proof of \cref{prop:MHO analyticity and bounds}}\label{subsec:proof_MHO_prop}
\deleted{The previous placeholder subsection on derivative bounds has been removed.}
\begingroup
\ifshowchanges\color{ct_green}\fi
\begin{proof}
Set
\[
\widehat\Omega=\left\{(\lambda,\mu)\in\mathbb C^2:\lambda\in\Omega,\ \Re\mu<\frac{f_+}{2}+c_\star,\ \mu\ne\frac{f_+}{2}\right\}
\]
and
\[
\widehat\Omega_0=\left\{(\lambda,\mu)\in\mathbb R\times\mathbb C:\lambda>\Lambda,\ \Re\mu<\frac{f_+}{2}\right\}\subset\widehat\Omega.
\]
From \cref{eq:LT_of_heat,eq:resolvent_definition}, for $(\lambda,\mu)\in\widehat\Omega_0$ and $x\ne y$,
\[
(H_\lambda-\lambda\mu)^{-1}(x,y)=\widetilde G(x,y,\lambda,\mu)+\frac1\lambda\left(\mu-\frac{f_+}{2}\right)^{-1}\psi_0(x,\lambda)\psi_0^*(y,\lambda^*).
\]
Define operators $T_1(\lambda,\mu)$, $T_2(\lambda,\mu)$ and $T(\lambda,\mu):L^2(\mathbb R^2)\to L^2(\mathbb R^2)$ by
\[
T_1(\lambda,\mu)f(x)=\int_{\mathbb R^2}\widetilde G(x,y,\lambda,\mu)f(y)\,dy,
\]
\[
T_2(\lambda,\mu)f(x)=\frac1\lambda\left(\mu-\frac{f_+}{2}\right)^{-1}\psi_0(x,\lambda)\int_{\mathbb R^2}\psi_0^*(y,\lambda^*)f(y)\,dy,
\]
and $T(\lambda,\mu)=T_1(\lambda,\mu)+T_2(\lambda,\mu)$.
By \cref{thm:first main thm,thm:analyticity of the MHO resolvent}, the map
\[
\widehat\Omega\ni(\lambda,\mu)\longmapsto T_1(\lambda,\mu)\in\calB(L^2(\mathbb R^2))
\]
is analytic and satisfies $\|T_1(\lambda,\mu)\|\le C/|\lambda|$. Since $|B|$ is smaller than a sufficiently small constant, $\|\psi_0(\cdot,\lambda)\|_{L^2}\le C$ for $\lambda\in\Omega$; hence $T_2$ is analytic on $\widehat\Omega$ and
\[
\|T_2(\lambda,\mu)\|\le \frac{C}{|\lambda|}\left|\mu-\frac{f_+}{2}\right|^{-1}.
\]
Consequently, $T$ is analytic on $\widehat\Omega$ and
\[
\|T(\lambda,\mu)\|\le \frac{C}{|\lambda|}\left(1+\left|\mu-\frac{f_+}{2}\right|^{-1}\right).
\]
For $(\lambda,\mu)\in\widehat\Omega_0$, the displayed resolvent identity shows
\[
(H_\lambda-\lambda\mu)T(\lambda,\mu)=\Id.
\]
By analytic continuation, the same identity holds for all $(\lambda,\mu)\in\widehat\Omega$. In particular,
\[
T(\lambda,\mu)=(H_\lambda-\lambda\mu)^{-1}
\]
for $\lambda\in(\Lambda,\infty)$, $\Re\mu<f_+/2+c_\star$, and $\mu\ne f_+/2$.
For the contour used in \cref{eq:z_lam-def}, the relevant parameter is
\[
\mu_\lambda(\xi)=\frac{z_\lambda(\xi)+\lambda^2}{2\lambda}=\frac{e_0^{\rm MHO}}{2}+\frac{c_{\mathrm{ctr}}}{2}(e_1^{\rm MHO}-e_0^{\rm MHO})\xi,
\]
not $z_\lambda(\xi)/\lambda$. With $c_{\mathrm{ctr}}>0$ chosen sufficiently small, this satisfies $\Re\mu_\lambda(\xi)<f_+/2+c_\star$ and $|\mu_\lambda(\xi)-f_+/2|>c$ uniformly for $\xi\in\mathbb S^1$. This proves Part 1 of \cref{prop:MHO analyticity and bounds}.

It remains to prove Part 2. Let
\eq{
u = r_\lambda^{\mathrm{MHO}}(\lambda^2+z_\lambda)f
} where $f\in L^2(\RR^2)$ with $\supp(f)\subseteq T$.

Applying \cref{thm:first main thm} we find
\eql{\label{eq:main conclusion within Combes-Thomas for MHO}
|u(x)|
\le
C \exp\br{-c\abs{\lambda}\br{\dist(x,T)}^2}\,\norm{f}_{L^2},
\qquad
\text{for } \mathrm{dist}(x,T) > \abs{\lambda}^{-1/2}.
}

Fix $x_0 \in S$ and let $B$ denote the disc $B_r(x_0)$, with
\eq{
r := \frac{1}{2\abs{\lambda}\br{1+\sup_{x\in S}\norm{x}}}\,.
} 

Since $f=0$ on $B$, we have
\eql{\label{eq:Hamiltonian acting on state}
\Bigl[
\Bigl(P-\frac{\lambda b}{2}X^\perp\Bigr)^2  -\lambda^2\Id + \frac{1}{2}\lambda^{2}\left\langle X,\Hessian{v}(0) X\right\rangle - z_\lambda\Id
\Bigr]u = 0
\quad \text{on } B\,.
}

We rescale from $B$ to the unit disc $B_0$ by setting
\[
\tilde{u}(y) = u(x_0+ry),
\qquad y \in B_0.
\]

\cref{eq:Hamiltonian acting on state} then takes the form
\[
\bigl[-\Delta_y + \vec{\beta}(y)\cdot \nabla_y + \gamma(y)\bigr]\tilde{u}(y)=0
\qquad \text{on } B_0,
\]
for 
\eq{
\beta(y) := \ii b \lambda r (x_0+r y)^\perp
} and 
\eq{
\gamma(y) := r^2 \br{\frac{\lambda^2b^2}{4}\abs{x_0+ry}^2+\frac12\lambda^2\ip{x_0+ry}{\Hessian{v}(0) \br{{x_0+ry}}}-\lambda^2 -z_\lambda}
}

Standard elliptic theory (see e.g. \cite[Theorem 6.2]{GilbargTrudinger2001}) tells us that
\[
\bigl|\partial_y^\alpha \tilde{u}(0)\bigr|
\le
C_\alpha\br{1+\norm{\beta}_{C^1(B_0)}+\norm{\gamma}_{L^\infty(B_0)}}^{2} \sup_{B_0} |\tilde{u}|
\qquad \text{for } |\alpha|\le 2\,.
\]

Returning from $\tilde{u}$ to $u$, we find that
\[
\bigl|\partial_x^\alpha u(x_0)\bigr|
\le
C r^{-|\alpha|}\sup\bigl\{|u(x)|:x\in B(x_0,r)\bigr\}.
\]

Together with \cref{eq:main conclusion within Combes-Thomas for MHO}, this completes the proof of \cref{eq:Combes-Thomas type estimate for MHO resolvent}, since we may absorb the polynomial factor $r^{-\abs{\alpha}}$ using the Gaussian decay.
\end{proof}
\endgroup
\section{Approximations with the MHO}

\added{The estimates below are used only with $b$ in a fixed sufficiently small interval and $\lambda\in\Omega_{\Lambda,\ve}$.}
\bigskip

Consider the Landau Hamiltonian
\eql{
H^{\mathrm{Landau}}_\lambda := \left(P-\frac{1}{2}b\lambda X^{\perp}\right)^{2}
} with $b,\lambda>0$ and on it the magnetic harmonic oscillator
\eql{h_{\lambda}^{\rm MHO }:=H^{\mathrm{Landau}}_\lambda+\frac{1}{2}\lambda^{2}\left\langle x,\Hessian{v}\left(0\right)x\right\rangle\,.} Here $\Hessian{v}\left(0\right)>0$ is some $2\times 2$ matrix.

Note that by scaling, \eql{\mathfrak{U}_{\sqrt{\lambda}}^{\ast}h_{\lambda}^{\rm MHO }\mathfrak{U}_{\sqrt{\lambda}}=\lambda h^{\rm MHO }}
for a dilation operator $\mathfrak{U}_{\sqrt{\lambda}}$ (only unitary if $\lambda$ is real) defined in \cref{eq:dilation operators} below, and with $h^{\rm MHO } := h^{\rm MHO }_1$. Let us denote the eigenvalues of $h^{\rm MHO }$ as $e_j^{\rm MHO }$, $j=0,1,\cdots$ in ascending order. 

We also have our one-well Hamiltonian
\eql{
h_\lambda = H^{\mathrm{Landau}}_\lambda + \lambda^2 v(X)
} with $v$ having a unique non-degenerate minimum at zero. In particular, \deleted{In particular,} \eq{v\left(x\right)\stackrel{x\to0}{=}-1+\frac{1}{2}\left\langle x,\Hessian{v}\left(0\right)x\right\rangle +\mathcal{O}\left(\norm{x}^{3}\right)}
and where $\Hessian{v}$ is the Hessian of $v$ and we assume $\Hessian{v}\left(0\right)>0$.

Then the analysis in \cite{Matsumoto_1994} implies that for large \emph{real} $\lambda$, if $e_{\lambda,j}$ are the eigenvalues of $h_\lambda$, \eql{\label{eq:MHO one-well eigenvalues}e_{\lambda,j}=-\lambda^{2}+e_{j}^{\rm MHO }\lambda+O\left(\lambda^{\frac{1}{2}}\right)\,.}

Concerning the eigenvectors,  \cite{Matsumoto_1994} does not provide an asymptotic expansion. However  the ideas presented in the non-magnetic \cite{Simon_1983_AIHPA_1983__38_3_295_0} can most likely be generalized to the magnetic case. We don't need such precision here and present a self-contained proof for the following

\begin{lem}\label{lem:MHO ground state approximates one-well ground state}
Denote by  $(e_\lambda,\vf_\lambda)$ and $(e_\lambda^{\rm MHO}, \vf_\lambda^{\rm MHO} )$, respectively, the normalized ground state eigenpairs of the single well magnetic Hamiltonian $h_\lambda$ (see \cref{eq:single-well Hamiltonian}) and the magnetic harmonic oscillator Hamiltonian $h_\lambda^{\rm MHO}$  (see \cref{eq:h-MHO}). Let $P_\lambda=\vf_\lambda\otimes\vf^*_\lambda$ denote the orthogonal projector onto the ground state eigenspace of $h_\lambda$, and $P_\lambda^\perp=\Id - P_\lambda$.

Then,  for all $\lambda$ real and sufficiently large, 
 \begin{equation}\big\| P_\lambda^\perp \vf_\lambda^{\rm MHO}\big\| \lesssim \lambda^{-1/2}\,.
    \label{eq:Pperp-psi-bound}
    \end{equation}
\end{lem}
\begin{proof}
To bound the norm of $P_\lambda^\perp \vf_\lambda^{\rm MHO}$, we observe: If $P^\perp_\lambda\psi=\psi$, i.e. $\psi\perp \vf_\lambda$, then
\begin{equation}\label{eq:Pperp-psi}
\|P^\perp_\lambda\psi\|\ \le\ \frac{1}{c_{\rm gap}(\lambda)}\ \big\|(h_\lambda-e_\lambda) P^\perp_\lambda\psi\big\| ,
\end{equation}
where $c_{\rm gap}$, the distance between the first and second eigenvalues of $h_\lambda$ is given, using \cref{eq:MHO one-well eigenvalues}, by
\[
c_{\rm gap}(\lambda) = \br{e_1^{\rm MHO} - e_0^{\rm MHO}}\lambda+O(\lambda^{1/2})\ .
\]
The bound \cref{eq:Pperp-psi} is a consequence of the bound 
\eq{
\norm{\br{h_\lambda - e_\lambda\Id}\psi} \geq \frac{1}{\norm{\psi}}\ip{\psi}{\br{h_\lambda-e_\lambda\Id}\psi}\ge c_{\rm gap}\ \|\psi\|\ ;
} 
the first inequality follows by the Cauchy-Schwarz inequality, and second from the spectral theorem.  
Applying \cref{eq:Pperp-psi} with $\psi=\vf_\lambda^{\rm MHO}$, we have 
\begin{equation}\label{eq:Pperp-phi-MHO}
\|P^\perp_\lambda \vf_\lambda^{\rm MHO}\|\ \le\    \frac{1}{c_{\rm gap}(\lambda)}\ \big\|(h_\lambda-e_\lambda\Id) \vf_\lambda^{\rm MHO}\big\| ,
\end{equation}
To bound the right hand side of \cref{eq:Pperp-phi-MHO}, note first that
\begin{equation} h_\lambda-e_\lambda\Id = \big(h_\lambda^{\rm MHO}-e_\lambda^{\rm MHO}\Id\big) + \mathcal{O}(\lambda^{+1/2})\ \Id  +\lambda^2 v_{\rm err}(x), \label{eq:h_lambda-expand}
\end{equation}
where 
$v_{\rm err}(x) := v(x) - \br{-1+\frac{1}{2}\left\langle x,\Hessian{v}\left(0\right)x\right\rangle}$ and 
$\abs{v_{\rm err}(x)}=\mathcal{O}(\norm{x}^3)$ as $x\to0$. 
Since $h_\lambda^{\rm MHO}\vf_\lambda^{\rm MHO} = e_\lambda^{\rm MHO}\vf_\lambda^{\rm MHO}$, applying the expansion \cref{eq:h_lambda-expand} to $\vf_\lambda^{\rm MHO}$, and then taking norms, yields
\begin{equation}
\big\| \big(h_\lambda-e_\lambda\Id \big)\vf_\lambda^{\rm MHO} \big\|\ \lesssim\  \lambda^{+1/2}\ +\  \lambda^2 \norm{v_{\rm err}(X)\vf_\lambda^{\rm MHO}}.
\label{eq:residual-bound}
\end{equation}
  To bound the latter term, recall the form of the exact ground state:
\eql{\label{eq:exact ground state of MHO}
\vf_\lambda^{\rm MHO}(x) := C \sqrt{\lambda}\exp\br{-\lambda \ip{x}{ S x}}\qquad(x\in\RR^2)
} where  $S$ is  positive definite  and $C<\infty$, both depend on $D^2v(0)$. Using \cref{eq:exact ground state of MHO}, and the behavior of $v_{\rm err}(x)$ near $x=0$, we have the estimate
\eql{\label{eq:potential remainder on MHO ground state}
\lambda^2 \norm{v_{\rm err}(X)\vf_\lambda^{\rm MHO}} \lesssim \lambda^{1/2} + \ee^{-c\lambda }\lesssim \lambda^{1/2}\,.
}
Substitution of \cref{eq:potential remainder on MHO ground state} into
\cref{eq:residual-bound}, and the resulting bound into \cref{eq:Pperp-phi-MHO} yields 
 \cref{eq:Pperp-psi-bound}. This completes the proof of \cref{lem:MHO ground state approximates one-well ground state}.
\end{proof}

Next, we deal with the double-well Hamiltonian,
\eql{
H_\lambda \equiv H^{\mathrm{Landau}}_\lambda+\lambda^2 v(X+d)+\lambda^2 v(X-d)\,.
} By the same assumptions on $v$ as above and \cite{Matsumoto_1994}, if the eigenstates of $H_\lambda$ are given by $E_{j,\lambda}$, then, a-priori,
\eql{
E_{0,\lambda} \approx E_{1,\lambda} \approx -\lambda^2 + e_0^{\rm MHO}\lambda + O(\lambda^{1/2})
} and the next eigenvalue is at distance $c_{\rm gap}$ (as above) away. Let $\calV_\lambda$ be the eigenspace of $H_\lambda$ associated to both eigenvalues $E_{0,\lambda}, E_{1,\lambda}$ and $\Pi_\lambda$ the associated self-adjoint projection.

Recall the magnetic translation operators given in \cref{eq:magnetic translations}. These operators are unitary if $\lambda\in\RR$ and otherwise generally not even bounded; in factorized form the two exponentials commute. $\widehat{R}^z$ commutes with the magnetic kinetic energy $H^{\rm Landau}_{\lambda}$ and obeys \cref{eq:magnetic translations shift potential} as well as \cref{eq:magnetic translations shift Hamiltonian}.

Then we also have
\begin{lem}\label{lem:translates of MHO ground state approximates double-well ground state}
    For all $\lambda$ real and sufficiently large, \eql{
        \norm{\Pi_\lambda^\perp \widehat{R}^{\pm d}\vf_\lambda^{\rm MHO}}\lesssim \lambda^{-1/2}\,.
    } 
\end{lem}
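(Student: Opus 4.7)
The plan is to mirror the single-well argument of \cref{lem:MHO ground state approximates one-well ground state}, turning $\widehat{R}^{\pm d}\vf_\lambda^{\rm MHO}$ into an approximate eigenstate for $H_\lambda$ with eigenvalue $E := -\lambda^2 + e_0^{\rm MHO}\lambda$, and then exploiting the order-$\lambda$ spectral gap of $H_\lambda$ above $\calV_\lambda$ to pass from an $O(\lambda^{1/2})$ residual to an $O(\lambda^{-1/2})$ bound on $\Pi_\lambda^\perp$.

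First I would use the covariance relations \cref{eq:magnetic translations shift potential,eq:magnetic translations shift Hamiltonian} to write, say for the $+d$ case,
\eq{
(H_\lambda-E\Id)\widehat{R}^d\vf_\lambda^{\rm MHO}
= \widehat{R}^d\br{h_\lambda+\lambda^2 v(X+2d)-E\Id}\vf_\lambda^{\rm MHO},
}
since $\widehat{R}^d$ commutes with $H^{\rm Landau}_\lambda$ and shifts the two wells to be centered at $0$ and $-2d$. On the right-hand side, the term $h_\lambda\vf_\lambda^{\rm MHO}-E\vf_\lambda^{\rm MHO}$ is precisely the quantity controlled in the proof of \cref{lem:MHO ground state approximates one-well ground state}, namely bounded in $L^2$ by $\lambda^2\norm{v_{\rm err}(X)\vf_\lambda^{\rm MHO}}\lesssim \lambda^{1/2}$; and the spurious potential $\lambda^2 v(X+2d)\vf_\lambda^{\rm MHO}$ is $O(\ee^{-c\lambda})$, because $v$ is supported in $B_a(0)$ so $v(X+2d)$ is supported in $B_a(-2d)$, a region where $\vf_\lambda^{\rm MHO}(x)=C\sqrt{\lambda}\exp(-\lambda\ip{x}{Sx})$ is Gaussian-small provided $2\abs{d}>a$ (which is part of our standing hypotheses). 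Since $\widehat{R}^d$ is unitary for real $\lambda$, we conclude
\eq{
\norm{(H_\lambda-E\Id)\widehat{R}^d\vf_\lambda^{\rm MHO}}\lesssim \lambda^{1/2}.
}

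Next I would convert this residual into a bound on $\Pi_\lambda^\perp \widehat{R}^d\vf_\lambda^{\rm MHO}$ using the spectral gap. Writing $\psi:=\widehat{R}^d\vf_\lambda^{\rm MHO}$, the self-adjointness of $H_\lambda$ and $\Pi_\lambda$ gives the orthogonal decomposition
\eq{
\norm{(H_\lambda-E\Id)\psi}^2
= \norm{(H_\lambda-E\Id)\Pi_\lambda\psi}^2 + \norm{(H_\lambda-E\Id)\Pi_\lambda^\perp\psi}^2.
}
From \cref{eq:one-well eigenvalues} and its double-well analogue due to Matsumoto, both $E_{0,\lambda}$ and $E_{1,\lambda}$ agree with $E$ up to $O(\lambda^{1/2})$ while the third eigenvalue sits near $-\lambda^2+e_1^{\rm MHO}\lambda$, so the distance from $E$ to the spectrum of $H_\lambda\restriction_{\calV_\lambda^\perp}$ is $c_{\rm gap} = (e_1^{\rm MHO}-e_0^{\rm MHO})\lambda + O(\lambda^{1/2})\gtrsim \lambda$. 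Hence the functional calculus yields $\norm{(H_\lambda-E\Id)\Pi_\lambda^\perp\psi}\geq c_{\rm gap}\norm{\Pi_\lambda^\perp\psi}$, and combining with the previous paragraph gives
\eq{
\norm{\Pi_\lambda^\perp\psi}\leq \frac{1}{c_{\rm gap}}\norm{(H_\lambda-E\Id)\psi}\lesssim \frac{\lambda^{1/2}}{\lambda}=\lambda^{-1/2}.
}
The $-d$ case is identical, exchanging the roles of the two wells.

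The only mild subtlety (but not a real obstacle) is verifying that $E=-\lambda^2+e_0^{\rm MHO}\lambda$ genuinely has distance of order $\lambda$ to $\sigma(H_\lambda)\setminus\{E_{0,\lambda},E_{1,\lambda}\}$; this is where I would explicitly invoke Matsumoto's asymptotics cited above the statement, and where the non-degeneracy of $\Hessian v(0)$ (so that $e_1^{\rm MHO}-e_0^{\rm MHO}>0$) enters. Everything else is quantitative bookkeeping identical to the single-well lemma, together with the Gaussian tails of $\vf_\lambda^{\rm MHO}$.
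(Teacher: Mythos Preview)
Your proposal is correct and follows essentially the same route as the paper: use the magnetic-translation covariance to reduce $(H_\lambda-E\Id)\widehat{R}^{\pm d}\vf_\lambda^{\rm MHO}$ to the single-well residual plus a far-away potential term that is Gaussian-small, obtain an $O(\lambda^{1/2})$ residual, and then divide by the order-$\lambda$ gap above $\calV_\lambda$. Your write-up is in fact more explicit than the paper's (which simply cites the single-well proof and says ``the proof proceeds as in \cref{eq:potential remainder on MHO ground state}''); the only discrepancy is a harmless sign, $v(X+2d)$ versus the paper's $v(X-2d)$, which does not affect the estimate.
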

\begin{proof}
    We have
    \eq{
        \br{H_\lambda - \br{-\lambda^2+e_\lambda^{\rm MHO}}\Id}\widehat{R}^{ d}\vf_\lambda^{\rm MHO}
        &= \widehat{R}^{ d} \br{h_\lambda - \br{-\lambda^2+e_\lambda^{\rm MHO}}\Id}\vf_\lambda^{\rm MHO}\notag\\
        &\qquad{}+\widehat{R}^{ d}\lambda^2v(X-2d)\vf_\lambda^{\rm MHO}
    } so by the proof of \cref{eq:potential remainder on MHO ground state}, we have
    \eq{
    \norm{\br{H_\lambda - \br{-\lambda^2+e_\lambda^{\rm MHO}}\Id}\widehat{R}^{ d}\vf_\lambda^{\rm MHO}}
    \lesssim \lambda^{1/2} + \exp\br{\addedm{-2cd\lambda}} \lesssim \lambda^{1/2}\,.
    } 

    Otherwise the proof proceeds as in \cref{eq:potential remainder on MHO ground state} and we get the same result.
\end{proof}

		\pagebreak
		\begingroup
		\let\itshape\upshape
		\printbibliography
		\endgroup

    \end{document}